\renewcommand{\phi}{\varphi}
\renewcommand{\epsilon}{\varepsilon}
\newcommand{\nothing}{\varnothing}
\newcommand{\cA}{{\cal A}}
\newcommand{\cB}{{\cal B}}
\newcommand{\cC}{{\cal C}}
\newcommand{\cD}{{\cal D}}
\newcommand{\cF}{{\cal F}}
\newcommand{\cM}{{\cal M}}
\newcommand{\cN}{{\cal N}}
\newcommand{\cP}{{\cal P}}
\newcommand{\cR}{{\cal R}}
\newcommand{\cS}{{\cal S}}
\newcommand{\cX}{{\cal X}}
\newcommand{\nat}{{\mathbb N}}
\newcommand{\tup}[1]{\langle #1 \rangle}
\newcommand{\pair}[1]{(#1)}
\newcommand{\abb}[1]{{\sc \lowercase{#1}}}
\newcommand{\rank}{\operatorname{rank}}
\newcommand{\m}{{\mathit mx}}
\newcommand{\rt}{\mathrm{root}}
\newcommand{\childno}{\mathrm{child}}
\newcommand{\nod}[1]{N(#1)} %taken from my xpath
\newcommand{\lab}{\mathrm{lab}}
\newcommand{\peb}{\mathrm{peb}}
\newcommand{\nex}{\mathrm{next}}
\newcommand{\first}{\mathrm{first}}
\newcommand{\last}{\mathrm{last}}
\newcommand{\leaf}{\mathrm{leaf}}
\newcommand{\str}{\mathrm{str}}
\newcommand{\suc}{\mathrm{succ}}
\newcommand{\tmark}{\operatorname{mark}}
\newcommand{\sem}[1]{\llbracket #1\rrbracket}
\newcommand{\semf}[1]{\llbracket #1\rrbracket_f}
\newcommand{\semt}[1]{\llbracket #1\rrbracket_t}
\newcommand{\xp}[1]{\operatorname{#1}}
\newcommand{\sit}{\xp{Sit}}
\newcommand{\con}{\xp{Con}}
\newcommand{\vis}{\xp{vis}}
\newcommand{\obs}{\xp{obs}}
\newcommand{\test}{\xp{test}}
\newcommand{\bool}{\xp{\beta}}
\newcommand{\rep}{\xp{rep}}
\newcommand{\stay}{{\rm stay}}
\newcommand{\up}{{\rm up}} 
\newcommand{\down}{{\rm down}}
\newcommand{\drop}{{\rm drop}}
\newcommand{\lift}{{\rm lift}}
\newcommand{\totop}{{\rm to\text{-}top}}
\newcommand{\enc}{{\rm enc}}
\newcommand{\dec}{{\rm dec}}
\newcommand{\at}{@}
\newcommand{\fl}{{\rm flat}}
\newcommand{\p}{{\rm pp}}
\newcommand{\g}{g}
\newcommand{\child}{{\tt child}}
\newcommand{\parent}{{\tt parent}}
\newcommand{\lft}{{\tt left}}
\newcommand{\rght}{{\tt right}}
\newcommand{\dropt}{{\tt drop}}
\newcommand{\liftt}{{\tt lift}}
\newcommand{\downt}{{\tt down}}
\newcommand{\upt}{{\tt up}}
\newcommand{\haslab}{{\tt haslabel}}
\newcommand{\haspeb}{{\tt haspebble}}
\newcommand{\isleaf}{{\tt isleaf}}
\newcommand{\isroot}{{\tt isroot}}
\newcommand{\isfirst}{{\tt isfirst}}
\newcommand{\islast}{{\tt islast}}
\newcommand{\ischild}{{\tt ischild}}
\newcommand{\loops}{{\tt loop}}
\newcommand{\family}[1]{\mbox{\sf #1}}
\newcommand{\REGT}{\family{REGT}}
\newcommand{\VIPTA}[1]{\family{V$_{#1}$I-PTA}}
\newcommand{\VIdPTA}[1]{\family{V$_{#1}$I-dPTA}}
\newcommand{\IdPTA}{\family{I-dPTA}}
\newcommand{\TT}{\family{TT}}
\newcommand{\VIPTT}[1]{\family{V$_{#1}$I-PTT}}
\newcommand{\VsIPTT}[1]{\family{V$^+_{#1}$I-PTT}}
\newcommand{\IPTT}{\family{I-PTT}}
\newcommand{\VPTT}[1]{\family{V$_{#1}$-PTT}}
\newcommand{\VsPTT}[1]{\family{V$^+_{#1}$-PTT}}
\newcommand{\dTT}{\family{dTT}}
\newcommand{\tdTT}{\family{tdTT}}
\newcommand{\dTTmso}{\family{dTT$^{\text{\abb{MSO}}}$}}
\newcommand{\tdTTmso}{\family{tdTT$^{\text{\abb{MSO}}}$}}
\newcommand{\VIdPTT}[1]{\family{V$_{#1}$I-dPTT}}
\newcommand{\VsIdPTT}[1]{\family{V$^+_{#1}$I-dPTT}}
\newcommand{\IdPTT}{\family{I-dPTT}}
\newcommand{\rIdPTT}{\family{r\,I-dPTT}}
\newcommand{\VdPTT}[1]{\family{V$_{#1}$-dPTT}}
\newcommand{\VsdPTT}[1]{\family{V$^+_{#1}$-dPTT}}
\newcommand{\VIPFT}[1]{\family{V$_{#1}$I-PFT}}
\newcommand{\VIdPFT}[1]{\family{V$_{#1}$I-dPFT}}
\newcommand{\IPFT}{\family{I-PFT}}
\newcommand{\IdPFT}{\family{I-dPFT}}
\newcommand{\DTL}{\family{DTL}}
\newcommand{\dDTL}{\family{dDTL}}
\newcommand{\TL}{\family{TL}}
\newcommand{\dTL}{\family{dTL}}
\newcommand{\DTLr}{\family{DTL$_\text{r}$}}
\newcommand{\dDTLr}{\family{dDTL$_\text{r}$}}
\newcommand{\TLr}{\family{TL$_\text{r}$}}
\newcommand{\dTLr}{\family{dTL$_\text{r}$}}
\newcommand{\TLl}{\family{TL$_\ell$}}
\newcommand{\dTLl}{\family{dTL$_\ell$}}
\newcommand{\TLlr}{\family{TL$_{\ell\text{r}}$}}
\newcommand{\dTLlr}{\family{dTL$_{\ell\text{r}}$}}
\newtheorem{theorem}{Theorem}
\newtheorem{lemma}[theorem]{Lemma}
\newtheorem{corollary}[theorem]{Corollary}
\newtheorem{proposition}[theorem]{Proposition}
\newcommand{\qed}{\ifvmode\penalty10000\vskip -\lastskip\fi%
  \penalty10000
  \ \ \hfill\hbox{\rm$\Box$}}
\newenvironment{proof}{\pagebreak[1]\noindent{\bf
Proof. }\nopagebreak}{\qed\medskip\pagebreak[3]}
\newtheorem{ex@ple}[theorem]{Example}
\newenvironment{example}{\begin{ex@ple}\normalfont}{\qed\end{ex@ple}}
\newcommand{\smallpar}[1]{\smallskip\noindent {\bf #1.}}
\begin{document}

\title{XML Navigation and Transformation by \\
       Tree-Walking Automata and Transducers \\
       with Visible and Invisible Pebbles}

\author{
Joost Engelfriet\thanks{Email: {\tt j.engelfriet@liacs.leidenuniv.nl}} 
\and Hendrik Jan Hoogeboom\thanks{Email: {\tt h.j.hoogeboom@liacs.leidenuniv.nl}} \and \\
Bart Samwel\thanks{Email: {\tt bsamwel@gmail.com}} \\ \\
{\small LIACS, Leiden University, the Netherlands}
}

\date{}
\maketitle

%% ============================================================
%% ============================================================
%% Abstract
%% ============================================================
%% ============================================================

\begin{abstract}
The pebble tree automaton and the pebble tree transducer are enhanced by
additionally allowing an unbounded number of
``invisible'' pebbles (as opposed to the usual ``visible''
ones). The resulting pebble tree automata recognize
the regular tree languages (i.e., can validate all
generalized DTD's) and hence can find all matches of
MSO definable patterns. Moreover, when viewed
as a navigational device, they lead to an XPath-like
formalism that has a path expression for every MSO
definable binary pattern. The resulting pebble tree
transducers can apply arbitrary MSO definable tests to
(the observable part of) their configurations, they
(still) have a decidable typechecking problem, and
they can model the recursion mechanism of XSLT. 
The time complexity of the typechecking problem for conjunctive
queries that use MSO definable patterns can
often be reduced through the use of invisible pebbles. 
\end{abstract}

\newpage
\tableofcontents

\newpage
%% ============================================================
%% ============================================================
\section{Introduction}
%% ============================================================
%% ============================================================

Pebble tree transducers, as introduced by Milo, Suciu,
and Vianu \cite{MilSucVia03}, are a formal model of XML
navigation and transformation for which typechecking is decidable. 
The pebble tree transducer is a tree-walking tree transducer with nested pebbles,
i.e., it walks on the input tree, dropping and lifting 
a bounded number of pebbles that have nested life times, whereas it produces the
output tree in a parallel top-down fashion. 
We enhance the power of the pebble tree transducer by
allowing an unbounded number of (coloured) pebbles,
still with nested life times, i.e., organized as a
stack. However, apart from a bounded number, the
pebbles are ``invisible'', which means that they can
be observed by the transducer only when they are on
top of the stack (and thus the number of observable
pebbles is bounded at each moment in time). 
We will call \abb{v-ptt} the pebble tree transducer of
\cite{MilSucVia03} (or rather, the one in \cite{EngMan03}:
an obvious definitional variant), and \abb{vi-ptt} the
enhanced pebble tree transducer. Moreover, \abb{i-ptt}
refers to the \abb{vi-ptt} that does not use visible
pebbles, which can be viewed as a generalization of 
the indexed tree transducer of~\cite{EngVog}. 
And \abb{tt} refers to the pebble tree transducer
without pebbles, i.e., to the tree-walking tree transducer,
cf.~\cite{Eng09} and~\cite[Section~8]{thebook}.
Tree-walking transducers were introduced in~\cite{AhoUll},
where they translate trees into strings.\footnote{In~\cite[Section~8]{thebook}
the \abb{tt} is called tree-walking transducer and the transducer of~\cite{AhoUll}
is called tree-walking tree-to-word transducer.
} 

The navigational part of the \abb{v-ptt}, i.e., the
behaviour of the transducer when no output is
produced, is the pebble tree automaton (\abb{v-pta}),
introduced in~\cite{jewels}, which is a tree-walking automaton 
with nested pebbles. It was shown in~\cite{jewels} that the \abb{v-pta}
recognizes regular tree languages only. In
\cite{expressive} the important result was proved
that not all regular tree languages can be recognized
by the \abb{v-pta}, and thus \cite{Don70,ThaWri} the
navigational power of the \abb{v-ptt} is below Monadic
Second Order (\abb{MSO}) logic, which is undesirable for a
formal model of XML transformation (see, e.g., \cite{NevSch}). 
One of the reasons
for introducing invisible pebbles is that the
\abb{vi-pta}, and even the \abb{i-pta}, recognizes exactly
the regular tree languages 
(Theorem~\ref{thm:regt}).
Thus, since the regular tree grammar is a formal model of 
DTD (Document Type Definition) in XML, the \abb{vi-pta} can
validate arbitrary generalized DTD's. 
We note that the \abb{i-pta} is
a straightforward generalization of the two-way
backtracking pushdown tree automaton of Slutzki
\cite{Slu}. 

Surveys on the use of tree-walking automata and transducers for XML 
can be found in~\cite{Nev,Sch}. 
For a survey on tree-walking automata see~\cite{Boj}.

It is easy to show that every regular tree language
can be recognized by an \mbox{\abb{i-pta}}, just simulating a
bottom-up finite-state tree automaton. The proof that all
\abb{vi-pta} tree languages are regular, is based on a
decomposition of the \abb{vi-ptt} into \abb{tt}'s
(Theorem~\ref{thm:decomp}),
similar to the one for the \abb{v-ptt} in \cite{EngMan03}.
Since the inverse type inference problem is solvable
for \abb{tt}'s (where a ``type'' is a regular tree
language), this shows that the domain of a
\abb{vi-ptt} is regular, and so even the alternating
\abb{vi-pta} tree languages are regular. It also shows that the
typechecking problem is decidable for \abb{vi-ptt}'s, by
the same arguments as used in \cite{MilSucVia03} for
\abb{v-ptt}'s. More precisely, we prove 
(Theorem~\ref{thm:typecheck}, based on~\cite[Theorem~3]{Eng09})
that a \abb{vi-ptt} with $k$ visible pebbles can be
typechecked in \mbox{($k+3$)}-fold exponential time. 
For varying~$k$ the complexity is non-elementary (as in~\cite{MilSucVia03}), 
but it is observed in~\cite{MolSch} that
``non-elementary algorithms on tree automata have previously 
been seen to be feasible in practice''. 

Generalizing the fact that the \abb{i-pta} can recognize the regular tree languages, 
we prove that the \abb{vi-pta} and the \abb{vi-ptt} can perform \abb{MSO} tests on
the observable part of their configuration, i.e., they
can check whether or not the observable pebbles on the
input tree (i.e., the visible ones, plus the top
pebble on the stack) satisfy certain \abb{MSO} requirements
with respect to the current position of the reading
head (Theorem~\ref{thm:mso}). 
If all the observable pebbles are visible this
is obvious (drop an additional visible pebble, simulate an \abb{i-pta}
that recognizes the regular tree
language corresponding to the \abb{mso} requirements, 
return to the pebble and lift it), but if the top
pebble is invisible (or if there is no visible pebble left) 
that does not work and a more complicated technique must be used. 
Consequently, the \abb{vi-pta} can match
arbitrary \abb{MSO} definable $n$-ary patterns, using 
$n$~visible pebbles to find all candidate matches as in
\cite[Example~3.5]{MilSucVia03}, and using invisible pebbles to perform
the \abb{MSO} test; the \abb{vi-ptt} can also output the matches. 
In fact, instead of the $n$~visible pebbles the \abb{vi-pta} 
can use $n-2$ visible pebbles, one invisible pebble (on top of the stack), 
and the reading head (Theorem~\ref{thm:matchall}).  

As the navigational part of the \abb{vi-ptt}, the
\abb{vi-pta} in fact computes a binary pattern on
trees, i.e., a binary relation between two nodes of a
tree: the position of the reading head of the
\abb{vi-ptt} before and after navigation. We prove
that also as a navigational device the \abb{vi-pta}
and the \abb{i-pta} have the same power as \abb{MSO} logic:
they compute exactly the \abb{MSO} definable binary patterns
(Theorem~\ref{thm:trips}).
This improves the result in \cite{trips} (where binary
patterns are called ``trips''), because the \abb{i-pta}
is a more natural automaton than the one considered in
\cite{trips}. 

One of the research goals of Marx and ten Cate
(see \cite{Mar05,GorMar05,Cat06,CatSeg} and the entertaining
\cite{Mar06}) has been to combine Core XPath of
\cite{GotKoc02} which models the navigational part of
XPath 1.0, with regular path expressions
\cite{AbiBunSuc00} (or caterpillar expressions
\cite{BruWoo00}) which naturally correspond to
tree-walking automata. An important feature of XPath
is the ``predicate'': it allows to test the context node
for the existence of at least one other node that
matches a given path expression. Thus, the path
expression $\alpha_1[\beta]/\alpha_2$ takes an
$\alpha_1$-walk from the context node to the new
context node $v$, checks whether there exists a
$\beta$-walk from $v$ to some other node, and then
takes an $\alpha_2$-walk from $v$ to the match node.
For tree automata this corresponds to the
notion of ``look-ahead''
(cf. \cite[Definition~6.5]{EngVog}). We prove 
(Theorem~\ref{thm:look-ahead})
that an \abb{i-pta}
$\cA$ can use another \abb{i-pta} $\cB$ as look-ahead
test, i.e., $\cA$ can test whether or not $\cB$ has a
successful computation when started in the current
configuration of $\cA$ (and similarly for \abb{vi-pta}
and \abb{vi-ptt}). 
Since XPath expressions can be nested arbitrarily, we even
allow $\cB$ to use yet another \abb{i-pta} as look-ahead test,
etcetera (Theorem~\ref{thm:iterated}). 
Due to this ``iterated look-ahead'' feature, we can use
Kleene's classical construction to translate the
$\mbox{\abb{i-pta}}$ into an XPath-like algebraic
formalism, which we call \emph{Pebble XPath}, with the
same expressive power as \abb{MSO} logic for defining binary
patterns (Theorem~\ref{thm:xpath}). 
In fact, Pebble XPath is the extension of
Regular XPath \cite{Mar05,Cat06} with a stack of invisible pebbles. 
It is proved in \cite{CatSeg} that Regular XPath
is not \abb{MSO} complete (see also~\cite{Mar06}).\footnote{To be precise,
it is proved in \cite{CatSeg} that 
Regular XPath with ``subtree relativisation'' 
is not \abb{mso} complete and 
has the same power as first-order logic with monadic transitive closure.
}
Other \abb{MSO} complete extensions of Regular XPath are considered 
in \cite{GorMar05,Cat06}.

To explain another reason for introducing invisible
pebbles we consider XQuery-like conjunctive queries of
the form  
\[
\mbox{\tt for } x_1,\dots,x_n \mbox{ \tt
where } \phi_1\wedge\dots\wedge\phi_m \mbox{ \tt
return } r,
\]  
where $x_1,\dots,x_n$ are variables,
each $\phi_\ell$ (with $1\leq \ell\leq m$) is an \abb{MSO} formula with two free
variables $x_i$ and $x_j$, and $r$ is an output tree
with variables at the leaves. As observed above, 
such pattern matching queries can be evaluated by a
\abb{vi-ptt} with $n-2$ visible pebbles, even if the
{\tt where}-clause contains an arbitrary \abb{MSO} formula.
In many cases, however, a much smaller number of
visible pebbles suffices (Theorem~\ref{thm:matching}). 
This is an enormous
advantage when typechecking the query, as for the time
complexity every visible pebble counts (viz. it counts
as an exponential). For instance if $j=i+1$ for every
$\phi_\ell$, then \emph{no} visible pebbles are needed, i.e.,
the query can be evaluated by an \abb{i-ptt}: we use
invisible pebbles $p_1,\dots,p_n$ on the stack (in
that order), representing the variables, and move them
through the input tree in document order, in a nested
fashion; just before dropping pebble $p_{i+1}$, each
formula $\phi_\ell(x_i,x_{i+1})$ can be verified by an
MSO test on the observable part of the configuration (which
consists of the top pebble $p_i$ and the reading head
position). 

The pebble tree transducer transforms ranked trees.
However, an XML document is not ranked; it is a
forest: a sequence of unranked trees. To model XML
transformation by \abb{ptt}'s,  forests are encoded as
binary trees in the usual way. For the input, it does
not make much of a difference whether the \abb{ptt}
walks on a binary tree or a forest. However, as opposed to
what is suggested in \cite{MilSucVia03}, for the output
it \emph{does} make a difference, as pointed out in
\cite{PerSei} for macro tree transducers. For that
reason we also consider pebble \emph{forest}
transducers (abbreviated with \abb{pft} instead of
\abb{ptt}) that walk on encoded forests, but construct
forests directly, using forest concatenation as
basic operation. As in \cite{PerSei}, \abb{pft} are
more powerful than \abb{ptt}, but the complexity of
the typechecking problem is the same, i.e.,
\abb{vi-pft} with $k$ visible pebbles can be
typechecked in ($k+3$)-fold exponential time 
(Theorem~\ref{thm:typecheck-pft}). 
In fact, \abb{pft} have all the properties 
mentioned before for \abb{ptt}.

The document transformation languages \abb{DTL} and
\abb{TL} were introduced in \cite{ManNev00} and
\cite{ManBerPerSei}, respectively, as a formal model
of the recursion mechanism in the template rules of
XSLT, with  \abb{MSO} logic rather than XPath to specify
matching and selection. Documents are modelled as forests. 
The language
\abb{DTL} has no variables or parameters, and
its only instruction is {\tt apply-templates}.
The language~\abb{TL} is the extension of \abb{DTL} with
accumulating parameters, i.e., the parameters of
XSLT~1.0 whose values are ``result tree fragments'' (and
on which no operations are allowed). 
We prove that
every \abb{DTL} program can be simulated, 
with forests encoded as binary trees, by an
\abb{i-ptt} (Theorem~\ref{thm:dtl}). 
More importantly, we prove that 
\abb{TL} and \abb{i-pft} have the same expressive
power (Theorem~\ref{thm:tl}). 
Thus, in its forest version, our new model the
\abb{vi-pft} can be viewed as the natural combination
of the pebble tree transducer of \cite{MilSucVia03}
(\abb{v-ptt}) and the \abb{TL} program of
\cite{ManBerPerSei} (\abb{i-pft}). Note that
\abb{v-ptt} and \abb{TL} have incomparable expressive
power. As claimed by \cite{ManBerPerSei}, \abb{TL}
can ``describe many real-world XML
transformations''. 
We show that it contains all deterministic \abb{vi-pft} transformations 
for which the size of the output document is linear 
in the size of the input document (Theorem~\ref{thm:lsi}). 
However, the visible pebbles seem to be a
requisite for the XQuery-like queries discussed above,
and we conjecture that not all such queries can be
programmed in \abb{TL} 
(though they \emph{can}, e.g.,
in the case that $j=i+1$ for every~$\ell$). 
As shown in~\cite{bex} (for a subset of \abb{MSO}), these queries can be
programmed in XSLT~1.0 using parameters that have
input nodes as values; however, with such parameters
even \abb{v-ptt} with \emph{non}nested pebbles can be
simulated, and typechecking is no longer decidable. 
In XSLT~2.0 \emph{all} (computable) queries can be 
programmed~\cite{JanKorBus}.
The main result of~\cite{ManBerPerSei} is that typechecking is decidable for \abb{tl} programs. 
Assuming that \abb{mso} formulas are represented by deterministic bottom-up finite-state tree automata,
the above relationship between \abb{tl} and \abb{i-pft} allows us to prove that
\abb{tl} programs can be typechecked in $4$-fold exponential time (Theorem~\ref{thm:tltypecheck}), 
which seems to be one exponential better than the algorithm in~\cite{ManBerPerSei}. 

In addition to the time complexity of typechecking a \abb{vi-ptt}, also the time complexity of
evaluating the queries realized by a \abb{vi-pta} or a \abb{vi-ptt} is of importance.
The binary pattern (or `trip') computed by a \abb{vi-pta}, i.e., 
the binary relation between two nodes of the input tree, 
can be evaluated in polynomial time. The same is true for every (fixed) expression of Pebble XPath
(see the last two paragraphs of Section~\ref{sec:xpath}).
Deterministic \abb{vi-ptt}'s have exponential time data complexity,
provided that the output tree can be represented by a DAG (directed acyclic graph).
To be precise, for every deterministic \abb{vi-ptt} there is an exponential time algorithm
that transforms any input tree of that \abb{vi-ptt} into a DAG 
that represents the corresponding output tree (Theorem~\ref{thm:expocom}). 
For the \abb{vi-ptt}'s that match \abb{mso} definable $n$-ary patterns (as discussed above) 
the algorithm is polynomial time (Theorem~\ref{thm:polycom}). 

Apart from the above results that are motivated by XML navigation and transformation,
we also prove some more theoretical results. 
We show that (as opposed to the \abb{v-ptt}) the \abb{i-ptt} 
can simulate the bottom-up tree transducer (Theorem~\ref{thm:bottomup}). 
We show that the composition of two deterministic \abb{tt}'s can be simulated by 
a deterministic \mbox{\abb{i-ptt}} (Theorem~\ref{thm:composition}). 
This even holds when the \abb{tt}'s are allowed to perform \abb{mso} tests on their configuration,
and then also vice versa, every deterministic \abb{i-ptt} can be decomposed
into two such extended \abb{tt}'s (Theorem~\ref{thm:charidptt}).

We show that every deterministic \abb{vi-ptt} can be decomposed 
into deterministic \abb{tt}'s (Theorem~\ref{thm:detdecomp}) and that, 
for the deterministic \abb{vi-ptt}, $k+1$ visible pebbles are more powerful 
than $k$ visible pebbles (Theorem~\ref{thm:dethier}).
Pebbles have to be lifted from the position where they were dropped;
however, in~\cite{fo+tc} it was convenient to consider a stronger type of pebbles
that can also be retrieved from a distance. Whereas \abb{i-ptt}'s with strong invisible 
pebbles can recognize nonregular tree languages, 
we show that \abb{vi-ptt}'s with strong visible pebbles can still be decomposed into 
\abb{tt}'s (Theorems~\ref{thm:decompplus} and~\ref{thm:decompplusI}) 
and hence their typechecking is decidable
(as already proved for \abb{v-ptt}'s with strong pebbles in~\cite{FulMuzFI}).
Similarly, deterministic \abb{vi-ptt}'s with strong visible pebbles can be decomposed into 
deterministic \abb{tt}'s (Theorems~\ref{thm:detdecompplus} and~\ref{thm:detdecompplusI}). 

Some of these theoretical results can be viewed as (slight) generalizations of 
existing results for formal models of compiler construction
(in particular attribute grammars), such as attributed tree transducers~\cite{Ful}, 
macro tree transducers~\cite{EngVog85}, and macro attributed tree transducers~\cite{KuhVog},
see also~\cite{FulVog}. As explained in~\cite[Section~3.2]{EngMan03}, attributed tree transducers 
are \abb{tt}'s that satisfy an additional requirement of ``noncircularity''. Similarly, 
as observed in~\cite{ManBerPerSei}, macro attributed tree transducers 
(that generalize both attributed tree transducers and macro tree transducers) 
are closely related to \abb{tl} programs, and hence to \abb{i-ptt}'s by Theorem~\ref{thm:tl}. 
For instance, Theorem~\ref{thm:composition} slightly generalizes the fact that the composition of 
two attributed tree transducers can be simulated by a macro attributed tree transducer, 
as shown in~\cite{KuhVog}. 

Most of the results of this paper were announced in the PODS'07 conference~\cite{invisible}.
The remaining results are based on technical notes of the authors from the years 2004--2008.
This paper has not been updated with the litterature of later years 
(with the exception of~\cite{thebook,Eng09,CatSeg}).

%% ============================================================
%% ============================================================
\section{Preliminaries}\label{sec:trees}
%% ============================================================
%% ============================================================

\smallpar{Sets, strings, and relations}
The set of natural numbers is $\nat=\{0,1,2,\dots\}$.
For $m,n\in \nat$, we denote the interval $\{k\in\nat\mid m\leq k\leq n\}$ by $[m,n]$. 
The cardinality or size of a set $A$ is denoted by $\#(A)$, 
and its powerset, i.e., the set of all its subsets, by $2^A$. 
The set of strings over $A$ is denoted by~$A^*$. 
It consists of all sequences $w=a_1\cdots a_m$ with $m\in\nat$ and 
$a_i\in A$ for every $i\in[1,m]$. 
The length~$m$ of $w$ is denoted by $|w|$. 
The empty string (of length $0$) is denoted by~$\epsilon$. 
The concatenation of two strings $v$ and $w$ is denoted by $v\cdot w$ or just~$vw$.
Moreover, $w^0=\epsilon$ and $w^{n+1}=w\cdot w^n$ for $n\in\nat$.
The composition of two binary relations $R\subseteq A\times B$ and 
$S\subseteq B\times C$ is 
$R\circ S = \{(a,c)\mid \exists\, b\in B: (a,b)\in R, \, (b,c)\in S\}$.
The~inverse of $R$ is $R^{-1}=\{(b,a)\mid (a,b)\in R\}$, and if $A=B$ then 
the transitive-reflexive closure of~$R$ is $R^*=\bigcup_{n\in\nat}R^n$
where $R^0=\{(a,a)\mid a\in A\}$ and $R^{n+1}=R\circ R^n$. The composition of 
two classes of binary relations $\cR$ and $\cS$ is 
$\cR\circ \cS = \{R\circ S\mid R\in\cR,\,S\in\cS\}$. 
Moreover, $\cR^1=\cR$ and $\cR^{n+1} = \cR\circ\cR^n$ for $n\geq 1$. 

\smallpar{Trees and forests}
An alphabet is a finite set of symbols. 
Let $\Sigma$ be an alphabet, or an arbitrary set.
Unranked trees and forests over $\Sigma$ are recursively defined
to be strings over the set $\Sigma\cup\{(,)\}$ 
consisting of the elements of $\Sigma$, the left parenthesis, 
and the right parenthesis, as follows.  
If $\sigma\in\Sigma$ and 
$t_1, \dots, t_m$ are unranked trees, with $m\in\nat$, then
their concatenation $t_1 \cdots t_m$ is a forest, and 
$\sigma(t_1 \cdots t_m)$ is an unranked tree.
For $m=0$, $t_1 \cdots t_m$ is the empty forest~$\epsilon$. 
For readability we also write 
the tree $\sigma(t_1 \cdots t_m)$ as $\sigma(t_1, \dots, t_m)$,
and even as~$\sigma$ when $m=0$. 
Obviously, the concatenation of two forests is again a forest.
It should also be noted that every nonempty forest can 
be written uniquely as $\sigma(f_1)f_2$
where $\sigma$ is in $\Sigma$ and $f_1$ and $f_2$ are forests. 
The set of forests over~$\Sigma$ is denoted $F_\Sigma$. 
For an arbitrary set $A$, disjoint with $\Sigma$, we denote by $F_\Sigma(A)$ the set
all forests $f$ over $\Sigma\cup A$ such that every node of $f$ 
that is labelled by an element of $A$, is a leaf. 

As usual trees and forests are viewed as directed labelled graphs. 
Here we distinguish between two types of edges: ``vertical'' and ``horizontal'' ones.
The root of the tree $t=\sigma( t_1, \dots, t_m )$ is labelled by $\sigma$.
It has vertical edges to the roots of subtrees $t_1, \dots, t_m$,
which are the children of the root of $t$ and have child number $1$ to $m$. 
The root of~$t$ is their parent.
The roots of $t_1, \dots, t_m$ are siblings, 
also in the case of the forest $ t_1 \cdots t_m $. 
There is a horizontal edge from each sibling to the next, i.e., 
from the root of $t_i$ to the root of $t_{i+1}$ for every $i\in [1,m-1]$. 
Thus, the vertical edges represent the usual parent/child relationship,
whereas the horizontal edges represent the linear order between children
(and between the roots in a forest), see Fig.~\ref{fig:forest}.\footnote{In 
informal pictures the horizontal edges are usually omitted because they are implicit 
in the left-to-right orientation of the page. Similarly, the arrows of 
the vertical edges are omitted because of 
the top-down orientation of the page.
}
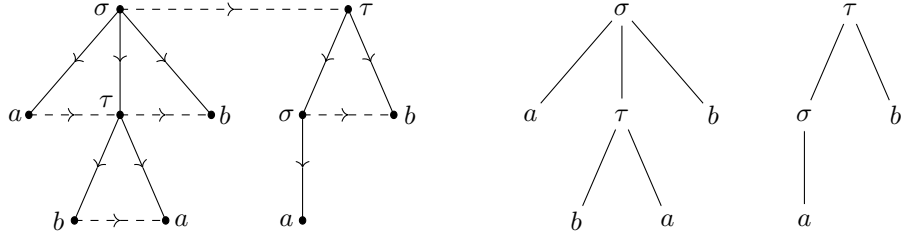
\begin{figure}[t]
\begin{center}
\begin{tikzpicture}
\begin{scope}[xscale=0.6,yscale=0.7]
%\draw[help lines] (0,0) grid (19,4);
%
\draw [fill] (1,0) circle [radius=0.07];
\draw [fill] (3,0) circle [radius=0.07];
\draw [fill] (6,0) circle [radius=0.07];
\draw [fill] (0,2) circle [radius=0.07];
\draw [fill] (2,2) circle [radius=0.07];
\draw [fill] (4,2) circle [radius=0.07];
\draw [fill] (6,2) circle [radius=0.07];
\draw [fill] (8,2) circle [radius=0.07];
\draw [fill] (2,4) circle [radius=0.07];
\draw [fill] (7,4) circle [radius=0.07];
\draw [->] (2,4) -- (1,3); \draw (1,3) -- (0,2); 
\draw [->] (2,4) -- (2,3); \draw (2,3) -- (2,2); 
\draw [->] (2,4) -- (3,3); \draw (3,3) -- (4,2);
\draw [->] (2,2) -- (1.5,1); \draw (1.5,1) -- (1,0);
\draw [->] (2,2) -- (2.5,1); \draw (2.5,1) -- (3,0);
\draw [->] (7,4) -- (6.5,3); \draw (6.5,3) -- (6,2); 
\draw [->] (6,2) -- (6,1); \draw (6,1) -- (6,0);
\draw [->] (7,4) -- (7.5,3); \draw (7.5,3) -- (8,2);
\draw [dashed] [->] (2,4) -- (4.5,4); \draw [dashed] (4.5,4) -- (7,4);
\draw [dashed] [->] (0,2) -- (1,2); \draw [dashed] (1,2) -- (2,2);
\draw [dashed] [->] (2,2) -- (3,2); \draw [dashed] (3,2) -- (4,2);
\draw [dashed] [->] (6,2) -- (7,2); \draw [dashed] (7,2) -- (8,2);
\draw [dashed] [->] (1,0) -- (2,0); \draw [dashed] (2,0) -- (3,0);
\node at (0.65,0) {$b$}; \node at (3.35,0) {$a$}; \node at (5.65,0) {$a$};
\node at (-0.3,2) {$a$}; \node at (1.7,2.25) {$\tau$}; \node at (4.3,2) {$b$};
\node at (5.65,2) {$\sigma$}; \node at (8.35,2) {$b$};
\node at (1.6,4) {$\sigma$}; \node at (7.4,4) {$\tau$};
%
%x-coordinate plus 11
%
\draw (13,4) -- (11,2); 
\draw (13,4) -- (13,2); 
\draw (13,4) -- (15,2);
\draw (13,2) -- (12,0);
\draw (13,2) -- (14,0);
\draw (18,4) -- (17,2); 
\draw (17,2) -- (17,0);
\draw (18,4) -- (19,2);
\draw [fill=white,white] (12,0) circle [radius=0.3];
\node at (12,0) {$b$}; 
\draw [fill=white,white] (14,0) circle [radius=0.3];
\node at (14,0) {$a$}; 
\draw [fill=white,white] (17,0) circle [radius=0.3];
\node at (17,0) {$a$};
\draw [fill=white,white] (11,2) circle [radius=0.3];
\node at (11,2) {$a$}; 
\draw [fill=white,white] (13,2) circle [radius=0.3];
\node at (13,2) {$\tau$}; 
\draw [fill=white,white] (15,2) circle [radius=0.3];
\node at (15,2) {$b$};
\draw [fill=white,white] (17,2) circle [radius=0.3];
\node at (17,2) {$\sigma$}; 
\draw [fill=white,white] (19,2) circle [radius=0.3];
\node at (19,2) {$b$};
\draw [fill=white,white] (13,4) circle [radius=0.3];
\node at (13,4) {$\sigma$};
\draw [fill=white,white] (18,4) circle [radius=0.3];
\node at (18,4) {$\tau$};
\end{scope}
\end{tikzpicture}
\end{center}
  \caption{Picture of the forest $\sigma(a,\tau(b,a),b)\,\tau(\sigma(a),b)$.
Formal at the left, with dotted lines for the horizontal edges and solid lines 
for the vertical edges, and informal at the right.}
  \label{fig:forest}
\end{figure}
For a tree $t$, its root is denoted by $\rt_t$, which is given child number $0$ 
for technical convenience. Its set of nodes is denoted by $\nod t$. 
For a forest $f= t_1 \cdots t_m $,  
the set of nodes $\nod f$ is the disjoint union 
of the sets $\nod{t_i}$, $i\in[1,m]$.
For a node $u$ of a tree~$t$ the subtree of $t$ with root $u$ is denoted $t|_u$,
and the $i$-th child of $u$ is denoted $ui$ 
(and similarly for a forest $f$ instead of $t$). 
The nodes of a tree $t$ correspond one-to-one to the positions 
of the elements of $\Sigma$ in the string $t$, i.e., 
for every $\sigma\in\Sigma$, each occurrence of $\sigma$ in $t$ 
corresponds to a node of $t$ with label $\sigma$. Since the positions 
of string $t$ are naturally ordered from left to right, this induces an order 
on the nodes of $t$, which is called pre-order (or document order, 
when viewing $t$ as an XML document). For example, the tree 
$\sigma(\tau(\alpha,\beta),\gamma))$ has five nodes which have the labels 
$\sigma$, $\tau$, $\alpha$, $\beta$, and $\gamma$ in pre-order. 

A \emph{ranked} alphabet (or set) $\Sigma$
has an associated mapping
$\rank_\Sigma : \Sigma \to \nat$.
The maximal rank of elements of $\Sigma$ is denoted
$\m_\Sigma$.
By $\Sigma^{(m)}$ we denote the elements of $\Sigma$ with rank $m$.
Ranked trees over $\Sigma$ are recursively defined
as above with the requirement that
$m = \rank_\Sigma(\sigma)$.
The set of ranked trees over~$\Sigma$ is denoted $T_\Sigma$. 
For an arbitrary set $A$, disjoint with $\Sigma$, we denote by $T_\Sigma(A)$ the set
$T_{\Sigma\cup A}$ where each element of $A$ has rank~$0$. 
We will not consider ranked forests. 

Forests over an alphabet $\Sigma$ can be encoded as binary trees, in the usual way: 
each node has a label in $\Sigma$, a ``vertical'' pointer to its first child, 
and a ``horizontal'' pointer to its next sibling;
the pointer is nil if there is no such child or sibling.
Such a binary tree can be modelled as a ranked tree 
over the ranked alphabet $\Sigma\cup\{e\}$
where every $\sigma\in\Sigma$ has rank~2 and $e$ is a symbol of rank~0 that 
represents the empty forest $\epsilon$ (or nil). 
Formally, the encoding of the empty forest
equals $\enc(\epsilon) = e$,
and recursively,
the encoding $\enc(f)$ of a forest 
$f=\sigma(f_1)f_2$ equals 
$\sigma(\enc(f_1),\enc(f_2))$.
Obviously, $\enc$ is a bijection between forests over $\Sigma$ and ranked trees 
over $\Sigma\cup\{e\}$. The decoding which is its inverse 
will be denoted by $\dec$. 
For an example of $\enc(f)$ see Fig.~\ref{fig:encforest} at the left. 
\begin{figure}[t]
\begin{center}
\begin{tikzpicture}
\begin{scope}[xscale=0.48,yscale=0.65]
%\draw[help lines] (0,0) grid (23,10);
%
%encoding by enc
%
\draw (7,10) -- (1.5,8) -- (0,6); 
\draw (1.5,8) -- (3,6) -- (1,4) -- (0,2);
\draw (1,4) -- (2,2) -- (1,0);
\draw (2,2) -- (3,0);
\draw (3,6) -- (5,4) -- (4,2);
\draw (5,4) -- (6,2);
\draw (7,10) -- (12.5,8) -- (11,6) -- (9,4) -- (8,2);
\draw (9,4) -- (10,2);
\draw (11,6) -- (13,4) -- (12,2);
\draw (13,4) -- (14,2);
\draw (12.5,8) -- (14,6);
\draw [fill=white,white] (1,0) circle [radius=0.3];
\node at (1,0) {$e$};
\draw [fill=white,white] (3,0) circle [radius=0.3];
\node at (3,0) {$e$};
\draw [fill=white,white] (0,2) circle [radius=0.3];
\node at (0,2) {$e$};
\draw [fill=white,white] (2,2) circle [radius=0.3];
\node at (2,2) {$a$};
\draw [fill=white,white] (4,2) circle [radius=0.3];
\node at (4,2) {$e$};
\draw [fill=white,white] (6,2) circle [radius=0.3];
\node at (6,2) {$e$};
\draw [fill=white,white] (8,2) circle [radius=0.3];
\node at (8,2) {$e$};
\draw [fill=white,white] (10,2) circle [radius=0.3];
\node at (10,2) {$e$};
\draw [fill=white,white] (12,2) circle [radius=0.3];
\node at (12,2) {$e$};
\draw [fill=white,white] (14,2) circle [radius=0.3];
\node at (14,2) {$e$};
\draw [fill=white,white] (1,4) circle [radius=0.3];
\node at (1,4) {$b$};
\draw [fill=white,white] (5,4) circle [radius=0.3];
\node at (5,4) {$b$};
\draw [fill=white,white] (9,4) circle [radius=0.3];
\node at (9,4) {$a$};
\draw [fill=white,white] (13,4) circle [radius=0.3];
\node at (13,4) {$b$};
\draw [fill=white,white] (0,6) circle [radius=0.3];
\node at (0,6) {$e$};
\draw [fill=white,white] (3,6) circle [radius=0.3];
\node at (3,6) {$\tau$};
\draw [fill=white,white] (11,6) circle [radius=0.3];
\node at (11,6) {$\sigma$};
\draw [fill=white,white] (14,6) circle [radius=0.3];
\node at (14,6) {$e$};
\draw [fill=white,white] (1.5,8) circle [radius=0.3];
\node at (1.5,8) {$a$};
\draw [fill=white,white] (12.5,8) circle [radius=0.3];
\node at (12.5,8) {$\tau$};
\draw [fill=white,white] (7,10) circle [radius=0.3];
\node at (7,10) {$\sigma$};
%
%
%encoding by enc'
%
\draw (20,10) -- (18,8) -- (18,6); 
\draw (20,10) -- (22,8) -- (22,6);
\draw (18,6) -- (17,4) -- (17,2); \draw (18,6) -- (19,4);
\draw (22,6) -- (21,4); \draw (22,6) -- (23,4);
\draw [fill=white,white] (17,2) circle [radius=0.4];
\node at (17,2) {$a^{00}$};
\draw [fill=white,white] (17,4) circle [radius=0.4];
\node at (17,4) {$b^{01}$};
\draw [fill=white,white] (19,4) circle [radius=0.4];
\node at (19,4) {$b^{00}$};
\draw [fill=white,white] (21,4) circle [radius=0.4];
\node at (21,4) {$a^{00}$};
\draw [fill=white,white] (23,4) circle [radius=0.4];
\node at (23,4) {$b^{00}$};
\draw [fill=white,white] (18,6) circle [radius=0.4];
\node at (18.2,6.1) {$\tau^{11}$};
\draw [fill=white,white] (22,6) circle [radius=0.4];
\node at (22.2,6.1) {$\sigma^{11}$};
\draw [fill=white,white] (18,8) circle [radius=0.4];
\node at (18.2,8) {$a^{01}$}; 
\draw [fill=white,white] (22,8) circle [radius=0.4];
\node at (22.2,8) {$\tau^{10}$};
\draw [fill=white,white] (20,10) circle [radius=0.4];
\node at (20.3,10) {$\sigma^{11}$};
\end{scope}
\end{tikzpicture}
\end{center}
  \caption{Encoding of the forest of Fig.~\ref{fig:forest}  
  by $\enc$ (at the left) and by $\enc'$ (at the right).}
  \label{fig:encforest}
\end{figure}
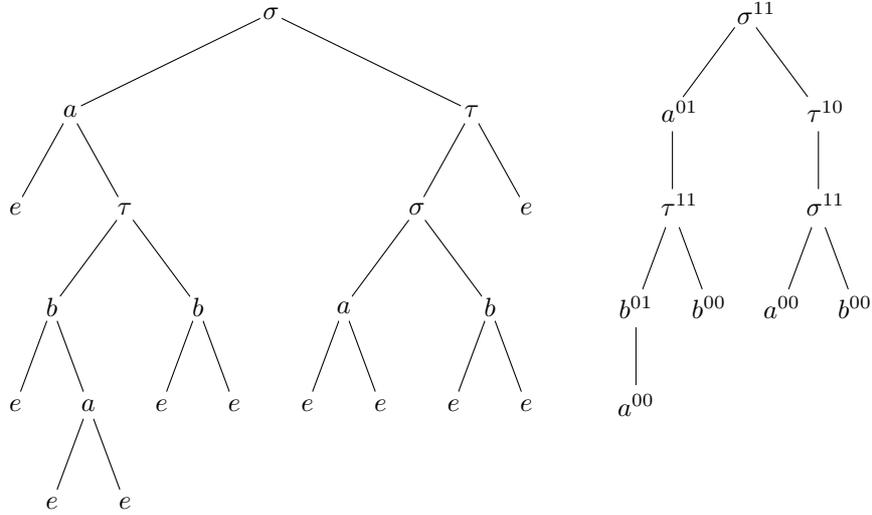

The disadvantage of this encoding is that the tree $\enc(f)$ has more nodes than the forest $f$, viz. all nodes with label $e$. That is inconvenient when comparing the behaviour of tree-walking automata on $f$ and $\enc(f)$. Thus, we will also use an encoding that preserves the number of nodes (and thus cannot encode the empty forest). For this we use the ranked alphabet
$\Sigma'$ consisting, for every $\sigma\in\Sigma$, of the symbols $\sigma^{11}$ of rank~2 (for a binary node without nil-pointers), $\sigma^{01}$ and $\sigma^{10}$ of rank~1 
(for a binary node with vertical or horizontal nil-pointer, respectively), 
and $\sigma^{00}$ of rank~0 (for a binary node with two nil-pointers). 
The encoding $\enc'(f)$ of a nonempty forest $f=\sigma(f_1)f_2$
equals $\sigma^{11}(\enc'(f_1),\enc'(f_2))$ or $\sigma^{01}(\enc'(f_2))$ 
or $\sigma^{10}(\enc'(f_1))$ or $\sigma^{00}$, 
where the first (second) superscript of~$\sigma$ equals $0$ if and only if
$f_1 = e$ ($f_2 = e$). 
Now, $\enc'$ is a bijection between nonempty forests over $\Sigma$ and ranked trees 
over $\Sigma'$. The decoding which is its inverse 
will be denoted by $\dec'$.
For an example of $\enc'(f)$ see Fig.~\ref{fig:encforest} at the right.
From the point of view of graphs, we assume that $\enc'(f)$ has the same nodes as $f$, 
i.e., $\nod{\enc'(f)}=\nod{f}$. The label of a node $u$ of $f$ is changed 
from $\sigma$ to $\sigma^{ij}$ where $i=1$ if and only if $u$ has at least one child, 
and $j=1$ if and only if $u$ has a next sibling. If $u$ has children, then its first child 
in $\enc'(f)$ is its first child in $f$, and its second child in $\enc'(f)$ 
is its next sibling (if it has one). 
If $u$ has no children, then its only child in $\enc'(f)$ is its next sibling (if it has one).
Although this encoding is intuitively clear, it is technically less attractive. 
We will use $\enc'$ for the input forest of automata and transducers,
and $\enc$ for the output forest of the transducers. 

We assume the reader to be familiar with the notion of 
a \emph{regular tree grammar}.
It is a context-free grammar~$G$ of which every rule is of the form 
$X_0\to \sigma(X_1\cdots X_m)$ where $X_i$ is a nonterminal and 
$\sigma$ is a terminal symbol of rank~$m$. Thus, $G$ generates a  
set $L(G)$ of ranked trees, which is called a regular tree language.  
The class of regular tree languages will be denoted $\REGT$.
We define a \emph{regular forest grammar} to be a context-free grammar $G$
of which every rule is of the form $X_0\to \sigma(X_1)X_2$ or $X\to\epsilon$,
where $\sigma$ is from an unranked alphabet. It generates 
a set $L(G)$ of (unranked) forests, which is called a regular forest language.
Obviously, $L$ is a regular forest language if and only if $\enc(L)$ 
is a regular tree language, and, as one can easily prove, if and only if 
$\enc'(L)$ is a regular tree language. 
The regular tree/forest grammar is a formal model of 
DTD (Document Type Definition) in XML.\footnote{In the litterature
regular forest languages are usually defined in a different way,
after which it is proved that $L$ is a regular forest language 
if and only if $\enc(L)$ is a regular tree language, 
thus showing the equivalence with our definition, 
see, e.g., \cite[Proposition~1]{Nev}.
}

\emph{Monadic second-order logic} (abbreviated as \abb{mso} logic) is used 
to describe properties of forests and trees. It views each forest or tree 
as a logical structure that has the set of nodes as domain. 
As basic properties of a forest over alphabet~$\Sigma$ it uses the atomic formulas 
$\lab_\sigma(x)$, $\down(x,y)$, and $\nex(x,y)$, meaning that 
node $x$ has label $\sigma\in\Sigma$, that $y$ is a child of $x$, and 
that $y$ is the next sibling of $x$, respectively.
Thus, $\down(x,y)$ and $\nex(x,y)$ represent the vertical and horizontal edges
of the graph representation of the forest.
For a ranked tree over ranked alphabet $\Sigma$ we could use the same 
atomic formulas, but it is customary to replace $\down(x,y)$ and $\nex(x,y)$
by the atomic formulas $\down_i(x,y)$, for every $i\in[1,\m_\Sigma]$,
meaning that $y$ is the $i$-th child of $x$. 
Additionally, \abb{mso}~logic has the atomic formulas 
$x=y$ and $x\in X$, where $X$ is a set of nodes. The formulas are built with the 
usual connectives $\neg$, $\wedge$, $\vee$, and $\to$; both node variables 
$x,y,\dots$ and node-set variables $X,Y,\dots$ can be quantified 
with~$\exists$ and~$\forall$.
For a forest (or ranked tree) $f$ over $\Sigma$ and a formula $\phi(x_1,\dots,x_n)$ 
with $n$ free node variables $x_1,\dots,x_n$, 
we write $f\models \phi(u_1,\dots,u_n)$ to mean that 
$\phi$ holds in $f$ for the nodes $u_1,\dots,u_n$ of $f$
(as values of the variables $x_1,\dots,x_n$ respectively).

We will occasionally use the following formulas: 
$\rt(x)$ and $\leaf(x)$ test whether node $x$ is a root or a leaf, 
and $\first(x)$ and $\last(x)$ test whether $x$ is a first or a last sibling.
Also, $\childno_i(x)$ tests whether $x$ is an $i$-th child, 
$\up(x,y)$ expresses that $y$ is the parent of $x$, 
and $\stay(x,y)$ expresses that $y$ equals $x$.
Thus, we define $\stay(x,y) \equiv x=y$ and 
\begin{enumerate}
\item[] $\rt(x) \equiv \neg\,\exists z(\down(z,x))$, 
        $\leaf(x) \equiv \neg\,\exists z(\down(x,z))$,
\item[] $\first(x) \equiv \neg\,\exists z(\nex(z,x))$,
        $\last(x) \equiv \neg\,\exists z(\nex(x,z))$,
\item[] $\childno_i(x) \equiv \exists z(\down_i(z,x))$,
        $\up(x,y) \equiv \down(y,x)$.
\end{enumerate}

\smallpar{Patterns}
Let $\Sigma$ be a ranked alphabet and $n\ge 0$.
An $n$-ary \emph{pattern} (or $n$-ary query) over $\Sigma$ is a set
$T \subseteq \{(t,u_1,\dots,u_n) 
   \mid t\in T_\Sigma, \,u_1,\dots,u_n \in N(t)\}$.
For $n=0$ this is a tree language, 
for $n=1$ it is a \emph{site} (trees with a distinguished node),
for $n=2$ it is a \emph{trip} \cite{trips}
(or a binary tree-node relation \cite{bloem}).

We introduce a new ranked alphabet $\Sigma \times \{0,1\}^n$,
the rank of $(\sigma,\ell)$ equals that of $\sigma$ in $\Sigma$.
For a tree $t$ over $\Sigma$ and $n$ nodes $u_1,\dots,u_n$
we define $\tmark(t,u_1,\dots,u_n)$ to be the tree over $\Sigma \times \{0,1\}^n$ 
that is obtained by adding to the label of each node $u$ in $t$ a vector 
$\ell \in \{0,1\}^n$ such that the $i$-th
component of $\ell$ equals $1$ if and only if $u=u_i$.
The $n$-ary pattern $T$ is \emph{regular} if
its marked representation 
is a regular tree language,
i.e., $\tmark(T)\in \REGT$.

An \abb{mso} formula $\phi(x_1,\dots,x_n)$ over $\Sigma$, 
with $n$ free node variables $x_1,\dots,x_n$, 
defines the $n$-ary pattern
$T(\phi) = \{(t,u_1,\dots,u_n) 
  \mid t \models \phi(u_1,\dots,u_n)\}$.
Note that $T(\phi)$ also depends on the order $x_1,\dots,x_n$
of the free variables of $\phi$. 
It easily follows from the result of 
Doner, Thatcher and Wright \cite{Don70,ThaWri}
that a pattern is \abb{mso} definable
if and only if it is regular (see~\cite[Lemma~7]{bloem}).

We will also consider patterns on forests. For an unranked alphabet $\Sigma$,
a (forest) pattern over $\Sigma$ is a subset of
$\{(f,u_1,\dots,u_n) \mid f\in F_\Sigma, \,u_1,\dots,u_n \in N(f)\}$.
As for ranked trees, an \abb{mso} formula $\phi(x_1,\dots,x_n)$ over $\Sigma$,  
defines the $n$-ary (forest) pattern
$\{(f,u_1,\dots,u_n) \mid f \models \phi(u_1,\dots,u_n)\}$.

%% ============================================================
%% ============================================================
\section{Automata and Transducers}\label{sec:autotrans}
%% ============================================================
%% ============================================================

In this section we define tree-walking automata and transducers with pebbles,
and discuss some of their properties.

\smallpar{Automata}
A \emph{tree-walking automaton with nested pebbles}
(pebble tree automaton for short, abbreviated \abb{pta})
is a finite state device with one reading head that walks from node to node
over its ranked input tree
following the vertical edges in either direction.
Additionally it has a supply of \emph{pebbles}
that can be used to mark the nodes of the tree.
The automaton may drop a pebble on the node
currently visited by the reading head,
but it may only lift any pebble from the current node if
that pebble was the last one dropped during the computation.
Thus, the life times of the pebbles on the tree are 
nested.
Here we consider two types of pebbles. First there are
a finite number of
``classical'' pebbles, which we here call \emph{visible} pebbles.
Each of these has a distinct colour, and 
at most $k$ visible pebbles (each with a different colour) can be 
present on the input tree during any computation,
where~$k$ is fixed.
Second there are \emph{invisible} pebbles. 
Again, these pebbles have a finite
number of colours
(distinct from those of the visible pebbles), 
but for each colour there is an
unlimited supply of pebbles that can be present on the input tree.
Visible pebbles can be observed by the automaton at
any moment when it visits the node where they were
dropped. An invisible pebble can only be observed
when it was the last pebble dropped on the tree 
during the computation.

The possible actions of the automaton 
are determined by its state,
the label of the current node, the child number of the node,
and the set of \emph{observable} pebbles on the current
node, that is, visible pebbles and an invisible
pebble when it was the last pebble dropped on the tree.
Unlike the \abb{PTA} from \cite{MilSucVia03},
our automata do \emph{not} branch (i.e., are not alternating).

The \abb{pta} is specified as a tuple 
$\cA = (\Sigma, Q, Q_0, F, C, C_\mathrm{v}, C_\mathrm{i}, R,k)$,
where $\Sigma$ is a ranked alphabet of input symbols,
$Q$ is a finite set of states,
$Q_0 \subseteq Q$ is the set of initial states,
$F \subseteq Q$ is the set of final states,
$C_\mathrm{v}$ and $C_\mathrm{i}$ 
are the finite sets of visible and invisible colours,
$C= C_\mathrm{v}\cup C_\mathrm{i}$,
$C_\mathrm{v} \cap C_\mathrm{i} = \nothing$,
$R$ is a finite set of rules, and $k\in \nat$.
Each rule is of the form
$\tup{q,\sigma,j,b} \to \tup{q',\alpha}$ 
such that $q,q'\in Q$, $\sigma \in \Sigma$, 
$j\in[0,\m_\Sigma]$,
$b\subseteq C$ with $\#(b\cap C_\mathrm{v})\leq k$ and 
$\#(b\cap C_\mathrm{i})\leq 1$,
and $\alpha$ is one of the following \emph{instructions}: 
\[
\begin{array}{ll}
\stay, & \\
\up & \text{provided } j\neq 0, \\
\down_i & \text{with } 1 \le i \le \rank_\Sigma(\sigma), \\
\drop_c & \text{with } c\in C 
  , \text{ and} \\
\lift_c & \text{with } c\in b,
\end{array}
\]
where the first three are \emph{move instructions} 
and the last two are \emph{pebble instructions}.
Note that, due to the nested life times of the pebbles, 
at most one pebble $c$ in $b$ can actually be lifted; 
however, the subscript $c$ of $\lift_c$
often increases the readability of a \abb{pta}.

A \emph{situation} $\tup{u,\pi}$ of the \abb{pta} $\cA$ on ranked tree $t$ over $\Sigma$ 
is given by the position $u$ of the head of $\cA$ on $t$,
and the stack $\pi$ containing the positions and colours of the
pebbles on the tree in the order in which they were dropped. Formally, 
$u\in N(t)$ and $\pi\in (N(t)\times C)^*$. The last element of $\pi$ represents 
the top of the stack. The set of all situations of $\cA$ on $t$ is denoted $\sit(t)$,
i.e., $\sit(t)=N(t)\times (N(t)\times C)^*$; note that it only depends on $C$.
A \emph{configuration} $\tup{q,u,\pi}$ 
of $\cA$ on $t$ additionally contains the state $q$ of $\cA$, $q\in Q$. 
It is \emph{final} when $q\in F$. 
An \emph{initial} configuration is of the form
$\tup{q_0,\rt_t,\epsilon}$
where $q_0 \in Q_0$,
$\rt_t$ is the root of $t$,
and $\epsilon$ is the empty stack.
The set of all configurations of $\cA$ on~$t$ is denoted $\con(t)$,
i.e., $\con(t)=Q\times N(t)\times (N(t)\times C)^*$.

We now define the computation steps of the \abb{pta} $\cA$, 
which lead from one configuration to another. 
For a given input tree $t$ they form a binary relation on $\con(t)$. 
A~rule $\tup{q,\sigma,j,b} \to \tup{q',\alpha}$ 
is \emph{relevant} to
every configuration $\tup{q,u,\pi}$ with state $q$
and with a situation $\tup{u,\pi}$ that satisfies the tests $\sigma$, $j$, and $b$,
i.e., $\sigma$ and $j$ are the label and child number of node $u$, 
and $b$ is the set of colours of the observable pebbles dropped on the node $u$. 
More precisely, $b$ consists of 
all $c\in C_\mathrm{v}$ such that $\pair{u,c}$ occurs in
$\pi$, plus $c \in C_\mathrm{i}$ if $\pair{u,c}$ is the topmost (i.e., last)
element of $\pi$. 
Application of the rule to such a configuration 
possibly leads to a new configuration $\tup{q',u',\pi'}$,
in which case we write $\tup{q,u,\pi}\Rightarrow_{t,\cA} \tup{q',u',\pi'}$.
The new state is $q'$ and the new situation $\tup{u',\pi'}$ is obtained from 
the situation $\tup{u,\pi}$ by the instruction $\alpha$. 
For the move instructions
$\alpha\!=\!\stay$,
$\alpha\!=\!\up$, and 
$\alpha\!=\!\down_i$
the pebble stack does not change, i.e., $\pi'=\pi$,  
and the new node $u'$
equals $u$, is the parent of $u$, or is the
$i$-th child of $u$, respectively.
For the pebble instructions
the node does not change, i.e., $u'=u$.
When $\alpha\!=\!\drop_c$, 
$\cA$~drops a pebble with colour $c$ on the current
node, thus  
the node-colour pair $\pair{u,c}$
is pushed onto the pebble stack~$\pi$, i.e., $\pi'=\pi(u,c)$, 
unless $c$ is a visible colour and the stack already
contains a pebble of that colour or already 
contains $k$ visible pebbles,
in which case the rule is not applicable.\footnote{To be precise,
the rule is not applicable if $c\in C_\mathrm{v}$, $\pi=(u_1,c_1)\cdots(u_n,c_n)$, and 
there exists $i\in[1,n]$ such that $c=c_i$, or $\#(\{i\in[1,n]\mid c_i\in C_\mathrm{v}\})=k$.
}
When $\alpha\!=\!\lift_c$, 
$\cA$~lifts a pebble with colour $c$ from the current node, 
only allowed if the topmost element of the
pebble stack is the pair $\pair{u,c}$, 
which is subsequently popped from the stack, i.e., $\pi=\pi'(u,c)$; 
otherwise this rule is not applicable.
We will also allow instructions like 
$\;\lift_c\,;\up\;$ with the obvious meaning
(first lift the pebble, then move up).
In this way we have defined the binary relation $\Rightarrow_{t,\cA}$ on $\con(t)$,
which represents the computation steps of~$\cM$. 
We will say informally that a computation step of $\cM$ \emph{halts successfully} 
if it leads to a final configuration. 

The \emph{tree language} $L(\cA)$ \emph{accepted by} \abb{pta}
$\cA$ consists of all ranked trees $t$ over~$\Sigma$
such that $\cA$ has a successful computation on $t$  
that starts in an initial configuration.
Formally, $L(\cA)=\{t\in T_\Sigma\mid 
\exists\, q_0\in Q_0, q_\infty\in F, \tup{u,\pi}\in \sit(t): 
\tup{q_0,\rt_t,\epsilon}\Rightarrow^*_{t,\cA} \tup{q_\infty,u,\pi}\}$.
Note that pebbles may remain in the final configuration 
and that the head need not return to the root.
Two \abb{PTA}'s $\cA$ and $\cB$ are \emph{equivalent} if $L(\cA)=L(\cB)$. 

By \abb{v$_k$i-pta} we denote a \abb{pta}
with last component $k$, i.e., 
that uses at most $k$ visible pebbles in its computations,
and an unbounded number of invisible pebbles,
and by \VIPTA{k} 
we denote the class of tree languages accepted by \abb{v$_k$i-pta}'s.
For $k=0$, automata that only use invisible pebbles, 
we also use the notation \abb{i-pta}, 
and for automata that only use $k$ visible pebbles
we use \abb{v$_k$-pta}. 
Moreover, \abb{ta} is used for tree-walking automata without
pebbles, i.e., \abb{v$_0$-pta}.
The lower case d or \family{d} is added 
when we only consider \emph{deterministic} automata,
which have a unique initial state, 
no final state in the left-hand side of a rule, and 
no two rules with the same left-hand side.  
Thus we have \abb{v$_k$i-}d\abb{pta}, \VIdPTA{k}, and variants.

\newpage
\smallpar{Properties of automata}
It is natural, and sometimes useful, to extend the \abb{v$_k$i-pta}
with the facility to test whether its pebble stack is nonempty, 
and if so, to test the colour of the topmost pebble. 
Thus, we define a \abb{pta} \emph{with stack tests} in the same way 
as an ordinary \abb{pta} 
except that its rules are of the form 
$\tup{q,\sigma,j,b,\gamma} \to \tup{q',\alpha}$ 
with $\gamma\in C\cup\{\epsilon\}$.  
Such a rule is relevant to a configuration $\tup{q,u,\pi}$ if, in addition, 
the pebble stack $\pi$ is empty if $\gamma=\epsilon$, and
the topmost pebble of $\pi$ has colour~$\gamma$ if $\gamma\in C$.\footnote{To be precise, for
$\pi=(u_1,c_1)\cdots(u_n,c_n)$ the requirements are the following: 
If $\gamma=\epsilon$ then $n=0$, i.e., $\pi=\epsilon$. 
If $\gamma\in C$ then $n\geq 1$ and $c_n=\gamma$. 
} 
All other definitions are the same. 
Note that, obviously, we may require for the above rule that 
$\gamma = c$ if $\alpha=\lift_c$,
which ensures that relevant rules with a lift-instruction 
are always applicable.\footnote{Additionally, we can require 
the following: If $\gamma=\epsilon$ then $b=\nothing$.
If $b\cap C_\mathrm{i}=\{c\}$ then $\gamma=c$. 
}

It is not difficult to see that these new tests do not extend 
the expressive power of the \abb{pta}. Informally we will say that 
the \abb{v$_k$i-pta} can \emph{perform stack tests}. 

\begin{lemma}\label{lem:stacktests} 
Let $k\geq 0$. For every \abb{v$_k$i-pta} with stack tests $\cA$ 
an equivalent (ordinary) \abb{v$_k$i-pta} $\cA'$ 
can be constructed in polynomial time. 
The construction preserves determinism and
the absence of invisible pebbles.\footnote{In other words, 
the statement of the lemma also holds for \abb{v$_k$i-}d\abb{pta}, 
\abb{v$_k$-pta} and \abb{v$_k$-}d\abb{pta}.} 
\end{lemma}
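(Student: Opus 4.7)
The plan is to store the colour of the topmost pebble (or the symbol $\epsilon$ for an empty stack) as an additional state component, and to encode into each pebble's colour the colour of the pebble that lay immediately beneath it at the moment it was dropped. Thus $\cA'$ has state set $Q' = Q \times (C\cup\{\epsilon\})$, visible colour set $C'_\mathrm{v} = C_\mathrm{v} \times (C\cup\{\epsilon\})$, and invisible colour set $C'_\mathrm{i} = C_\mathrm{i} \times (C\cup\{\epsilon\})$. A state $(q,\gamma)$ encodes that $\cA$ is in state $q$ and that the top of its pebble stack currently has colour $\gamma$ (with $\gamma=\epsilon$ signalling the empty stack); an augmented pebble with colour $(c,c'')$ represents an original pebble of colour $c$ that was dropped when the top of the stack had colour $c''$. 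The initial states of $\cA'$ are $Q_0\times\{\epsilon\}$ and its final states are $F\times(C\cup\{\epsilon\})$.

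Each rule $\langle q,\sigma,j,b,\gamma\rangle\to\langle q',\alpha\rangle$ of $\cA$ is translated, for every set $b'$ of augmented colours whose projections to first components equal $b$, into a rule of $\cA'$ with left-hand side $\langle(q,\gamma),\sigma,j,b'\rangle$. For a move instruction $\alpha\in\{\stay,\up,\down_i\}$ the right-hand side is $\langle(q',\gamma),\alpha\rangle$. For $\alpha=\drop_c$ we take right-hand side $\langle(q',c),\drop_{(c,\gamma)}\rangle$, thereby both recording the old top inside the new pebble and updating the state to reflect that $c$ is now on top. The delicate case is $\alpha=\lift_c$; by the footnote preceding the lemma we may assume $\gamma=c$, and since the pebble being lifted is on top of the stack it is always observable, so $b'$ contains a unique augmented colour of the form $(c,c'')$. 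We take the right-hand side to be $\langle(q',c''),\lift_{(c,c'')}\rangle$, correctly restoring the top colour by reading $c''$ off the lifted pebble.

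Correctness is established by induction on the length of the computation, using the invariant that whenever $\cA$ is in a configuration $\langle q,u,\pi\rangle$ with $\pi=(u_1,c_1)\cdots(u_n,c_n)$, the corresponding configuration of $\cA'$ is $\langle(q,\gamma),u,\pi'\rangle$ where $\gamma=c_n$ (or $\gamma=\epsilon$ if $n=0$) and $\pi'=(u_1,(c_1,\epsilon))(u_2,(c_2,c_1))\cdots(u_n,(c_n,c_{n-1}))$. The main obstacle is the $\lift_c$ case: after a pebble is removed, the resulting top colour may lie arbitrarily far down in the history of drops and cannot be tracked by the bounded-state control alone; encoding the predecessor's colour inside each pebble, combined with the fact that a pebble to be lifted is always observable, is precisely what makes this information recoverable. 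Determinism is preserved because each new left-hand side $\langle(q,\gamma),\sigma,j,b'\rangle$ arises from at most one original left-hand side $\langle q,\sigma,j,b,\gamma\rangle$, and the absence of invisible pebbles is preserved since $C_\mathrm{i}=\nothing$ implies $C'_\mathrm{i}=\nothing$; finally, the construction is polynomial in $|\cA|$.
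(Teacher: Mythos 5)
Your construction is the same as the paper's: augment each pebble's colour with the colour of the pebble beneath it, track the top colour in the finite state, and use the observability of a pebble being lifted to recover the new top colour from the lifted pebble's second component. The proposal is correct and matches the paper's proof essentially line for line, including the treatment of determinism and the polynomial-time bound.
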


\begin{proof}
Let $\cA = (\Sigma, Q, Q_0, F, C, C_\mathrm{v}, C_\mathrm{i}, R,k)$.
The new automaton $\cA'$ stepwise simulates $\cA$ and, additionally, stores 
in its finite state whether or not the pebble stack is nonempty, and if so, 
what is the colour in $C$ of the topmost pebble. 
Thus, $Q'=Q\times (C\cup\{\epsilon\})$, 
$Q_0'= Q_0\times \{\epsilon\}$, 
and $F'=F\times (C\cup\{\epsilon\})$.
Moreover, the colour sets of $\cA'$ are 
$C_\mathrm{v}' = C_\mathrm{v} \times (C\cup\{\epsilon\})$
and $C_\mathrm{i}' = C_\mathrm{i} \times (C\cup\{\epsilon\})$.
In fact, if the pebble stack of $\cA$ is 
$\pi=(u_1,c_1)(u_2,c_2)\cdots(u_n,c_n)$, with $(u_n,c_n)$ being the topmost pebble,
then the stack of $\cA'$ is 
$\pi'=(u_1,(c_1,\epsilon))(u_2,(c_2,c_1))\cdots(u_n,(c_n,c_{n-1}))$, 
where $\epsilon$ is viewed as a bottom symbol. Thus, the new colour of a pebble 
contains its old colour together with the old colour of the previously dropped pebble
(or $\epsilon$ if there is none). This allows $\cA'$ to update 
its additional finite state component when $\cA$ lifts a pebble. 
More precisely, when $\cA$ is in configuration $\tup{q,u,\pi}$, 
the automaton $\cA'$ is in configuration $\tup{(q,\gamma),u,\pi'}$, 
where $\gamma=c_n$ if $n\geq 1$ and $\gamma=\epsilon$ otherwise.

The rules of $\cA'$ are defined as follows. 
Let $\tup{q,\sigma,j,b,\gamma} \to \tup{q',\alpha}$ be a rule of $\cA$,
and let $b'$ be (the graph of) a mapping from $b$ to $C\cup\{\epsilon\}$.
If $\alpha$ is a move instruction, then $\cA'$ has the rule 
$\tup{(q,\gamma),\sigma,j,b'} \to \tup{(q',\gamma),\alpha}$.
If $\alpha = \drop_c$, then $\cA'$ has the rule
$\tup{(q,\gamma),\sigma,j,b'} \to \tup{(q',c),\drop_{(c,\gamma)}}$. 
If $\alpha=\lift_c$, $\gamma=c$, and $(c,\gamma')\in b'$,
then $\cA'$ has the rule  
$\tup{(q,\gamma),\sigma,j,b'} \to \tup{(q',\gamma'),\lift_{(c,\gamma')}}$.

It should be clear that the construction of $\cA'$ takes polynomial time. 
Note that $k$ is fixed and $\#(b)\leq k+1$ in the left-hand side of 
the rule $\tup{q,\sigma,j,b,\gamma} \to \tup{q',\alpha}$ of $\cA$. 
\end{proof}

\abb{PTA}'s with stack tests will only be used 
in Sections~\ref{sec:look-ahead} and~\ref{sec:variations}.
The next two properties of \abb{PTA}'s will not be used in later sections, 
but are meant to clarify some of the details in the semantics of the \abb{pta}. 

A rule of a \abb{v$_k$i-pta} $\cA$ is \emph{progressive} if it is applicable 
to every reachable configuration\footnote{The configuration 
$\tup{q,u,\pi}$ on the tree $t$ is \emph{reachable} if 
$\tup{q_0,\rt_t,\epsilon}\Rightarrow^*_{t,\cA} \tup{q,u,\pi}$
for some $q_0\in Q_0$.
\label{foot:reachcon}
}
to which it is relevant.
The \abb{v$_k$i-pta} $\cA$ is progressive if all its rules are progressive. 
Intuitively this means that $\cA$ knows that 
its instructions can always be executed.
Clearly, according to the syntax of a \abb{pta}, 
every rule with a move instruction is progressive. 
The same is true for rules with a 
pebble instruction $\drop_c$ or $\lift_c$ with $c\in C_\mathrm{i}$:
an invisible pebble can always be dropped
and an observable invisible pebble can always be lifted.
Thus, only the dropping and lifting of visible pebbles is problematic.
It is easy to see that, for the \mbox{\abb{v$_k$i-pta}}~$\cA'$ constructed in the 
proof of Lemma~\ref{lem:stacktests}, every rule with a lift-instruction
is progressive.

A \abb{v$_k$i-pta} $\cA$ is \emph{counting} if 
$C_v=[1,k]$ and, in each reachable configuration, 
the colours of the visible pebbles on the tree are $1,\dots,\ell$ for some $\ell\in [0,k]$,
in the order in which they were dropped.\footnote{To be precise,  
for $\pi=(u_1,c_1)\cdots(u_n,c_n)$ we require that there exists $\ell\in [0,k]$
such that $(c_{i_1},\dots,c_{i_m})=(1,\dots,\ell)$ where 
$\{i_1,\dots,i_m\}=\{i\in[1,n]\mid c_i\in C_\mathrm{v}\}$ and $i_1<\cdots < i_m$. 
}
Note that in the litterature \mbox{\abb{v$_k$-pta}'s}
are usually counting. We have chosen to allow arbitrarily many visible colours
in a \abb{v$_k$i-pta} 
because we want to be able to store information in the pebbles,
as in the proof of Lemma~\ref{lem:stacktests}. 
It is straightforward to construct 
an equivalent counting \abb{v$_k$i-pta} $\cA'$ 
for a given \abb{v$_k$i-pta} $\cA$ (preserving determinism
and the absence of invisible pebbles). 
The automaton $\cA'$ stepwise simulates $\cA$ and, 
additionally, stores 
in its finite state the colours of the visible pebbles that are dropped on the tree,
in the order in which they were dropped. Thus, the states of $\cA'$ are of the form 
$(q,\phi)$ where $q$ is a state of $\cA$ and 
$\phi$ is a string over $C_\mathrm{v}$ without repetitions, of length at most $k$. 
The state $(q,\phi)$ is final if $q$ is final.
The initial states are $(q,\epsilon)$ where $q$ is an initial state of $\cA$.
The rules of $\cA'$ are defined as follows. 
Let $\tup{q,\sigma,j,b} \to \tup{q',\alpha}$ be a rule of $\cA$
and let $(q,\phi)$ be a state of $\cA'$ such that 
every $c\in b\cap C_\mathrm{v}$ occurs in $\phi$.
Moreover, let $b'\subseteq[1,k]\cup C_\mathrm{i}$ be obtained from $b$ by changing every 
$c\in C_\mathrm{v}$ into $i$, if $c$ is the $i$-th element of $\phi$. 
If $\alpha$ is a move instruction, 
or a pebble instruction $\drop_c$ or $\lift_c$ with $c\in C_\mathrm{i}$
then $\cA'$ has the rule 
$\tup{(q,\phi),\sigma,j,b'} \to \tup{(q',\phi),\alpha}$.
If $\alpha = \drop_c$ with $c\in C_\mathrm{v}$,  
$c$ does not occur in $\phi$, and $|\phi|<k$, 
then $\cA'$ has the rule
$\tup{(q,\phi),\sigma,j,b'} \to \tup{(q',\phi c),\drop_{|\phi|+1}}$. 
Finally, if $\alpha=\lift_c$ with $c\in C_\mathrm{v}$, 
and $\phi=\phi'c$ for some $\phi'\in C^*_\mathrm{v}$, 
then $\cA'$ has the rule  
$\tup{(q,\phi),\sigma,j,b'} \to \tup{(q',\phi'),\lift_{|\phi|}}$.
It should be clear that $\cA'$ is counting. 
Note also that all rules of $\cA'$ with a drop-instruction are progressive.
Thus, if we first apply the construction
in the proof of Lemma~\ref{lem:stacktests} and then the one above, 
we obtain an equivalent progressive \abb{v$_k$i-pta}.
Obviously, every progressive \abb{v$_k$i-pta} can be turned into an equivalent
\abb{v$_{k+1}$i-pta} by simply changing its last component $k$ into $k+1$, and hence 
$\VIPTA{k} \subseteq \VIPTA{k+1}$ and $\VIdPTA{k} \subseteq \VIdPTA{k+1}$.\footnote{In fact, 
these four classes are equal, as will be shown in Theorem~\ref{thm:regt}. 
}

\smallpar{Transducers}
A \emph{tree-walking tree transducer with nested pebbles}
(abbreviated \abb{ptt})
is a \abb{pta} without final states that additionally  
produces an output tree
over a ranked alphabet~$\Delta$.
Thus, omitting $F$, it is specified as a tuple 
$\cM = (\Sigma, \Delta, Q, Q_0,C, C_\mathrm{v}, C_\mathrm{i}, R,k)$,
where $\Sigma$, $Q$, $Q_0$, $C$, $C_\mathrm{v}$, $C_\mathrm{i}$, 
and~$k$ are as for the \abb{pta}. The rules of $\cM$ in the finite set $R$
are of the same form as for the \abb{pta}, except that 
$\cM$ additionally has \emph{output rules} of the form
$\tup{q,\sigma,j,b} \to 
\delta(\,\tup{q_1,\mathrm{stay}}, \dots, \tup{q_m,\mathrm{stay}}\,)$
with $\delta\in \Delta$, and 
$q_1,\dots , q_m \in Q$, 
where~$m$ is the rank of $\delta$.
Intuitively, the output tree is produced recursively. 
In other words, in a configuration to which the above output rule is relevant
(defined as for the \abb{pta}) the \abb{ptt}
$\cM$ outputs $\delta$, and for each child
$\tup{q_i,\mathrm{stay}}$ branches into a new process,
a copy of itself started in state $q_i$ at the current node,
retaining the same
stack of pebbles; thus, the stack is copied 
$m$ times. 
Note that a relevant output rule 
is always applicable. 
As a shortcut 
we may replace the stay-instruction in any $\tup{q_i,\mathrm{stay}}$
by another move instruction or a pebble instruction,
with obvious semantics. 

An \emph{output form} of the \abb{ptt} $\cM$ on ranked tree $t$ over $\Sigma$
is a tree in $T_\Delta(\con(t))$, where $\con(t)$ is defined as for the \abb{pta}. 
Intuitively, such an output form consists on the one hand of $\Delta$-labeled nodes 
that were produced by $\cM$ previously in the computation, using output rules, 
and on the other hand of leaves that represent the independent copies of $\cM$ 
into which the computation has branched previously, due to those output rules, 
where each leaf is labeled by the current configuration of that copy. 
Note that $\con(t)\subseteq T_\Delta(\con(t))$, i.e., 
every configuration of $\cM$ is an output form. 

The computation steps of the \abb{ptt} $\cM$ lead from one output form to another.
Let $s$ be an output form and let $v$ be a leaf of $s$ 
with label $\tup{q,u,\pi}\in \con(t)$.
If $\tup{q,u,\pi} \Rightarrow_{t,\cM} \tup{q',u',\pi'}$,
where the binary relation $\Rightarrow_{t,\cM}$ on $\con(t)$ 
is defined as for the \abb{pta} (disregarding the output rules of $\cM$), 
then we write $s \Rightarrow_{t,\cM} s'$ where $s'$ is obtained from $s$ 
by changing the label of $v$ into $\tup{q',u',\pi'}$. Moreover, 
for every output rule $\tup{q,\sigma,j,b} \to 
\delta(\,\tup{q_1,\mathrm{stay}}, \dots, \tup{q_m,\mathrm{stay}}\,)$
that is relevant to configuration $\tup{q,u,\pi}$,  
we write $s \Rightarrow_{t,\cM} s'$ where $s'$ is obtained from $s$ 
by replacing the node $v$ by the subtree 
$\delta(\tup{q_1,u,\pi},\dots,\tup{q_m,u,\pi})$.
In the particular case that $m=0$, $s'$ is obtained from $s$ by changing the label 
of $v$ into $\delta$. In that case we will say informally 
that $\cM$ \emph{halts successfully},
meaning that the copy of $\cM$ corresponding to the node $u$ of $s$ disappears. 
In this way we have extended $\Rightarrow_{t,\cM}$ to 
a binary relation on $T_\Delta(\con(t))$.

The \emph{transduction} $\tau_\cM$ \emph{realized by} $\cM$ consists of all pairs of trees 
$t$ over $\Sigma$ and $s$ over $\Delta$ such that $\cM$ has a (successful) computation on $t$ 
that starts in an initial configuration and ends with $s$. Formally, we define
$\tau_\cM = \{(t,s)\in T_\Sigma\times T_\Delta\mid 
\exists\, q_0\in Q_0: \tup{q_0,\rt_t,\epsilon}\Rightarrow^*_{t,\cM} s\}$. 
Two \abb{ptt}'s $\cM$ and $\cN$ are \emph{equivalent} if $\tau_\cM=\tau_\cN$.

The \emph{domain of} $\cM$ is defined to be the domain of $\tau_\cM$, 
i.e., the tree language
$L(\cM)=\{t\in T_\Sigma\mid \exists\, s\in T_\Delta: (t,s)\in\tau_\cM\}$. 
When $\cM$ is viewed as a recognizer of its domain, it is actually 
the same as an alternating \abb{pta}. Existential states in the alternation
correspond to the nondeterminism of the \abb{ptt},
universal states correspond to the recursive way in which
output trees are generated. More precisely, an output rule 
$\tup{q,\sigma,j,b}\to\delta(\,\tup{q_1,\mathrm{stay}}, \dots, \tup{q_m,\mathrm{stay}}\,)$
corresponds to a universal state $q$ that requires every state $q_i$ to have
a successful computation (and the output symbol $\delta$ is irrelevant).
An ordinary (non-alternating) \abb{pta} then corresponds to a \abb{ptt} 
for which every output symbol has rank 0; for $m=0$ the above output rule 
means that the \abb{pta} halts in a final state. 
We say that the \abb{ptt} $\cM$ is \emph{total} if $L(\cM)=T_\Sigma$, i.e., 
$\tau_\cM(t)\neq\emptyset$ for every input tree $t$.  

Similar to the notation \VIPTA{k} for tree languages,
we use the notation \VIPTT{k} for the class of transductions
defined by tree-walking tree transducers
 with $k$ visible
nested pebbles and an
unbounded number of invisible pebbles, 
as well as
the obvious variants
\VPTT{k}, and \IPTT.
Additionally \TT\ denotes the class of transductions 
realized by tree-walking tree transducers 
without pebbles, i.e., \VPTT{0}.
Such a transducer is specified as a tuple 
$\cM = (\Sigma, \Delta, Q, Q_0, R)$,
and the left-hand sides of its rules are written $\tup{q,\sigma,j}$, 
omitting $b=\nothing$. 
As for \abb{pta}'s, lower case \family{d} is added for
\emph{deterministic} transducers,
which have a unique initial state and 
no two rules with the same left-hand side. 
Moreover, lower case \family{td} is used for \emph{total deterministic} transducers,
i.e., transducers that are both total and deterministic.
Note that a deterministic \abb{ptt} realizes a function,
and a total deterministic \abb{ptt} a total function from $T_\Sigma$ to $T_\Delta$. 

\smallpar{Properties of transducers}
Stack tests are defined for the \abb{ptt} as for the \abb{pta},
and Lemma~\ref{lem:stacktests} and its proof carry over to \abb{ptt}'s.
If a given \abb{ptt} $\cM$ has the output rule $\tup{q,\sigma,j,b,\gamma} \to 
\delta(\tup{q_1,\stay},\dots,\tup{q_m,\stay})$,
and $b'$ is (the graph of) a mapping from~$b$ to $C\cup\{\epsilon\}$
as in the proof for \abb{pta}'s,
then the constructed \abb{ptt}~$\cM'$ has the rule $\tup{(q,\gamma),\sigma,j,b'} \to 
\delta(\tup{(q_1,\gamma),\stay},\dots,\tup{(q_m,\gamma),\stay})$. 

Progressive \abb{ptt}'s can be defined as for \abb{pta}'s,
based on the notion of a reachable configuration, cf. footnote~\ref{foot:reachcon}.
An output form $s$ of the \abb{ptt} $\cM$ on the input tree $t$ is \emph{reachable} if 
$\tup{q_0,\rt_t,\epsilon}\Rightarrow^*_{t,\cM} s$
for some $q_0\in Q_0$.   
A~configuration of $\cM$ on $t$ is \emph{reachable} 
if it occurs in some reachable output form of $\cM$ on $t$. 
Note that every \abb{i-ptt} is progressive. 

Also, counting \abb{ptt}'s can be defined as for \abb{pta}'s.
For every \abb{v$_k$i-ptt} $\cM$ an equivalent counting \abb{v$_k$i-ptt} $\cM'$
can be constructed, just as for \abb{pta}'s.
If $\tup{q,\sigma,j,b,\gamma} \to \delta(\tup{q_1,\stay},\dots,\tup{q_m,\stay})$ 
is an output rule of $\cM$,
and $\phi$ and $b'$ are as in the proof for \abb{pta}'s,
then $\cM'$ has the rule $\tup{(q,\phi),\sigma,j,b'} \to 
\delta(\tup{(q_1,\phi),\stay},\dots,\tup{(q_m,\phi),\stay})$. 
Thus, as for \abb{pta}'s, every \abb{v$_k$i-ptt}
can be turned into an equivalent progressive \abb{v$_k$i-ptt},
with determinism and the absence of invisible pebbles preserved. That implies that 
$\VIPTT{k} \subseteq \VIPTT{k+1}$ and $\VIdPTT{k} \subseteq \VIdPTT{k+1}$. 

\medskip
We end this section with an example of an \abb{i-ptt}. 

\begin{example}\label{ex:siberie}\leavevmode 
We want to generate itineraries for a trip 
along the Trans-Siberian Railway, starting in Moscow
and ending in Vladivostok, and optionally visiting
some cities along the way. An XML document
lists all the stops:
\begin{small}
\begin{verbatim}
<stop name="Moscow" large="1" initial="1">
 ...
  <stop name="Birobidzhan" large="0">
   ...
    <stop name="Vladivostok" large="1" final="1" />
   ...
  </stop>
 ...
</stop>
\end{verbatim}
\end{small}
The initial and final stops are marked, 
and for every stop the
\texttt{large} attribute indicates whether or not the
stop is in a large city. 
We want to generate a list 
\begin{small}
\begin{verbatim}
<result>it-1
        <result>it-2
            ...
                <result>it-n
                        <endofresults />
                </result>
            ...
        </result>
</result>
\end{verbatim}
\end{small}
where {\small\texttt{it-1,it-2,...,it-n}} are all itineraries
(i.e., lists of stops) that satisfy the constraint
that one does not visit a small city twice in a row. 
An example input XML document, with the corresponding
output XML document is given in Tables~\ref{tab:input} 
and~\ref{tab:output} (where, e.g.,  
{\small\verb+</stop>^3+} abbreviates {\small\verb+</stop></stop></stop>+}).
A deterministic \abb{i-ptt} $\cM_{\text{sib}}$ is able to perform this 
XML transformation by systematically enumerating all
possible lists of stops, marking each stop in the list
(except the initial and final stop) by a pebble.
Since the pebbles are invisible, $\cM_{\text{sib}}$ constructs a 
possible list of stops on the pebble stack 
\emph{in reverse}, so that the stops will appear in the
output tree in the correct order.

\begin{table*}[ht]
\begin{small}
\begin{verbatim}
     <?xml version="1.0" encoding="UTF-8"?>
     <?xml-stylesheet type="text/xsl" href="transsiberie.xsl"?>

     <stop name="Moscow" large="1" initial="1">
       <stop name="Stop 2" large="0">
         <stop name="Stop 3" large="0">
           <stop name="LargeStop 4" large="1">
             <stop name="Stop 5" large="0">
               <stop name="Vladivostok" large="1" final="1"/>
     </stop>^5
\end{verbatim}
\end{small}
\caption{Input}\label{tab:input}
\end{table*}

\begin{table*}[p]
\begin{scriptsize}
\begin{verbatim}
<result>
  <stop name="Moscow" large="1" initial="1">
    <stop name="Stop 3" large="0">
      <stop name="LargeStop 4" large="1">
        <stop name="Stop 5" large="0">
          <stop name="Vladivostok" large="1" final="1"/>
  </stop>^4
  <result>
    <stop name="Moscow" large="1" initial="1">
      <stop name="Stop 2" large="0">
        <stop name="LargeStop 4" large="1">
          <stop name="Stop 5" large="0">
            <stop name="Vladivostok" large="1" final="1"/>
    </stop>^4
    <result>
      <stop name="Moscow" large="1" initial="1">
        <stop name="LargeStop 4" large="1">
          <stop name="Stop 5" large="0">
            <stop name="Vladivostok" large="1" final="1"/>
      </stop>^3
      <result>
        <stop name="Moscow" large="1" initial="1">
          <stop name="Stop 5" large="0">
            <stop name="Vladivostok" large="1" final="1"/>
        </stop>^2
        <result>
          <stop name="Moscow" large="1" initial="1">
            <stop name="Stop 3" large="0">
              <stop name="LargeStop 4" large="1">
                <stop name="Vladivostok" large="1" final="1"/>
          </stop>^3
          <result>
            <stop name="Moscow" large="1" initial="1">
              <stop name="Stop 2" large="0">
                <stop name="LargeStop 4" large="1">
                  <stop name="Vladivostok" large="1" final="1"/>
            </stop>^3
            <result>
              <stop name="Moscow" large="1" initial="1">
                <stop name="LargeStop 4" large="1">
                  <stop name="Vladivostok" large="1" final="1"/>
              </stop>^2
              <result>
                <stop name="Moscow" large="1" initial="1">
                  <stop name="Stop 3" large="0">
                    <stop name="Vladivostok" large="1" final="1"/>
                </stop>^2
                <result>
                  <stop name="Moscow" large="1" initial="1">
                    <stop name="Stop 2" large="0">
                      <stop name="Vladivostok" large="1" final="1"/>
                  </stop>^2
                  <result>
                    <stop name="Moscow" large="1" initial="1">
                      <stop name="Vladivostok" large="1" final="1"/>
                    </stop>
                    <endofresults/>
                  </result>
                </result>
</result>^8
\end{verbatim}
\end{scriptsize}
\caption{Output}\label{tab:output}
\end{table*}

Since in this example the XML tags are ranked,
there is no need for a binary encoding of the XML documents.
The input alphabet $\Sigma$ of $\cM_{\text{sib}}$ consists of all
{\small\texttt{<stop~at>}} where {\small\texttt{at}} is a possible
value of the attributes. 
The rank of \mbox{{\small\texttt{<stop at>}}} is 0 if 
{\small\texttt{final="1"}} and 1 otherwise. The
output alphabet $\Delta$ consists of $\Sigma$, the tag
$r =\;${\small\texttt{<result>}} of rank~$2$, and the tag 
$e =\;${\small\texttt{<endofresults>}} of rank~0. 
The set of pebble colours is $C=C_{\mathrm i}=\{0, 1\}$, with $C_\mathrm{v}=\nothing$.
The transducer $\cM_{\text{sib}}$ will not use the attribute {\small\texttt{initial}},
as it can recognize the root by its child number~0. 
Also, it will disregard the attribute {\small\texttt{large}} 
of the initial and the final stop, and always consider them as large cities.
The set of states of $\cM_{\text{sib}}$ is 
$Q=\{q_\mathrm{start},q_1,q_0,q_\mathrm{out},q_\mathrm{next}\}$
with $Q_0=\{q_\mathrm{start}\}$. 

In the rules below the variables range over the following values:
$\sigma_0 \in \Sigma^{(0)}$,
$\sigma_1 \in \Sigma^{(1)}$,
$j,c\in \{0,1\}$,
and, 
for $i\in \{0,1\}$, 
$\lambda_i\in \{\texttt{\small{<stop at>}}\in\Sigma \mid
\texttt{\small{large}="}i\texttt{"}\}$.
The \abb{i-ptt} $\cM_{\text{sib}}$ first walks from Moscow to Vladivostok
in state $q_\mathrm{start}$:

$\tup{q_\mathrm{start},\sigma_1,j,\nothing} \to \tup{q_\mathrm{start},\mathrm{down}_1}$

$\tup{q_\mathrm{start},\sigma_0,1,\nothing} \to 
\tup{q_1,\mathrm{up}}$

\noindent
State $q_c$ remembers whether the most recently marked city
is small or large; when a new city is marked with a pebble,
it gets the colour $c$. In states $q_0$ and $q_1$ 
as many cities are marked as possible (in the second rule,
$c=1$ or $i=1$): 

$\tup{q_0,\lambda_0,1,\nothing} \to 
\tup{q_0,\mathrm{up}}$

$\tup{q_c,\lambda_i,1,\nothing} \to 
\tup{q_i,\mathrm{drop}_c;\mathrm{up}}$

$\tup{q_c,\sigma_1,0,\nothing} \to 
r(\tup{q_\mathrm{out},\mathrm{stay}}, 
  \tup{q_\mathrm{next},\mathrm{down}_1})$

\noindent
In state $q_\mathrm{out}$ an itinerary is generated as output,
while state $q_\mathrm{next}$ continues the search for itineraries
by unmarking the most recently marked city:

$\tup{q_\mathrm{out},\sigma_1,0,\nothing} \to 
\sigma_1(\tup{q_\mathrm{out},\mathrm{down}_1})$

$\tup{q_\mathrm{out},\sigma_1,1,\nothing} \to 
\tup{q_\mathrm{out},\mathrm{down}_1}$

$\tup{q_\mathrm{out},\sigma_1,1,\{c\}} \to 
\sigma_1(\tup{q_\mathrm{out},\mathrm{lift_c;down}_1})$

$\tup{q_\mathrm{out},\sigma_0,1,\nothing} \to \sigma_0$

$\tup{q_\mathrm{next},\sigma_1,1,\nothing} \to 
\tup{q_\mathrm{next},\mathrm{down}_1}$

$\tup{q_\mathrm{next},\sigma_1,1,\{c\}} \to 
\tup{q_c,\mathrm{lift_c;up}}$

$\tup{q_\mathrm{next},\sigma_0,1,\nothing} \to e$

\noindent
Note that this XML transformation cannot be realized by 
a \abb{v-ptt}, because the height of the output tree is,
in general, exponential in the size of the input tree,
whereas it is polynomial for \abb{v-ptt}'s 
(cf.~\cite[Lemma~7]{EngMan03}). 
\end{example}

%% ============================================================
%% ============================================================
\section{Decomposition}\label{sec:decomp}
%% ============================================================
%% ============================================================

In this section we decompose every \abb{ptt} into a sequence of
\abb{tt}'s, i.e., transducers without pebbles.
This is useful as it will give us information on 
the domain of a \abb{ptt}, see Theorem~\ref{thm:regt}, 
and on the complexity of typechecking the \abb{ptt}, 
see Theorem~\ref{thm:typecheck}.

It is possible to reduce the number of visible pebbles used,
by preprocessing the input tree with a total deterministic \abb{tt}.
This was shown in 
\cite[Lemma~9]{EngMan03}
for transducers with only visible pebbles.
The basic idea of that proof can be extended to include
invisible pebbles.  

\begin{lemma}\label{lem:decomp}
Let $k\ge 1$. For every \abb{v$_k$i-ptt} $\cM$ a total deterministic \abb{tt} $\cN$ 
and a \mbox{\abb{v$_{k-1}$i-ptt}} $\cM'$ 
can be constructed in polynomial time such that $\tau_{\cN}\circ\tau_{\cM'}= \tau_{\cM}$.
If $\cM$ is deterministic, then so is $\cM'$. Hence,
for every $k\ge 1$, 
$$\VIPTT{k} \subseteq \tdTT \circ \VIPTT{k-1}
\text{ and }
\VIdPTT{k} \subseteq \tdTT \circ \VIdPTT{k-1}.$$ 
\end{lemma}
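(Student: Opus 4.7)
The plan is to adapt the strategy of \cite{EngMan03} (Lemma~9) for pure visible pebbles so that invisible pebbles are handled transparently. I would have the preprocessor $\cN$ absorb the bottom-most (i.e., first-dropped) visible pebble of $\cM$ into the input tree. Concretely, $\cN(t)$ is $t$ augmented so that every node $v$ carries a distinguished ``satellite'' child pointing to a marked copy $t^{(v)}$ of $t$: inside $t^{(v)}$ the node $v$ carries a special label, and each node $u$ on the path from the root of $t^{(v)}$ down to $v$ carries an auxiliary label indicating which child of $u$ continues toward $v$. In this way the position of the absorbed pebble is readable directly from the label at $v$, and a pebble-free simulator can descend from the root of any $t^{(v)}$ to its distinguished node in finite control. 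The input alphabet of $\cM'$ is the enlarged ranked alphabet that supports these extra satellite edges and auxiliary labels.

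\textbf{Simulation by $\cM'$.} The \abb{v$_{k-1}$i-ptt} $\cM'$ simulates $\cM$ on $\cN(t)$ step-by-step. As long as $\cM$'s stack contains no visible pebble, $\cM'$ walks on the main skeleton, ignoring the satellite edges, while keeping a faithful copy of $\cM$'s invisible-pebble stack on the corresponding main-skeleton nodes. When $\cM$ issues $\drop_c$ for its first visible pebble at $v$, $\cM'$ descends the satellite edge at $v$, follows the on-path labels down to the distinguished node in $t^{(v)}$, and resumes the simulation inside $t^{(v)}$; the distinguished label plays for $\cM'$ the exact role that the first visible pebble plays for $\cM$. Subsequent drops and lifts of visible pebbles by $\cM$ (its second through $k$-th) are mirrored by $\cM'$ using its own $k-1$ visible pebbles, and invisible pebbles are mirrored identically on the corresponding nodes of $t^{(v)}$. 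When $\cM$ lifts the first visible pebble it must be at $v$, so by the invariant $\cM'$ is at the distinguished node of $t^{(v)}$; it then walks up the on-path to the root of $t^{(v)}$ and across the satellite edge to $v$ on the main skeleton, resuming with empty visible-pebble stack. Output rules of $\cM$ are mirrored one-for-one by output rules of $\cM'$ at the corresponding position. Correctness, i.e.\ $\tau_\cN\circ\tau_{\cM'}=\tau_\cM$, follows from a direct bijection between successful computations: the extra walking $\cM'$ performs across satellite edges and on-path labels produces no output, and the pebble-stack bookkeeping matches by construction. Determinism of $\cM'$ is inherited because every choice it makes---including when to enter or leave a satellite and which child to descend inside $t^{(v)}$---is forced either by the current rule of $\cM$ or by the labels read from $\cN(t)$.

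\textbf{The tdTT $\cN$ and the main obstacle.} The step I expect to be most delicate is the construction of $\cN$ itself: as a \tdTT{} with no pebbles, it must produce for each node $v$ a marked copy of the entire tree $t$ while having no way to ``remember'' $v$ across unbounded walks. My plan is to have $\cN$ traverse $t$ in preorder on the main skeleton, and at each node $v$ spawn a side-process that walks from $v$ upward to the root of $t$, outputting each visited ancestor $u$ with an on-path label recording which child of $u$ is on-path, and delegating every off-path subtree of $u$ to a standard tree-copying state that generates an unmarked copy; at $v$ itself $\cN$ outputs the distinguished label and unmarked copies of $v$'s original subtrees. This bookkeeping is purely finite-state and fits the \tdTT{} format, the overall construction is polynomial in $\cM$, and $\cN$ is deterministic by design. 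The inclusions $\VIPTT{k}\subseteq \tdTT\circ\VIPTT{k-1}$ and $\VIdPTT{k}\subseteq \tdTT\circ\VIdPTT{k-1}$ then follow immediately. Passing from pure visible pebbles (as in~\cite{EngMan03}) to visible plus invisible pebbles adds no real difficulty, because invisible pebbles interact with the simulation only through the top-of-stack observability rule, and this rule is preserved whenever each invisible drop or lift of $\cM$ is translated into the corresponding invisible drop or lift of $\cM'$ on the corresponding node of $\cN(t)$.
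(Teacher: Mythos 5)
Your overall strategy is the right one, but there are two genuine gaps. First, the preprocessing you describe cannot be produced by a pebble-free \abb{tt} in the orientation your simulation needs. You want the satellite attached to $v$ to be a copy $t^{(v)}$ of $t$ in its original orientation, with the root of $t$ on top and $v$ marked somewhere inside, so that $\cM'$ can ``descend the satellite edge and follow the on-path labels down to $v$''. But a \abb{tt} generates output strictly top-down: to emit the root of $t^{(v)}$ first, the side-process spawned at $v$ would have to walk silently from $v$ to the root of $t$ before producing any output, and at that point it has irrevocably lost the position of $v$ (no pebbles, finite state). Your own description of the side-process --- walking upward from $v$ and outputting each visited ancestor --- produces output whose nesting is \emph{inverted} relative to $t$: the copy of $v$ becomes the root of the satellite and the ancestors of $v$ hang below it. That inverted (``folded'') copy is in fact the only thing a \abb{tt} can produce here, and it is what the paper uses; but then the satellite's root \emph{is} the marked node (entry and exit take one step, with no descent along on-path labels), and the simulation of $\cM$'s moves along the root-to-$v$ path inside the satellite must be systematically reinterpreted ($\up$ becomes a move to the extra last child, and $\down_i$ on the path becomes $\up$), which your simulation does not account for. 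As written, your $\cN$ and your $\cM'$ are talking about two different trees.

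Second, you have no mechanism for deciding applicability of the lift of the absorbed pebble. When $\cM$ executes $\lift_c$ for its first visible pebble $c$, the rule is applicable only if $(v,c)$ is the topmost element of $\cM$'s stack, i.e., only if every invisible pebble dropped after $c$ has already been lifted. Since $c$ itself is not on $\cM'$'s stack (it is encoded in the input structure and finite state), $\cM'$'s topmost pebble at that moment is some invisible pebble of $\cM$ that may lie either above or below $c$ in $\cM$'s stack, and $\cM'$ cannot tell which. ``The pebble-stack bookkeeping matches by construction'' does not resolve this: without an extra device, $\cM'$ would simulate lifts that $\cM$ cannot perform (breaking $\tau_{\cN}\circ\tau_{\cM'}\subseteq\tau_{\cM}$), or fail to block when a deterministic $\cM$ gets stuck. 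The paper's fix is to have $\cM'$ drop one additional special invisible pebble $\odot$ at the marked node upon entering the satellite, so that $\odot$ occupies in $\cM'$'s stack exactly the position that $c$ occupies in $\cM$'s stack; observing $\odot$ then certifies that $c$ is on top. Some such marker is indispensable.
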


\begin{proof}
Let $\cM = (\Sigma, \Delta, Q, Q_0,C, C_\mathrm{v}, C_\mathrm{i}, R,k)$ be a \abb{ptt} 
with $k$ visible pebbles. 
The construction of the \abb{tt} $\cN$ and the \abb{ptt} $\cM'$ with $k-1$ visible pebbles 
is a straightforward extension of the one in \cite[Theorem 5]{Eng09}, 
which slightly differs from the one in the proof of \cite[Lemma~9]{EngMan03}, 
but uses the same basic idea. 
For completeness sake we repeat a large part of the proof of \cite[Theorem 5]{Eng09}, 
adapted to the current formalism. 
The simple idea of the proof is to preprocess the input tree $t\in T_\Sigma$ in such a way that the dropping and lifting of the first visible pebble can be simulated by walking into and out of specific areas of the preprocessed input tree $\p(t)$. This preprocessing is independent of the given pebble tree transducer $\cM$. More precisely, $\p(t)$ is obtained from $t$ by attaching to each node $u$ of $t$, as an additional (last) subtree, a fresh copy of $t$ in which (the copy of) node $u$ is marked; let us denote this subtree by $t_u$. Thus, if $t$ has $n$ nodes, then $\p(t)$ has $n+n^2$ nodes. The subtrees $t_u$ of $\p(t)$ are the ``specific areas'' mentioned above. As long as there are no visible pebbles on~$t$, $\cM'$ stepwise simulates $\cM$ on the original nodes of $t$, which form the ``top level'' of $\p(t)$. When $\cM$ drops the first visible pebble $c$ on node $u$, $\cM'$ enters $t_u$ and walks to the marked node, storing $c$ in its finite state. As long as $\cM$ keeps pebble $c$ on the tree, $\cM'$ stays in $t_u$, stepwise simulating $\cM$ on $t_u$ rather than~$t$. Since $u$ is marked in $t_u$, $\cM$'s pebble $c$ at $u$ is visible to the transducer $\cM'$, not as a pebble but as a marked node. Thus, during this time, $\cM'$ only uses $k-1$ visible pebbles.
When $\cM$ lifts pebble~$c$ from $u$ (and hence all visible pebbles are lifted), $\cM'$ walks from the copy of $u$ out of $t_u$, back to the original node $u$, and continues simulating $\cM$ on the top level of $\p(t)$ until $\cM$ again drops a visible pebble.  
There is one problem: how does $\cM'$ know whether or not pebble~$c$ is on top of the stack 
when $\cM$ tries to lift it? To solve this problem, 
$\cM'$ uses an additional special invisible pebble $\odot$. It drops pebble~$\odot$ at the copy of $u$ and thus knows that pebble~$c$ is at the top of the stack (for $\cM$) when it observes pebble~$\odot$. Thus, at any moment of time, $\cM'$ has the same pebble stack as~$\cM$, except that $c$ is replaced by $\odot$
and, moreover, the (invisible) pebbles below $\odot$ are on the top level of $\p(t)$, 
whereas $\odot$ and the pebbles above it are on $t_u$. 

Unfortunately, this preprocessing cannot be realized by a \abb{tt} (though it can easily be realized by a \abb{v$_1$-ptt}). For this reason we ``fold'' $t_u$ at the node $u$, such that (the marked copy of) $u$ becomes its root; let us denote the resulting tree by $\hat{t}_u$. Roughly, $\hat{t}_u$ is obtained from $t_u$ by inverting the parent-child relationship between the ancestors of $u$ (including $u$), similarly as in the tree traversal algorithm sometimes known as ``link inversion'' \cite[p.562]{Knuth}. Appropriate information is added to the node labels of those ancestors to reflect this inversion. As these changes are local (i.e., each node keeps the same neighbours) and clearly marked in the tree, $\cM'$ can easily reconstruct the unfolded $t_u$, and simulate $\cM$ as before. Note also that, with this change of $\p(t)$, dropping or lifting of the first visible pebble can be simulated by $\cM'$ in one computation step, because the marked copy of $u$ is the last child of the original $u$. 

Now a \abb{tt} $\cN$ can compute $\p(t)$, as follows\footnote{See also \cite[Example~3.7]{MilSucVia03} where $\hat{t}_u$ occurs as ``a complex rotation of the input tree'' $t$, albeit for leaves $u$ only.}. 
It copies $t$ to the output (adding primes to its labels), but when it arrives at node $u$ it additionally outputs the copy $\hat{t}_u$ of $t$ in a side branch of the computation. 
Copying the descendants of $u$ ``down stream'' is an easy recursive task.
To invert the parent-child relationship between the nodes on the path from $u$ to $\rt_t$, 
$\cN$ uses a single process that walks along the nodes
of that path ``up stream'' to the root, inverting the relationships in the copy.
Copies of other siblings of children on the path are connected 
as in $t$, and their descendants are copied ``down stream''.
More precisely, if in $t$ the $i$-th child $v$ of parent $w$ is on the path, then, 
in the output $\hat{t}_u$, 
$v$ has an additional (last) child that corresponds to $w$, and $w$ has the same children 
(with their descendants) as in $t$, except that its $i$-th child is a node that is labeled by 
the bottom symbol $\bot$ of rank~0. For the sake of uniformity, 
$\rt_t$ is also given an additional (last) child, with label $\bot$. 
Note that the nodes of $t$ correspond one-to-one to the non-bottom nodes of $\hat{t}_u$; in particular, the path in $t$ from $u$ to $\rt_t$ corresponds to the
path in $\hat{t}_u$ from its root to the parent of its rightmost leaf. 
The bottom nodes of $\hat{t}_u$ will not be visited by $\cM'$.  

A picture of $\p(t)$ is given in 
Fig.~\ref{fig:encpeb}, where $\hat{t}_u$ is drawn for 
two nodes only. Note that in this picture the root of 
the copy of $t$ (which is also the root of $\p(t)$) is the top of 
the triangle, but the root of $\hat{t}_u$ is $u$ (and, 
of course, similarly for $v$). 
\begin{figure}
\centerline{\begin{tikzpicture}[scale=0.1,>=stealth,shorten >=.6pt]
\tikzset{dot/.style={circle,minimum size=1.4mm,inner sep=0pt,fill=black}}
%boom 1
\node at (45,50) {$t$};
\draw[fill=white] (40,30) -- (60,30) -- (50,50) -- cycle;
%boom 2
\node at (25,30) {$\hat{t}_u$};
\draw[fill=white] (20,10) -- (40,10) -- (30,30) -- cycle;
\node[dot,draw] (A) at (45,35) {};
\node at (45,32) {$u$};
\node[dot,draw] (B) at (25,15) {};
\node at (25,12) {$u$};
\path[draw,->] (A) to (B);
%boom 3
\node at (75,33) {$\hat{t}_v$};
\draw[fill=white] (60,13) -- (80,13) -- (70,33) -- cycle;
\node[dot,draw] (A) at (52,40) {};
\node at (52,37) {$v$};
\node[dot,draw] (B) at (72,23) {};
\node at (72,20) {$v$};
\path[draw,->] (A) to (B);
\end{tikzpicture}}
\caption{Output tree $\p(t)$ of the \abb{tt} $\cN$ of Lemma~\ref{lem:decomp}
for input tree $t$.}\label{fig:encpeb}
\end{figure}
As a concrete example, consider $t= \sigma(\delta(a,b),c)$ where $\sigma,\delta$ have rank~2 and 
$a,b,c$ rank~0. We will name the nodes of $t$ by their labels. Then 
$$\p(t) = \sigma'(\delta'(a'(\hat{t}_a),b'(\hat{t}_b),\hat{t}_\delta),c'(\hat{t}_c),\hat{t}_\sigma)$$ 
where 
$$
\begin{array}{lll}
\hat{t}_a &=& a_{0,1}(\delta_{1,1}(\bot,b,\sigma_{1,0}(\bot,c,\bot))),\\
\hat{t}_b &=& b_{0,2}(\delta_{2,1}(a,\bot,\sigma_{1,0}(\bot,c,\bot))),\\
\hat{t}_\delta &=& \delta_{0,1}(a,b,\sigma_{1,0}(\bot,c,\bot)),\\
\hat{t}_c &=& c_{0,2}(\sigma_{2,0}(\delta(a,b),\bot,\bot)), \mbox{ and}\\ 
\hat{t}_\sigma &=& \sigma_{0,0}(\delta(a,b),c,\bot).
\end{array}
$$
The subscripted node labels are on the rightmost paths of the $\hat{t}_u$'s; the subscripts contain ``reconstruction'' information, to be explained below. As another example, if $t$ is the monadic tree $a(b^m(c(d^n(e))))$ of height $m+n+3$, and $u$ is the \mbox{$c$-labelled} node, then $\hat{t}_u = c_{0,1}(s_1,s_2)$ with 
$s_1=d^n(e)$ and $s_2$ is the binary tree $b_{1,1}(\bot,b_{1,1}(\bot,\dots b_{1,1}(\bot,a_{1,0}(\bot,\bot))\cdots ))$ of height $m+2$. This shows more clearly that $\hat{t}_u$ is obtained by ``folding''.

We now formally define the deterministic \abb{tt} $\cN$ that, for given ranked alphabet $\Sigma$, realizes the preprocessing $\p$ (called EncPeb in \cite{EngMan03}). 
The definition is identical to the one in \cite[Section 6]{Eng09}. 
Since $\cN$ has no pebbles, we abbreviate the left-hand side $\tup{q,\sigma,j,\nothing}$ of a rule by $\langle q,\sigma,j\rangle$. 
To simplify the definition of~$\cN$ we additionally allow output rules of the form 
$\tup{q,\sigma,j}\to \delta(s_1,\dots,s_m)$ where $\delta$ is an output symbol of rank~$m$
and every $s_i$ is either the output symbol $\bot$ or it is of the form $\tup{q',\phi}$
where $\phi$ is $\stay$, $\up$, or $\down_i$ with $i\in[1,m]$. 
Such a rule should be replaced by the rules 
$\langle q,\sigma,j\rangle\to 
\delta(\langle p_1,\stay \rangle,\dots,\langle p_m,\stay \rangle)$ 
and $\langle p_j,\sigma,j\rangle\to s_j$ for all $j\in[1,m]$, where $p_1,\dots,p_m$ 
are new states. Obviously this replacement can be done in quadratic time. 

We introduce the states and rules of $\cN$ one by one; in what follows $\sigma$ ranges over $\Sigma$, with  $m=\rank_\Sigma(\sigma)$, $j$ ranges over $[0,\m_\Sigma]$, and $i$ over $[1,m]$.
First, $\cN$~has an ``identity'' state $d$ that just recursively copies the subtree of the current node to the output, using the rules 
$\langle d,\sigma,j\rangle\to \sigma(\langle d,\down_1\rangle,\dots,\langle d,\down_m\rangle)$.
Then, $\cN$ has initial state $\g$ that copies the input tree $t$ to the output (with primed labels) and at each node $u$ of $t$ ``generates'' a new copy $\hat{t}_u$ of the input tree by calling the state $f$ that computes $\hat{t}_u$ by ``folding'' $t_u$. The rules for $\g$ are 
$$\langle \g,\sigma,j\rangle\to
\sigma'(\langle \g,\down_1\rangle,\dots,\langle \g,\down_m\rangle, \langle f,\stay\rangle).$$
Note that $\sigma'$ has rank $m+1$: the root of $\hat{t}_u$ is attached to $u$ as its last child. 
The rules for $f$ are 
$$\langle f,\sigma,j\rangle\to
\sigma_{0,j}(\langle d,\down_1\rangle,\dots,\langle d,\down_m\rangle, \xi_j)$$
where $\xi_j = \langle f_j,\up\rangle$ for $j\neq 0$, and $\xi_0 = \bot$.
The ``reconstruction'' subscripts of $\sigma_{0,j}$ mean the following:  
subscript $0$ indicates that this node is the root of some $\hat{t}_u$, and subscript $j$ is the child number of $u$ in $t$. 
Note that $\sigma_{0,j}$ has rank $m+1$: its last child corresponds to the parent of $u$ in $t$ 
(viewing $\bot$ as the ``parent'' of $\rt_t$ in $t$).
The \abb{tt} $\cN$ walks up along the path from $u$ to the root of $t$ using ``folding'' states $f_i$, where the $i$ indicates that in the previous step $\cN$ was at the $i$-th child of the current node. The rules for $f_i$ are 
$$
\begin{array}{lll}
\langle f_i,\sigma,j\rangle & \to & \sigma_{i,j}( \\
&& \langle d,\down_1\rangle,\dots,\langle d,\down_{i-1}\rangle, \\
&& \bot, \\
&& \langle d,\down_{i+1}\rangle,\dots,\langle d,\down_m\rangle, \\
&& \xi_j)
\end{array}
$$
where $\xi_j$ is as above. 
If a node (in $\hat{t}_u$) with label $\sigma_{i,j}$ corresponds to the node $v$ in~$t$, then the 
``reconstruction'' subscript $i$ means that its parent corresponds to the $i$-th child of $v$ in $t$
(and its own $i$-th child is $\bot$), and, as above, ``reconstruction'' subscript $j$ is the child number of $v$. Just as $\sigma_{0,j}$, also $\sigma_{i,j}$ has rank $m+1$: its last child corresponds to the parent of $v$ in $t$.
Note that the copy $\hat{t}_u$ of the input tree is computed by the states
$f$, $f_i$ (for every $i$) and $d$, such that $f$ copies node $u$ to the output and 
the other states walk from $u$ to every other node $v$ of $t$ and copy $v$ to the output.
To be precise, $\cN$ walks from $u$ to $v$ along the shortest (undirected) 
path from $u$ to $v$, from $u$ up to the least common ancestor of $u$ and $v$ 
(in the states $f_i$), and then down to $v$ (in the state $d$). Arriving in a node $v$ from a neighbour of $v$, the transducer $\cN$ branches into a new process 
for every other neighbour of $v$. 

This ends the description of the \abb{tt} $\cN$. 
The output alphabet $\Gamma$ of $\cN$ (which will also be the input alphabet of $\cM'$) is the union of $\Sigma$, $\{\bot\}$, $\{\sigma'\mid \sigma\in\Sigma\}$, and $\{\sigma_{i,j}\mid \sigma\in\Sigma, i\in[0,\rank_\Sigma(\sigma)], j\in[0,\m_\Sigma]\}$. Thus, $\cN$ has $O(n^2)$ output symbols, where $n$ is the size of $\Sigma$.\footnote{We assume here that the rank of each symbol of the ranked alphabet $\Sigma$ is specified in unary rather than decimal notation, and thus $\m_\Sigma\leq n$; cf. the last paragraph of \cite[Section~2]{Eng09}.} 
So, since $\m_\Gamma= \m_\Sigma +1$, the size of $\Gamma$ is polynomial in $n$. 
The set of states of $\cN$ is $\{d,\g,f\} \cup \{f_i\mid i\in[1,\m_\Sigma]\}$, with initial state $\g$. Thus, it has $O(n)$ states and $O(n^3)$ rules; moreover, each of these rules is of size $O(n\log n)$. Hence, the size of $\cN$ is polynomial in the size of $\Sigma$, and it can be constructed in polynomial time.  

We now turn to the description of the \abb{v$_{k-1}$i-ptt} $\cM'$. It has input alphabet~$\Gamma$, output alphabet $\Delta$, set of states $Q\cup (Q\times C_{\mathrm{v}})$, 
and the same initial states and visible colours as $\cM$. Its invisible colour set is $C'_{\mathrm{i}} = C_{\mathrm{i}}\cup\{\odot\}$.
It remains to discuss the set $R'$ of rules of $\cM'$.
Let $\langle q,\sigma,j,b\rangle\to \zeta$ be a rule of $\cM$ 
with $\rank_\Sigma(\sigma)=m$. 
We consider four cases, depending on the variant $\sigma'$, $\sigma_{0,j}$, $\sigma_{i,j}$ with $i\neq 0$, or $\sigma$ in $\Gamma$ of the input symbol $\sigma\in\Sigma$. 

In the first case, we consider the behaviour of $\cM'$ in state $q$ on $\sigma'$, 
and we assume that $b\cap C_{\mathrm{v}}=\nothing$. 
If $\zeta = \tup{q',\drop_c}$ with $c\in C_{\mathrm{v}}$, then $R'$ contains the rule 
$\langle q,\sigma',j,b\rangle\to \tup{(q',c),\down_{m+1};\drop_\odot}$,\footnote{To be completely formal, 
this rule should be replaced by the two rules 
$\langle q,\sigma',j,b\rangle\to \tup{p,\down_{m+1}}$ and
$\langle p,\sigma_{0,j},m+1,\nothing\rangle\to \tup{(q',c),\drop_\odot}$, where $p$ is a new state.
}
and otherwise  $R'$ contains the rule $\langle q,\sigma',j,b\rangle\to \zeta$.
Thus, $\cM'$ simulates $\cM$ on the original (now primed) part of the input tree $t$ in $\p(t)$, 
until $\cM$ drops a visible pebble~$c$ on node $u$. Then $\cM'$ steps to the root of $\hat{t}_u$ where it drops the invisible pebble~$\odot$, and stores $c$ in its finite state. 

Next, we let $c\in C_{\mathrm{v}}$ and we consider the behaviour of $\cM'$ in state $(q,c)$ 
on the remaining variants of $\sigma$. 
Let $\zeta_c$ be the result of changing in $\zeta$ every occurrence of a state $q'$ into $(q',c)$. 

In the second case we assume that $c\in b$ (corresponding to the fact that $\sigma_{0,j}$ 
labels the marked node of some $\hat{t}_u$). 
If $b=\{c\}$ and $\zeta=\tup{q',\lift_c}$, then $R'$ contains the rule 
$\tup{(q,c),\sigma_{0,j},m+1,\{\odot\}} \to \tup{q',\lift_\odot;\up}$.\footnote{Again,
to be completely formal, this rule should be replaced by the two rules 
$\tup{(q,c),\sigma_{0,j},m+1,\{\odot\}} \to \tup{p,\lift_\odot}$ and 
$\tup{p,\sigma_{0,j},m+1,\nothing} \to \tup{q',\up}$, where $p$ is a new state.
} 
Thus, when $\cM$ lifts visible pebble~$c$ from node $u$, 
$\cM'$ lifts invisible pebble~$\odot$ and steps from the root of $\hat{t}_u$ back to node $u$. 
Otherwise, $R'$ contains the rules 
$$\tup{(q,c),\sigma_{0,j},m+1,b\setminus\{c\}\cup\{\odot\}} \to \zeta'_c$$ 
(provided $b\cap C_{\mathrm{i}}=\nothing$) and 
$$\tup{(q,c),\sigma_{0,j},m+1,b\setminus\{c\}} \to \zeta'_c,$$
where $\zeta'_c$ is obtained from $\zeta_c$ by changing $\up$ into $\down_{m+1}$. These two rules 
correspond to whether or not the invisible pebble~$\odot$ is observable. 
Note that the child number in $\p(t)$ of a node with label $\sigma_{0,j}$ is always $m+1$ 
(and the label of its parent is $\sigma'$). 

In the remaining two cases we assume that $c\notin b$ in the above rule of $\cM$.
In the third case, we consider $\sigma_{i,j}$ with $i\neq 0$. Then $R'$ contains the rules
$\tup{(q,c),\sigma_{i,j},j',b} \to \zeta'_c$ for every $j'\in[1,\m_\Gamma]$, where 
$\zeta'_c$ is now obtained from $\zeta_c$ by changing $\up$ into $\down_{m+1}$, and $\down_i$ into $\up$. 
In the fourth and final case, we consider $\sigma$ itself (in $\Gamma$). 
Then $R'$ contains the rule $\tup{(q,c),\sigma,j,b} \to \zeta_c$. 
Thus, $\cM'$ stepwise simulates $\cM$ on every $\hat{t}_u$.

This ends the description of the \abb{v$_{k-1}$i-ptt} $\cM'$. It should now be clear that $\tau_{\cM'}(\p(t)) = \tau_{\cM}(t)$ for every $t\in T_\Sigma$, and hence $\tau_{\cN}\circ\tau_{\cM'}= \tau_{\cM}$.
Each rule of $\cM$ is turned into at most 
$1+\#(C_{\mathrm{v}})\cdot(2+\m_\Sigma(\m_\Sigma +1))$ rules of $\cM'$, of the same size as that rule (disregarding the space taken by the occurrences of $c$ and $m+1$). Thus, $\cM'$ can be computed from $\cM$ in polynomial time.
\end{proof}

The tree $\p(t)$ that is used in the previous proof consists
of two levels of copies of the original input tree $t$; on the
first level a straightforward copy of $t$ 
(used until the first visible pebble is dropped)
and a second level of copies $\hat{t}_u$
(used to ``store'' the first visible pebble dropped).
It is tempting to add another level, meant as a way
to store the next visible pebble dropped.
The problem with this is that it would 
make the first visible pebble effectively unobservable
when the next one is dropped.
The idea \emph{can} be used for invisible pebbles,
for arbitrarily many levels.

\begin{lemma}\label{lem:nul-decomp}
For every \abb{i-ptt} $\cM$ a \abb{tt} $\cN$ 
and a \abb{tt} $\cM'$ 
can be constructed in polynomial time such that $\tau_{\cN}\circ\tau_{\cM'}= \tau_{\cM}$.
If $\cM$ is deterministic, then so is~$\cM'$. Hence,
$\IPTT \subseteq \TT \circ \TT$ and
$\IdPTT \subseteq \TT \circ \dTT$.
\end{lemma}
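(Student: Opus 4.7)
The plan is to adapt the construction of Lemma~\ref{lem:decomp} using the same preprocessing tree~$\p(t)$. Let $\cN$ be the deterministic \abb{tt} from Lemma~\ref{lem:decomp} that computes $\p(t)$; this is the first of the two \abb{tt}'s. The second, $\cM'$, is a pebble-free \abb{tt} that walks on $\p(t)$ and simulates~$\cM$. The crucial observation, anticipated in the paragraph preceding the lemma, is that because every pebble of $\cM$ is invisible, only the topmost pebble is ever observable, so the pebbles beneath the top function merely as ``return markers'' and never influence the next computation step. Consequently, the same two-level structure $\p(t)$ that was used in Lemma~\ref{lem:decomp} to record one pebble can be re-used to record the top pebble of a stack of arbitrary depth: whenever $\cM$ has head at $v$ and top pebble at $u$ (irrespective of what lies below $u$), the simulator $\cM'$ sits at the copy of $v$ inside $\hat{t}_u$, and whenever $\cM$'s stack is empty, $\cM'$ sits at $v$ on the top level of $\p(t)$.

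I would then translate the rules of $\cM$ into rules of $\cM'$ by following the template of Lemma~\ref{lem:decomp}: move instructions and output rules are simulated inside the current subtree; the drop of a pebble on the current node $v$ when the stack is empty is simulated by $\cM'$ descending from $v$ on the top level into $\hat{t}_v$ via the last child (as in Lemma~\ref{lem:decomp}, but without the auxiliary pebble~$\odot$, since here the root of $\hat{t}_v$ is distinguishable by its label); the symmetric lift of the last pebble is simulated by $\cM'$ ascending that one edge. The genuinely new cases are (a)~the drop of a pebble on $v$ while $\cM'$ sits inside some $\hat{t}_u$ at $v$, which would require $\cM'$ to migrate from $\hat{t}_u$ into $\hat{t}_v$, and (b)~the subsequent lift, which would require $\cM'$ to migrate back from $\hat{t}_v$ to $\hat{t}_u$ without remembering~$u$.

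The hard part will be case~(b), since $\cM'$ carries no memory of $u$ across the sub-computation. The resolution exploits precisely the observation above: the entire sub-computation of $\cM$ from drop-$v$ to lift-$v$ is independent of the position $u$ below the newly dropped $v$, because $u$ is invisible throughout. Therefore $\cM'$ may simulate that sub-computation entirely inside $\hat{t}_v$ on one thread, while a second, independent thread continues the main simulation inside $\hat{t}_u$ at node $v$ in the state that will result from the lift. To spawn the two threads, $\cM'$ nondeterministically guesses the lift-state $q''$ at the drop step and branches; the $\hat{t}_v$-thread both produces the output contributed by the sub-computation and verifies the guess by actually reaching a lift-$v$ configuration in state $q''$, while the $\hat{t}_u$-thread resumes the main line in state $q''$ at $v$. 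This is exactly the mechanism needed and explains why, for the deterministic version, only \dTT\ suffices for $\cM'$ (the guess is then forced), whereas $\cN$ may stay as written.

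The one remaining delicacy is that $\cM'$ must emit exactly the output tree of $\cM$ and therefore cannot invent output symbols to effect the branching. I would handle this by aligning each split with the next output rule of $\cM$: the state of $\cM'$ accumulates a bounded book-keeping component that records pending sub-computations until $\cM$ performs an output rule, at which point the pending threads are launched as its children (reusing the rank of $\cM$'s output symbol). Because the number of simultaneously pending sub-computations on any single line of $\cM$'s computation is bounded by the structure of $\cM$'s rules, this bookkeeping fits in the finite state. For $\cM'$-threads that would remain trapped in a purely internal sub-computation (no further output rule of $\cM$), the net effect is a finite ``summary'' relation on states that can be precomputed and stored in $\cM'$'s state, just as in standard decomposition arguments for tree-walking transducers. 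The polynomial-time bound on the construction and the preservation of determinism then follow by the same accounting as in Lemma~\ref{lem:decomp}.
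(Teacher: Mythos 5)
There is a genuine gap, and it sits exactly where you located the difficulty. Your plan keeps the deterministic two-level preprocessing $\p(t)$ of Lemma~\ref{lem:decomp} and compensates for the unbounded stack by guessing the state in which a newly dropped pebble will be lifted, spawning the intervening sub-computation as a separate output thread. This is incompatible with how a \abb{ptt} builds its output. Between a drop at $v$ and the corresponding lift, $\cM$ may apply output rules; each such rule splits the process into copies that all inherit the pebble at $v$, and each copy may later lift that pebble at a different moment and in a different state, after which its own continuation produces output that is nested \emph{inside} the output subtree generated by that copy. So there is no single ``lift-state $q''$'' to guess, and the output of the sub-computation cannot be separated from the output of the continuation(s) into sibling threads without changing the shape of the output tree. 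Moreover, a \abb{tt} cannot ``launch pending threads as children'' of the next output symbol: that symbol has a fixed rank in $\Delta$ and admits no extra children. Finally, the bounded-bookkeeping claim is false: the number of continuations pending at any moment equals the current depth of the invisible-pebble stack, which for an \abb{i-ptt} is unbounded in the size of the input (cf.\ Example~\ref{ex:siberie}), so it cannot be stored in the finite state. Note also that your construction, if it worked, would keep $\cN$ deterministic and hence prove $\IdPTT\subseteq\dTT\circ\dTT$, which the paper explicitly leaves open after Theorem~\ref{thm:composition}.

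The paper's proof resolves the difficulty by changing the preprocessor rather than the simulator: $\cN$ becomes \emph{nondeterministic} and generates a ``tree of trees'' with arbitrarily many levels, optionally attaching at each node $u$ of each copy, for each invisible colour $c$, a fresh copy $\hat{t}_u$ whose root's child number encodes $c$. The simulator $\cM'$ then represents the \emph{entire} pebble stack---not just its top---by its position in this nested structure; dropping and lifting become single down- and up-moves between levels, no return state is ever guessed, and $\cM'$ simply aborts if the nondeterministically generated tree is not deep enough. This is precisely why the deterministic part of the lemma reads $\IdPTT\subseteq\TT\circ\dTT$ rather than $\dTT\circ\dTT$: the nondeterminism is pushed into $\cN$.
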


\begin{proof}
The computation of a \abb{PTT} $\cM$ with invisible pebbles 
on tree $t$ is simulated by a \abb{TT} $\cM'$ 
(without pebbles) on tree $t'$. 
The input tree $t$ is preprocessed in a nondeterministic way
by a \abb{TT} $\cN$ to obtain $t'$.
The top level of $t'$ is a copy of $t$, as before. 
On the next level, since the simulating transducer $\cM'$ cannot store 
the colours of all the pebbles in its finite state 
(as we did for one colour in the proof of Lemma~\ref{lem:decomp}),
$\cN$ does not attach one copy $\hat{t}_u$ of $t$ to each node $u$ of $t$
but $\#(C_\mathrm{i})$ such copies, one for each pebble colour. 
In this way, the child number in $t'$ of the root of $\hat{t}_u$ 
represents the pebble colour. 
In fact, in each node $u$ of $t$ 
the transducer $\cN$ nondeterministically
decides for each pebble colour $c$ 
whether or not to spawn a process that copies $t$ into
$\hat{t}_u$, and this is a recursive process:
in each node in each copy of $t$ it can be decided to 
spawn such processes that generate new copies. 

In this way a ``tree of trees'' is constructed. 
For an ``artist impression'' of such an output tree $t'$, see 
Fig.~\ref{fig:artists}.
\begin{figure}
%% 'iterated' tree of trees
\centerline{\begin{tikzpicture}[scale=0.1,>=stealth,shorten >=.6pt]
\tikzset{dot/.style={circle,minimum size=1mm,inner sep=0pt,fill=black}}
\node at (45,50) {$t$};
%%top
\draw[fill=white] (50,50) -- ++(-5,-10) -- ++(10,00) -- cycle;
%voor ronde hoeken: \draw[fill=white,rounded corners]
%% links
\draw[fill=white] (38,36) -- ++(-5,-10) -- ++(10,00) -- cycle;
\node[dot,draw] (A) at (48,43){}; \node[dot,draw] (B) at (36,29){}; \path[draw,->] (A) to (B) ;
\draw[fill=white] (35,35) -- ++(-5,-10) -- ++(10,00) -- cycle;
\draw[fill=white] (25,20) -- ++(-5,-10) -- ++(10,00) -- cycle;
\draw[fill=white] (40,20) -- ++(-5,-10) -- ++(10,00) -- cycle;
\draw[fill=white] (35,05) -- ++(-5,-10) -- ++(10,00) -- cycle;
%% rechts
\draw[fill=white] (64,36) -- ++(-5,-10) -- ++(10,00) -- cycle;
\node[dot,draw] (A) at (51,45){}; \node[dot,draw] (B) at (65,31){}; \path[draw,->] (A) to (B) ;
\draw[fill=white] (60,35) -- ++(-5,-10) -- ++(10,00) -- cycle;
\draw[fill=white] (55,20) -- ++(-5,-10) -- ++(10,00) -- cycle;
%% vier kopieen
\draw[fill=white] (84,23) -- ++(-5,-10) -- ++(10,00) -- cycle;
\node[dot,draw] (A) at (61,30){}; \node[dot,draw] (B) at (71,15){}; \path[draw,->] (A) to (B) ;
\node[dot,draw] (A) at (65,31){}; \node[dot,draw] (B) at (85,18){}; \path[draw,->] (A) to (B) ;
\draw[fill=white] (80,22) -- ++(-5,-10) -- ++(10,00) -- cycle;
\node[dot,draw] (A) at (65,31){}; \node[dot,draw] (B) at (81,17){}; \path[draw,->] (A) to (B) ;
\draw[fill=white] (74,21) -- ++(-5,-10) -- ++(10,00) -- cycle;
\node[dot,draw] (A) at (61,30){}; \node[dot,draw] (B) at (75,16){}; \path[draw,->] (A) to (B) ;
\draw[fill=white] (70,20) -- ++(-5,-10) -- ++(10,00) -- cycle;
\draw[fill=white] (65,05) -- ++(-5,-10) -- ++(10,00) -- cycle;
%% knopen
\node[dot,draw] (A) at (51,45){}; \node[dot,draw] (B) at (61,30){}; \path[draw,->] (A) to (B) ;
\node[dot,draw] (A) at (48,43){}; \node[dot,draw] (B) at (33,28){}; \path[draw,->] (A) to (B) ;
\node[dot,draw] (A) at (61,30){}; \node[dot,draw] (B) at (71,15){}; \path[draw,->] (A) to (B) ;
\node[dot,draw] (A) at (36,28){}; \node[dot,draw] (B) at (41,13){}; \path[draw,->] (A) to (B) ;
\node[dot,draw] (A) at (58,28){}; \node[dot,draw] (B) at (53,13){}; \path[draw,->] (A) to (B) ;
\node[dot,draw] (A) at (32,26){}; \node[dot,draw] (B) at (22,11){}; \path[draw,->] (A) to (B) ;
\node[dot,draw] (A) at (26,15){}; \node[dot,draw] (B) at (36,00){}; \path[draw,->] (A) to (B) ;
\node[dot,draw] (A) at (56,15){}; \node[dot,draw] (B) at (66,00){}; \path[draw,->] (A) to (B) ;
\end{tikzpicture}}
\caption{An output tree $t'$ of the \abb{tt} $\cN$ of Lemma~\ref{lem:nul-decomp} 
for input tree $t$.}\label{fig:artists}
\end{figure}
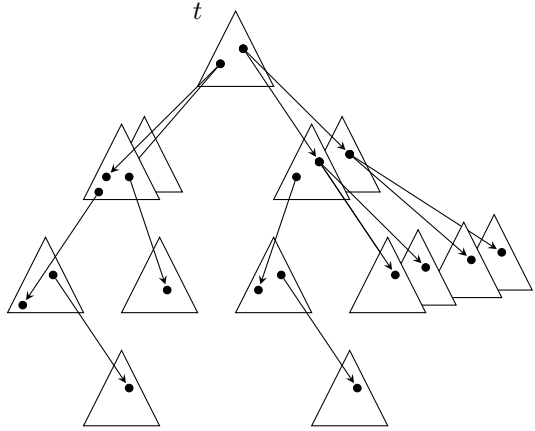
The child number in $t'$ of the root
of each copy $\hat{t}_u$ indicates an invisible pebble
of colour $c$
placed at node $u$ in the original tree $t$.
In each copy only one pebble is observable, the one represented 
by the child number of its root, exactly as the last pebble dropped
in the original computation.
In the simulation, 
moving down or up along the tree of trees corresponds to
dropping and lifting invisible pebbles.

In general there is no bound on the depth of the 
stack of pebbles during a computation of $\cM$.
The preprocessor $\cN$ nondeterministically constructs
$t'$. If $t'$ is not sufficiently deep,
the simulating transducer $\cM'$ aborts the computation.
Conversely, for every computation of $\cM$ a tree $t'$ of sufficient
depth can be constructed nonderministically from $t$.

We now turn to the formal definitions. 
Let $\cM = (\Sigma, \Delta, Q, Q_0,C, C_\mathrm{v}, C_\mathrm{i}, R,0)$ be an \abb{i-ptt}.
Without loss of generality we assume that $C=C_\mathrm{i}$ and that 
$C=[1,\gamma]$ for some $\gamma\in\nat$. This choice of $C$ simplifies 
the representation of colours by child numbers. 

First, we define the nondeterministic \abb{tt} $\cN$ that preprocesses the trees over~$\Sigma$.
It is a straightforward variant of the one in the proof of Lemma~\ref{lem:decomp}.
The output alphabet $\Gamma$ of $\cN$ is now the union of $\{\bot\}$, $\{\sigma'\mid \sigma\in\Sigma\}$, and $\{\sigma'_{i,j}\mid \sigma\in\Sigma, i\in[0,\rank_\Sigma(\sigma)], j\in[0,\m_\Sigma]\}$ where, 
for every $\sigma\in\Sigma$ of rank~$m$, $\sigma'$ has rank $m+\gamma$ and 
$\sigma'_{i,j}$ has rank $m+\gamma+1$,
because $\gamma$ processes are spawned at each node, and each of these processes generates, 
nondeterministically, either a copy $\hat{t}_u$ of $t$ or the bottom symbol $\bot$.
The set of states of $\cN$ is as before, 
except that the state $d$ is removed (with its rules). 
In the rules of $\cN$ we will use $\tup{f,\stay}^\gamma$ as an abbreviation of the sequence
$\tup{f,\stay},\dots,\tup{f,\stay}$ of length $\gamma$. 
The rules for the initial state~$\g$ are 
\[
\begin{array}{lll}
\langle \g,\sigma,j\rangle & \to &
\sigma'(\langle \g,\down_1\rangle,\dots,\langle \g,\down_m\rangle, \langle f,\stay\rangle^\gamma).
\end{array}
\]
The rules for $f$ are 
\[
\begin{array}{lll}
\langle f,\sigma,j\rangle & \to & \bot \\[1mm]
\langle f,\sigma,j\rangle & \to &
\sigma'_{0,j}(\langle\g,\down_1\rangle,\dots,\langle\g,\down_m\rangle,\tup{f,\stay}^\gamma,\xi_j)
\end{array}
\]
where, as before, $\xi_j = \langle f_j,\up\rangle$ for $j\neq 0$, and $\xi_0 = \bot$.
Finally, the rules for $f_i$ are 
\[
\begin{array}{lll}
\langle f_i,\sigma,j\rangle & \to & \sigma'_{i,j}( \\
&& \langle \g,\down_1\rangle,\dots,\langle \g,\down_{i-1}\rangle, \\
&& \bot, \\
&& \langle \g,\down_{i+1}\rangle,\dots,\langle \g,\down_m\rangle, \\
&& \tup{f,\stay}^\gamma, \\
&& \xi_j)
\end{array}
\]
where $\xi_j$ is as above. 
This ends the definition of $\cN$. 

Next, we define the simulating \abb{tt} $\cM'$. 
It has input alphabet~$\Gamma$ (the output alphabet of $\cN$), output alphabet $\Delta$, 
and the same set of states and initial states as $\cM$. 
The set $R'$ of rules of $\cM'$ is defined as follows.
Let $\langle q,\sigma,j,b\rangle\to \zeta$ be a rule of $\cM$ 
with $\rank_\Sigma(\sigma)=m$. 
Note that $b$ is either empty or a singleton. 
We consider three cases, that describe the behaviour of $\cM'$ on the symbols $\sigma'$, 
$\sigma'_{0,j}$, and $\sigma'_{i,j}$ with $i\neq 0$. 

In the first case we assume that $b=\nothing$ 
(and hence $\zeta$ does not contain a lift-instruction). 
Then $R'$ contains the rule $\langle q,\sigma',j\rangle\to \zeta'$
where $\zeta'$ is obtained from~$\zeta$ by changing $\drop_c$ into $\down_{m+c}$
for every $c\in C$.

In the second case we assume that $b=\{c\}$ for some $c\in C$.
Then $R'$ contains the rule $\tup{q,\sigma'_{0,j},m+c} \to \zeta'$
where $\zeta'$ is now obtained from $\zeta$ by changing $\up$ into $\down_{m+\gamma+1}$,
$\lift_c$ into $\up$, and $\drop_d$ into $\down_{m+d}$
for every $d\in C$.
Note that the child number in $t'$ of a node with label $\sigma'_{0,j}$ 
is always $m+c$ for some $c\in C$
(and the label of its parent is $\sigma'$ or $\sigma'_{i,j}$ 
for some $i\in[0,m]$). 

In the third case we assume (as in the first case) that $b=\nothing$.
Then $R'$ contains the rule 
$\langle q,\sigma'_{i,j},j'\rangle\to \zeta'$ for every $j'\in[1,\m_\Gamma]$,
where $\zeta'$ is now obtained from $\zeta$ by changing $\up$ into $\down_{m+\gamma+1}$, 
$\down_i$ into $\up$, and $\drop_c$ into $\down_{m+c}$
for every $c\in C$.

This ends the definition of $\cM'$. It should, again, be clear that 
for every $t\in T_\Sigma$ and $s\in T_\Delta$, 
$s\in \tau_{\cM}(t)$ if and only if there exists $t'\in \tau_{\cN}(t)$ 
such that $s\in \tau_{\cM'}(t')$. 
Hence $\tau_{\cN}\circ\tau_{\cM'}= \tau_{\cM}$.
It is straightforward to show, as in the proof of Lemma~\ref{lem:decomp}, 
that $\cN$ and $\cM'$ can be constructed in polynomial time from~$\cM$. 
Note that $\m_\Gamma = \m_\Sigma + \#(C_{\mathrm{i}})+1$ and so 
the size of $\Gamma$ is polynomial in the size of $\cM$. 
\end{proof}

Combining the previous two results we
can inductively decompose tree tranducers with 
(visible and invisible)
pebbles into tree transducers without pebbles.

\begin{theorem}\label{thm:decomp}
For every $k\ge 0$,
$\VIPTT{k} \subseteq \TT^{k+2}$.
For fixed $k$, the involved construction takes polynomial time. 
\end{theorem}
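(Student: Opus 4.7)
The plan is a straightforward induction on $k$, using the two preceding lemmas as the inductive machinery.

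For the base case $k=0$, I would observe that $\VIPTT{0} = \IPTT$ by definition (no visible pebbles means only invisible pebbles are used), and then invoke Lemma~\ref{lem:nul-decomp} to conclude $\IPTT \subseteq \TT \circ \TT = \TT^2$, matching the bound $k+2 = 2$.

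For the inductive step, assume $\VIPTT{k-1} \subseteq \TT^{k+1}$. Given any $\cM \in \VIPTT{k}$, Lemma~\ref{lem:decomp} produces a total deterministic $\cN \in \TT$ and an $\cM' \in \VIPTT{k-1}$ with $\tau_\cM = \tau_\cN \circ \tau_{\cM'}$. By the induction hypothesis, $\tau_{\cM'}$ can be written as a composition of $k+1$ transductions in $\TT$, so $\tau_\cM$ lies in $\TT \circ \TT^{k+1} = \TT^{k+2}$. This gives the desired inclusion.

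For the complexity claim, I would note that Lemma~\ref{lem:decomp} and Lemma~\ref{lem:nul-decomp} both state that their constructions run in polynomial time in the size of the given transducer. For fixed $k$, the induction performs $k+1$ applications of Lemma~\ref{lem:decomp} followed by a single application of Lemma~\ref{lem:nul-decomp}; this is a constant number of polynomial-time steps, each producing a transducer polynomial in size of the previous one, so the overall construction is polynomial in the size of the input \abb{v$_k$i-ptt}. The main thing to check carefully is that at each stage the \abb{ptt} produced by Lemma~\ref{lem:decomp} has size polynomial in the previous one (so that the composed bound remains polynomial), which is immediate from the statement of that lemma. I do not expect a genuine obstacle here; the argument is essentially bookkeeping on top of the two lemmas.
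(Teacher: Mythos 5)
Your proposal is correct and follows exactly the paper's intended argument: the theorem is obtained by iterating Lemma~\ref{lem:decomp} to strip off the visible pebbles and then applying Lemma~\ref{lem:nul-decomp} to the resulting \abb{i-ptt}. (One trivial bookkeeping slip: unwinding your induction gives $k$ applications of Lemma~\ref{lem:decomp}, not $k+1$, but this does not affect either the inclusion or the polynomial-time claim.)
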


Observe that  $\VPTT{k} \subseteq \VIPTT{k-1}$
as the topmost pebble can be replaced by an invisible one, thus
$\VPTT{k} \subseteq \TT^{k+1}$,
which was proved in~\cite[Theorem~10]{EngMan03}, also for the deterministic case.

We do not know whether Theorem~\ref{thm:decomp} is optimal, 
i.e., whether or not $\VIPTT{k}$ is included in $\TT^{k+1}$. 
The deterministic version of Theorem~\ref{thm:decomp} (for $k\neq 0$)
will be proved in Section~\ref{sec:variations} (Theorem~\ref{thm:detdecomp}),
and we will show that it is optimal (after Theorem~\ref{thm:dethier}).

The nondeterminism of the ``preprocessing'' transducer $\cN$ 
in the proof of Lemma~\ref{lem:nul-decomp} is rather limited. 
The general form of the constructed tree 
is completely determined by the input tree, 
only the depth of the construction is
nondeterministically chosen.
At the same time it remains nondeterministic
even when we start with a deterministic \abb{pTT}
with invisible pebbles:
$\IdPTT \subseteq \TT \circ \dTT$.
However, we can obtain a 
deterministic transduction if the number of invisible
pebbles used by the transducer is bounded 
(over all input trees), cf. 
the M.~Sc.~Thesis of the third author \cite{Sam}
(where visible and invisible pebbles are called 
global and local pebbles, respectively).
In Section~\ref{sec:poweriptt} we will show that 
if we start with a deterministic tree transduction,
then the inclusions of Lemma~\ref{lem:nul-decomp} also
hold in the other direction (Theorem~\ref{thm:composition}).
In Section~\ref{sec:variations} we will show that $\IdPTT \subseteq \dTT^3$
(Corollary~\ref{cor:idptt-tt3}).

%% ============================================================
%% ============================================================
\section{Typechecking}\label{sec:typechecking}
%% ============================================================
%% ============================================================

The \emph{inverse type inference problem} is to construct, 
for a tree transducer $\cM$ and a regular tree grammar $G_{\rm out}$, 
a regular tree grammar $G_{\rm in}$ such that $L(G_{\rm in}) = \tau_\cM^{-1}(L(G_{\rm out}))$. 
The \emph{typechecking problem} asks, 
for a tree transducer $\cM$ and two regular
tree grammars $G_{\rm in}$ and $G_{\rm out}$, whether
or not $\tau_\cM(L(G_{\rm in}))\subseteq L(G_{\rm out})$. 
The inverse type inference problem can be used to solve the typechecking problem, 
because $\tau_\cM(L(G_{\rm in}))\subseteq L(G_{\rm out})$ if and only if 
$L(G_{\rm in}) \cap \tau_\cM^{-1}(L'_{\rm out})=\nothing$, where 
$L'_{\rm out}$ is the complement of $L(G_{\rm out})$. 

It was shown in \cite{MilSucVia03} (see also \cite[Section~7]{EngMan03}) 
that both problems are solvable for 
tree-walking tree transducers with visible pebbles, i.e., for \abb{v-ptt}'s, 
and hence in particular for tree-walking tree transducers without pebbles, 
i.e., for \abb{tt}'s.\footnote{Note however that our definition of inverse type inference 
differs from the one in \cite{MilSucVia03}, 
where it is required that $L(G_{\rm in}) = 
\{\;s\mid \tau_\cM(s)\subseteq L(G_{\rm out})\;\}$.  
The reason is that our definition is more convenient 
when considering compositions of tree transducers.
}
This was extended in \cite{Eng09} to compositions of such transducers and,
moreover, the time complexity of the involved algorithms was improved,
using a result of~\cite{Bar} for attributed tree transducers. 

We define a \emph{$k$-fold exponential} function to be a function of the form  
$2^{g(n)}$ where $g$ is a $(k\!-\!1)$-fold
exponential function;
a $0$-fold exponential function is a polynomial.

\begin{proposition}\label{prop:invtypeinf}
For fixed $k\geq 0$, 
the inverse type inference problem is solvable
\\
{\em (1)}
for compositions of $k$ \abb{tt}'s in $k$-fold exponential time, and
\\
{\em (2)}
for \abb{v$_k$-PTT}'s in $(k\!+\!1)$-fold exponential time.
\end{proposition}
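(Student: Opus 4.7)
The plan is to establish (1) by induction on $k$ and then deduce (2) from the decomposition results of Section~\ref{sec:decomp}.

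The base case $k=0$ is trivial: the empty composition is the identity, so taking $G_{\rm in}=G_{\rm out}$ works in polynomial time. For the inductive step, suppose inverse type inference for compositions of $k-1$ \abb{tt}'s can be solved in $(k-1)$-fold exponential time, producing a regular tree grammar of $(k-1)$-fold exponential size. Given a composition $\tau_1\circ\cdots\circ\tau_k$ and a grammar $G_{\rm out}$, I would first apply the induction hypothesis to $\tau_2\circ\cdots\circ\tau_k$ and $G_{\rm out}$ to obtain a grammar $G$ of $(k-1)$-fold exponential size with $L(G)=(\tau_2\circ\cdots\circ\tau_k)^{-1}(L(G_{\rm out}))$, using the identity $(\tau_1\circ\cdots\circ\tau_k)^{-1}(L)=\tau_1^{-1}(\tau_2^{-1}(\cdots\tau_k^{-1}(L)\cdots))$. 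Then I would apply the single-\abb{tt} inverse type inference to $\tau_1$ and $G$ to obtain~$G_{\rm in}$, incurring one more exponential blow-up and thus yielding $k$-fold exponential size and time overall.

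The essential ingredient is the single-\abb{tt} inverse type inference with a \emph{single}-exponential blow-up, which is the content of~\cite[Theorem~3]{Eng09}, building on the technique of~\cite{Bar} for attributed tree transducers. The idea is, given a \abb{tt} $\cM$ and a bottom-up tree automaton $\cB$ recognizing $L(G_{\rm out})$, to construct a bottom-up tree automaton whose states are functions that assign, to each state $q$ of $\cM$, the binary relation on states of $\cB$ that records which input--output state pairs $\cB$ can realize on an output tree produced by $\cM$ from state $q$ on the subtree read so far. This state set has size $2^{O(|\cM|\cdot|\cB|^2)}$, single-exponential in the input, and its bottom-up transitions can be computed within the same bound.

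For (2), I would invoke the observation made directly after Theorem~\ref{thm:decomp} that $\VPTT{k}\subseteq\TT^{k+1}$ (proved in~\cite[Theorem~10]{EngMan03}, and also obtainable from Lemma~\ref{lem:decomp} for fixed $k$ by iteration with $C_{\mathrm{i}}=\nothing$). A given \abb{v$_k$-ptt} is thus effectively rewritten, in polynomial time (since $k$ is fixed), as a composition of $k+1$ \abb{tt}'s; applying (1) then yields $(k+1)$-fold exponential time. The main technical hurdle is really just the single-exponential bound in the base case: a naive product construction would already cost two exponentials per \abb{tt}, shifting the whole hierarchy upward by one level. That sharper bound is cited from~\cite{Bar,Eng09} rather than reproved here; everything else is straightforward bookkeeping.
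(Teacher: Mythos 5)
This statement is a \emph{Proposition} that the paper quotes from \cite{Eng09} (building on \cite{Bar}) without giving a proof, so there is no in-paper argument to diverge from; your reconstruction — induction on $k$ reducing to the single-exponential inverse type inference for one \abb{tt}, plus the decomposition $\VPTT{k}\subseteq\TT^{k+1}$ of \cite[Theorem~10]{EngMan03} for part~(2) — is exactly the route taken in the cited source. The only imprecision is your gloss of the single-\abb{tt} construction (a tree-walking transducer can leave the subtree below the current node, so the automaton's states must encode entry/exit behaviour rather than just "output produced on the subtree read so far"), but since you defer that step to \cite{Bar,Eng09} this does not affect correctness.
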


\begin{proposition}\label{prop:typecheck}
For fixed $k\geq 0$, 
the typechecking problem is solvable
\\
{\em (1)}
for compositions of $k$ \abb{tt}'s in $(k\!+\!1)$-fold exponential time, and
\\
{\em (2)}
for \abb{v$_k$-PTT}'s in $(k\!+\!2)$-fold exponential time.
\end{proposition}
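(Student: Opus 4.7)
The plan is to reduce the typechecking problem to the inverse type inference problem of Proposition~\ref{prop:invtypeinf}, using the equivalence recalled in the excerpt: $\tau_\cM(L(G_{\rm in}))\subseteq L(G_{\rm out})$ if and only if $L(G_{\rm in}) \cap \tau_\cM^{-1}(L'_{\rm out}) = \nothing$, where $L'_{\rm out}$ denotes the complement of $L(G_{\rm out})$.

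First I would complement $L(G_{\rm out})$: determinize the nondeterministic bottom-up tree automaton equivalent to $G_{\rm out}$ (a single-exponential blow-up in $|G_{\rm out}|$), swap accepting and non-accepting states, and convert the result back into a regular tree grammar $G'_{\rm out}$ of size $2^{O(|G_{\rm out}|)}$. Next I would apply Proposition~\ref{prop:invtypeinf} to $\cM$ and $G'_{\rm out}$ to obtain a regular tree grammar $G$ with $L(G) = \tau_\cM^{-1}(L(G'_{\rm out}))$. Finally I would test emptiness of $L(G_{\rm in})\cap L(G)$ by the standard product construction for regular tree grammars, whose cost is polynomial in $|G_{\rm in}|\cdot|G|$.

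For the complexity analysis it suffices to observe that a $k$-fold exponential of a single-exponential function is $(k+1)$-fold exponential. In case~(1), Proposition~\ref{prop:invtypeinf}(1) gives a $k$-fold exponential bound in $|\cM|+|G'_{\rm out}|$, hence a $(k+1)$-fold exponential bound in $|\cM|+|G_{\rm in}|+|G_{\rm out}|$, and the polynomial-time emptiness check at the end is absorbed. In case~(2), the analogous argument using Proposition~\ref{prop:invtypeinf}(2) yields a $(k+2)$-fold exponential bound.

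The main obstacle, more a matter of bookkeeping than of mathematics, is to verify that the determinization of $G_{\rm out}$ contributes exactly one exponential layer on top of Proposition~\ref{prop:invtypeinf} and does not interact badly with the constructions inside it. Concretely, one should check that the $k$-fold exponential bound there is measured in the total size of its inputs in such a way that substituting a grammar of size $2^{O(|G_{\rm out}|)}$ for $G'_{\rm out}$ adds exactly one level of exponential. Once this is confirmed, both parts of the proposition follow at once.
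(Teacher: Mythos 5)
Your proposal is correct and is exactly the argument the paper intends: the reduction of typechecking to inverse type inference via $L(G_{\rm in}) \cap \tau_\cM^{-1}(L'_{\rm out})=\nothing$ is stated explicitly just before Proposition~\ref{prop:typecheck}, and the paper's remark after the proposition confirms that the single extra exponential comes precisely from complementing $L(G_{\rm out})$. Your bookkeeping (a $k$-fold exponential applied to an input of single-exponential size yields a $(k{+}1)$-fold exponential, with the product-emptiness check absorbed) is the right way to close the argument.
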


As also observed in \cite{Eng09}, one exponential can be taken off the results 
of Proposition~\ref{prop:typecheck} if we
assume that $G_{\rm out}$ is a total deterministic bottom-up finite-state tree automaton,
because that exponential is due to the complementation of $L(G_{\rm out})$. 

It is immediate from Theorem~\ref{thm:decomp} and 
Propositions~\ref{prop:invtypeinf}(1) and~\ref{prop:typecheck}(1) 
that both problems are also solvable for 
tree-walking tree transducers with invisible pebbles. 

\begin{theorem}\label{thm:typecheck}
For fixed $k\geq 0$, 
the inverse type inference problem and the typechecking problem are solvable
for \abb{v$_k$i-PTT}'s in $(k\!+\!2)$-fold and $(k\!+\!3)$-fold exponential time,
respectively.
\end{theorem}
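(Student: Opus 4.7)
The plan is to combine the decomposition theorem with the known complexity bounds for compositions of \abb{tt}'s. The crucial observation is that, because $k$ is fixed, the construction in Theorem~\ref{thm:decomp} runs in polynomial time, so it does not affect the dominant ($k\!+\!2$)- or ($k\!+\!3$)-fold exponential complexity.

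First, given a \abb{v$_k$i-ptt} $\cM$, I invoke Theorem~\ref{thm:decomp} to obtain, in polynomial time, \abb{tt}'s $\cN_1,\dots,\cN_{k+2}$ with $\tau_\cM = \tau_{\cN_1}\circ\cdots\circ\tau_{\cN_{k+2}}$. For the inverse type inference problem, I feed this composition together with the given output grammar $G_{\rm out}$ into the algorithm of Proposition~\ref{prop:invtypeinf}(1) (instantiated with $k+2$ in place of $k$). This produces a regular tree grammar $G_{\rm in}$ with
\[
L(G_{\rm in}) \;=\; (\tau_{\cN_1}\circ\cdots\circ\tau_{\cN_{k+2}})^{-1}(L(G_{\rm out})) \;=\; \tau_\cM^{-1}(L(G_{\rm out}))
\]
in $(k\!+\!2)$-fold exponential time. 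Since the preceding decomposition step is polynomial, the overall bound is still $(k\!+\!2)$-fold exponential, as required.

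For the typechecking problem, I proceed analogously but invoke Proposition~\ref{prop:typecheck}(1) on the composition of the $k+2$ \abb{tt}'s produced by Theorem~\ref{thm:decomp}, together with the given grammars $G_{\rm in}$ and $G_{\rm out}$. This decides whether $\tau_{\cN_1}\circ\cdots\circ\tau_{\cN_{k+2}}(L(G_{\rm in}))\subseteq L(G_{\rm out})$, i.e., whether $\tau_\cM(L(G_{\rm in}))\subseteq L(G_{\rm out})$, in $(k\!+\!3)$-fold exponential time. The extra exponential compared with inverse type inference comes from complementing $L(G_{\rm out})$ in the reduction discussed at the beginning of the section.

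Since the approach is essentially a routine composition of already-proven results, there is no genuine obstacle: the only point to check is that the polynomial-time decomposition does not disturb the elementary-tower complexity, which is immediate because a polynomial is absorbed into any $h$-fold exponential with $h\geq 1$. Hence both statements follow at once.
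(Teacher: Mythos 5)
Your proposal is correct and is essentially the paper's own argument: the paper derives Theorem~\ref{thm:typecheck} as an immediate consequence of Theorem~\ref{thm:decomp} together with Propositions~\ref{prop:invtypeinf}(1) and~\ref{prop:typecheck}(1), instantiated with $k+2$ transducers, exactly as you do. Your additional remarks — that the polynomial-time decomposition is absorbed into the exponential tower, and that the extra exponential for typechecking comes from complementing $L(G_{\rm out})$ — are accurate and consistent with the paper's discussion.
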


The main conclusion from Proposition~\ref{prop:typecheck}(2) and Theorem~\ref{thm:typecheck}
is that the complexity of typechecking \abb{ptt}'s basically depends
on the number of visible pebbles used.
Thus we can improve the complexity of the problem by
changing visible pebbles into invisible ones as much as possible, 
see Section~\ref{sec:pattern}.

Note that the solvability of the inverse type inference problem for a tree transducer $\cM$ 
means in particular that its domain is a regular tree language, 
taking $L(G_{\rm out}) = T_\Delta$ 
where $\Delta$ is the output alphabet of $\cM$. Thus, 
it follows from Theorem~\ref{thm:typecheck} that the domains of \abb{PTT}'s are regular,
or in other words, that every alternating \abb{pta} accepts a regular tree language.

\begin{corollary}\label{cor:domptt}
For every \abb{ptt} $\cM$, its domain $L(\cM)$ is regular. 
\end{corollary}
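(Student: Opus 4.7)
The plan is to obtain this as a direct consequence of the inverse type inference result in Theorem~\ref{thm:typecheck}, as the authors already hint at in the paragraph preceding the corollary. First I would note that any \abb{ptt} $\cM$ is, by definition, a \abb{v$_k$i-ptt} for some fixed $k \geq 0$ (namely the bound on the number of visible pebbles appearing as the last component of $\cM$), so Theorem~\ref{thm:typecheck} is applicable to $\cM$.

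Next, let $\Delta$ be the output ranked alphabet of $\cM$, and let $G_{\rm out}$ be a regular tree grammar generating $L(G_{\rm out}) = T_\Delta$ (trivially constructible: one nonterminal $X$ with rules $X \to \delta(X,\dots,X)$ for every $\delta \in \Delta$, using the appropriate rank). Applying inverse type inference to $\cM$ and $G_{\rm out}$ yields a regular tree grammar $G_{\rm in}$ with
\[
L(G_{\rm in}) \;=\; \tau_\cM^{-1}(L(G_{\rm out})) \;=\; \tau_\cM^{-1}(T_\Delta).
\]
By the very definition of the domain (given in Section~\ref{sec:autotrans}), $\tau_\cM^{-1}(T_\Delta) = \{t \in T_\Sigma \mid \exists\, s \in T_\Delta : (t,s) \in \tau_\cM\} = L(\cM)$, so $L(\cM) = L(G_{\rm in})$ is regular.

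There is essentially no obstacle: the substantive work is entirely packaged inside Theorem~\ref{thm:typecheck}, which in turn rests on the decomposition Theorem~\ref{thm:decomp} together with Proposition~\ref{prop:invtypeinf}(1). One could alternatively derive the corollary more directly without invoking typechecking at all, by combining Theorem~\ref{thm:decomp} (so $\tau_\cM = \tau_1 \circ \cdots \circ \tau_{k+2}$ for \abb{tt}'s $\tau_i$) with the well-known regularity of domains of \abb{tt}'s and the closure of $\REGT$ under inverse \abb{tt}-transductions (Proposition~\ref{prop:invtypeinf}(1) with $k=1$, iterated): $L(\cM) = \tau_1^{-1}(\tau_2^{-1}(\cdots \tau_{k+2}^{-1}(T_\Delta)\cdots))$, which is regular as an iterated inverse image of a regular language.
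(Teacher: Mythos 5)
Your proof is correct and follows exactly the paper's own argument: the paper derives the corollary from Theorem~\ref{thm:typecheck} by taking $L(G_{\rm out}) = T_\Delta$ and observing that $\tau_\cM^{-1}(T_\Delta) = L(\cM)$. Your alternative remark about iterating inverse images through the decomposition of Theorem~\ref{thm:decomp} is just an unpacking of the same machinery, not a genuinely different route.
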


%% ============================================================
%% ============================================================
\section{Trees, Tests and Trips}\label{sec:tests}
%% ============================================================
%% ============================================================

In this section we show that \abb{vi-pta}'s recognize the regular tree languages,
that they compute the \abb{mso} definable binary patterns (or trips),
and that they can perform \abb{mso} tests on the observable part of their configuration
(which consists of the position of the head and the positions of the observable pebbles). 

For ``classical'' tree-walking automata with 
a bounded number of visible pebbles, i.e., for \abb{v-pta}'s, 
it was shown in \cite[Section~5]{jewels} that these
automata accept regular tree languages only.
However, as proved in \cite{expressive}, they cannot accept all regular tree languages. 
One of the main reasons for introducing an unbounded number
of invisible pebbles is that they can be used to recognize every
regular tree language.
Recall that $\REGT$ denotes the class of regular tree languages.

\begin{lemma}\label{lem:regular}
$\REGT \subseteq \IdPTA$.
\end{lemma}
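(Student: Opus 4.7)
The plan is to simulate a deterministic bottom-up finite-state tree automaton $B=(Q_B,\Sigma,\delta,F_B)$ recognizing the given regular tree language $L$, using the stack of invisible pebbles to realize the recursion of the bottom-up computation. The \abb{i-dpta} $\cA$ will perform a depth-first traversal of the input tree, dropping and lifting invisible pebbles whose colours encode partial bottom-up computations at each ancestor of the current node. Because invisible pebbles below the top are not observable, the pebble at an ancestor node $u$ becomes visible precisely when $\cA$ is back at $u$, which is exactly when we need to consult the partial information stored there.

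More concretely, I would use the finite colour set $C_\mathrm{i}=\{(i,q_1,\dots,q_{i-1})\mid 1\le i\le \m_\Sigma+1,\; q_j\in Q_B\}$, where a pebble of colour $(i,q_1,\dots,q_{i-1})$ at $u$ records that children $1,\dots,i\!-\!1$ of $u$ have already been processed, yielding the $B$-states $q_1,\dots,q_{i-1}$, and that child $i$ is about to be processed. The finite state of $\cA$ is drawn from $\{\mathrm{descend}\}\cup\{\mathrm{return}(q)\mid q\in Q_B\}$, with initial state $\mathrm{descend}$. The rules, informally, are:
\begin{enumerate}
\item[] In state $\mathrm{descend}$ at a node $u$ with label $\sigma$: if $u$ is a leaf, switch to state $\mathrm{return}(\delta(\sigma))$ and execute $\mathrm{stay}$; otherwise drop an invisible pebble of colour $(1,\epsilon)$ and move $\mathrm{down}_1$ in state $\mathrm{descend}$.
\item[] In state $\mathrm{return}(q)$ at a node $u$ with label $\sigma$ of rank $m$ where the observable invisible pebble has colour $(i,q_1,\dots,q_{i-1})$: if $i<m$, lift that pebble, drop a new pebble of colour $(i\!+\!1,q_1,\dots,q_{i-1},q)$, and move $\mathrm{down}_{i+1}$ in state $\mathrm{descend}$; if $i=m$, lift that pebble, compute $q'=\delta(\sigma,q_1,\dots,q_{m-1},q)$, and move $\mathrm{up}$ in state $\mathrm{return}(q')$ (unless $u$ is the root, in which case $\cA$ halts in a final state iff $q'\in F_B$).
\end{enumerate}
The use of the child number $j=0$ of the root identifies the termination of the recursion, and the case where the root itself is a leaf is handled directly by the descend rules.

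Correctness follows by induction on the height of subtrees: after $\cA$ is launched at node $u$ in state $\mathrm{descend}$ with any invisible pebble stack $\pi$ (none of whose pebbles lie in the subtree at $u$), it eventually returns to the parent of $u$ in state $\mathrm{return}(q)$ with stack $\pi$ unchanged, where $q$ is the $B$-state of the subtree rooted at $u$. Determinism is immediate from the rule description: the action is uniquely determined by the current finite state together with the label, child number, and observable pebble (if any).

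There is no real obstacle; the only delicate point is the handling of observability, namely that a pebble on an ancestor of the head is invisible while its subtree is being processed and becomes observable exactly when the head returns, which is what makes the stack-of-invisible-pebbles device behave like the recursion stack of the bottom-up simulation. The construction shows $L=L(\cA)\in\IdPTA$, whence $\REGT\subseteq\IdPTA$.
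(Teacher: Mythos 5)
Your proposal is correct and follows essentially the same approach as the paper's proof: a deterministic post-order evaluation of the bottom-up automaton, with an invisible pebble on each pebbled ancestor of the head recording the states already computed for its evaluated children, exploiting the fact that this pebble is on top of the stack (hence observable) exactly when the head returns to that node to update it. The only slip is in your leaf rule: after executing $\mathrm{stay}$, the head sits at a pebble-free leaf in state $\mathrm{return}(q)$, where neither of your return rules is applicable (both require an observable pebble at the current node) — the leaf rule should instead move up (or you need an additional return rule for empty $b$), so that the pebble at the parent becomes the observable one; this matches your stated invariant that a subtree's processing ends at the parent of its root.
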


\begin{proof}
As the regular tree languages are recognized by deterministic
bottom-up finite-state tree automata, it suffices to explain how the
computation of such an automaton $\cA$ can be simulated by a
deterministic \abb{pta} $\cA'$ with invisible pebbles.
The computation of $\cA$
on the input tree can be reconstructed by a post-order evaluation
of the tree.
At the current node $u$, $\cA'$ uses an invisible pebble to store the states 
in which $\cA$ arrives at the first $m$ children of $u$, 
for some $m$.
The colour of the pebble represents the sequence of states.
For each ancestor $v$ of $u$ 
the pebble stack contains a similar pebble for the first $i-1$ children of $v$, where $vi$  
is the unique child of $v$ that is also an ancestor of $u$ (or $u$ itself). 
If $u$ has more than $m$ children, then $\cA'$ moves to its $(m+1)$-th child and  
drops a pebble that represents the empty sequence of states of $\cA$. 
Otherwise, $\cA'$ computes the state
assumed by $\cA$ in $u$ based on the states of the children, lifts the pebble at $u$, and 
moves to the parent of $u$ to update its pebble with that state. 
The post-order evaluation ensures that pebbles are used
in a nested fashion.

Formally, let $\cA= (\Sigma,P,F,\delta)$ where $\Sigma$ is a ranked alphabet, 
$P$ is a finite set of states, 
$F\subseteq P$ is the set of final states, and $\delta$ is the transition function 
that assigns a state $\delta(\sigma,p_1,\dots,p_m)\in P$ to every $\sigma\in\Sigma$
and $p_1,\dots,p_m\in P$ with $m=\rank_\Sigma(\sigma)$.
As pebble colours the \abb{i-pta} $\cA'$ has all strings in $P^*$ of length at most $\m_\Sigma$. 
Its states and rules are introduced one by one as follows, where $\sigma$ ranges over $\Sigma$,
$j$ and $m$ range over $[0,\rank(\sigma)]$, and $p,p_1,\dots,p_m$ range over~$P$.
The initial state $q_0$ does not occur in the right-hand side of any rule.
In the initial state, the automaton $\cA'$ drops a pebble at the root representing 
the empty sequence of states of $\cA$, and goes into the main state $q_\circ$. The rule is  
\[
\rho_1: \tup{q_0,\sigma,0,\nothing} \to \tup{q_\circ,\drop_\epsilon}.
\]
In state $q_\circ$, $\cA'$ consults the pebble 
to see whether or not all children have been evaluated, and acts accordingly.
For $m<\rank(\sigma)$ it has the rule
\[
\rho_2: \tup{q_\circ,\sigma,j,\{p_1\cdots p_m\}} \to  
                \tup{q_\circ,\down_{m+1};\drop_\epsilon}, 
\]
which handles the case that the state of $\cA$ is not yet known for all children of node $u$. 
For $m=\rank(\sigma)$ and $p=\delta(\sigma,p_1,\dots,p_m)$ it has the rules
\[
\begin{array}{llll}
\rho_3: \tup{q_\circ,\sigma,j,\{p_1\cdots p_m\}} & \to & 
                 \tup{\bar{q}_p,\lift_{p_1\cdots p_m};\up} 
     & \text{if }  j\neq 0, \\[1mm]
\rho_4: \tup{q_\circ,\sigma,0,\{p_1\cdots p_m\}} & \to & \tup{q_\mathrm{yes},\stay}
     & \text{if } p\in F, \\[1mm]
\rho_5: \tup{q_\circ,\sigma,0,\{p_1\cdots p_m\}} & \to & \tup{q_\mathrm{no},\stay}
     & \text{if } p\notin F,
\end{array}
\]
and for $m<\rank(\sigma)$ it has the rule
\[
\rho_6: \tup{\bar{q}_p,\sigma,j,\{p_1\cdots p_m\}} \to 
                       \tup{q_\circ,\lift_{p_1\cdots p_m};\drop_{p_1\cdots p_mp}}.
\]
Thus, if the states $p_1,\dots,p_m$ of $\cA$ at all the children of node $u$ are known, 
$\cA'$~computes the state $p=\delta(\sigma,p_1,\dots,p_m)$ of $\cA$ at $u$. 
If $u$ is not the root of the input tree, then $\cA'$ stores $p$ in its own state $\bar{q}_p$,
lifts the pebble $p_1\cdots p_m$, and moves up to the parent of $u$. 
Since the pebble at the parent is now observable, it can be updated.  
If $u$ is the root of the input tree, then $\cA'$ knows whether or not $\cA$ accepts that tree,
and correspondingly goes into state $q_\mathrm{yes}$ or state $q_\mathrm{no}$,
where $q_\mathrm{yes}$ is the unique final state of $\cA'$.
Note that there is one pebble left on the root of the tree.
\end{proof}

Adding an infinite supply of invisible
pebbles on the other hand 
does not lead out of the regular tree languages. 
It is possible to give a proof of this fact
by reducing \abb{v$_k$i-pta}'s to the
backtracking pushdown tree automata of \cite{Slu},
but here we deduce it from the results of the previous section.

\begin{theorem}\label{thm:regt}
For each $k\ge 0$,
$\VIPTA{k} = \VIdPTA{k} = \REGT$.
\end{theorem}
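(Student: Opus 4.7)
The plan is to sandwich all the classes between $\REGT$ and $\REGT$ using results already available. Concretely, I will show the chain
\[
\REGT \;\subseteq\; \IdPTA \;\subseteq\; \VIdPTA{k} \;\subseteq\; \VIPTA{k} \;\subseteq\; \REGT
\]
for every $k\ge 0$, which collapses everything to equality.

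For the first inclusion I invoke Lemma~\ref{lem:regular} directly. For the second I use the inclusion $\VIdPTA{k}\subseteq \VIdPTA{k+1}$ noted just before Theorem~\ref{thm:regt} (and the definition $\IdPTA=\family{V$_0$I-dPTA}$): adding a single dummy visible colour gives an \abb{v$_k$i-dpta} with identical behaviour, so iterating yields $\IdPTA\subseteq \VIdPTA{k}$. The third inclusion is immediate from the syntactic fact that a deterministic \abb{pta} is a special case of a nondeterministic one.

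The only nontrivial inclusion is $\VIPTA{k}\subseteq\REGT$. Here I would exploit the remark made when defining \abb{ptt}'s that ``an ordinary (non-alternating) \abb{pta} corresponds to a \abb{ptt} for which every output symbol has rank~0.'' Given a \abb{v$_k$i-pta} $\cA=(\Sigma,Q,Q_0,F,C,C_\mathrm{v},C_\mathrm{i},R,k)$, I would build a \abb{v$_k$i-ptt} $\cM$ with output alphabet $\{\#\}$ where $\#$ has rank~$0$, by keeping all rules of $R$ and adding, for every $q\in F$, every $\sigma\in\Sigma$, every child number $j\in[0,\m_\Sigma]$, and every admissible $b\subseteq C$ (meaning $\#(b\cap C_\mathrm{v})\le k$ and $\#(b\cap C_\mathrm{i})\le 1$), the output rule $\tup{q,\sigma,j,b}\to\#$. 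On any input tree $t$, the transducer $\cM$ can produce the output tree $\#$ if and only if it can reach, starting from an initial configuration, some configuration with a state in $F$; that is, $L(\cM)=L(\cA)$. By Corollary~\ref{cor:domptt} the domain of $\cM$ is regular, so $L(\cA)\in\REGT$.

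I do not anticipate any real obstacle: everything is a direct bookkeeping assembly of Lemma~\ref{lem:regular}, the inclusion $\VIdPTA{k}\subseteq\VIdPTA{k+1}$, and Corollary~\ref{cor:domptt} (which in turn rests on the decomposition Theorem~\ref{thm:decomp} and the solvability of inverse type inference for \abb{tt}'s). The only tiny subtlety is to confirm that viewing a \abb{pta} as a \abb{ptt} with rank-$0$ output rules preserves the visible-pebble bound $k$, which is obvious from the construction above since no pebble instructions are introduced.
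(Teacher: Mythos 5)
Your proposal is correct and follows essentially the same route as the paper: $\REGT\subseteq\VIdPTA{k}$ via Lemma~\ref{lem:regular} (together with the inclusion $\VIdPTA{k}\subseteq\VIdPTA{k+1}$ already noted in Section~\ref{sec:autotrans}), and $\VIPTA{k}\subseteq\REGT$ by viewing the \abb{pta} as a \abb{ptt} with a single rank-$0$ output symbol and appealing to Corollary~\ref{cor:domptt}. The only difference is that you spell out the intermediate inclusions explicitly, which the paper leaves implicit.
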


\begin{proof}
By Lemma~\ref{lem:regular}, $\REGT \subseteq \VIdPTA{k}$. 
Conversely, as observed before, 
a \abb{pta} $\cA$ is easily turned
into a \abb{ptt} $\cM$ that outputs 
single node tree $\delta$ (with $\rank(\delta)=0$) for 
trees accepted by $\cA$: for every final state $q$ of $\cA$
add all rules $\tup{q,\sigma,j,b} \to \delta$. 
Then $L(\cA)=L(\cM)$, the domain of $\cM$, 
which is regular by Corollary~\ref{cor:domptt}. 
\end{proof}

Note that an infinite supply of \emph{visible} pebbles 
could be used to mark $a$'s and $b$'s alternatingly
and thus accept the nonregular language $\{a^nb^n\mid n\in\nat\}$
(and similarly $\{a^nb^nc^n\mid n\in\nat\}$).
Note also that the stack of pebbles cannot be replaced by 
two independent stacks, one for visible and one for invisible pebbles.
Then we could accept $\{a^nb^n\mid n\in\nat\}$ 
with just one visible pebble: 
drop an invisible pebble on each $a$, and then use 
the visible pebble on the $b$'s to count the number of $a$'s,
by lifting one invisible pebble (in fact, the unique observable one) 
for each $b$.

Recall from Section~\ref{sec:trees} that 
an $n$-ary \emph{pattern} over a ranked alphabet $\Sigma$ is a set
$T \subseteq \{(t,u_1,\dots,u_n) 
   \mid t\in T_\Sigma, \,u_1,\dots,u_n \in N(t)\}$.
Recall also that the pattern~$T$ is said to be regular 
if its marked representation $\tmark(T)\subseteq T_{\Sigma\times\{0,1\}^n}$ 
is a regular tree language.
In fact, $T$ is regular if and only if it is \abb{mso} definable,
which means that there is an 
\abb{mso} formula $\phi(x_1,\dots,x_n)$ over $\Sigma$ such that $T = T(\phi)$, 
where $T(\phi) = \{(t,u_1,\dots,u_n) \mid t \models \phi(u_1,\dots,u_n)\}$.
Recall finally that a unary pattern ($n=1$) is called a \emph{site}, 
and a binary pattern ($n=2$) is called a \emph{trip}. 

With the help of an unbounded supply
of invisible pebbles 
tree-walking automata can recognize regular tree languages,
Lemma~\ref{lem:regular}.
Likewise \abb{v$_n$i-pta}'s can match 
arbitrary \abb{MSO} definable $n$-ary 
patterns $\phi$. 
When $n$ visible pebbles are dropped on a sequence
of $n$ nodes, the invisible pebbles can be used to evaluate
the tree, and test whether it belongs to the
regular tree language $\tmark(T(\phi))$.
In Section~\ref{sec:pattern} we will consider pattern matching in detail.

Ignoring the visible pebbles,
it is also possible to consider just the position of the head, and 
test whether the input tree together with that position belongs to a given
regular ``marked'' tree language.
We say that a family $\cF$ of \abb{pta}'s (or \abb{ptt}'s)
can \emph{perform \abb{mso} head tests} 
if, for a regular site $T$ over $\Sigma$, 
an automaton (or transducer) in $\cF$ can test whether or not 
$(t,h) \in T$, where $t$ is the input tree
and $h$ the position of the head at the moment of the test. 
Admittedly, this is a very informal definition. 
To formalize it we have to define a \abb{pta}$^{\text{\abb{mso}}}$ 
(or a \abb{ptt}$^{\text{\abb{mso}}}$),
i.e., a \abb{pta} (or \abb{ptt}) \emph{with \abb{mso} head tests},
that has rules of the form $\tup{q,\sigma,j,b,T} \to \zeta$
where $T$ is a regular site over $\Sigma$ (specified in some effective way). 
Such a rule is relevant to a configuration $\tup{q,h,\pi}$ on a tree $t$ if,
in addition, $(t,h) \in T$. Since the regular tree languages are closed under complement,
the complement $T^\mathrm{c}$ of~$T$ can be tested in a rule with left-hand side 
$\tup{q,\sigma,j,b,T^\mathrm{c}}$.
Such an automaton (or transducer) is deterministic
if for every two distinct rules $\tup{q,\sigma,j,b,T} \to \zeta$ and 
$\tup{q,\sigma,j,b,T'} \to \zeta'$, the site $T'$ is the complement of the site $T$. 
For a family $\cF$ of \abb{pta}'s (or \abb{ptt}'s), 
such as the \abb{v$_k$i-pta} or \abb{v$_k$i-}{\rm d}\abb{ptt} or \abb{v$_k$-pta}, we denote 
by $\cF^{\text{\abb{mso}}}$ the corresponding family of \abb{pta}$^{\text{\abb{mso}}}$'s
(or \abb{ptt}$^{\text{\abb{mso}}}$'s).
With this definition of \abb{pta}$^{\text{\abb{mso}}}$ we can formally define that 
a family $\cF$ of \abb{pta}'s can perform \abb{mso} head tests 
if for every \abb{pta}$^{\text{\abb{mso}}}$ in $\cF^{\text{\abb{mso}}}$ 
an equivalent \abb{pta} in $\cF$ can be constructed, and similarly for \abb{ptt}'s. 

Obviously, 
as \abb{v-pta}'s cannot recognize all regular tree languages,
they cannot perform \abb{mso} head tests either:
for any regular tree language $T$
the set $\{(t,\rt_t) \mid t\in T \}$
is a regular site.

The next result shows that any \abb{vi-pta} 
that uses \abb{mso} head tests as a built-in feature
(i.e., any \abb{vi-pta}$^{\text{\abb{mso}}}$) 
can be replaced by an equivalent \mbox{\abb{vi-pta}}
without such tests.
The result holds for \abb{vi-pta}'s
with any fixed number of visible pebbles, 
either deterministic or nondeterministic,
and it also holds for the corresponding \abb{vi-ptt}'s. 

\begin{lemma}\label{lem:sites}
For each $k\ge 0$, 
the \abb{v$_k$i-pta} can perform \abb{mso} head tests.
The same holds for the \abb{v$_k$i-}{\em d}\abb{pta},
\abb{v$_k$i-ptt}, and \abb{v$_k$i-}{\em d}\abb{ptt}.
\end{lemma}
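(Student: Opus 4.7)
The plan is to reduce each MSO head test to a membership test in a regular tree language and to realize that test by an invisible-pebble adaptation of the construction in the proof of Lemma~\ref{lem:regular}. Fix a rule $\tup{q,\sigma,j,b,T}\to \zeta$ of a given \abb{v$_k$i-pta}$^{\text{\abb{mso}}}$ $\cA$, where $T$ is a regular site; since $\tmark(T)\in\REGT$, fix a total deterministic bottom-up finite-state tree automaton $A_T$ for $\tmark(T)$. The new \abb{v$_k$i-pta} $\cA'$ replaces this rule by a subroutine that, from the current head position $h$, computes the bit ``$A_T$ accepts $\tmark(t,h)$'' into its finite state, restores head and stack to the original configuration, and then either fires $\zeta$ (when the bit is $1$) or lets the branch die (when the bit is $0$). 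Because the subroutine uses only fresh invisible colours, neither the bound $k$ on visible pebbles nor the original pebble stack of $\cA$ is affected. In the deterministic case, rules come by definition in complementary pairs, so the bit selects between the two corresponding continuations. The same scheme applies to \abb{v$_k$i-ptt}'s with $\zeta$ possibly an output rule, as the output is spawned only once the bit has been computed and found to be $1$.

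The subroutine first drops an invisible marker pebble $\star$ at $h$, and then ascends the \emph{spine} from $h$ to the root of $t$, simulating the bottom-up evaluation of $A_T$ on $\tmark(t,h)$ one spine node at a time. At each spine node $v$, $\cA'$ drops on top of the stack a distinct ``spine pebble'' whose colour encodes a flag ``is $v$ equal to $h$?'' (set at drop time), the child number through which $v$ was entered from below (undefined at $h$), the $A_T$-state inherited from that spine child, and the list of $A_T$-states already computed for the side children of $v$ processed so far. The state of $A_T$ at the root of each \emph{side child} of $v$ (a child of $v$ not on the spine) is computed by literally invoking the routine of Lemma~\ref{lem:regular}; because that routine drops and lifts only its own invisible pebbles in a properly nested way, it returns $\cA'$ to $v$ with the spine pebble back on top, observable and ready to be updated. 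When every child of $v$ has been processed, $\cA'$ applies the transition of $A_T$ using the label $(\sigma_v,1)$ if $v=h$ and $(\sigma_v,0)$ otherwise, lifts the spine pebble, and ascends to the parent. On reaching the root, the accept/reject bit is in the finite state; $\cA'$ then descends back to $h$ by following the entry-child number stored in each spine pebble, finally lifting $\star$ to restore the original configuration.

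The main obstacle is maintaining the invariant that ``whenever control returns to a spine node, the topmost pebble is that node's spine pebble''; this rests on the nested pebble discipline of Lemma~\ref{lem:regular}, together with the trick of using the spine pebble's colour to record both the ``$v=h$'' flag and the entry-child number, so that neither piece of information needs to be recovered from the now-buried marker $\star$ during the descent. Determinism is preserved because both $A_T$ and the simulator of Lemma~\ref{lem:regular} are deterministic, so the subroutine deterministically computes the bit; hence the construction carries over verbatim to \abb{v$_k$i-}d\abb{pta}, \abb{v$_k$i-ptt}, and \abb{v$_k$i-}d\abb{ptt}.
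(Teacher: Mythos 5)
Your proposal follows essentially the same route as the paper's proof: ascend the path from $h$ to the root, at each node evaluating the off-path subtrees with the nested post-order routine of Lemma~\ref{lem:regular} and carrying the already-computed state of the on-path child (together with its child number and an ``is this $h$'' flag) in the colour of a pebble sitting on the current path node --- exactly the paper's ``painting'' of the pebbles along the path. The one slip is in the return trip: you lift each spine pebble before ascending to the parent, yet later claim to descend from the root ``by following the entry-child number stored in each spine pebble'', and those pebbles are then no longer on the tree. This is harmless and easily repaired: either retain the path pebbles during the ascent and lift them on the way down (as the paper does), or observe that once the last spine pebble is lifted your marker $\star$ at $h$ is again the topmost, hence observable, pebble, so $h$ can be recovered by a standard depth-first search.
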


\begin{proof}
Let $\cA_T$ be a deterministic bottom-up finite-state
tree automaton recognizing  the regular 
tree language $\tmark(T)$ over $\Sigma \times \{0,1\}$,
representing the site~$T$, trees with a single marked node.
We show how a deterministic \abb{i-pta} $\cA'_T$ can test 
whether or not the  
input tree with current head position $h$
is accepted by~$\cA_T$,
in a computation starting in configuration $\tup{q_0,h,\epsilon}$
and ending in configuration $\tup{q_\mathrm{yes},h,\epsilon}$
or $\tup{q_\mathrm{no},h,\epsilon}$, where $q_0$ is the initial state
and $\{q_\mathrm{yes},q_\mathrm{no}\}$ the set of final states of $\cA'_T$. 
Moreover, it starts the computation by dropping a pebble on $h$,
and it keeps a pebble on $h$ until the final computation step. 
It should be obvious that this \abb{i-pta} $\cA'_T$ can be used as a subroutine 
by any \abb{v$_k$i-pta} or \abb{v$_k$i-ptt} $\cA$, 
starting in configuration $\tup{(\tilde{q},q_0),h,\pi}$ and 
ending in configuration $\tup{(\tilde{q},q_\mathrm{yes}),h,\pi}$
or $\tup{(\tilde{q},q_\mathrm{no}),h,\pi}$, 
for every state $\tilde{q}$ and pebble stack $\pi$ of $\cA$. 
Just replace each rule $\tup{q,\sigma,j,b} \to \tup{q',\alpha}$ of $\cA'_T$ 
by all possible rules 
$\tup{(\tilde{q},q),\sigma,j,b\cup b'} \to \tup{(\tilde{q},q'),\alpha}$
where $b'$ is a set of visible pebble colours of $\cA$ 
(except that in the first rule of~$\cA'_T$, which drops a pebble on $h$,
the set $b'$ possibly contains an invisible pebble colour of~$\cA$). 

The post-order evaluation of Lemma~\ref{lem:regular} 
does not work here without precautions.
If we mark node $h$ with an invisible pebble
the pebble becomes unobservable during the evaluation.
In this way we cannot take the special ``marked'' position of $h$
into account.\footnote{Marking $h$ with a visible pebble would easily work,
showing that \abb{vi-pta} can perform \abb{mso} head tests.}
Instead, we first evaluate the subtree rooted at~$h$,
and subsequently the subtrees rooted at the ancestors of $h$, 
moving along the path from $h$ to the root of the input tree. 
At the start of the evaluation of a subtree, we ``paint'' its root $u$ 
by adding a special colour to the pebble on~$u$, and preserving that 
information when the pebble is updated.  
In this way it is always clear when the painted node is visited.
We paint node $h$ with the special additional colour $\odot$ and 
use the evaluation process of Lemma~\ref{lem:regular} to compute 
the state of $\cA_T$ at $h$, viewing the label $\sigma$ of each node as $(\sigma,0)$
except for the label $\sigma$ of $h$ which is treated as $(\sigma,1)$. 
We paint each ancestor $u$ of $h$ with an additional colour $(j,p)$ which indicates 
the child number $j$ of the previous ancestor of $h$ and the state $p$ at which $\cA_T$
arrives at that child of~$u$ (with $h$ as a marked node). 
Then we use, again, the evaluation process of Lemma~\ref{lem:regular}
to compute the state of~$\cA_T$ at $u$ (with every $\sigma$ viewed as $(\sigma,0)$), 
except that  
the information in the pebble $(j,p)$ is used for the state~$p$ of the $j$-th child of $u$,
which is the unique child that has $h$ as a descendant.
Repeating this process for each ancestor, we eventually
reach the root of the tree, and know the outcome of the test. 
Then we return to the original position~$h$ picking up the
pebbles left on the path from that position to the root.

Formally, let $\cA_T= (\Sigma \times \{0,1\},P,F,\delta)$. 
For convenience we will identify the symbols $(\sigma,0)$ and $\sigma$. 
The \abb{i-pta} $\cA'_T$ is an extension of the \abb{i-pta} $\cA'$ 
in the proof of Lemma~\ref{lem:regular}. It has the additional 
states $q_{\downarrow\mathrm{yes}}$ and $q_{\downarrow\mathrm{no}}$,
and in addition to the pebble colours $p_1\cdots p_m$ of $\cA'$
it has the pebble colours $(\mu,p_1\cdots p_m)$ where either 
$\mu=\odot$ or $\mu=(i,r)$ for some $i\in[1,\m_\Sigma]$ and $r\in P$.
The additional pebbles are used to ``paint'' $h$ (with $\mu=\odot$) 
and the ancestors of $h$ (with some $\mu=(i,r)$). 
The automaton $\cA'_T$ has all the rules of~$\cA'$, 
except that rules $\rho_4$ and~$\rho_5$ will become superfluous,
and rule $\rho_1$ is replaced by the rule 
\[
\begin{array}{lll}
\rho'_1: \tup{q_0,\sigma,j,\nothing} & \to & \tup{q_\circ,\drop_{(\odot,\epsilon)}}.
\end{array}
\] 
Thus, $\cA'_T$ starts by evaluating the subtree rooted at $h$, with $h$ as marked node. 
For $m <\rank(\sigma)$ and every $\mu$ as above, 
except when $\mu=(m+1,r)$ for some $r\in P$,  
$\cA'_T$ has the rules
\[
\begin{array}{lll}
\rho_2^\mu: \tup{q_\circ,\sigma,j,\{(\mu,p_1\cdots p_m)\}} & \to & 
                \tup{q_\circ,\down_{m+1};\drop_\epsilon} \\[1mm]
\rho_6^\mu: \tup{\bar{q}_p,\sigma,j,\{(\mu,p_1\cdots p_m)\}} & \to & 
                       \tup{q_\circ,\lift_{(\mu,p_1\cdots p_m)};\drop_{(\mu,p_1\cdots p_mp)}}
\end{array}
\]
which intuitively means that the pebble $(\mu,p_1\cdots p_m)$ is treated in the same way as $p_1\cdots p_m$
when not all children of the current node have been evaluated:
$\cA'_T$~moves to the $(m+1)$-th child and calls $\cA'$, and when $\cA'$ returns with the state $p$,
$\cA'_T$ adds $p$ to the sequence of states in the pebble. 
However, in the exceptional case where $m <\rank(\sigma)$ and $\mu=(m+1,r)$,
$\cA'_T$ has the rule
\[
\begin{array}{lll}
\rho_{2,6}^\mu: \tup{q_\circ,\sigma,j,\{(\mu,p_1\cdots p_m)\}} & \to & 
                       \tup{q_\circ,\lift_{(\mu,p_1\cdots p_m)};\drop_{(\mu,p_1\cdots p_mr)}}
\end{array}
\]
which means that for the $(m+1)$-th child $\cA'_T$ does not call $\cA'$ 
but uses the state~$r$ that was previously computed and stored in $\mu$. 

The remaining rules of $\cA'_T$ handle the situations that $\cA'_T$ has just evaluated 
the subtrees rooted at the children of $h$ or of one of the ancestors $u$ of $h$, in state~$q_\circ$. 
The automaton $\cA'_T$ computes the state $p$ of $\cA_T$ at the marked node~$h$ 
or the unmarked node $u$, and drops
the pebble $((j,p),\epsilon)$ at its parent $v$, where $j$ is the child number of $h$ or $u$, 
thus indicating that the subtree rooted at the $j$-th child of~$v$ 
(with~$h$ as a marked node) evaluates to $p$. 
Then $\cA'_T$ evaluates the subtree rooted at $v$. 

For $m=\rank(\sigma)$ and every $\mu$ as above, $\cA'_T$ has the rules
\[
\begin{array}{llll}
\rho_3^\mu: \tup{q_\circ,\sigma,j,\{(\mu,p_1\cdots p_m)\}} & \to & 
                 \tup{q_\circ,\up;\drop_{((j,p),\epsilon)}} 
     & \text{if }  j\neq 0, \\[1mm]
\rho_4^\mu: \tup{q_\circ,\sigma,0,\{(\mu,p_1\cdots p_m)\}} & \to & 
                 \tup{q_{\downarrow\mathrm{yes}},\stay}
     & \text{if } p\in F, \\[1mm]
\rho_5^\mu: \tup{q_\circ,\sigma,0,\{(\mu,p_1\cdots p_m)\}} & \to & 
                 \tup{q_{\downarrow\mathrm{no}},\stay}
     & \text{if } p\notin F.
\end{array}
\]
where $p=\delta((\sigma,1),p_1,\dots,p_m)$ if $\mu=\odot$ and 
$p=\delta(\sigma,p_1,\dots,p_m)$ otherwise. 

When $\cA'_T$ arrives at the root of the input tree, it knows whether or not $\cA_T$ accepts that tree
(with $h$ as a marked node), and moves down to $h$. 
For the outcome $x\in\{\mathrm{yes},\mathrm{no}\}$ the rules are 
\[
\begin{array}{llll}
\tup{q_{\downarrow x},\sigma,j,\{((i,r),p_1\cdots p_m)\}} & \to & 
     \tup{q_{\downarrow x},\lift_{((i,r),p_1\cdots p_m)};\down_i} \\[1mm]
\tup{q_{\downarrow x},\sigma,j,\{(\odot,p_1\cdots p_m)\}} & \to & 
     \tup{q_x,\lift_{(\odot,p_1\cdots p_m)}}.
\end{array}
\]
This ends the description of $\cA'_T$. 
\end{proof}

This result can easily be extended, using the same
proof technique:
\abb{pta}'s and \abb{ptt}'s can test their 
\emph{visible configuration},
the position of the head together with the
positions and colours of the visible pebbles.
Later we will show the more complicated result that \abb{pta}'s and \abb{ptt}'s can even test their 
\emph{observable configuration},
i.e., the visible configuration plus the topmost pebble 
(Theorem~\ref{thm:mso}).

Let $C$ be the set of colours of a \abb{pta} or \abb{ptt}. 
To represent the visible and observable configurations,  
we introduce a new ranked alphabet $\Sigma \times 2^C$,
such that the rank of $(\sigma,b)$ equals that of $\sigma$ in $\Sigma$.
A tree over $\Sigma \times 2^C$ is a ``coloured tree''. 
For each pebble stack $\pi$ on a tree $t$ over $\Sigma$ 
we define two coloured trees. 
The visible configuration tree $\vis(t,\pi)$ is
obtained by adding to the label of each node~$u$ of~$t$ 
the set $b\subseteq C$ such that $b$ contains~$c$ if and only if 
$(u,c)$ occurs in $\pi$ and $c\in C_{\rm v}$. 
Similarly for $\obs(t,\pi)$,
the observable configuration tree,
$b$~contains~$c$ if and only if 
$(u,c)$ occurs in $\pi$ and $c$ is observable
(i.e., $c\in C_{\rm v}$ or $(u,c)$ is the top element of $\pi$). 
Note that as long as a \abb{pta} does not change 
its pebble stack by a drop- or lift-instruction,
it behaves just as a \abb{ta} on $\obs(t,\pi)$.

We say that a family $\cF$ of \abb{pta}'s (or \abb{ptt}'s)
can \emph{perform \abb{mso} tests on the visible configuration} 
if, for a regular site $T$ over $\Sigma \times 2^C$, 
an automaton (or transducer) in $\cF$ can test whether or not 
$(\vis(t,\pi),h) \in T$, where $t$ is the input tree, 
$\pi$ the current pebble stack and $h$ the current position of the head.
A similar definition can be given for \emph{\abb{mso} tests on the observable configuration}.
These informal definitions could be formalized in a way explained 
for \abb{mso} head tests before Lemma~\ref{lem:sites}.

We now show that the \abb{vi-pta} and \abb{vi-ptt} can perform \abb{mso} tests on the visible configuration.
Note that for a regular site $T$ over $\Sigma \times 2^C$,
$\tmark(T)$ is a regular tree language over 
$\Sigma \times 2^C \times \{0,1\}$.

\begin{lemma}\label{lem:visiblesites}
For each $k\ge 0$, 
the \abb{v$_k$i-pta} and \abb{v$_k$i-}{\em d}\abb{pta} 
can perform \abb{mso} tests
on the visible configuration.
The same holds for the \abb{v$_k$i-ptt} and \abb{v$_k$i-}{\em d}\abb{ptt}.
\end{lemma}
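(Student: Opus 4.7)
The plan is to adapt the proof of Lemma~\ref{lem:sites} so that the post-order evaluation is performed with respect to an augmented labelling that records, in addition to the head marking, the set of visible pebbles located at each node. Concretely, given a regular site $T$ over $\Sigma\times 2^C$, I would take a deterministic bottom-up finite-state tree automaton $\cA_T=(\Sigma\times 2^C\times\{0,1\},P,F,\delta)$ recognizing $\tmark(T)$, and construct a subroutine $\cA'_T$ that, when invoked from any configuration $\tup{q_0,h,\pi}$ of the surrounding \abb{v$_k$i-pta} or \abb{v$_k$i-ptt} $\cA$, decides whether $(\vis(t,\pi),h)\in L(\cA_T)$, without touching the visible pebbles of $\cA$, and then returns control with the stack and head position unchanged. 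As in Lemma~\ref{lem:sites}, $\cA'_T$ may then be spliced into the rules of $\cA$ by the usual product-of-states construction.

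The key new observation is that a visible pebble of $\cA$ is observable \emph{every} time the head of $\cA'_T$ visits its node, regardless of which invisible evaluation pebble of $\cA'_T$ is currently on top of the stack. Hence, during the post-order reduction at a node $u$ with label $\sigma$ and children-states $p_1,\dots,p_m$ already computed, $\cA'_T$ can read the set $b_{\rm v}(u)\subseteq C_{\rm v}$ of visible pebbles at $u$ directly from the $b$-component of the left-hand side of its rule, and apply the transition $p=\delta((\sigma,b_{\rm v}(u),m_u),p_1,\dots,p_m)$, where $m_u=1$ iff $u=h$. The head position $h$ is identified exactly as in Lemma~\ref{lem:sites}, by painting it with a special pebble colour $\odot$ before starting the evaluation, and painting each ancestor of $h$ with a colour $(i,r)$ that records the child number on the path to $h$ together with the state of $\cA_T$ already computed for that child. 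All the remaining machinery—the states $q_\circ$, $\bar q_p$, $q_{\downarrow\mathrm{yes}}$, $q_{\downarrow\mathrm{no}}$, the two evaluation phases (subtree rooted at $h$, and subtrees rooted at ancestors of $h$), and the final descent back to $h$—is imported verbatim from the proof of Lemma~\ref{lem:sites}, with the evaluation pebbles and painting pebbles being invisible pebbles of $\cA'_T$ alone.

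The only verification needed is that reading $b_{\rm v}(u)$ off each rule's $b$-component is legitimate; this amounts to splitting each rule of the proof of Lemma~\ref{lem:sites} into the at most $2^{\#(C_{\rm v})}$ cases according to the possible values of $b_{\rm v}(u)$, which costs only a constant factor and manifestly preserves determinism. There is no serious obstacle: the main conceptual point is simply that the outer $\cA$ keeps its visible pebbles stationary during the subroutine, so $\cA'_T$ sees a fixed visibly-coloured version of $t$ on which to run its bottom-up simulation. The transducer case is identical, since $\cA'_T$ produces no output and its simulation is embedded into the nondeterministic (or deterministic) state transitions of $\cA$ exactly as described after Lemma~\ref{lem:sites}.
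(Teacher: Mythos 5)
Your proposal is correct and matches the paper's own proof: the paper likewise embeds the subroutine of Lemma~\ref{lem:sites} into $\cA$ by enlarging each left-hand side with every possible set $b'\subseteq C_\mathrm{v}$ of visible pebbles and then feeding $(\sigma,b',0)$ or $(\sigma,b',1)$ into the transition function of $\cA_T$, exactly your observation that the visible pebbles remain observable throughout the invisible-pebble evaluation and can be read off the $b$-component of each rule.
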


\begin{proof}
As in the proof of Lemma~\ref{lem:sites}, let $\cA_T$ be a deterministic bottom-up finite-state
tree automaton recognizing the regular 
tree language $\tmark(T)$ over $\Sigma \times 2^C \times \{0,1\}$,
representing the site $T$, coloured trees with a single marked node.
As observed in the first paragraph of that proof the \abb{i-ptt} $\cA'_T$ 
(of that proof) can be turned into a subroutine 
for any \abb{v$_k$i-pta} or \abb{v$_k$i-ptt} $\cA$ 
with visible colour set~$C_\mathrm{v}$ by replacing each rule 
$\tup{q,\sigma,j,b} \to \tup{q',\alpha}$ of $\cA'_T$ (except $\rho'_1$)
by all possible rules 
$\tup{(\tilde{q},q),\sigma,j,b\cup b'} \to \tup{(\tilde{q},q'),\alpha}$
with $b'\subseteq C_\mathrm{v}$. This subroutine can easily be turned into one 
that tests whether or not $(\vis(t,\pi),h) \in T$ as follows. 
For the rules corresponding in this way to 
$\rho_3,\rho_4,\rho_5$ (in the proof of Lemma~\ref{lem:regular}),
change $p=\delta(\sigma,p_1,\dots,p_m)$ into $p=\delta((\sigma,b',0),p_1,\dots,p_m)$.
Similarly, for $\rho_3^\mu,\rho_4^\mu,\rho_5^\mu$
change $p=\delta((\sigma,1),p_1,\dots,p_m)$ into $p=\delta((\sigma,b',1),p_1,\dots,p_m)$
and, again, $p=\delta(\sigma,p_1,\dots,p_m)$ into $p=\delta((\sigma,b',0),p_1,\dots,p_m)$.
\end{proof}

We now turn to the \abb{pta} as a navigational device:
the \emph{trip $T(\cA)$ computed by a \abb{pta} $\cA$}
consists of all triples $(t,u,v)$ such that
$\cA$, on input tree $t$, started at node $u$ in an initial state 
without pebbles on the tree, 
walks to node $v$, and halts in a final state (possibly leaving pebbles on the tree).
Formally, $T(\cA)=\{(t,u,v)\in T_\Sigma\times \nod{t}\times \nod{t} \mid 
\exists\, q_0\in Q_0, q_\infty\in F, \pi\in (N(t)\times C)^*: 
\tup{q_0,u,\epsilon} \Rightarrow_\cA^* \tup{q_\infty,v,\pi}\}$. 
Two \abb{PTA}'s $\cA$ and $\cB$ are \emph{trip-equivalent} if $T(\cA)=T(\cB)$. 
Since clearly $L(\cA)=\{t\in T_\Sigma\mid \exists\, u\in\nod{t}: 
(t,\rt_t,u)\in T(\cA)\}$, trip-equivalence implies (language-)equivalence.  
A~trip $T$ is \emph{functional} if, for every $t$,
$\{ (u,v) \mid (t,u,v)\in T \}$ is a function.
Note that the trip computed by a deterministic \abb{pta}
is functional.

It is straightforward to check that Lemma~\ref{lem:stacktests} also holds for 
the \abb{pta} as navigational device, replacing equivalence by trip-equivalence.
Thus, \abb{v$_k$i-pta}'s can perform stack tests also when computing a trip.
Similarly, they can perform the \abb{mso} tests discussed in 
Lemmas~\ref{lem:sites} and~\ref{lem:visiblesites}, and 
to be discussed in Theorem~\ref{thm:mso}.

In \cite[Theorem~8]{bloem} it is shown that every \abb{mso} definable trip
(tree-node relation) can be computed by a \abb{ta}$^{\text{\abb{mso}}}$, i.e., 
a tree-walking automaton with \abb{mso} head tests (and vice versa).  
Moreover, by (the corrected version of) \cite[Theorem~9]{bloem}, 
if the trip is functional, then the automaton is deterministic.
We will also use the fact that, according to the proof of \cite[Theorem~8]{bloem},
the \abb{mso} definable trips can be computed in a special way. 

\begin{proposition}\label{prop:trips}
Every \abb{mso} definable trip can be computed by a tree-walking automaton 
with \abb{mso} head tests that has the following two properties:

$(1)$ it never walks along the same edge twice (in either direction), and 

$(2)$ it visits each node at most twice.

\noindent
If the trip is functional, then the automaton is deterministic. 
\end{proposition}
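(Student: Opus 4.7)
The plan is to construct, from an \abb{mso} formula $\phi(x,y)$ defining the trip $T$, a tree-walking automaton $\cA$ with \abb{mso} head tests that walks from the starting position $u$ to a target $v$ with $(t,u,v)\in T$ along the unique shortest undirected path in the tree. This path consists of an ascending segment from $u$ up to the least common ancestor $w$ of $u$ and $v$, followed by a descending segment from $w$ down to $v$. Since such a path traverses each edge at most once and visits each node at most once, both properties (1) and (2) are satisfied (in fact with the stronger bound of one visit per node, which comfortably implies the required ``at most twice'' bound).

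First I would equip $\cA$ with two principal states, $q_\up$ for the ascending phase and $q_\down$ for the descending phase, the initial state being $q_\up$ and the sole final state being reached upon halting at $v$. In state $q_\up$ at a node $h$, $\cA$ uses an \abb{mso} head test to decide whether $h$ is the turning point of the walk, i.e., the least common ancestor of the starting node and some valid endpoint; if so, it switches to $q_\down$, otherwise it moves up to the parent. In state $q_\down$ at $h$, $\cA$ first tests whether $h$ is a valid endpoint (and halts if so); otherwise it tests, for each child index $i$, whether the intended target lies in the subtree of the $i$-th child, and moves accordingly. All these decisions are expressible as \abb{mso} site formulas derived from $\phi$ together with the tree structure (e.g., least-common-ancestor and descendant relations).

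The main obstacle is that \abb{mso} head tests only observe the current head position, not the starting node $u$. I would address this by formulating each test as an existentially quantified \abb{mso} formula over candidate starting nodes $x$ and endpoints $z$ satisfying $\phi(x,z)$, together with path-structure conditions asserting that the walk so far is consistent with $x$ being the starting node and $z$ being a valid endpoint (for example, in $q_\up$ at $h$, that some descendant $x$ of $h$ lies in a different subtree of $h$ from some descendant $z$ of $h$ with $\phi(x,z)$). Since the shortest path in a tree between two nodes is unique, local consistency at every step propagates to global consistency: if every test succeeds along the walk, the witnesses chosen at successive steps can be taken to agree on a single pair $(u,v)\in T$, with $u$ necessarily the actual starting position. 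For the functional case, the uniqueness of $v$ given $u$ ensures that at each step the correct choice is unique; by making the \abb{mso} tests guarding different transitions pairwise mutually exclusive (which is possible because their union covers all cases), the resulting automaton is deterministic, establishing the second assertion.
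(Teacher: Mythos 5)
First, note that the paper does not actually prove this proposition itself: it is imported from the proof of \cite[Theorems~8 and~9]{bloem}. Your skeleton — walk the unique shortest undirected path from $u$ up to the least common ancestor and then down to $v$, deciding each step by a test at the current head — is the right one, and it does give properties (1) and (2). The gap is in how you realize the tests. An \abb{mso} head test is a property of the pair (tree, current head position) \emph{only}, and you implement each decision as an existential over candidate starting nodes $x$ and endpoints $z$ with $\phi(x,z)$. Your claim that ``local consistency at every step propagates to global consistency'' and that the witness $x$ is ``necessarily the actual starting position'' does not hold: the existentials at different steps are independent, and nothing ties $x$ to the node where the walk actually began. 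Concretely, let $T$ consist of the pairs (leftmost leaf, rightmost leaf) of binary trees, and start the automaton at a leaf $u$ in the left subtree of the root that is \emph{not} leftmost. Walking up to the root, your turning-point test ``$\exists x$ below the child I came from, $\exists z$ elsewhere, $\phi(x,z)$'' succeeds (witnessed by the leftmost leaf), so the automaton turns, descends to the rightmost leaf $v$, and accepts $(t,u,v)\notin T$. The automaton is therefore unsound (it computes a strictly larger trip), and the same defect invalidates the determinism argument for functional trips.

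The missing idea is to replace the unknown start $u$ by a \emph{bounded} amount of information carried in the finite control: the state that a deterministic bottom-up automaton $\cA$ for the regular tree language $\tmark(T)$ reaches at the root of the subtree containing the marked node $u$. That state can be determined by \abb{mso} head tests at $u$ itself (where the head \emph{is} the start), is updated at each upward move using head tests that compute $\cA$'s states on the unmarked sibling subtrees, and at the turning point determines exactly which descents can lead to a valid $v$; symmetrically, the descent carries the set of states that would lead to acceptance given the recorded information about $u$. Every test is then a genuine head test, soundness holds because the carried state is precisely the $T$-relevant equivalence class of $u$, and for functional trips the choice at each node becomes unique, yielding determinism. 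This state-carrying device is the content of the proof in \cite{bloem}, and it is the same mechanism this paper uses elsewhere, e.g.\ in Lemma~\ref{lem:sites} and in the ``beads'' of Theorem~\ref{thm:mso}.
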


The first property means that, 
when walking from a node $u$ to a node $v$, the automaton always takes the 
shortest (undirected) path from $u$ to~$v$, i.e., 
the path that leads from $u$ up to the least ancestor of $u$ and~$v$, and then 
down to $v$.
The second property means that the automaton does not execute 
two consecutive stay-instructions. 

The next result 
provides a characterization of the \abb{mso} definable trips
by pebble automata that is more elegant than the one in \cite{trips},
which uses so-called marble/pebble automata,
a restricted kind of \abb{v$_1$i-pta}  
(marbles are invisible pebbles only dropped on the path
from the root to the current position of the head;
a single visible pebble may only be dropped and
picked up on a tree without marbles).

\begin{theorem}\label{thm:trips}
For each $k\ge 0$,
the trips computed by
 \abb{v$_k$i-pta}'s are
exactly the \abb{mso} definable  trips.
Similarly for \abb{V$_k$I-}{\em d}\abb{PTA}'s and functional trips.
\end{theorem}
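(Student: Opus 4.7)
The plan is to prove both inclusions separately, relying heavily on the two main tools already established: Proposition~\ref{prop:trips} characterizing \abb{mso} definable trips via \abb{ta}$^{\text{\abb{mso}}}$, and Theorem~\ref{thm:regt} stating that \abb{vi-pta}'s recognize exactly the regular tree languages.

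For the inclusion \abb{mso}-definable $\subseteq$ \abb{v$_k$i-pta} trips, I would take an \abb{mso} definable trip $T$ and apply Proposition~\ref{prop:trips} to get a tree-walking automaton with \abb{mso} head tests computing $T$. The paragraph immediately preceding the theorem observes that \abb{v$_k$i-pta}'s can perform \abb{mso} head tests even when viewed as navigational devices (this is the trip-version of Lemma~\ref{lem:sites}). Thus the \abb{ta}$^{\text{\abb{mso}}}$ from the proposition can be simulated step for step by an \abb{i-pta}, which is a particular \abb{v$_k$i-pta} (for any $k\geq 0$). For the functional case, Proposition~\ref{prop:trips} supplies a \emph{deterministic} \abb{ta}$^{\text{\abb{mso}}}$ and the construction of Lemma~\ref{lem:sites} preserves determinism (the \abb{i-pta} $\cA'_T$ in its proof is deterministic), so we obtain a \abb{v$_k$i-}d\abb{pta}.

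For the converse inclusion, given a \abb{v$_k$i-pta} $\cA$ with input alphabet $\Sigma$, I would build a \abb{v$_k$i-pta} $\cB$ over $\Sigma\times\{0,1\}^2$ that accepts the marked tree language $\tmark(T(\cA))$. Concretely, $\cB$ first traverses the input tree (using only its finite control) to verify that exactly one node carries a $1$ in the first component and exactly one node carries a $1$ in the second component (possibly the same node), moves to the first-component marked node, then simulates $\cA$ on the underlying $\Sigma$-tree starting from there, and accepts as soon as $\cA$ would reach a final state while the current node carries a $1$ in its second component. The simulation uses exactly the visible and invisible pebble resources of $\cA$, so $\cB$ is still a \abb{v$_k$i-pta}. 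By Theorem~\ref{thm:regt}, $L(\cB)\in\REGT$, hence $\tmark(T(\cA))$ is regular, i.e., $T(\cA)$ is a regular trip, which by the Doner--Thatcher--Wright characterization cited in Section~\ref{sec:trees} is equivalent to being \abb{mso} definable.

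The two directions together yield the theorem. The deterministic/functional case follows by combining the deterministic branch of Proposition~\ref{prop:trips} with the determinism-preservation in Lemma~\ref{lem:sites} for one direction; the other direction does not require any determinism argument since functionality of a trip is a semantic property and is inherited automatically once \abb{mso} definability is established. I do not anticipate a genuine obstacle here: the work has already been done in Proposition~\ref{prop:trips}, Lemma~\ref{lem:sites}, and Theorem~\ref{thm:regt}, and the only slightly delicate point is the routine verification in the construction of $\cB$ that recognising $\tmark(T(\cA))$ can indeed be reduced to the acceptance problem of a \abb{v$_k$i-pta}, which just requires combining a fixed-size preliminary traversal with the simulation of~$\cA$.
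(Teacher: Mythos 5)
Your proposal is correct and follows essentially the same route as the paper: one direction marks the pair of nodes, simulates $\cA$ from the marked start node, and invokes Theorem~\ref{thm:regt} to conclude that $\tmark(T(\cA))$ is regular, while the other direction combines Proposition~\ref{prop:trips} with the trip-version of Lemma~\ref{lem:sites}, with determinism preserved for the functional case. Your extra preliminary traversal checking well-formedness of the marking is a routine detail the paper leaves implicit, not a different approach.
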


\begin{proof}
Consider a trip $T$ computed by  \abb{v$_k$i-pta} $\cA$.
Thus, for any $(t,u,v)$ in~$T$, 
starting at node $u$ of input tree $t$, $\cA$ walks to node $v$ and halts.
Then $\tmark(T)$ can be recognized by another \abb{v$_k$i-pta} 
as follows. First it searches (deterministically) for the marked  
starting node $u$, then it simulates $\cA$, and
when $\cA$ halts in a final state, verifies that the marked node $v$ is reached.
By Theorem~\ref{thm:regt} this tree language is
regular and hence $T$ is \abb{mso} definable.

By Proposition~\ref{prop:trips} every \abb{mso} definable trip
can be computed by a 
tree-walking automaton $\cB$ with \abb{mso} head tests. 
Since (as observed above) Lemma~\ref{lem:sites} 
also holds for the \abb{pta} as a navigational device, 
it can therefore be computed by an \abb{I-PTA} $\cB'$.
Moreover, if the trip is functional,
then the automata $\cB$ and $\cB'$ are deterministic.
\end{proof}

Note that the automaton $\cB'$ in the above proof 
always removes all its pebbles before halting. 
Thus, that requirement could be added to the definition of the trip
computed by a \abb{v$_k$i-pta} (implying that not every 
\abb{v$_k$i-pta} computes a trip). This conforms to the idea 
that one should not leave garbage after a picknick. 

Using the above result, or rather Proposition~\ref{prop:trips}, 
we are now able to show that the \abb{pta} and \abb{ptt} 
can perform \abb{MSO} tests
on the observable configuration,
i.e., they can evaluate \abb{mso} formulas $\phi(x)$
on the observable configuration tree $\obs(t,\pi)$
with the variable $x$ assigned to the position of the
reading head.

\begin{theorem}\label{thm:mso}
For each $k \ge 0$, the \abb{v$_k$i-pta} and \abb{v$_k$i-}{\em d}\abb{pta} 
can perform \abb{mso} tests 
on the observable configuration.
The same holds for the \abb{v$_k$i-ptt} and \abb{v$_k$i-}{\em d}\abb{ptt}.
\end{theorem}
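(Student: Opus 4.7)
The plan is to split into two cases based on whether or not the topmost pebble on the stack is invisible, distinguishing them via stack tests (Lemma~\ref{lem:stacktests}). If the stack is empty or the topmost pebble is visible, then $\obs(t,\pi) = \vis(t,\pi)$ and the test reduces directly to Lemma~\ref{lem:visiblesites}.

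In the remaining case the topmost pebble is invisible, at some position $u$ with colour $c_0 \in C_\mathrm{i}$, where $c_0$ is exposed by the stack test. Let $\phi(x)$ be the \abb{mso} formula over $\Sigma \times 2^C$ defining the site $T$. We rewrite it as an \abb{mso} formula $\phi'(x,y)$ over $\Sigma \times 2^{C_\mathrm{v}}$, where $y$ is intended to represent the position of the top invisible pebble, by replacing each atomic label test $\lab_{(\sigma,b)}(z)$ with $\lab_{(\sigma,b\setminus\{c_0\})}(z)\wedge z=y$ when $c_0\in b$, and with $\lab_{(\sigma,b)}(z)\wedge z\neq y$ when $c_0\notin b$. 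Since the topmost pebble is the unique node of $\obs(t,\pi)$ whose label contains $c_0$, one has $\obs(t,\pi)\models\phi(h)$ iff $\vis(t,\pi)\models\phi'(h,u)$. The task thereby reduces to testing a fixed \abb{mso} definable binary pattern on $\vis(t,\pi)$ at the specific pair $(h,u)$ formed by the current head position and the top invisible pebble position.

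By Proposition~\ref{prop:trips} the trip $T(\phi')$ is computed by a tree-walking automaton $\cB$ with \abb{mso} head tests that walks shortest paths. We simulate $\cB$ starting at $h$, realising each of $\cB$'s \abb{mso} head tests on $\vis(t,\pi)$ as a subroutine provided by Lemma~\ref{lem:visiblesites}. The essential point is that such a subroutine only drops and then lifts its own temporary invisible pebbles, so between any two consecutive moves of $\cB$ the pebble stack equals $\pi$, and hence $c_0$ is observable at the current position precisely when $\cB$'s reading head is at $u$. Accept iff $\cB$ reaches a final state at a node where $c_0$ is observable. For the deterministic variant, $\cB$ itself need not be deterministic (the trip $T(\phi')$ is not functional in general), but the yes/no question ``is $(h,u)\in T(\phi')$?'' is still decided by a deterministic bottom-up tree automaton $\cD$ on $\tmark(t,h,u)$; such a $\cD$ can be simulated deterministically by adapting the painting technique of Lemma~\ref{lem:sites} to two marked nodes, where the $u$-marker is dropped while $c_0$ is still on top of the original stack.

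The main obstacle is precisely the difficulty highlighted in the introduction: as soon as any further invisible pebble is dropped, the top pebble $(u,c_0)$ becomes buried and the position $u$ is no longer recoverable by local observation, which defeats the naive plan of first marking the head with a fresh pebble and then walking around to inspect~$u$. The trip-based simulation above bypasses this by localising all auxiliary pebble use inside each atomic \abb{mso} head test, thereby creating a stable window between moves of $\cB$ in which $c_0$'s observability exactly witnesses arrival at $u$. The same construction transfers to \abb{v$_k$i-ptt}'s and their deterministic variants, since the test is attached to the left-hand side of a rule and does not interact with the output-producing part.
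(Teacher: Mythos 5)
Your reduction of the test to a binary pattern $\phi'(x,y)$ on $\vis(t,\pi)$ evaluated at the pair $(h,u)$ is sound, and it is essentially the paper's trip $T_d$. The gap is in how you evaluate it: you run the trip automaton $\cB$ at test time, walking away from $h$. But $h$ carries no marker. You cannot mark it with an invisible pebble without burying $c_0$ (then arrival at $u$ is no longer detectable), and no visible pebble is available in general (all $k$ may already be on the tree, and for $k=0$ there are none). Consequently, after the walk the simulating automaton cannot find its way back to $h$ with stack $\pi$, which it must do in order to continue simulating $\cA$ from the configuration $\tup{q',h,\pi}$; and if you run the walk in the opposite direction $u\to h$ instead, you cannot recognize that the endpoint reached is $h$. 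The same circularity defeats your deterministic fix: to drop the $u$-marker you must first find $u$ by observing $c_0$, which forces you to leave $h$ unmarked and lose it, whereas marking $h$ first makes $c_0$ unobservable so $u$ cannot be found. Moreover a single nondeterministic walk only certifies the positive outcome; the rule format also requires testing the complement of $T$, and the deterministic variant requires deciding the test, neither of which your test-time walk provides.

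The paper's proof avoids all of this by never walking at test time. From the moment the invisible pebble $d$ is dropped at $u$, the simulating automaton maintains, incrementally as the head moves, a chain of auxiliary invisible pebbles (``beads'') covering the shortest undirected path from $u$ to the current head position; each bead stores the set of states in which the trip automaton $\cB_d$ (computing the trip from $u$ to the head on $\vis(t,\pi)$) can arrive at that node, together with the direction back along the path. The bead at the head is always the top of the stack, hence observable, so the test reduces to checking locally whether its state set contains a final state of $\cB_d$; the head never leaves $h$, negative tests and determinism come for free from the subset construction, and observability of $d$ itself is recovered from the bead's direction component. Your construction is missing this incremental bookkeeping, and without it (or an equivalent device) the case of an invisible topmost pebble cannot be completed.
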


\begin{proof}
Let $T$ be a regular site over $\Sigma\times 2^C$, 
and let $\cA$ be a \abb{v$_k$i-pta} that uses $T$ as a test to find out 
whether or not $(\obs(t,\pi),h)\in T$. Our aim is to construct a trip-equivalent  
\abb{v$_k$i-pta} $\cA'$ that does not use \abb{mso} tests on the observable configuration.
The proof is exactly the same for the case where $\cA$ and $\cA'$ are \abb{v$_k$i-ptt} (with equivalence instead of trip-equivalence). 

Essentially, $\cA'$ simulates $\cA$. When $\cA$ uses the test $T$, there are two cases.
In the first case, either the pebble stack of $\cA$ is empty or 
the colour of the topmost pebble of $\cA$ is visible. 
Then the observable configuration equals the visible configuration,
and so $\cA'$ can use the test $T$ too, by Lemma~\ref{lem:visiblesites}.
The remaining, difficult case is that the colour $d$ of the topmost pebble of $\cA$ 
is invisible. To implement the test $T$ in this case
it seems that $\cA'$ cannot use any additional invisible pebbles
(as in the proof of Lemma~\ref{lem:visiblesites}), because 
they make pebble $d$ unobservable. However, 
this is not a problem as long as the additional pebbles carry 
sufficient information about the position $u$ of pebble~$d$. 
The solution is to view $T$ as a trip from $u$ to $h$ (the position of the head), 
and to keep track of an automaton $\cB_d$ that computes that trip.
Although $\cB_d$ is nondeterministic, it is straightforward for $\cA'$ to employ
the usual subset construction for finite-state automata. 

For every $d\in C_\mathrm{i}$, let $T_d$ be the trip over $\Sigma\times 2^C$
defined by $T_d = \{(s,u,h)\mid (s',h)\in T\}$, where $s'$ is obtained from $s$
by changing the label $(\sigma,b)$ of $u$ into $(\sigma,b\cup\{d\})$. 
Then $(\obs(t,\pi),h)\in T$ if and only if $(\vis(t,\pi),u,h)\in T_d$, 
if $(d,u)$ is the topmost element of $\pi$. 
It should be clear from the regularity of $T$ 
that $T_d$ is a regular trip. Hence, by Proposition~\ref{prop:trips}, 
there is a \abb{ta} with \abb{mso} head tests $\cB_d$ that computes $T_d$
and that has the special properties mentioned there. 
Therefore (see the paragraph after Proposition~\ref{prop:trips}),  
to keep track of the possible computations of $\cB_d$, the automaton $\cA'$ uses 
additional invisible pebbles to cover the shortest (undirected) path from $u$ to $h$. 
These pebbles will be called \emph{beads} to distinguish them
from $\cA$'s original pebbles.
Each bead carries state information on
computations of $\cB_d$ that start at position $u$ 
(in an initial state) and end at position $h$. More precisely, 
each bead is a triple $(S,\delta,d)$ 
where $S$ is a set of states of $\cB_d$ and 
$\delta\in \{\up,\stay\}\cup\{\down_i\mid i\in[1,\m_\Sigma]\}$.
There is one such bead $(S,\delta,d)$ 
on every node $v$ on the path from $u$ to $h$ (including $u$ and $h$) 
where $S$ is the set of states $p$ of $\cB_d$
such that $\cB_d$ has a computation on $\vis(t,\pi)$ starting at $u$ 
in an initial state and ending at $v$ in state $p$. 
Moreover, $\delta$ indicates the node $w$ just before $v$ on the path,
which is the parent or $i$-th child of $v$ if $\delta$ is $\up$ or $\down_i$, respectively, 
and which is nonexistent when $v=u$, if $\delta=\stay$. 
The bead at $v$ is on top of the bead at $w$ in the pebble stack of $\cA'$. 
Thus, the bead at $h$ is always on the top of the stack of $\cA'$ 
and hence is always observable.

The automaton $\cA'$ can still simulate $\cA$ because 
if the bead $(S,\delta,d)$ is at head position $h$, 
then the invisible pebble $d$ is observable at $h$ by $\cA$ 
if and only if $\delta=\stay$. 
If $\cA$ lifts $d$, then $\cA'$ lifts both $(S,\stay,d)$ and $d$. 
If $\cA$ drops another pebble $d'$ at $h$, then so does $\cA'$ 
(and starts a new chain of beads
on top of that pebble if $d'$ is invisible). When pebble $d'$ is lifted again, 
the beads for pebble $d$ are still available and can be used as before. 

Now, suppose that $\cA$ uses the test $T$ at position $h$.
If $\cA'$ does not see a bead at position $h$, 
then it uses $T$ as a test 
on the visible configuration. 
If $\cA'$ sees a bead $(S,\delta,d)$ at $h$, then $\cA'$ just
checks whether or not $S$ contains a final state of $\cB_d$, i.e., 
whether or not $(\vis(t,\pi),u,h)\in T_d$.

It remains to show how $\cA'$ computes the beads. 
The path of beads is initialized by $\cA'$ when $\cA$ drops invisible pebble $d$.
Then $\cA'$ also drops pebble $d$, computes the relevant set $S$ of states of $\cB_d$, 
and drops bead $(S,\stay,d)$. The set~$S$ contains all initial states of $\cB_d$,
plus all states that $\cB_d$ can reach from an initial state by applying one relevant rule 
with a stay-instruction (cf. the second property in Proposition~\ref{prop:trips}).
To find the latter states, $\cA'$ just simulates all those rules. 
Note that the \abb{mso} head tests of $\cB_d$ on $\vis(t,\pi)$ are  
\abb{mso} tests on the visible configuration of $\cA'$. That is because
during the simulation of $\cA$ by $\cA'$ 
the visible configuration $\vis(t,\pi')$ of $\cA'$ equals 
the visible configuration $\vis(t,\pi)$ of $\cA$:  
the pebble stack $\pi$ of $\cA$ 
is obtained from the corresponding pebble stack~$\pi'$ of $\cA'$ 
by removing all (invisible) beads. 

The path of beads is updated as follows. If we backtrack 
on the path from~$u$ to $h$, i.e., 
the current bead is $(S,\delta,d)$ with $\delta\neq\stay$ 
and we move in the direction~$\delta$, we just lift the current bead before moving.
If we move away from~$u$, we must compute new bead information.
Suppose the current bead on $h$ is $(S,\up,d)$ and we move down to the $i$-th child $hi$ of $h$. 
Then the bead at $hi$ is $(S',\up,d)$ where $S'$ can be computed 
in a similar way as the set $S$ above: $\cA'$~simulates all computations of $\cB_d$ 
that start at $h$ in a state of $S$ and end at $hi$ (and note that such a computation 
consists of one step, possibly followed by another step with a stay-instruction). 
Now suppose that the current bead is $(S,\down_i,d)$, which means that 
$u$ is a descendant of~$h$. If we move up to the parent $v$ of $h$, then the new bead 
is $(S',\down_j,d)$ where $j$ is the child number of~$h$. If we move down to a child $v$
of $h$ with child number $\neq i$, then the new bead is $(S',\up,d)$. In each of these cases
$S'$ can be computed as before, by simulating 
the computations of~$\cB_d$ from $h$ to $v$.  

In general, $\cA$ can of course use several regular sites $T_1,\dots,T_n$ 
as tests on the observable configuration. It should be obvious how to extend the proof
to handle that. The beads are then of the form $(S_1,\dots,S_n,\delta,d)$ where
$S_i$ is a set of states of a \abb{ta} with \abb{mso} head tests $\cB_{id}$ 
that computes the trip $T_{id}$. To test~$T_i$ in the presence of such a bead,
$\cA'$ just checks whether or not $S_i$ contains a final state of $\cB_{id}$. 
\end{proof}

%% ============================================================
%% ============================================================
\section{The Power of the I-PTT}\label{sec:poweriptt}
%% ============================================================
%% ============================================================

In this section we discuss some applications of the fact that 
the \abb{i-ptt} can perform \abb{mso} head tests (Lemma~\ref{lem:sites}).  
We prove that it can simulate the composition of two \abb{tt}'s 
of which the first is deterministic (cf. Lemma~\ref{lem:nul-decomp}),
and that it can simulate the bottom-up tree transducer.   

\smallpar{Composition of TT's}
We now prove that the inclusions of Lemma~\ref{lem:nul-decomp} also
hold in the other direction, provided that we start with a deterministic \abb{tt}. 

\begin{theorem}\label{thm:composition}
$\dTT \circ \dTT \subseteq \IdPTT$ and
$\dTT \circ \TT \subseteq \IPTT$.
\end{theorem}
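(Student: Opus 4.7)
The plan is to simulate, on the fly, a computation of $\cN_2$ on the intermediate tree $t_1=\tau_{\cN_1}(t)$ by an \abb{i-ptt} $\cM$ that walks on $t$ itself, never materialising $t_1$. The key observation is that, because $\cN_1$ is deterministic, every node $v$ of $t_1$ is produced at a unique configuration: the (unique) first moment at which $\cN_1$ reaches a state $q_1$ at some position $u\in N(t)$ where the applicable rule is an output rule $\tup{q_1,\sigma,j^*,\nothing}\to \gamma(\tup{q_1^{(1)},\stay},\dots,\tup{q_1^{(m)},\stay})$. Hence $v$ can be faithfully represented by the pair $(q_1,u)$; its label $\gamma$ and the states $q_1^{(i)}$ used to descend to its children are determined by $\cN_1$'s rule at $(q_1,\sigma,j^*)$.

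I would maintain the following invariant throughout the simulation: $\cM$'s reading head sits at the position $u$ representing the node $v$ of $t_1$ currently visited by $\cN_2$, its finite state stores a triple $(q_1,q_2,j)$ where $q_2$ is $\cN_2$'s state and $j$ is the child number of $v$ in $t_1$, and the invisible pebble stack carries, for each proper ancestor $v'$ of $v$ in $t_1$, an invisible pebble of colour $(q_1',j')$ at the position $u'$ representing $v'$, ordered as in $t_1$. With this invariant each rule of $\cN_2$ with left-hand side $\tup{q_2,\gamma,j,\nothing}$ is turned into a rule of $\cM$ with left-hand side $\tup{(q_1,q_2,j),\sigma,j^*,\nothing}$ whenever $\cN_1$'s rule at $(q_1,\sigma,j^*)$ outputs $\gamma$: output rules of $\cN_2$ are mirrored by output rules of $\cM$ that produce the same symbol in $\Delta$ and branch into copies with appropriately updated $q_2$; a stay-move of $\cN_2$ just updates $q_2$; a $\down_i$-move causes $\cM$ to drop an invisible pebble of colour $(q_1,j)$ and enter a forward-simulation mode in which it steps through $\cN_1$'s deterministic computation starting from state $q_1^{(i)}$ at the current position until $\cN_1$'s next applicable rule is an output rule, at which point the $i$-th child of $v$ has been reached and the simulation of $\cN_2$ resumes.

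An $\up$-move is the delicate case: $\cM$ must return to the position marked by the topmost invisible pebble and lift it, thereby recovering $(q_1,j)$ for the parent of $v$. Because that top pebble is observable while all deeper pebbles are not, this return trip can be implemented by a deterministic pebble-free tree-walking subroutine that performs a pre-order traversal of $t$ and tests at each node whether the (unique) observable pebble is present; deeper pebbles do not interfere with this subroutine, and no fresh pebbles are needed for the traversal itself. Making this navigation phase formally correct will be the main technical obstacle, but it is a standard \abb{tt}-style construction. Once it is in place, the identity $\tau_{\cM}=\tau_{\cN_1}\circ\tau_{\cN_2}$ follows by a routine induction on the length of $\cN_2$'s computation on $t_1$, and the construction is deterministic whenever $\cN_2$ is, since $\cN_1$, the forward simulation, and the traversal subroutine are all deterministic, leaving $\cN_2$ as the sole potential source of nondeterminism.
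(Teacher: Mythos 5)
Your construction is essentially the paper's proof: represent each node $v$ of the intermediate tree $t_1$ by the unique configuration of the deterministic $\cN_1$ at which $v$ is output, record the ancestors of $v$ as a stack of invisible pebbles whose colours encode how $\cN_1$ produced them, re-run $\cN_1$ forward to simulate a $\down_i$ of $\cN_2$, and simulate $\up$ by lifting the top pebble and walking to the new (unique observable) top pebble via a pebble-free traversal. The paper packages the last step as a $\totop$ instruction and stores the output \emph{rule} of $\cN_1$ plus the child number in the pebble colour rather than the state, but these are cosmetic differences.

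There is one concrete gap: your argument only establishes $\tau_\cM(t)=\tau_{\cN_2}(\tau_{\cN_1}(t))$ for $t$ in the domain of $\cN_1$. If $t\notin L(\cN_1)$ then $t_1$ does not exist, yet your $\cM$ may still halt successfully: $\cN_1$ could output the root of a would-be tree and then get stuck or diverge on some subtree that $\cN_2$ happens never to visit, in which case $\cM$ emits output on an input where the composition is empty, so $\tau_\cM$ could strictly contain $\tau_{\cN_1}\circ\tau_{\cN_2}$. The paper closes this by having $\cM$ begin with an \abb{mso} head test for membership in $L(\cN_1)$, which is a regular tree language by Corollary~\ref{cor:domptt}, and such tests are available to \abb{i-ptt}'s by Lemma~\ref{lem:sites}; you should add this preliminary check (it also preserves determinism). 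With that repair the rest of your argument, including the determinism claim, goes through as in the paper.
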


\begin{proof}
Consider two deterministic \abb{TT}'s $\cM_1$ and $\cM_2$.
Assume that input tree~$t$ is translated into tree $s$ by 
transducer $\cM_1$.
We will simulate the computation of~$\cM_2$ on $s$
directly on $t$ using a  \abb{pTT} 
$\cM$ with invisible pebbles.
Any action taken by $\cM_2$ on node $v$ of tree $s$
will be simulated by $\cM$ on the node $u$ of~$t$ that was the
position of $\cM_1$ when it generated $v$.
This means that if $\cM_2$ moves down in the tree $s$
to one of the children of $v$, the computation of $\cM_1$
is simulated until it generates that child.
On the other hand, if $\cM_2$ moves up in the tree $s$
to the parent of $v$, it is necessary to backtrack on 
the computation of $\cM_1$, back to the moment that that parent
was generated. In this way, tree $s$ is never fully 
reconstructed as a whole, but at every moment $\cM$ has
access to a single node of $s$. 
The necessary node, the current node of $\cM_2$,
is continuously updated by moving back and forth along the
computation of $\cM_1$ on $t$. 

Moving forward on the computation of $\cM_1$
is straightforward. To be able to retrace,
$\cM$ uses its pebbles to record the output-generating steps of the computation
of $\cM_1$ on $t$. Each output rule of $\cM_1$ is represented
by a pebble colour, and is put on the node $u$ of $t$ where it was applied.
The pebble colour also codes the child number of the
generated node $v$ in $s$.
Thus the pebble stack represents a (shortest) path in $s$
from the root to $v$.
For each node on that path the stack contains a pebble 
with the rule of $\cM_1$ used to generate that node and 
with its child number,
from bottom to top.

Note that the determinism of $\cM_1$ is an essential
ingredient for this construction. 
Simulating $\cM_2$,  walking along the virtual tree $s$,
one has to ensure that each time a node $v$ is revisited,
the same rule of $\cM_1$ is applied to $u$.

The above intuitive description assumes that the input tree $t$ is 
in the domain $L(\cM_1)$ of $\cM_1$.
In fact, it suffices to construct an \abb{i-ptt} $\cM$ 
such that $\tau_\cM(t)=\tau_{\cM_2}(\tau_{\cM_1}(t))$ for every such $t$, 
because $\cM$ can then easily be adapted to start with an \abb{mso} head test verifying that 
the input tree is in $L(\cM_1)$, which is regular by Corollary~\ref{cor:domptt}.

Let us now give the formal definitions. 
Let $\cM_1 = (\Sigma, \Delta, P, \{p_0\}, R_1)$ be a deterministic \abb{tt} and let 
$\cM_2 = (\Delta, \Gamma, Q, Q_0, R_2)$ be an arbitrary \abb{tt}. 
To define the \abb{i-ptt} $\cM$ it is convenient to extend the definition of an \abb{i-ptt}
with a new type of instruction: we allow the right-hand side of a rule to be of the form 
$\tup{q',\totop}$, which when applied to a configuration $\tup{q,u,\pi}$ 
leads to the next configuration $\tup{q',v,\pi}$ where $v$ is the node in the topmost element of $\pi$.
Obviously this does not extend the expressive power of the \abb{i-ptt}: 
it is straightforward to write a subroutine that searches for the (unique observable) pebble 
on the tree, by first walking to the root and 
then executing a depth-first search of the tree until a pebble is observed. 

The \abb{i-ptt} $\cM$ has input alphabet $\Sigma$ and output alphabet $\Gamma$. 
Its set $C_\mathrm{i}$ of pebble colours consists of all pairs $(\rho,i)$ where 
$\rho$ is an output rule of $\cM_1$, i.e., a rule of the form 
$\tup{p,\sigma,j}\to \delta(\tup{p_1,\stay},\dots,\tup{p_m,\stay})$ with $p,p_1,\dots,p_m\in P$,
and $i$ is a child number of $\Delta$, i.e., $i\in[0,\m_\Delta]$.
The set of states of $\cM$ is defined to be $Q\cup (P\times[0,\m_\Delta]\times Q)$ 
and the set of initial states is $\{p_0\}\times \{0\}\times Q_0$. 
A state $q\in Q$ is used by $\cM$ when simulating a computation step of $\cM_2$,
and a state $(p,i,q)$ is used by $\cM$ when simulating the computation of $\cM_1$ 
that generates the $i$-th child of the current node of $\cM_2$ (keeping the state $q$ of $\cM_2$
in memory). Initially, $\cM$ simulates $\cM_1$ in order to generate the root of its output tree. 
The rules of $\cM$ are defined as follows. 

First we define the rules that simulate $\cM_1$. Let $\rho: \tup{p,\sigma,j}\to\zeta$ be a rule in~$R_1$.
If $\zeta=\tup{p',\alpha}$ and $\alpha$ is a move instruction, 
then $\cM$ has the rules $\tup{(p,i,q),\sigma,j,b}\to \tup{(p',i,q),\alpha}$ 
for every $i\in[0,\m_\Delta]$, $q\in Q$, and $b\subseteq C_\mathrm{i}$ with $\#(b)\leq 1$. 
If $\rho$ is an output rule with $\zeta=\delta(\tup{p_1,\stay},\dots,\tup{p_m,\stay})$, 
then $\cM$ has the rules $\tup{(p,i,q),\sigma,j,b}\to \tup{q,\drop_{(\rho,i)}}$
for every $i$, $q$, $b$ as above.
Thus, $\cM$ simulates $\cM_1$ until $\cM_1$ generates an output node, 
drops the corresponding pebble,
and continues simulating $\cM_2$. 

Second we define the rules that simulate $\cM_2$.  
Let $\tup{q,\delta,i}\to \zeta$ be a rule in~$R_2$ and 
let $\rho: \tup{p,\sigma,j}\to \delta(\tup{p_1,\stay},\dots,\tup{p_m,\stay})$ be an output rule in~$R_1$
(with the same $\delta$). 
Then $\cM$ has the rule $\tup{q,\sigma,j,\{(\rho,i)\}}\to \zeta'$ 
where $\zeta'$ is defined as follows. 
If $\zeta=\tup{q',\down_\ell}$, then $\zeta'= \tup{(p_\ell,\ell,q'),\stay}$.
If $\zeta=\tup{q',\up}$, then $\zeta'= \tup{q',\lift_{(\rho,i)};\totop}$. 
Otherwise, $\zeta'=\zeta$.
Thus, $\cM$ simulates every output rule or stay rule of $\cM_2$ without changing 
its current node and current pebble stack, 
because the current node of $\cM_2$ stays the same. 
To simulate a $\down_\ell$-instruction of $\cM_2$, $\cM$ starts simulating $\cM_1$ 
in state $p_\ell$ with the child number $\ell$ of the next node of~$\cM_2$. 
Finally, $\cM$ simulates an $\up$-instruction of~$\cM_2$ by lifting its topmost pebble and  
walking to the new topmost pebble, where it continues the simulation of~$\cM_2$. 
\end{proof}

Taking Theorem~\ref{thm:composition} and 
Lemma~\ref{lem:nul-decomp} together, 
we obtain that 
$\dTT \circ \dTT \subseteq \IdPTT \subseteq 
\IPTT \subseteq \TT \circ \TT$.
It is open whether or not the first and last inclusions are proper. 
A way to express $\IdPTT$ and $\IPTT$ in terms of tree-walking tree transducers 
(without pebbles) would be to allow those transducers to have infinite input and output trees. 
Let us denote by $\dTT^\infty$ the class of transductions realized by deterministic \abb{tt}'s
that have finite input trees but can output infinite trees. As a particular example, the \abb{tt} $\cN$ 
in the proof of Lemma~\ref{lem:nul-decomp} can be turned into such a deterministic \abb{tt} $\cN^\infty$
by removing all rules $\langle f,\sigma,j\rangle \to \bot$. 
This $\cN^\infty$ preprocesses every input tree $t$ into a unique ``tree of trees'' $t_\infty$ 
consisting of top level $t$ and infinitely many levels of copies $\hat{t}_u$ of $t$. 
Moreover, let us denote by ${}^\infty\TT$ the class of transductions realized by \abb{tt}'s 
that output finite trees but can walk on infinite input trees, and similarly for ${}^\infty\dTT$. 
It should be clear that the \abb{tt} $\cM'$ in the proof of Lemma~\ref{lem:nul-decomp} 
can also be viewed as working on input tree $t_\infty$ rather than a nondeterministically generated $t'$
(and thus never aborts its simulation of $\cM$). It should also be clear that the proof of 
Theorem~\ref{thm:composition} still works when $\cM_1$ produces an infinite output tree 
as input tree for $\cM_2$.\footnote{To see that $L(\cM_1)$ is regular,
construct an ordinary nondeterministic \abb{tt} $\cN$ by adding to $\cM_1$ all rules
$\tup{q,\sigma,j}\to \bot$ such that $\cM_1$ has no rule with left-hand side $\tup{q,\sigma,j}$,
and all rules $\tup{q,\sigma,j}\to \top$ such that $\cM_1$ has a rule with that left-hand side
(where $\bot$ and $\top$ are new output symbols of rank~0).
Then $L(\cM_1)$ is the complement of $\tau_{\cN}^{-1}(R)$ 
where $R$ is the set of output trees of $\cN$ 
with an occurrence of $\bot$. Now use Proposition~\ref{prop:invtypeinf}(1).
}
Taking these results together, we obtain that 
$\IdPTT = \dTT^\infty \circ {}^\infty\dTT$ and
$\IPTT = \dTT^\infty \circ {}^\infty\TT$.
The formal definitions are left to the reader. 
Other characterizations of $\IdPTT$ will be shown in Section~\ref{sec:variations} 
(Theorem~\ref{thm:charidptt}), where we also show that $\IdPTT \subseteq \dTT^3$
(Corollary~\ref{cor:idptt-tt3}).

\smallpar{Bottom-up tree transducers}
The classical top-down and bottom-up tree transducers are compared to the \abb{v-ptt}
at the end of~\cite[Section~3.1]{MilSucVia03}. 
Obviously, \abb{tt}'s generalize top-down tree transducers. In fact,
the latter correspond to \abb{tt}'s that do not use the move instructions $\up$ and $\stay$.
Moreover, the classical top-down tree transducers with regular look-ahead 
can be simulated by \abb{tt}'s with \abb{mso} head tests, and hence by \abb{i-ptt}'s.
In general, bottom-up tree transducers cannot be simulated 
by \abb{v-ptt}'s, because otherwise every regular tree language could be accepted
by a \abb{v-pta} (see below for the details), 
which is false as proved in~\cite{expressive}. 
We will show that every bottom-up tree transducer can be simulated by an \abb{i-ptt}. 
This will not be used in the following sections. 

A \emph{bottom-up tree transducer} is a tuple $\cM=(\Sigma,\Delta,P,F,R)$ where 
$\Sigma$ and $\Delta$ are ranked alphabets, 
$P$ is a finite set of states with a subset $F$ of final states, and 
$R$ is a finite set of rules of the form 
$\sigma(p_1(x_1),\dots,p_m(x_m))\to p(\zeta)$ such that $m\in\nat$, $\sigma\in\Sigma^{(m)}$,
$p_1,\dots,p_m,p\in P$ and $\zeta\in T_\Delta(\{x_1,\dots,x_m\})$. 
For $p\in P$, the sets $\tau_p\subseteq T_\Sigma\times T_\Delta$ 
are defined inductively as follows: 
the pair $(\sigma(t_1,\dots,t_m),s)$ is in~$\tau_p$ if there is a rule as above 
and there are pairs $(t_i,s_i)\in \tau_{p_i}$ for all $i\in[1,m]$ such that 
$s=\zeta[s_1,\dots,s_m]$, which is the result of substituting $s_i$ 
for every occurrence of $x_i$ in $\zeta$. 
The transduction $\tau_\cM$ realized by $\cM$ is 
the union of all~$\tau_p$ with $p\in F$.
The transducer $\cM$ is deterministic if it does not have 
two rules with the same left-hand side.
For more information see, e.g., \cite[Chapter~IV]{GecSte}.

For every regular tree language $L$ there is a 
deterministic bottom-up finite-state tree automaton $\cA=(\Sigma,P,F,\delta)$ 
(see the proof of Lemma~\ref{lem:regular}) that recognizes $L$
and hence there is a deterministic bottom-up tree transducer $\cM$ 
that realizes the transduction $\tau_L=\{(t,1)\mid t\in L\}\cup \{(t,0)\mid t\notin L\}$.
In fact, $\cM=(\Sigma,\{0,1\},P,F,R)$ where $0$ and $1$ have rank~0 and 
$R$ is the set of all rules $\sigma(p_1(x_1),\dots,p_m(x_m))\to p(i)$
such that $\delta(\sigma,p_1,\dots,p_m)=p$ and 
$i = 1$ if $p\in F$, $i=0$ otherwise. 
A \abb{v-ptt} that computes $\tau_L$ can be turned into a \abb{v-pta}
that accepts $L$ by removing every output rule $\tup{q,\sigma,j,b}\to 0$ and
changing every output rule $\tup{q,\sigma,j,b}\to 1$ into 
$\tup{q,\sigma,j,b}\to \tup{q_\mathrm{fin},\stay}$ 
where $q_\mathrm{fin}$ is the final state. 

Let $\family{B}$ ($\family{dB}$) denote the class of transductions 
realized by (deterministic) bottom-up tree transducers. 

\begin{theorem}\label{thm:bottomup}
$\family{B}\subseteq \IPTT$ and $\family{dB}\subseteq \IdPTT$.
\end{theorem}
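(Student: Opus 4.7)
The plan is to simulate $\cM$ top-down by an \abb{i-ptt} with \abb{mso} head tests, and then apply Lemma~\ref{lem:sites} to remove the head tests; this mirrors the classical reduction of bottom-up tree transducers to top-down tree transducers with regular look-ahead at the children.

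For every $p\in P$, let $L_p=\{(t,u)\mid \exists\, s\in T_\Delta: (t|_u,s)\in \tau_p\}$. This site is regular, since it is recognized by the bottom-up tree automaton obtained from $\cM$ by discarding the output trees on the right-hand sides of its rules. The simulator $\cM'$ has input alphabet $\Sigma$, output alphabet $\Delta$, state set $P$, and initial states $F$. For every rule $\rho:\sigma(p_1(x_1),\dots,p_m(x_m))\to p(\zeta)$ in $R$ and every $j\in[0,\m_\Sigma]$, $\cM'$ has the output rule
\[
\tup{p,\sigma,j,\nothing,T_\rho}\to \zeta[\tup{p_1,\down_1},\dots,\tup{p_m,\down_m}],
\]
written in the shortcut form allowing trees over $\Delta$ on the right-hand side (as in the proof of Lemma~\ref{lem:decomp}), where $T_\rho$ is the regular site requiring the $i$-th child of the current node to lie in $L_{p_i}$ for each $i\in[1,m]$.

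Correctness, $\tau_{\cM'}=\tau_\cM$, follows by induction on the size of the subtree at $u$ from the claim that $\cM'$ started in state $p$ at $u$ produces $s$ iff $(t|_u,s)\in\tau_p$. In the inductive step each applicable rule of $\cM'$ at a $\sigma$-node corresponds to the matching rule of $\cM$; the outputs $s_i$ produced by $\cM'$ at child $ui$ in state $p_i$ are exactly those with $(t|_{ui},s_i)\in \tau_{p_i}$ by the induction hypothesis, and whenever such $s_i$ exist the child condition in $T_\rho$ is automatically met. Since $Q_0=F$ we conclude $\tau_{\cM'}=\tau_\cM$, establishing $\family{B}\subseteq \IPTT$.

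For $\family{dB}\subseteq \IdPTT$ we additionally replace $Q_0=F$ by a fresh single initial state $q_0$ that first \abb{mso}-tests which $p\in F$ labels the root (there is at most one such $p$ since $\cM$ is deterministic) and transitions into state $p$. When $\cM$ is deterministic the bottom-up state at each node of $t$ is unique, so the sites $L_q$ for distinct $q$ are disjoint; thus the tests $T_\rho$ and $T_{\rho'}$ are mutually exclusive for any two distinct $\cM$-rules $\rho,\rho'$ sharing $(\sigma,p)$. By serializing the compound test $T_\rho$ into a sequence of binary \abb{mso} tests, one per child and each checking membership in a single $L_q$ versus its complement (with auxiliary states accumulating the discovered children's states), $\cM'$ fits the deterministic format of an \abb{i-ptt} with \abb{mso} head tests, and Lemma~\ref{lem:sites} (which preserves determinism) then produces a deterministic \abb{i-ptt}. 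The main obstacle is precisely this reconciliation of multiple applicable $\cM$-rules with the paper's restrictive deterministic \abb{mso}-test format; the nondeterministic simulation itself is immediate.
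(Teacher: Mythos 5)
Your construction for the nondeterministic case has a fatal gap: it breaks on rules that copy a variable. If $x_i$ occurs twice in $\zeta$, your output rule $\tup{p,\sigma,j,\nothing,T_\rho}\to \zeta[\tup{p_1,\down_1},\dots,\tup{p_m,\down_m}]$ spawns two \emph{independent} copies of the transducer in state $p_i$ at child $ui$, and each copy makes its own nondeterministic choices. Your inductive claim then only yields $\tau_\cM\subseteq\tau_{\cM'}$, not equality: the two copies may produce different trees $s_i\neq s_i'$ with $(t|_{ui},s_i),(t|_{ui},s_i')\in\tau_{p_i}$, whereas the bottom-up transducer computes $s_i$ \emph{once} and then duplicates it. Concretely, with rules $a\to p(b)$, $a\to p(c)$, $g(p(x_1))\to q(f(x_1,x_1))$, one has $\tau_\cM(g(a))=\{f(b,b),f(c,c)\}$ but your $\cM'$ also produces $f(b,c)$. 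This interaction of copying with nondeterminism is precisely why bottom-up transducers are not top-down transducers with look-ahead, and the \abb{mso} tests $T_\rho$ (which only assert \emph{existence} of some output at each child) cannot repair it. Your deterministic half is essentially sound, since there each $\tau_{p_i}$ is single-valued and the two copies are forced to agree, but that does not salvage the inclusion $\family{B}\subseteq\IPTT$.

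The paper's proof avoids this by resolving \emph{all} nondeterminism before any output is generated: in a first phase the \abb{i-ptt} walks over $t$ in post-order and drops on every node an invisible pebble naming the rule of $\cM$ applied there, using \abb{mso} head tests (via the regularity of the languages $L_p$) to check that the guessed run is locally realizable; only in a second, deterministic phase does it traverse $t$ top-down, verify state-consistency of the guessed rules, and emit the output, re-reading the same pebble annotation for each copy of a subtree so that duplicated occurrences of $x_i$ necessarily yield identical output. If you want to keep a one-pass flavour, you would have to make the guess of the entire run on $t|_{ui}$ \emph{before} branching, which is exactly what the pebble annotation accomplishes.
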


\begin{proof}
Let $\cM=(\Sigma,\Delta,P,F,R)$ be a bottom-up tree transducer. 
Intuitively, for a given input tree $t$, the transducer $\cM$ visits each node $u$ of $t$ 
exactly once. It arrives at the children of $u$ in certain states $p_1,\dots,p_m$ 
with certain output trees $s_1,\dots,s_m$, 
and applies a rule $\sigma(p_1(x_1),\dots,p_m(x_m))\to p(\zeta)$ where $\sigma$ 
is the label of $u$. Thus, it arrives at $u$ in state $p$ with output $\zeta[s_1,\dots,s_m]$.  

We construct an \abb{i-ptt} $\cM'$ with \abb{mso} head tests 
such that $\tau_{\cM'} =\tau_\cM$ (see Lemma~\ref{lem:sites}). 
The transducer $\cM'$ uses the rules of $\cM$ as pebble colours.  
The behaviour of $\cM'$ on a given input tree $t$ is divided into two phases. 
In the first phase $\cM'$ walks through $t$ and (nondeterministically) 
drops one pebble $c$ on each node $u$ of $t$, in post-order. The input symbol $\sigma$ 
in the left-hand side of rule~$c$ must be the label of $u$. Intuitively, 
$c$ is the rule $\sigma(p_1(x_1),\dots,p_m(x_m))\to p(\zeta)$
applied by $\cM$ at $u$ during a possible computation. When $\cM$ drops $c$ on $u$
it uses \abb{mso} head tests to check that $\cM$ has a computation on $t$ 
that arrives at the $i$-th child $ui$ of $u$ in state $p_i$, for every $i\in[1,m]$.
This can be done because the state behaviour of $\cM$ on $t$ is that of a 
bottom-up finite-state tree automaton. Thus, the tree language 
$L_p = \{t\in T_\Sigma\mid \exists\, s: (t,s)\in\tau_p\}$ is regular for every $p\in P$
and hence the site $T_i=\{(t,u)\mid t|_{ui} \in L_{p_i}\}$ is also regular, 
as can easily be seen. Note that if $\cM$ is deterministic, then this first phase of $\cM'$ 
is deterministic too, because $\cM$ arrives at each node in a unique state
(during a successful computation). In the second, deterministic phase $\cM'$ moves top-down 
through $t$, checks that the states in the guessed rules are consistent, and 
computes the corresponding output. First $\cM'$ checks for the pebble 
$c=\sigma(p_1(x_1),\dots,p_m(x_m))\to p(\zeta)$ at the root $u$, that the state $p$ is in~$F$. 
If so, it starts a process that is the same for every node $u$ of $t$. 
It lifts pebble $c$ and goes into state $[c,\zeta]$, in which it will output 
the $\Delta$-labeled nodes of $\zeta$, without leaving $u$. 
In state $q=[c,\delta(\zeta_1,\dots,\zeta_n)]$, 
it uses the output rules $\tup{q,\sigma,j,\nothing}\to 
\delta(\tup{[c,\zeta_1],\stay},\dots,\tup{[c,\zeta_n],\stay})$. 
When $\cM'$ is in a state $[c,x_i]$, it calls a subroutine $S_i$.
Subroutine $S_i$ walks through the subtrees $t|_{um},\dots,t|_{u(i+1)}$ of $t$, 
depth-first right-to-left, lifts the pebbles at all the nodes of those trees 
in reverse post-order (which is possible because 
the pebbles were dropped in post-order), and returns control to $\cM'$,
which continues by moving in state $c$ to child~$ui$ 
where it observes the pebble at $ui$ (again, because of the post-order dropping).
Then $\cM$ checks that the state in the right-hand side of that pebble is $p_i$,
and repeats the above process for node $ui$ instead of $u$.
It should be clear that in this way $\cM'$ simulates the computations of $\cM$, 
and so $\tau_{\cM'} =\tau_\cM$. Note that the bottom-up transducer $\cM$ can disregard
computed output, because in a rule as above it may be that $x_i$ 
does not occur in $\zeta$. In such a case $\cM'$ clearly does not compute 
that output either, in the second phase, whereas it has checked in the first phase 
that $\cM$ indeed has a computation that arrives in state $p_i$ at the $i$-th child.
Note also that if $x_i$ occurs twice in $\zeta$, then $\cM'$ simulates in the second phase 
twice the same computation of $\cM$ on the $i$-th subtree 
(which was guessed in the first phase). 
\end{proof}

%% ============================================================
%% ============================================================
\section{Look-Ahead Tests}\label{sec:look-ahead}
%% ============================================================
%% ============================================================

The results on look-ahead in this section are only needed in 
the next section (and in a minor way in Section~\ref{sec:pft}). They 
also hold for the \abb{pta} as navigational device, computing a trip.  

We say that a family $\cF$ of \abb{PTA}'s (or \abb{PTT}'s)
can \emph{perform look-ahead tests}
if an automaton (or transducer) $\cA$ in $\cF$ can test whether or not a  
\abb{PTT} $\cB$ (not necessarily in $\cF$)
has a successful computation when started 
in the current situation of $\cA$
(i.e., position of the head and stack of pebbles).
We require that  
$\Sigma^\cA = \Sigma^\cB$, 
$C_\mathrm{v}^\cA \subseteq C_\mathrm{v}^\cB$,
$C_\mathrm{i}^\cA \subseteq C_\mathrm{i}^\cB$,
and $k^\cA \leq k^\cB$
(where $\Sigma^\cA$ is the input alphabet of $\cA$, and similarly for the other notation). 
Since we are only interested in the existence of a successful computation,
and not in its output tree, we are actually using alternating \abb{PTA}'s
as look-ahead device (cf. Section~\ref{sec:autotrans}). 
In particular, we also allow a \abb{PTA} to be used as look-ahead $\cB$,
viewing it as a \abb{PTT} as in the proof of Theorem~\ref{thm:regt}.

In the formal definition of a \abb{PTA} or \abb{PTT}
\emph{with look-ahead tests} (cf. the formal definition of \abb{mso} head tests 
before Lemma~\ref{lem:sites}), the rules are of the form 
$\tup{q,\sigma,j,b,\cB} \to \zeta$ or $\tup{q,\sigma,j,b,\neg\,\cB} \to \zeta$
which are relevant to a given configuration $\tup{q,h,\pi}$ of $\cA$ on tree $t$
if the transducer $\cB$ does or does not have a successful computation on $t$ 
that starts in the situation $\tup{h,\pi}$, i.e., 
if there do or do not exist $p_0\in Q_0^{\cB}$ and $s\in T_{\Delta^{\cB}}$
such that $\tup{p_0,h,\pi}\Rightarrow^*_{t,\cB} s$
(where $\Delta^{\cB}$ is the output alphabet of $\cB$), 
or in the case of a \abb{pta} $\cB$, if there do or do not exist
$p_0\in Q_0^{\cB}$, $p_f\in F^{\cB}$, 
and $\tup{u,\pi}\in \sit^{\cB}(t)$ such that  
$\tup{p_0,\rt_t,\epsilon}\Rightarrow^*_{t,\cB} \tup{p_f,u,\pi}$
(where $F^{\cB}$ is the set of final states of $\cB$).

\begin{theorem}\label{thm:look-ahead}
For each $k \ge 0$, the \abb{v$_k$i-pta} and \abb{v$_k$i-}{\em d}\abb{pta} 
can perform look-ahead tests.
The same holds for the \abb{v$_k$i-ptt} and \abb{v$_k$i-}{\em d}\abb{ptt}.
\end{theorem}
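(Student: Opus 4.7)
The plan is to eliminate look-ahead tests by directly simulating the tested transducer $\cB$ within $\cA$'s pebble mechanism. Given an $\cA$ with look-ahead tests, I construct an equivalent look-ahead-free $\cA'$ in the same class: $\cA'$ stepwise simulates $\cA$, and whenever $\cA$ invokes a rule $\tup{q,\sigma,j,b,\cB}\to\zeta$ (or its negated variant) in configuration $\tup{q,h,\pi}$, a subroutine computes the yes/no answer of the test; $\cA'$ then proceeds with $\zeta$ exactly when the answer matches the polarity of the rule.

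For the subroutine I would equip $\cA'$ with a pebble alphabet extending $\cA$'s by a tagged copy $(c,\mathrm{sim})$ of every colour $c$ of $\cB$, with tagged visible copies remaining visible and tagged invisible copies remaining invisible. Pebbles used during ordinary simulation of $\cA$ retain $\cA$'s original colours, while pebbles used during a simulation of $\cB$ carry tagged colours. By Lemma~\ref{lem:stacktests}, $\cA'$ may perform stack tests, so at any moment it can tell whether the topmost pebble is tagged (a simulation pebble) or untagged (an original $\cA$-pebble); this provides a clean stack discipline separating the outer $\cA$-stack from the inner $\cB$-simulation stack.

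On invocation of the test, the subroutine nondeterministically guesses and executes a successful computation of $\cB$ starting from situation $(h,\pi)$, translating each drop and lift of $\cB$ into the corresponding tagged-pebble operation on $\cA'$'s side. If the simulation reaches a successful configuration of $\cB$ without $\cB$ ever attempting to lift an original pebble, $\cA'$ removes its remaining tagged pebbles (which all sit on top of $\pi$ and can be traversed in reverse order using the stack discipline), restores $\tup{q,h,\pi}$, and continues with $\zeta$. For the deterministic \abb{v$_k$i-}d\abb{pta} case the nondeterministic guessing is replaced by a subset-construction tracking of the set of reachable states of $\cB$, which fits into finite memory because $\cB$'s state set is finite; the \abb{ptt} variants are handled uniformly, since $\cA$'s output rules are simply propagated through the subroutine.

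The main obstacle is the case in which every successful computation of $\cB$ must pop pebbles of $\cA$'s original stack $\pi$: a literal pop by $\cA'$ would destroy $\cA$'s state and preclude restoration with bounded memory. I plan to resolve this by reducing such boundary-crossing tests to tests that never touch $\pi$. Using Corollary~\ref{cor:domptt} (the domain of any \abb{ptt} is regular) together with a fixpoint analysis of $\cB$'s finite control, one precomputes a \emph{residual behaviour summary} of $\cB$ after arbitrary prefixes of its starting stack are lifted; this summary is a finite object attached to $\cB$. Answering the look-ahead test at a given crossing then reduces to a bounded search over the summary combined with an \abb{mso} test on the observable part of $\cA'$'s configuration, which is available via Theorem~\ref{thm:mso}. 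The finiteness of the summary, inherited from the finiteness of $\cB$'s state set, is what ultimately makes the construction possible without $\cA'$ ever lifting a pebble of $\cA$'s original stack.
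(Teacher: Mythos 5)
Your proposal correctly isolates the crux of the problem (what to do when $\cB$'s successful computations must pop pebbles of $\cA$'s inherited stack $\pi$), but the fix you offer for it does not work as stated. A ``residual behaviour summary \ldots\ attached to $\cB$'' cannot be a finite object depending only on $\cB$'s finite control: whether $\cB$ succeeds from state $p$ after popping down to the $i$-th pebble of $\pi$ depends on the input tree and on the actual positions of the pebbles in $\pi$, which are unbounded data. The information you need is inherently per-pebble and per-run, and the paper's proof obtains it by computing it \emph{at the moment each pebble is dropped} and storing it in that pebble's colour: every pebble of $\cA'$ carries, besides $\cA$'s colour $c$, the set $S$ of states of $\cB$ from which $\cB$ has a successful computation when started at that pebble's position with the stack accumulated so far. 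The set for the first pebble is obtained by an \abb{mso} head test (Lemma~\ref{lem:sites}), and each subsequent set by an \abb{mso} test on the observable configuration (Theorem~\ref{thm:mso}); regularity of the relevant sites comes from Corollary~\ref{cor:domptt}. The look-ahead test itself is then also an \abb{mso} test on the observable configuration, realized by an auxiliary transducer that simulates $\cB$ on the observable configuration tree until it revisits the topmost pebble's position and there consults~$S$. Without attaching these state sets to the pebbles, your ``bounded search over the summary'' has nothing tree-dependent to search over.

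The direct-simulation front end has independent problems. First, the rules include negative tests $\neg\,\cB$, and non-existence of a successful computation cannot be certified by nondeterministically guessing and executing one. Second, your deterministic variant replaces guessing by ``subset-construction tracking of the set of reachable states of $\cB$'', but $\cB$'s configurations include an unbounded pebble stack, so a finite set of states does not determine reachability; moreover a deterministic $\cA'$ cannot explore-and-rewind. Third, the definition allows $\cB$ to be a \abb{v$_m$i-ptt} with $m\geq k$ and $C_\mathrm{v}^\cA \subseteq C_\mathrm{v}^\cB$, so physically dropping tagged copies of $\cB$'s visible pebbles would make $\cA'$ use up to $m>k$ visible pebbles and leave the class \VIPTA{k}. (There is also the point that a \abb{ptt} $\cB$ branches via output rules, so a single walking head cannot follow ``a'' successful computation.) The paper's construction avoids all of this by never simulating $\cB$ step-by-step inside $\cA'$: every query about $\cB$ is packaged as a regular site over the (visible or observable) configuration tree, computed offline from the domain of a derived transducer, and then answered by the already-established \abb{mso}-test machinery, which is available deterministically and without extra visible pebbles.
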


\begin{proof}
Let $\cA$ be a \abb{v$_k$i-pta} that performs 
a look-ahead test by calling some \mbox{\abb{v$_m$i-ptt}} $\cB$ (with $k\leq m$).
We wish to construct a trip-equivalent \abb{v$_k$i-pta} $\cA'$ 
that does not perform such look-ahead tests. 
By Lemma~\ref{lem:stacktests} we may construct $\cA'$ as a \abb{pta} with stack tests, 
i.e., a \abb{pta} that
can test whether its pebble stack is empty and if so, 
what the colour of the topmost pebble is. 
 
As usual, $\cA'$ simulates $\cA$. 
Suppose that $\cA$ uses the look-ahead test $\cB$ in situation $\tup{h,\pi}$.
When no pebbles are dropped, i.e., $\pi=\epsilon$, the test whether~$\cB$, 
started in that situation, has a successful computation,
is an \abb{mso} head test. Indeed, the site 
$T=\{(t,h)\mid \exists\, p_0\in Q_0^{\cB}, s\in T_{\Delta^{\cB}}:  
\tup{p_0,h,\epsilon}\Rightarrow^*_{t,\cB} s\}$
is regular,
as $\tmark(T)$ is the domain of the \abb{v$_m$i-ptt} $\cB'$ that
starts in the root, looks for the marked node $h$,
and then simulates $\cB$. Domains are regular 
by Corollary~\ref{cor:domptt}, and $\cA'$ can perform 
\abb{mso} head tests by Lemma~\ref{lem:sites}.

In general, one may imagine that $\cA'$ implements the look-ahead test 
by simulating $\cB$. However, when $\cA'$ 
is ready with the simulation of $\cB$, that started 
with the stack $\pi$ of $\cA$, 
$\cA'$ must be able to recover $\pi$ to continue the simulation of~$\cA$. 
Note that $\cB$ can inspect $\pi$, thereby possibly destroying 
part of $\pi$ and adding something else. 
For this reason the computations of
$\cB$ starting at the position of the topmost pebble of~$\pi$ will be precomputed.
With each pebble dropped by $\cA$, the automaton $\cA'$ stores the set $S$ of states
$p$ of $\cB$ for which $\cB$ has a successful computation when
started in state $p$ at the position $u$ of the topmost
stack element (and with the current stack of $\cA$). 
Now a successful computation of $\cB$ can be safely simulated,
consisting of a part where the pebbles of $\cB$ are on top
of the stack $\pi$ inherited from $\cA$, possibly followed by
a precomputed part where $\cB$ inspects $\pi$, starting with a visit to~$u$.
We discuss how these state sets are determined, and how they are used 
(by $\cA'$) to perform the look-ahead test. Rather then simulating~$\cB$,
$\cA'$ will use \abb{mso} tests on the observable configuration,
which is possible by Theorem~\ref{thm:mso}. 
The colour sets of $\cA'$ are $C'_\mathrm{v} = C_\mathrm{v} \times 2^{Q^\cB}$
and $C'_\mathrm{i} = C_\mathrm{i} \times 2^{Q^\cB}$.

If $\cA$ drops the first pebble $c$ (i.e., $\pi=(h,c)$), 
then $\cA'$ drops the pebble $(c,S)$ where it determines 
for every state $p$ of $\cB$ whether or not $p\in S$ 
using an \abb{mso} head test: construct $\cB'$ as above except that
it now drops $c$ at the marked node~$h$ before simulating $\cB$ in state $p$.
Thus, this time, the domain of $\cB'$ is $\tmark(T)$ with 
$T=\{(t,h)\mid \exists\, s\in T_{\Delta^{\cB}}:  
\tup{p,h,c}\Rightarrow^*_{t,\cB} s\}$. 

Suppose now that $\cA$ uses the look-ahead test $\cB$ when it is in situation $\tup{h,\pi}$
with $\pi\neq\epsilon$, and suppose that the
topmost pebble of $\pi$ has colour $d$ and 
that the set of visible pebble colours 
that occur in $\pi$ is $C_\mathrm{v}(\pi)=\{c_1,\dots,c_\ell\}\subseteq C_\mathrm{v}$,
with $\ell\in[0,k]$. 
Then the colour of the topmost pebble of the stack $\pi'$ of $\cA'$ is $(d,S)$ 
for some set $S$ of states of $\cB$, 
and the set of visible pebble colours that occur in $\pi'$ is 
$C_\mathrm{v}(\pi')=\{(c_1,S_1),\dots,(c_\ell,S_\ell)\}$ for some $S_1,\dots,S_\ell$.
Since $\cA'$ can perform stack tests, it can determine $(d,S)$.
Moreover, it should be clear that $\cA'$ can determine $C_\mathrm{v}(\pi')$, 
and hence $C_\mathrm{v}(\pi)$, by an \abb{mso} test on the visible configuration.
With this topmost colour $d$, this state information $S$,
and this set $C_\mathrm{v}(\pi)$ of visible pebbles, 
the look-ahead test can be performed by $\cA'$ as an \abb{mso} test on 
the observable configuration, as follows.
Consider the observable configuration tree 
$\obs(t,\pi')$ with the current node $h$ marked, 
see Theorem~\ref{thm:mso}.
We want to show that there is a regular site $T$ over $\Sigma \times 2^{C'}$
such that $(\obs(t,\pi'),h)\in T$ if and only if 
there exist $p_0\in Q_0^{\cB}$ and $s\in T_{\Delta^{\cB}}$
such that $\tup{p_0,h,\pi}\Rightarrow^*_{t,\cB} s$.
Indeed, $\tmark(T)$ is the domain of  
a \abb{v$_{m'}$i-ptt} $\cB'$, with $m'= m-\ell$. 
It first searches for the position
$u$ of the topmost pebble, which is the unique node of $\obs(t,\pi')$
of which the label contains the colour $(d,S)$. 
It drops the special invisible pebble $\odot$ on $u$,
and then proceeds to the marked node $h$,
starts simulating $\cB$ 
and halts successfully when 
it observes pebble $\odot$ at position $u$ with $\cB$ in a state of $S$,
or when it never has observed $\odot$ and $\cB$ halts successfully
(meaning that pebbles are still on top of $\odot$ when visiting $u$).
Note that $\cB'$ can simulate~$\cB$, 
which walks on $t$ with pebbles rather than on $\obs(t,\pi')$, 
because the colours in the labels of the nodes of $\obs(t,\pi')$
contain the observable pebbles on $t$ in the stack~$\pi$.
Also, $\cB'$ does not apply rules of $\cB$ that contain a $\drop_{c_i}$-instruction
with $c_i\in C_\mathrm{v}(\pi)$. 
The domain $\tmark(T)$ of $\cB'$ is regular 
and $\cA'$ can perform the \abb{mso} test $T$
on its observable configuration.

The same reasoning shows that the state set for the 
next pebble $c$ dropped by $\cA$ can be computed
by \abb{mso} tests on the observable configuration:
again $\cB'$ first drops the pebble $c$ on $h$ before starting the
simulation of $\cB$ in any state $p$.

Finally it should be clear that if $\cA$ uses the look-ahead tests 
$\cB_1,\dots,\cB_n$, then state information for every $\cB_i$ 
should be stored in the pebbles, i.e., they are of the form $(c,S_1,\dots,S_n)$
where $S_i$ is a set of states of $\cB_i$. 
\end{proof}

A natural question is now whether Theorem~\ref{thm:look-ahead} also holds
for \abb{pta}'s and \abb{ptt}'s that are allowed to perform stack tests, 
\abb{mso} head tests, and \abb{mso} tests on the visible and 
observable configuration. The answer is yes. 

Let us first consider the case of stack tests. 
Roughly speaking, if $\cA$ uses look-ahead tests $\cB_1,\dots,\cB_n$, 
then we just apply the construction of Lemma~\ref{lem:stacktests} 
to both $\cA$ and all $\cB_i$, $i\in[1,n]$,
and then apply Theorem~\ref{thm:look-ahead} to the resulting equivalent (ordinary)
\abb{pta} $\cA'$ that calls the (ordinary) \abb{ptt}'s $\cB'_1,\dots,\cB'_n$. 
It should be noted that even if $\cA$
does \emph{not} use stack tests but some $\cB_i$ \emph{does}, 
the construction of Lemma~\ref{lem:stacktests} must be applied to $\cA$ too,
because the stack that $\cB_i$ inherits 
from~$\cA$ must contain the necessary additional information concerning 
the colours of previously dropped pebbles. Vice versa, 
if $\cA$ (or another $\cB_j$) uses stack tests but $\cB_i$ does not,
then $\cB_i$ can just ignore the additional information in the stack of $\cA$,
but it is also correct to apply the construction of Lemma~\ref{lem:stacktests} 
to~$\cB_i$.  
However, not only the additional information in the stack 
should be passed from $\cA'$ to $\cB_1',\dots,\cB_n'$, 
but also the additional information 
in the finite state of~$\cA'$. Thus, to be more precise, if $\cA$ is in state $q$ 
and uses the look-ahead test $\cB_i$, then whenever $\cA'$ is in state 
$(q,\gamma)$, it should use the look-ahead test $\cB'_i(\gamma)$ 
that is obtained from $\cB'_i$ by changing its set 
$Q_0^{\cB_i} \times \{\epsilon\}$ of initial states into 
$Q_0^{\cB_i} \times \{\gamma\}$. 

For the case of \abb{mso} head tests and \abb{mso} tests on the visible configuration 
the proof is easier. The constructions of Lemmas~\ref{lem:sites} 
and~\ref{lem:visiblesites} can be applied to $\cA$ and $\cB_1,\dots,\cB_n$ independently, 
depending on whether they use such tests or not. The reason is that 
these tests are implemented by subroutines for which the pebble stack 
need not be changed. 
Finally, for the case of \abb{mso} tests on the observable configuration
the construction of Theorem~\ref{thm:mso} is again applied 
simultaneously to all of $\cA$ and $\cB_1,\dots,\cB_n$, with beads that take care of 
all the regular sites $T$ that are used by both $\cA$ and $\cB_1,\dots,\cB_n$ as tests.
That ensures that the beads of $\cA'$ also contain the information 
needed by $\cB'_1,\dots,\cB'_n$. Note that in this case 
(as opposed to the case of stack tests above) $\cA'$ does not carry 
any additional information in its finite state and thus, 
whenever $\cA$ uses $\cB_i$ as look-ahead test, 
$\cA'$ can use $\cB'_i$ as look-ahead test.

A similar natural question is whether Theorem~\ref{thm:look-ahead} also holds
for \abb{pta}'s and \abb{ptt}'s that use look-ahead, in particular whether we
can allow the look-ahead transducer to use another transducer  
as look-ahead test. The answer is again yes, with a similar solution. 
In fact it can be shown that the \abb{v$_k$i-pta} (and \abb{v$_k$i-ptt}) even can perform
\emph{iterated} look-ahead tests, that is, they can use look-ahead tests 
that use look-ahead tests that use $\dots$ look-ahead tests. 

Formally, we define for $n\geq 0$ the notion of 
a \abb{pta} or \abb{ptt} $\cA$ \emph{of (look-ahead) depth} $n$, by induction on $n$.
Simultaneously we define the finite sets $\test(\cA)$ and $\test^*(\cA)$ of \abb{ptt}'s,
where $\test(\cA)$ contains the look-ahead tests of $\cA$, 
and $\test^*(\cA)$ contains its iterated look-ahead tests plus $\cA$ itself. 
For $n=0$, a \abb{pta} or \abb{ptt} $\cA$ of depth~$0$ is 
just a \abb{pta} or \abb{ptt} (without look-ahead tests). Moreover, 
$\test(\cA)=\nothing$ and $\test^*(\cA)=\{\cA\}$.
For $n\geq 0$, a \abb{pta} or \abb{ptt} $\cA$ of depth~$n+1$ 
uses as look-ahead tests arbitrary \abb{ptt}'s of lower depth, 
i.e., it has rules 
$\tup{q,\sigma,j,b,\cB} \to \zeta$ or $\tup{q,\sigma,j,b,\neg\,\cB} \to \zeta$
where $\cB$ is a \abb{ptt} of depth~$m\leq n$. 
Furthermore, $\test(\cA)$ is the set of all \abb{ptt}'s of depth~$m\leq n$ 
that $\cA$ uses as look-ahead tests, and 
$\test^*(\cA)=\{\cA\} \cup \bigcup_{\cB\in \test(\cA)}\test^*(\cB)$.
A \abb{pta} or \abb{ptt} \emph{with iterated look-ahead tests} is one of depth $n$, 
for some $n\in\nat$. 
Note that a \abb{pta} (or \abb{ptt}) of depth~$1$ is the same as a 
\abb{pta} (or \abb{ptt}) with look-ahead tests. 
The definition of the semantics of a \abb{pta} or \abb{ptt} with iterated look-ahead tests
is by induction on the depth $n$, and  
is entirely analogous to the one for the case $n=1$ 
as given in the beginning of this section. 

\begin{theorem}\label{thm:iterated}
For each $k \ge 0$, the \abb{v$_k$i-pta} and \abb{v$_k$i-}{\em d}\abb{pta} 
can perform iterated look-ahead tests.
The same holds for the \abb{v$_k$i-ptt} and \abb{v$_k$i-}{\em d}\abb{ptt}.
\end{theorem}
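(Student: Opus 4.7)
The plan is to induct on the look-ahead depth $n$, using Theorem~\ref{thm:look-ahead} as the one-level step. For $n=0$ there is nothing to prove, since a \abb{pta} of depth $0$ is just an ordinary \abb{pta} (and similarly for the other three families). For the inductive step, let $\cA$ be a \abb{v$_k$i-pta} of depth $n+1$, whose look-ahead tests $\cB_1,\dots,\cB_r\in\test(\cA)$ are \abb{ptt}'s of depth at most $n$. By the inductive hypothesis, applied to each $\cB_i$ in its \abb{ptt} version, $\cB_i$ can be replaced by an equivalent $\cB_i'$ of depth $0$. After these replacements $\cA$ becomes a \abb{v$_k$i-pta} of depth $1$, and a single application of Theorem~\ref{thm:look-ahead} converts it into a trip-equivalent ordinary \abb{v$_k$i-pta} $\cA'$. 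The same argument works verbatim for \abb{v$_k$i-}{\em d}\abb{pta}'s, \abb{v$_k$i-ptt}'s, and \abb{v$_k$i-}{\em d}\abb{ptt}'s, because Theorem~\ref{thm:look-ahead} is stated for all four families and because the inductive conversion of the look-aheads preserves both determinism and the class of the outer machine.

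The hard part is maintaining, through the recursion, the compatibility conditions $\Sigma^\cA=\Sigma^{\cB_i'}$, $C_\mathrm{v}^\cA\subseteq C_\mathrm{v}^{\cB_i'}$, $C_\mathrm{i}^\cA\subseteq C_\mathrm{i}^{\cB_i'}$, and $k^\cA\leq k^{\cB_i'}$. The construction behind Theorem~\ref{thm:look-ahead} enlarges the colour set of the machine being converted by pairing each colour $c$ with a tuple of state-sets, one per look-ahead test; thus, when $\cB_i$ is turned into $\cB_i'$, its colour sets strictly grow, and we have to verify that the original colours used by $\cA$ still sit inside $C_\mathrm{v}^{\cB_i'}$ and $C_\mathrm{i}^{\cB_i'}$ in a way that $\cB_i'$ treats them correctly as pebbles placed by the caller. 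The clean way to handle this is to strengthen the inductive statement, fixing once and for all a universal augmentation over the finite family $\test^*(\cA)$: every pebble carries its original colour together with state-set slots for \emph{every} \abb{ptt} appearing anywhere in $\test^*(\cA)$, with empty slots for \abb{ptt}'s that are not invoked at the current level. Since $\test^*(\cA)$ is finite and known in advance, this universal choice keeps all the required subset relations stable at every level of the induction, while the state-set slots themselves are filled in exactly as in the proof of Theorem~\ref{thm:look-ahead}.

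With the strengthened hypothesis in place, the inductive step goes through: the outer construction treats the pre-reduced $\cB_i'$ as an ordinary depth-$0$ \abb{ptt} with the correct colour sets, and augments $\cA$'s colours with one further tuple of state-sets corresponding to the $\cB_i'$, which is absorbed into the same universal augmentation. Since Theorem~\ref{thm:look-ahead} preserves determinism and the number of visible pebbles, iterating this conversion $n+1$ times yields a depth-$0$ machine $\cA'$ in the same family as $\cA$, completing the induction and hence the theorem.
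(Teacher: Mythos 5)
Your proposal is correct and rests on the same two pillars as the paper's proof: induction on look-ahead depth with Theorem~\ref{thm:look-ahead} as the engine, and — crucially — the observation that the pebble augmentation cannot be applied to a look-ahead machine in isolation, because it inherits the caller's stack, so the augmentation must be carried uniformly by every machine in $\test^*(\cA)$. Where you differ is the order of elimination: you first flatten each $\cB_i\in\test(\cA)$ completely to depth $0$ via the induction hypothesis and only then apply Theorem~\ref{thm:look-ahead} once to the resulting depth-$1$ machine, whereas the paper peels exactly one layer off the \emph{entire} tree of tests per step, applying the Theorem~\ref{thm:look-ahead} construction simultaneously to all machines of positive depth so that only the genuinely ordinary (depth-$0$) tests are converted to \abb{mso} tests at each stage. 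The paper's order buys a weaker auxiliary statement — it only needs that machines of depth $m\geq 1$ can perform stack tests and \abb{mso} tests on the observable configuration, established by the same "simultaneous application" trick — while your order requires the strengthened induction hypothesis you sketch: equivalence of $\cB_i$ and $\cB_i'$ not merely as transducers from initial configurations, but as look-ahead tests started in arbitrary situations with (universally augmented) inherited stacks, together with the guarantee that $\cB_i'$ and the converted outer machine both maintain the slot invariant on every pebble they drop. That strengthening is exactly the right fix and is provable by the same techniques, so your argument goes through; you should just be aware that the final step is not a black-box invocation of Theorem~\ref{thm:look-ahead} but a rerun of its construction over the shared augmented colour alphabet, with the outer machine also computing the state-set slots for the \emph{iterated} look-aheads that it never invokes directly.
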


\begin{proof}
We will show that 
for every \abb{v$_k$i-ptt} $\cC$ of depth $n\geq 1$ we can construct an equivalent 
\abb{v$_k$i-ptt} $\cC'$ of depth $n-1$. 
The result then follows by induction.
Since the construction generalizes the one of Theorem~\ref{thm:look-ahead}
(which is the case $n=1$),
we will need all \abb{ptt}'s in $\test^*(\cC')$ to use stack tests and 
\abb{mso} tests on the observable configuration. Thus, for the induction to work, 
we first have to prove that every \abb{v$_\ell$i-ptt} of depth $m\geq 1$ 
can perform such tests. For the case $m=1$ we have already argued this after 
Theorem~\ref{thm:look-ahead}, and the general case can be proved in a similar way. 
Let $\cD$ be a \abb{v$_\ell$i-ptt} of depth $m$ 
such that all $\cA\in\test^*(\cD)$ perform stack tests. 
We just apply the construction of 
Lemma~\ref{lem:stacktests} simultaneously to every
\abb{ptt} $\cA\in\test^*(\cD)$, resulting in the \abb{ptt} $\cA'$.
Moreover, for all $\cA,\cB\in \test^*(\cD)$, if $\cA$ is in state $q$ 
and uses look-ahead test $\cB$, then whenever $\cA'$ is in state $(q,\gamma)$,
it uses look-ahead test $\cB'(\gamma)$. 
Obviously, every $\cB'(\gamma)$ is of the same depth as $\cB$, 
and hence the resulting \abb{v$_\ell$i-ptt} $\cD'$ is of the same depth $m$ as~$\cD$. 
For the \abb{mso} tests the argument 
is completely analogous to the argument for $m=1$ after Theorem~\ref{thm:look-ahead},
applying the appropriate constructions simultaneously to 
all \abb{ptt} $\cA\in\test^*(\cD)$. 

Now let $\cC$ be a \abb{v$_k$i-ptt} of depth $n\geq 1$ and let us construct 
an equivalent \abb{v$_k$i-ptt} $\cC'$ of smaller depth. The argument is similar to 
those above. Let $P_0$ be the set of all $\cB\in \test^*(\cC)$ of depth~0,
i.e., all \abb{ptt} without look-ahead tests, and let $P_1$ contain all 
$\cA\in \test^*(\cC)$ of depth~$\geq 1$. We now apply the construction of 
Theorem~\ref{thm:look-ahead} simultaneously to every \abb{ptt} $\cA\in P_1$,
resulting in a \abb{ptt} $\cA'$ that 
stores state information of every $\cB\in P_0$ in the pebbles. 
If $\cA_1\in P_1$ uses look-ahead test $\cA_2\in P_1$, then 
$\cA'_1$ uses look-ahead test $\cA'_2$. Note that if 
$\cA\in P_1$ uses look-ahead test $\cB\in P_0$, then 
$\cA'$ uses an \abb{mso} test instead. Thus, clearly, 
the depth of every $\cA'$ is one less than the depth of $\cA$,
and so the depth of the resulting \abb{v$_k$i-ptt} $\cC'$ is $n-1$. 
Finally, we remove the stack tests 
and \abb{mso} tests from $\cC'$ and its iterated look-ahead tests as 
explained above for $\cD$. 
\end{proof}

Although this result does not seem practically useful, 
it will become important when we propose the query language Pebble XPath
in the next section, as an extension of Regular XPath. 
Intuitively, Pebble XPath expressions are similar 
to \abb{i-pta} with iterated look-ahead tests.
We note that \abb{ta} with iterated look-ahead tests are used in~\cite{CatSeg}
to prove that Regular XPath is not \abb{mso} complete.

%% ============================================================
%% ============================================================
\section{Document Navigation}\label{sec:xpath}
%% ============================================================
%% ============================================================

We define \emph{Pebble XPath}, an extension of Regular XPath \cite{Mar05} with pebbles.
Due to its potential application to navigation in XML documents, 
it works on (nonempty) forests rather than trees. 
We prove that the trips defined by the path expressions of Pebble XPath
are exactly the \abb{mso} definable trips on forests.

Pebble XPath has path expressions (denoted $\alpha,\beta$) and node expressions (denoted $\phi,\psi$). 
These expressions concern forests over an (unranked) alphabet~$\Sigma$ 
of node labels, or tags, that can be chosen arbitrarily. 
Since we are mainly interested in path expressions, we view the node expressions as auxiliary. 
A path expression describes a walk through a given nonempty forest $f$ over $\Sigma$ during which invisible coloured pebbles can be dropped on and lifted from the nodes of~$f$, in a nested (stack-like) manner. Such a walk steps through $f$ from node to node following both the vertical and horizontal edges in either direction. 
The context in which a path expression is evaluated (i.e., the situation at the start of the walk) is a pair $\tup{u,\pi}$ consisting of a node $u$ of $f$ and a stack $\pi$ of pebbles that lie on the nodes of $f$. Formally, a \emph{context}, or \emph{situation}, on a forest $f$ is an element of the set $\sit(f) = \nod{f} \times (\nod{f}\times C)^*$, where $\nod{f}$ is the set of nodes of $f$ and $C$ is the finite set of colours of the pebbles (that can be chosen arbitrarily). The walk ends in another context. Thus, the semantics of a path expression is a binary relation on $\sit(f)$. Similarly, the semantics of a node expression is a subset of $\sit(f)$, i.e., a test on a given context. Note that the notion of a context on a forest is entirely similar to that of a situation on a ranked tree for an \abb{i-pta} with (invisible) colour set $C$. 

For the syntax of Pebble XPath, we start with the basic path expressions, with $c\in C$:
$$\alpha_0 ::= \child \mid \parent \mid \rght \mid \lft \mid \dropt_c \mid \liftt_c$$ 
The first four expressions operate on the context node only (in the usual way, moving to a child, the parent, the next sibling, and the previous sibling, respectively), whereas the last two also operate on the pebble stack (dropping/lifting a pebble of colour $c$ on/from the context node $u$, which is modeled by pushing/popping the pair $(u,c)$ on/off the stack). The syntax of path expressions is
$$\alpha ::= \alpha_0 \mid \;?\phi \mid \alpha \cup \beta \mid \alpha/\beta \mid \alpha^*$$
where $\beta$ is an alias of $\alpha$.
The three last expressions show the usual regular operations on binary relations: union, composition, and transitive-reflexive closure. The expression $?\phi$ denotes the identity relation on the set of contexts defined by the node expression $\phi$, i.e., it filters the current context by requiring that $\phi$ is true. 

We now turn to the node expressions and start with the basic ones, with $\sigma\in\Sigma$:
$$\phi_0 ::=  \haslab_\sigma \mid \isleaf \mid \isroot \mid \isfirst \mid \islast \mid \haspeb_c$$
The first five expressions test whether the context node has label $\sigma$, whether it is a leaf, a root, the first among its siblings, or the last among its siblings. The last expression (which is the only one that also uses the pebble stack) tests whether the topmost pebble, i.e., the most recently dropped pebble, lies on the context node and has colour $c$. The syntax of node expressions is
$$\phi ::= \phi_0 \mid \tup{\alpha} \mid \neg\phi \mid \phi\wedge\psi \mid \phi\vee\psi$$
where $\psi$ is an alias of $\phi$.
The last three expressions show the usual boolean operations. The expression $\tup{\alpha}$ is like a predicate $[\alpha]$ in XPath 1.0, which filters the context by requiring the existence of at least one successful $\alpha$-walk starting from this context. In terms of tree-walking automata it is a look-ahead test.
We will also consider the language \emph{Pebble CAT}, which is obtained from Pebble XPath by dropping the filter tests $\phi ::= \langle\alpha\rangle$. The expressions of Pebble CAT are \emph{caterpillar expressions} extended with pebbles. 

The formal semantics of Pebble XPath expressions is given in Tables~\ref{tab:sempath} and~\ref{tab:semnode}. For every nonempty forest $f$ over $\Sigma$, the semantics $\semf{\alpha}\subseteq \sit(f)\times\sit(f)$ and $\semf{\phi}\subseteq \sit(f)$ of path and node expressions are defined, where $u,u'$ vary over $\nod{f}$, $\pi,\pi'$ vary over $(\nod{f}\times C)^*$, and $p$ varies over $\nod{f}\times C$. 
Note that $\semf{ \parent }=\semf{\child}^{-1}$, $\semf{ \lft }=\semf{ \rght }^{-1}$, and
$\semf{ \liftt_c }=\semf{ \dropt_c }^{-1}$.
Note also that the set $\semf{ \tup{\alpha} }$ is 
the domain of the binary relation $\semf{ \alpha }$.

\begin{table}
\[
\begin{array}{l@{\;}cl}
\semf{\child} &=& \{(\tup{u,\pi},\tup{u',\pi})\mid u' \mbox{ is a child of }u\} \\
\semf{ \parent } &=& \{(\tup{u,\pi},\tup{u',\pi})\mid u' \mbox{ is the parent of }u\} \\
\semf{ \rght } &=& \{(\tup{u,\pi},\tup{u',\pi})\mid 
    u' \mbox{ is the next sibling of }u\} \\
\semf{ \lft } &=& \{(\tup{u,\pi},\tup{u',\pi})\mid 
    u' \mbox{ is the previous sibling of }u\} \\
\semf{ \dropt_c } &=& \{(\tup{u,\pi},\tup{u,\pi p}) \mid p=(u,c)\} \\ 
\semf{ \liftt_c } &=& \{(\tup{u,\pi p},\tup{u,\pi}) \mid p=(u,c)\} \\[1mm]
\semf{ ?\phi } &=& \{(\tup{u,\pi},\tup{u,\pi})\mid 
    \tup{u,\pi}\in \semf{ \phi }\} \\
\semf{ \alpha\cup\beta } &=& \semf{ \alpha } \cup 
    \semf{ \beta } \\
\semf{ \alpha/\beta } &=& \semf{ \alpha } \circ \semf{ \beta } \\
\semf{ \alpha^* } &=& \semf{ \alpha }^* 
\end{array}
\]
\caption{Semantics of Pebble XPath path expressions}
\label{tab:sempath}
\end{table}

\begin{table}
\[
\begin{array}{l@{\;}cl}
\semf{ \haslab_\sigma } &=& \{\tup{u,\pi}\mid u \mbox{ has label }\sigma\} \\
\semf{ \isleaf } &=& \{\tup{u,\pi}\mid u \mbox{ is a leaf}\} \\
\semf{ \isroot } &=& \{\tup{u,\pi}\mid u \mbox{ is a root}\} \\
\semf{ \isfirst } &=& \{\tup{u,\pi}\mid u \mbox{ is a first sibling}\} \\
\semf{ \islast } &=& \{\tup{u,\pi}\mid u \mbox{ is a last sibling}\} \\
\semf{ \haspeb_c } &=& \{\tup{u,\pi p}
    \mid p=(u,c)\} \\[1mm]
\semf{ \tup{\alpha} } &=& \{\tup{u,\pi}\mid 
   \exists \tup{u',\pi'}\colon 
   (\tup{u,\pi},\tup{u',\pi'})\in \semf{ \alpha }\} \\
\semf{ \neg\phi } &=& \sit(f)\setminus \semf{ \phi } \\
\semf{ \phi\wedge\psi } &=& \semf{ \phi } \cap \semf{ \psi } \\
\semf{ \phi\vee\psi } &=& \semf{ \phi } \cup \semf{ \psi }
\end{array}
\]
\caption{Semantics of Pebble XPath node expressions}
\label{tab:semnode}
\end{table}

The filtering XPath expression $\alpha[\beta]$ of XPath 1.0 can here be defined as $\alpha[\beta] = \alpha/?\langle\beta\rangle$. Also, the node expression $\loops(\alpha)$ from \cite{GorMar05,Cat06} can be defined as $\loops(\alpha)= \tup{\dropt_c/\alpha/\liftt_c}$ where $c$ is a colour not occurring in $\alpha$. Then $\semf{\loops(\alpha)}= 
\{\tup{u,\pi}\mid (\tup{u,\pi},\tup{u,\pi})\in \semf{\alpha}\}=
\{\tup{u,\pi}\mid (\tup{u,\epsilon},\tup{u,\epsilon})\in \semf{\alpha}\}$, 
because $\alpha$ cannot inspect the stack $\pi$
and it must return to $u$ in order to lift pebble $c$.

Two path expressions $\alpha$ and $\beta$ are \emph{equivalent}, denoted by $\alpha\equiv\beta$,
if $\semf{\alpha}=\semf{\beta}$ for every nonempty forest $f$ over $\Sigma$, 
and similarly for node expressions. 
Note that $?(\phi\wedge\psi)\equiv \;?\phi/?\psi$ and $?(\phi\vee\psi)\equiv \;?\phi\;\cup \;?\psi$. Hence, using De~Morgan's laws, the syntax for node expressions can be replaced by $\phi::= \phi_0 \mid \neg\phi_0 \mid \langle\alpha\rangle \mid \neg\langle\alpha\rangle$ for Pebble XPath, and $\phi::= \phi_0 \mid \neg\phi_0$ for Pebble CAT.
Thus, keeping only the basic node expressions, 
we can always assume that the syntax for path expressions is 
$$\alpha ::= \alpha_0 \mid 
\;?\phi_0 \mid \;?\neg\phi_0 \mid \;?\tup{\beta} \mid \;?\neg\tup{\beta}
\mid \alpha \cup \beta \mid \alpha/\beta \mid \alpha^*$$
for Pebble XPath, and hence
$$\alpha ::= \alpha_0 \mid \;?\phi_0 \mid \;?\neg\phi_0 
\mid \alpha \cup \beta \mid \alpha/\beta \mid \alpha^*$$
for Pebble CAT. In that case we will say that we assume 
the syntax to be in \emph{normal form}. 

Note also that all basic node expressions except $\haslab_\sigma$ are redundant, because $\isleaf \equiv \neg\langle\child\rangle$ (a node is a leaf if and only if it has no children), $\isroot \equiv \neg\langle\parent\rangle$, 
$\isfirst \equiv \neg\langle\lft\rangle$, $\islast \equiv \neg\langle\rght\rangle$, and $\haspeb_c \equiv \langle\liftt_c\rangle$. However, these basic node expressions were kept in the syntax, because we also wish to consider the subset Pebble CAT in which there are no filter tests $\tup{\alpha}$. 
Note finally that when $\dropt_c$, $\liftt_c$, and $\haspeb_c$ are removed from Pebble XPath, the resulting formalism is exactly Regular XPath \cite{Mar05} (and in the semantics the stack can, of course, be disregarded). 

The purpose of Pebble XPath is the same as that of XPath: 
to define trips, i.e., binary patterns. Recall from Section~\ref{sec:trees} that 
a trip $T$ over an unranked alphabet $\Sigma$ is a set $T\subseteq\{(f,u,v)\mid f\in F_\Sigma, u,v\in \nod{f}\}$ where $F_\Sigma$ is the set of forests over $\Sigma$. Note that $f$ is always a nonempty forest. 
For a path expression~$\alpha$ (based on $\Sigma$ and some $C$) we say that $\alpha$ 
\emph{defines the trip} $T(\alpha) = \{(f,u,v)\mid 
\exists\, \pi\in(N(f)\times C)^*: (\tup{u,\epsilon},\tup{v,\pi})\in \semf{\alpha}\}$. 
We now define a trip~$T$ over $\Sigma$ 
to be \emph{definable in Pebble XPath} if there exists a Pebble XPath path expression $\alpha$ such that $T = T(\alpha)$. And similarly for Pebble CAT. The next theorem states that Pebble XPath and Pebble CAT have the same expressive power as \abb{MSO} logic on forests. 

\begin{theorem}\label{thm:xpath}
A trip is definable in Pebble XPath if and only if it is definable in Pebble CAT if and only if it is \abb{MSO} definable. 
\end{theorem}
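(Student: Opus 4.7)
The plan is to establish a ring of implications: Pebble CAT trips $\subseteq$ Pebble XPath trips $\subseteq$ MSO-definable trips $\subseteq$ Pebble CAT trips. The first inclusion is immediate, as Pebble CAT is a syntactic fragment of Pebble XPath. For the other two inclusions, the bridge between Pebble expressions (defined on forests) and the machinery of Theorems~\ref{thm:trips} and~\ref{thm:iterated} (stated for ranked trees) is the binary encoding $\enc'$, whose node set coincides with that of the forest. Since $\enc'$ is a bijection whose inverse $\dec'$ is MSO-definable (the forest's $\down$ and $\nex$ relations are MSO-expressible from $\down_1,\down_2$ plus the superscripts of labels, and vice versa), a forest trip $T$ is MSO-definable iff $\{(\enc'(f),u,v)\mid (f,u,v)\in T\}$ is MSO-definable over the ranked alphabet $\Sigma'$.

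For Pebble XPath $\subseteq$ MSO, I would translate each Pebble XPath path expression $\alpha$ by induction on its structure into an \abb{i-pta} (possibly with iterated look-ahead tests) that walks on $\enc'(f)$ and computes $\{(\enc'(f),u,v)\mid (f,u,v)\in T(\alpha)\}$. The basic moves $\child,\parent,\rght,\lft$ are simulated by a short piece of code on $\enc'(f)$ that inspects the superscript $\sigma^{ij}$ of the current label and moves via $\down_1,\down_2,\up$ accordingly (for $\parent$, iterate $\up$ while the current node is a second binary child). The pebble operations $\dropt_c,\liftt_c$ translate directly to $\drop_c,\lift_c$. The regular constructors $\cup, /, {}^*$ are built using nondeterministic states and Kleene-style loops. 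Node expressions $\phi$ are compiled into \abb{i-pta} look-ahead devices: the basic ones become simple state-and-label tests, $\langle\alpha\rangle$ becomes a look-ahead test using the automaton for $\alpha$, and boolean combinations are handled by taking products and complements of these look-aheads. By Theorem~\ref{thm:iterated} the resulting \abb{i-pta} with iterated look-ahead is equivalent to a plain \abb{i-pta}, and by Theorem~\ref{thm:trips} its trip is MSO-definable on $\enc'(f)$, hence on $f$.

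For the hard direction MSO $\subseteq$ Pebble CAT, I start from an MSO-definable forest trip $T$, transfer it through $\enc'$ to an MSO-definable trip on ranked trees over $\Sigma'$, and by Theorem~\ref{thm:trips} obtain an \abb{i-pta} $\cA$ computing it on $\enc'(f)$. I then view the rules of $\cA$ as a labeled transition graph: vertices are states, and each edge from $q$ to $q'$ is labeled by a guarded action $\tup{\sigma,j,b}\Rightarrow\alpha$ consisting of the tests in the left-hand side and the instruction $\alpha$. Apply Kleene's classical algorithm to this graph to obtain, for each pair of an initial and a final state, a regular expression over guarded actions whose semantics is precisely the trip of $\cA$ starting in that initial state and ending in that final state. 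Finally, I rewrite each guarded action into a Pebble CAT phrase on the forest: the tests $\sigma^{ij}, j, b$ become conjunctions of $\haslab$, $\isfirst/\islast/\isleaf/\isroot$ and $\haspeb_c$ (composed as $?$-filters); the instruction $\alpha$ becomes the appropriate combination of $\child,\parent,\rght,\lft,\dropt_c,\liftt_c$, guarded by further basic tests to mimic $\down_1,\down_2,\up$ on $\enc'(f)$. Taking the union of the resulting expressions over all initial/final state pairs (together with a prefix that asserts the empty-stack start context) yields a Pebble CAT expression defining $T$.

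The main obstacle is the faithful low-level simulation in the last step: the moves of $\cA$ operate on the binary encoding, while Pebble CAT navigates the original forest, and the two node sets coincide but their parent/child structures differ. The key points are that (i) the binary $\up$-step either coincides with the forest $\parent$ (when the current node is a forest first child, equivalently a binary first child whose binary parent has a label with first superscript $1$) or with the forest $\lft$ (otherwise), and similarly $\down_1,\down_2$ split into $\child$ versus $\rght$ depending on the superscripts of the current label, and (ii) all these discriminating conditions are Pebble CAT basic node expressions. Once these local translations are in place, the compositionality of the Kleene construction and of the inductive compilation in the forward direction carry the argument through.
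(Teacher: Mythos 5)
Your proposal is correct and follows essentially the same route as the paper: transfer everything to ranked trees via $\enc'$ (the paper's Lemmas~\ref{lem:xpathft}, \ref{lem:cattf}, \ref{lem:encmso}), treat the basic tests and instructions as an alphabet of ``directives'' so that Kleene's theorem mediates between Pebble CAT expressions and \abb{i-pta}'s (the paper's directive \abb{i-pta} and Lemmas~\ref{lem:langipta}, \ref{lem:langcat}), invoke Theorem~\ref{thm:trips} for the MSO equivalence, and handle nested filters $\langle\beta\rangle$ by induction on nesting depth using Theorem~\ref{thm:iterated}. The only cosmetic difference is that the paper normalizes node expressions so that only basic tests and their negations (and $\langle\beta\rangle$, $\neg\langle\beta\rangle$) occur, whereas you propose products and complements of look-ahead devices; both work.
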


As such our expressions have the desirable property of being a Core 
(and even Regular) XPath extension that is complete for 
\abb{MSO} definable binary patterns.
Other such extensions were considered in \cite{GorMar05}
(TMNF caterpillar expressions) and \cite{Cat06} ($\mu$Regular XPath).
Pebble CAT is similar to PCAT of \cite{GorMar05}
which has the same expressive power as the \abb{v-pta}
(and thus less than \abb{mso} by \cite{expressive}).
In PCAT the nesting of pebbles is defined syntactically
rather than semantically.

The proof of Theorem~\ref{thm:xpath} is given in the remainder of this section. 
It should be clear that Pebble CAT is closely related to the \abb{i-pta}.
In fact, we will show later that their relationship can be viewed as the classical equivalence of regular expressions and finite automata. The remainder of the proof is then directly based on the fact that the \abb{i-pta} has the same expressive power as \abb{mso} logic for defining trips on trees (Theorem~\ref{thm:trips}), and on the fact that the \abb{i-pta} can perform iterated look-ahead tests (Theorem~\ref{thm:iterated}). One technical problem is that these theorems are formulated for ranked trees rather than unranked forests. Thus we start by adapting Pebble XPath to ranked trees and showing that it suffices to prove Theorem~\ref{thm:xpath} for ranked trees instead of forests.

\smallpar{Pebble XPath on ranked trees}
Since ranked trees are a special case of unranked forests,
we need not change Pebble XPath for its use on ranked trees.
However, for its comparison to the \abb{i-pta} it is more convenient to 
change its basic path expressions $\alpha_0$ and basic node expressions $\phi_0$
as follows:
$$\alpha_0 ::= \downt_1 \mid \downt_2 \mid \upt \mid \dropt_c \mid \liftt_c$$
$$\phi_0 ::=  
   \haslab_\sigma \mid \ischild_0 \mid \ischild_1 \mid \ischild_2 \mid \haspeb_c$$

\noindent
The semantics of these basic expressions for a tree $t$ over $\Sigma$ is given 
in Tables~\ref{tab:sempathrank} and~\ref{tab:semnoderank}. Since we will only 
be interested in ranked trees that encode forests, we assume that $\Sigma$ is 
a ranked alphabet and that the rank of each element of $\Sigma$ is at most 2. 
Note that $\upt$ has the same semantics as $\parent$, and that the semantics of 
$\dropt_c$, $\liftt_c$, $\haslab_\sigma$, and $\haspeb_c$ is unchanged.
The remaining expressions of Pebble XPath, and their semantics 
(for $t$ instead of $f$), are the same as for forests, 
cf. the last four lines of Tables~\ref{tab:sempath} and~\ref{tab:semnode}.

\begin{table}
\[
\begin{array}{l@{\;}cl}
\semt{\downt_1} &=& \{(\tup{u,\pi},\tup{u',\pi})\mid u' 
                            \mbox{ is the first child of }u\} \\
\semt{\downt_2} &=& \{(\tup{u,\pi},\tup{u',\pi})\mid u' 
                            \mbox{ is the second child of }u\} \\               
\semt{ \upt } &=& \{(\tup{u,\pi},\tup{u',\pi})\mid u' \mbox{ is the parent of }u\} \\
\semt{ \dropt_c } &=& \{(\tup{u,\pi},\tup{u,\pi p}) \mid p=(u,c)\} \\ 
\semt{ \liftt_c } &=& \{(\tup{u,\pi p},\tup{u,\pi}) \mid p=(u,c)\} \\[1mm]
\end{array}
\]
\caption{Basic path expressions $\alpha_0$ for a ranked tree $t$}
\label{tab:sempathrank}
\end{table}

\begin{table}
\[
\begin{array}{l@{\;}cl}
\semt{ \haslab_\sigma } &=& \{\tup{u,\pi}\mid u \mbox{ has label }\sigma\} \\
\semt{ \ischild_0 } &=& \{\tup{u,\pi}\mid u \mbox{ is the root}\} \\
\semt{ \ischild_1 } &=& \{\tup{u,\pi}\mid u \mbox{ is a first child}\} \\
\semt{ \ischild_2 } &=& \{\tup{u,\pi}\mid u \mbox{ is a second child}\} \\
\semt{ \haspeb_c } &=& \{\tup{u,\pi p}
    \mid p=(u,c)\} \\[1mm]
\end{array}
\]
\caption{Basic node expressions $\phi_0$ for a ranked tree $t$}
\label{tab:semnoderank}
\end{table}

We first show that for every path expression $\alpha$ on forests there is a 
path expression $\alpha'$ that computes the same trip as $\alpha$ on the binary 
encoding of the forests as ranked trees.  
We use the encoding $\enc'$ defined in Section~\ref{sec:trees}, 
which encodes forests over the alphabet $\Sigma$ as ranked trees 
over the associated ranked alphabet $\Sigma'$. 
Note that for every forest~$f$, $\enc'(f)$ has the same nodes as $f$.
For a trip $T$ on forests, we define the \emph{encoded trip} $\enc'(T)$ 
on ranked trees by $\enc'(T)=\{(\enc'(f),u,v)\mid (f,u,v)\in T\}$.
 
\begin{lemma}\label{lem:xpathft} 
For every Pebble XPath path expression $\alpha$ on forests over $\Sigma$, 
a Pebble XPath path expression $\alpha'$ on ranked trees over $\Sigma'$ 
can be constructed in polynomial time such that 
$T(\alpha')=\enc'(T(\alpha))$. If $\alpha$ is a Pebble CAT expression, 
then so is $\alpha'$. 
\end{lemma}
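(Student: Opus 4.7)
The plan is structural induction: define a compositional translation $\alpha\mapsto\alpha'$ (and $\phi\mapsto\phi'$ for Pebble~XPath) in which every non-basic operator ($\cup,/,{}^*,?,\neg,\wedge,\vee,\langle\cdot\rangle$) is preserved verbatim, and only the basic navigation primitives and basic node tests are rewritten. Since $\enc'$ preserves the node set ($N(\enc'(f))=N(f)$), a pebble stack on $f$ is literally a pebble stack on $t=\enc'(f)$, and the invariant to prove is $(\langle u,\pi\rangle,\langle v,\pi'\rangle)\in\semt{\alpha'}\iff(\langle u,\pi\rangle,\langle v,\pi'\rangle)\in\semf{\alpha}$ for all situations whose stacks use only the colours $C$ of $\alpha$; this immediately yields $T(\alpha')=\enc'(T(\alpha))$.

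Writing $\mathsf{has}^{ij}$ as shorthand for $\bigvee_\sigma\haslab_{\sigma^{ij}}$ (or the union $\bigcup_\sigma{?\haslab_{\sigma^{ij}}}$ in Pebble~CAT), I would translate the forest primitives using the local fact that the next-sibling of a node $w$ in $f$ equals $\downt_2(w)$ if $w$ carries a $(1,1)$-label and equals $\downt_1(w)$ if it carries a $(0,1)$-label, with an analogous dual for the previous-sibling step. Setting
\[
\mathsf{next}=(?\mathsf{has}^{11}/\downt_2)\cup(?\mathsf{has}^{01}/\downt_1),\qquad
\mathsf{prev}=(?\ischild_2/\upt)\cup(?\ischild_1/\upt/?\mathsf{has}^{01}),
\]
the translations are $\rght'=\mathsf{next}$, $\lft'=\mathsf{prev}$, $\child'=\,?(\mathsf{has}^{10}{\vee}\mathsf{has}^{11})/\downt_1/\mathsf{next}^*$, and $\parent'=\mathsf{prev}^*/?\ischild_1/\upt/?(\mathsf{has}^{10}{\vee}\mathsf{has}^{11})$; $\dropt_c,\liftt_c,\haspeb_c$ are preserved, and the remaining forest tests are read off the superscripts ($\haslab_\sigma'\equiv\bigvee_{i,j}\haslab_{\sigma^{ij}}$, $\isleaf'\equiv\mathsf{has}^{00}{\vee}\mathsf{has}^{01}$, $\islast'\equiv\mathsf{has}^{00}{\vee}\mathsf{has}^{10}$, $\isfirst'\equiv\ischild_0{\vee}\ischild_1$).

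The one genuinely global test, and the main obstacle, is $\isroot$: $u$ is a root of $f$ iff every ancestor of $u$ in $t$ lies on the rightmost $\downt_2$-spine, equivalently, is $\ischild_0$ or $\ischild_2$. For Pebble~XPath this is immediate via $\isroot'\equiv\neg\langle\parent'\rangle$. For Pebble~CAT, where filter tests are not available, I would use a fresh invisible colour $\star\notin C$ to remember the starting position so that a Kleene-star walk up the spine followed by an arbitrary walk back down can simulate a pure node test:
\[
(?\isroot)'=\dropt_\star/\bigl(?(\ischild_0\vee\ischild_2)/\upt\bigr)^*/?\ischild_0/(\downt_1\cup\downt_2)^*/?\haspeb_\star/\liftt_\star,
\]
and dually $(?\neg\isroot)'=\dropt_\star/\parent'/(\downt_1\cup\downt_2)^*/?\haspeb_\star/\liftt_\star$; the remaining $(?\phi_0)'$ and $(?\neg\phi_0)'$ in Pebble~CAT normal form are evident finite unions of $?\haslab_{\sigma^{ij}}$- and $?\ischild_i$-tests. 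Because the drop and lift of $\star$ in $(?\isroot)'$ are syntactically bracketed, the semantics on $C$-stacks is undisturbed by enlarging the colour alphabet, and since no $\langle\cdot\rangle$ is introduced, a Pebble~CAT input yields a Pebble~CAT output. Correctness is then a routine structural induction on the invariant above; each basic construct expands by a factor $O(|\Sigma|)$ and all other clauses are handled homomorphically, so the construction runs in time polynomial in $|\alpha|\cdot|\Sigma|$.
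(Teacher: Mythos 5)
Your overall strategy is the paper's own: translate only the basic path and node expressions, keep the regular, boolean and filter operators verbatim, and establish the stronger invariant $\sem{\alpha'}_{\enc'(f)}=\semf{\alpha}$, which is legitimate because $\enc'$ preserves the node set so that situations on $f$ and on $\enc'(f)$ coincide. Your $\rght'$, $\lft'$, $\child'$, $\parent'$, the label/leaf/last tests, and the device of bracketing a fresh pebble around a round trip to simulate a node test in Pebble CAT all match the paper.

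There are, however, two concrete errors, both killed by the same counterexample. Take $f=a\,b$, two leaf roots; then $\enc'(f)=a^{01}(b^{00})$, so the second root $b$ sits at child position $1$ of $a^{01}$. (i) Your $\isfirst'\equiv\ischild_0\vee\ischild_1$ declares $b$ a first sibling, yet $b$ has the previous sibling $a$. Being a first child in the encoding is not enough: one must also verify that the parent's first superscript is $1$, i.e.\ that the edge is a genuine first-child edge rather than a next-sibling edge. Since this is a condition on the parent's label, $?\isfirst$ must be rendered as a round trip, e.g.\ $?\ischild_0\,\cup\,?\ischild_1/\upt/?\phi_1/\downt_1$ where $\phi_1$ is the disjunction of the $\haslab_{\sigma^{11}}$ and $\haslab_{\sigma^{10}}$; it is not a pure label/child-number test. (ii) Your $(?\isroot)'$ walks upward only along $\ischild_2$-edges, but the next-sibling pointer of a \emph{childless} node occupies child position $1$ (superscript $01$): in the example $b$ is a root of $f$ yet is neither $\ischild_0$ nor $\ischild_2$, so your expression rejects it. The correct upward walk iterates the previous-sibling relation you already defined correctly — i.e.\ replace $\bigl(?(\ischild_0\vee\ischild_2)/\upt\bigr)^*$ by $(\lft')^*$ followed by $?\ischild_0$ — which is exactly what the paper does. (For Pebble XPath your $\isroot'\equiv\neg\langle\parent'\rangle$ is fine; the problem is confined to the CAT clause and to $\isfirst$.) Both slips are local and easily repaired; the rest of the argument (structural induction, polynomial size, preservation of Pebble CAT) then goes through.
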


\begin{proof}
The proof is an elementary coding exercise. 
Let us start with Pebble XPath. 
We will, in fact, define $\alpha'$ such that 
$\sem{\alpha'}_{\enc'(f)}=\semf{\alpha}$ for every $f\in F_\Sigma$,
which implies the result. It clearly suffices to do this for basic path expressions 
$\alpha_0$, and similarly for basic node expressions $\phi_0$. 
As observed before, all basic node expressions except $\haslab_\sigma$ are redundant, 
so it suffices to define $\haslab_\sigma' \equiv 
\haslab_{\sigma^{11}}\vee\haslab_{\sigma^{10}}\vee
\haslab_{\sigma^{01}}\vee\haslab_{\sigma^{00}}$.
We now turn to the basic path expressions. 
We will use the auxiliary basic path expressions $\child_1$ and $\parent_1$
with the semantics $\semf{\child_1}=\{(\tup{u,\pi},\tup{u',\pi})\mid 
u' \text{ is the first child of }u\}$ and $\semf{\parent_1}=\semf{\child_1}^{-1}$. 
Since clearly $\child\equiv \child_1/\rght^*$ and 
$\parent\equiv \lft^*/\parent_1$, it suffices to define 
$\child'_1$ and $\parent'_1$ instead of $\child'$ and $\parent'$, as follows: 
$\child'_1\equiv \text{?}\phi_1/\downt_1$ where $\phi_1$ is the disjunction of  
$\haslab_{\sigma^{11}}$ and $\haslab_{\sigma^{10}}$ for all $\sigma\in\Sigma$, and 
$\parent'_1\equiv \text{?}\ischild_1/\upt/\text{?}\phi_1$. 
Then we define $\rght'\equiv \downt_2 \cup \text{?}\phi_2/\downt_1$ 
where $\phi_2$ is the disjunction of all $\haslab_{\sigma^{01}}$ 
for $\sigma\in\Sigma$. Since $\semf{\lft}$ is the inverse of $\semf{\rght}$,
we define $\lft'\equiv 
\text{?}\ischild_2/\upt \cup \text{?}\ischild_1/\upt/\text{?}\phi_2$. 
Finally, $\dropt'_c \equiv \dropt_c$ and $\liftt'_c \equiv \liftt_c$.

To prove the result for Pebble CAT, we also have to consider 
the other basic node expressions $\phi_0$.
Obviously, we define $\haspeb'_c\equiv\haspeb_c$. 
We define $\isleaf'$ to be the disjunction of $\haslab_{\sigma^{01}}$ and 
$\haslab_{\sigma^{00}}$ for all \mbox{$\sigma\in\Sigma$}, and similarly, 
$\islast'$ to be the disjunction of $\haslab_{\sigma^{10}}$ and 
$\haslab_{\sigma^{00}}$ for all $\sigma\in\Sigma$. 
It remains to consider $\isfirst$ and $\isroot$. 
Since we may assume the syntax of $\alpha$ to be in normal form, 
it suffices to define $(?\phi_0)'$ and $(?\neg\phi_0)'$. 
We define $(?\isfirst)'\equiv \text{?}\ischild_0 \cup \ischild_1/\upt/\child'_1$ 
and $(?\neg\isfirst)'\equiv \upt/\rght'$ 
where $\child'_1$ and $\rght'$ are defined above. 
For $\isroot$, we first note that 
$?\isroot\equiv \dropt_c/\lft^*/\text{?}\isroot/\text{?}\isfirst/\rght^*/\liftt_c$
where $c$ is any element of $C$. Intuitively, we walk from the current node 
to the left until we arrive at the first root, and then walk back. 
Thus, since the first root of a forest $f$ is encoded as the root of $\enc'(f)$, 
we define $(?\isroot)'\equiv 
\dropt_c/(\lft')^*/\text{?}\ischild_0/(\rght')^*/\liftt_c$.
Finally, we define $(?\neg\isroot)'\equiv \dropt_c/\parent'/\child'/\liftt_c$. 
\end{proof}

Next we prove the reverse direction of Lemma~\ref{lem:xpathft}, for Pebble CAT. 

\begin{lemma}\label{lem:cattf} 
For every Pebble CAT path expression $\alpha$ on ranked trees over $\Sigma'$
there is a Pebble CAT path expression $\alpha'$ on forests over $\Sigma$ such that 
$$\enc'(T(\alpha'))=T(\alpha).$$
\end{lemma}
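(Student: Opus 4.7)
The plan is to define $\alpha'$ by structural induction on $\alpha$, using the fact that $\nod{\enc'(f)}=\nod{f}$ and that the pebble colour set is shared, so that a situation on $\enc'(f)$ can be identified with a situation on $f$. It then suffices to arrange that $\semf{\alpha'}=\sem{\alpha}_{\enc'(f)}$ for every nonempty forest $f$. The inductive step is trivial for the regular operators, with $(\alpha/\beta)'\equiv\alpha'/\beta'$, $(\alpha\cup\beta)'\equiv\alpha'\cup\beta'$, $(\alpha^*)'\equiv(\alpha')^*$, and the pebble instructions are unchanged: $\dropt_c'\equiv\dropt_c$, $\liftt_c'\equiv\liftt_c$, $?\haspeb_c'\equiv\;?\haspeb_c$ (and likewise for the negation). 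Hence the work lies in the remaining basic expressions.

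For the movement instructions I would unfold the definition of $\enc'$. The first child of $u$ in $\enc'(f)$ is either the first child of $u$ in $f$ (if $u$ has children) or the next sibling of $u$ in $f$ (if $u$ is a leaf with a next sibling); the second child of $u$ in $\enc'(f)$ exists iff $u$ has both children and a next sibling in $f$, in which case it is the next sibling. So one sets for instance
\[
\downt_1'\;\equiv\;?\neg\isleaf/\child/?\isfirst\;\cup\;?\isleaf/?\neg\islast/\rght,\qquad
\downt_2'\;\equiv\;?\neg\isleaf/?\neg\islast/\rght.
\]
Dually, every non-root node of $\enc'(f)$ is either a first sibling of $f$ that is not a root of $f$ (its parent in $\enc'(f)$ being its parent in $f$) or a non-first sibling of $f$ (its parent in $\enc'(f)$ being its previous sibling in $f$), giving
\[
\upt'\;\equiv\;?\isfirst/?\neg\isroot/\parent\;\cup\;?\neg\isfirst/\lft.
\]
The label tests are routine because each $\sigma^{ij}\in\Sigma'$ simply records the presence of children and of a next sibling: $?\haslab_{\sigma^{11}}\equiv\;?\haslab_\sigma/?\neg\isleaf/?\neg\islast$, and analogously for $\sigma^{10},\sigma^{01},\sigma^{00}$; their negations are handled by De~Morgan as unions of $?\neg\phi_0$'s. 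Finally $?\ischild_0\equiv\;?\isroot/?\isfirst$, since the first root of $f$ is the unique root of $\enc'(f)$.

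The main obstacle is translating the child-number tests $?\ischild_1$ and $?\ischild_2$: a node is a second child in $\enc'(f)$ precisely when its previous sibling in $f$ has at least one child, which is a property of a neighbouring node. In Pebble XPath this would be expressed as a filter $?\tup{\beta}$, but Pebble CAT forbids filters. The trick is to use a fresh invisible colour $c$ (not occurring elsewhere in $\alpha$) and to simulate the filter by a pebble loop that drops $c$, walks to the neighbour, performs the basic test, walks back, and lifts $c$; the resulting path expression is the identity on exactly the intended set. Concretely,
\[
?\ischild_2'\;\equiv\;?\neg\isfirst/\dropt_c/\lft/?\neg\isleaf/\rght/\liftt_c,
\]
\[
?\ischild_1'\;\equiv\;?\neg\isroot/?\isfirst\;\cup\;?\neg\isfirst/\dropt_c/\lft/?\isleaf/\rght/\liftt_c,
\]
and the negations are expressed as $?\neg\ischild_0\equiv\;?\ischild_1'\cup\;?\ischild_2'$, $?\neg\ischild_1\equiv\;?\ischild_0'\cup\;?\ischild_2'$, $?\neg\ischild_2\equiv\;?\ischild_0'\cup\;?\ischild_1'$. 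A straightforward induction on the structure of $\alpha$, using the semantic identities above for the base cases and the obvious compatibility of the regular operators with the identification of situations, then yields $\semf{\alpha'}=\sem{\alpha}_{\enc'(f)}$ for all $f\in F_\Sigma$, and hence $\enc'(T(\alpha'))=T(\alpha)$.
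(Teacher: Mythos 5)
Your proposal is correct and follows essentially the same route as the paper's proof: you define $\alpha'$ basic expression by basic expression so that $\semf{\alpha'}=\sem{\alpha}_{\enc'(f)}$ for every nonempty forest $f$, unfolding the definition of $\enc'$ for the moves, the label tests and the child-number tests (the paper likewise restricts to the normal-form cases $?\phi_0$, $?\neg\phi_0$). The only cosmetic difference is in the child-number tests: the paper uses the pebble-free there-and-back walk $\lft/?\neg\isleaf/\rght$, which already returns deterministically to the start node, whereas you wrap the same walk in a $\dropt_c/\cdots/\liftt_c$ pair — redundant but harmless, since the balanced drop/lift leaves the stack unchanged.
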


\begin{proof}
This is also an elementary coding exercise.
We assume the syntax of $\alpha$ to be in normal form, 
whereas for $\alpha'$ we keep the full syntax.
As in the previous lemma, we will define $\alpha'$ such that 
$\semf{\alpha'}=\sem{\alpha}_{\enc'(f)}$. 
It suffices to do this for path expressions $\alpha_0$, $?\phi_0$, and $?\neg\phi_0$. 
We start with $\alpha_0$ and we define 
$\downt'_1 \equiv \child/\text{?}\isfirst \cup \text{?}\isleaf/\rght$ and
$\downt'_2 \equiv \text{?}\neg\isleaf/\rght$. Moreover, 
$\up'\equiv \text{?}\isfirst/\parent \cup \lft$. Finally, 
$\dropt'_c \equiv \dropt_c$ and $\liftt'_c \equiv \liftt_c$.
We now turn to the basic node expressions. 
For $\phi_0\equiv \haslab_{\sigma^{10}}$ we define 
$(?\phi_0)'\equiv \text{?}\phi'_0$ and 
$(?\neg\phi_0)'\equiv \text{?}\neg\phi'_0$, where 
$\phi_0'\equiv \haslab_\sigma \wedge \neg\isleaf \wedge \islast$,
and similarly for $\haslab_{\sigma^{11}}$, $\haslab_{\sigma^{01}}$, 
and $\haslab_{\sigma^{00}}$. We do this also for $\phi_0\equiv\ischild_0$
with $\phi'_0\equiv\isroot\wedge\isfirst$, and for 
$\phi_0\equiv\haspeb_c$ with $\phi'_0\equiv \haspeb_c$.
It remains to consider $\ischild_1$ and $\ischild_2$. 
We define $(?\ischild_2)'\equiv \lft/?\neg\isleaf/\rght$ and hence 
$(?\neg\ischild_2)'\equiv \text{?}\isfirst \cup \lft/?\isleaf/\rght$.
For $\ischild_1$ the definitions of $(?\ischild_1)'$ and $(?\neg\ischild_1)'$
now follow from the fact that 
$?\ischild_1\equiv \text{?}\neg\ischild_0/?\neg\ischild_2$ and 
$?\neg\ischild_1\equiv \text{?}\ischild_0 \cup \text{?}\ischild_2$. 
\end{proof}

Lemmas~\ref{lem:xpathft} and~\ref{lem:cattf} together show that 
if the first equivalence of Theorem~\ref{thm:xpath} holds for ranked trees, 
then it also holds for forests. To show this also for the second equivalence,
we need the next elementary lemma. 

\begin{lemma}\label{lem:encmso}
For every trip $T$ on forests, $T$ is \abb{mso} definable if and only if 
$\enc'(T)$ is \abb{mso} definable.
\end{lemma}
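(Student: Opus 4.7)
The plan is to exploit two facts already established in Section~\ref{sec:trees}: $\enc'$ is a bijection between nonempty forests over $\Sigma$ and ranked trees over $\Sigma'$, and $\nod{\enc'(f)} = \nod{f}$, so the two structures share the same universe. The lemma then reduces to mutually translating atomic formulas between the two \abb{mso} signatures, namely $\{\lab_\sigma,\down,\nex\}$ for forests and $\{\lab_{\sigma^{ij}},\down_1,\down_2\}$ for ranked trees over $\Sigma'$, and extending compositionally; quantifiers carry over directly because the universes coincide, and no ``validity'' restriction is needed since $\enc'$ is onto.

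For the ``only if'' direction I would build, from a formula $\phi(x_1,\dots,x_n)$ with $T = T(\phi)$, a formula $\phi'$ by replacing each atomic sub-formula following the description of $\enc'$ in Section~\ref{sec:trees}. Thus $\lab_\sigma(x)$ becomes $\bigvee_{i,j\in\{0,1\}}\lab_{\sigma^{ij}}(x)$; the relation $\down(x,y)$ becomes $\bigl(\bigvee_{\sigma,j}\lab_{\sigma^{1j}}(x)\bigr)\wedge\down_1(x,y)$, since $y$ is a child of $x$ in $f$ exactly when $x$ has a child in $f$ (encoded by the superscript $1j$) and $y$ is the first child of $x$ in $\enc'(f)$; and $\nex(x,y)$ becomes $\bigl(\bigvee_\sigma \lab_{\sigma^{11}}(x)\wedge\down_2(x,y)\bigr) \vee \bigl(\bigvee_\sigma\lab_{\sigma^{01}}(x)\wedge\down_1(x,y)\bigr)$, since the next sibling of $x$ in $f$ is represented as the second child of $x$ in $\enc'(f)$ when $x$ has children, and as the only first child otherwise. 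A straightforward induction on $\phi$ yields $f\models\phi(u_1,\dots,u_n) \Leftrightarrow \enc'(f)\models\phi'(u_1,\dots,u_n)$, and hence $T(\phi') = \enc'(T)$.

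For the ``if'' direction, given $\phi'$ with $\enc'(T) = T(\phi')$, I would perform the inverse translation. Each $\lab_{\sigma^{ij}}(x)$ is rewritten as $\lab_\sigma(x)$ conjoined with $\exists y\,\down(x,y)$ or $\neg\exists y\,\down(x,y)$ according to $i$, and with $\exists z\,\nex(x,z)$ or $\neg\exists z\,\nex(x,z)$ according to $j$. The edge $\down_2(x,y)$ is rewritten as $\exists z\,\down(x,z) \wedge \nex(x,y)$, and $\down_1(x,y)$ is rewritten as $\bigl(\exists z\,\down(x,z) \wedge \down(x,y) \wedge \first(y)\bigr) \vee \bigl(\neg\exists z\,\down(x,z) \wedge \nex(x,y)\bigr)$, using the abbreviation $\first$ from Section~\ref{sec:trees}. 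Induction gives the converse equivalence and hence $T = T(\psi)$ for the translated formula $\psi$. The only genuinely non-routine step — and the chief (if modest) obstacle — is precisely the case split in the translation of $\down_1$: the first-child slot in $\enc'(f)$ is overloaded, encoding either the original first child of $x$ or its next sibling depending on whether $x$ has children in $f$. Everything else is bookkeeping.
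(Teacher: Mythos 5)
Your ``if'' direction is correct and essentially identical to the paper's: the decoding of $\lab_{\sigma^{ij}}$, $\down_1$ and $\down_2$ you give matches the formulas $\phi_{k\ell}$, $\phi_1$, $\phi_2$ in the paper's proof, and the case split on whether $x$ has children is indeed the only subtlety there.

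The ``only if'' direction, however, contains a genuine error in the translation of $\down(x,y)$. In the paper $\down(x,y)$ means ``$y$ is \emph{a} child of $x$'', not ``$y$ is the first child of $x$''. Under $\enc'$ only the \emph{first} child of $x$ in $f$ becomes the $\down_1$-child of $x$ in $\enc'(f)$; every other child of $x$ in $f$ is reached from that first child by a chain of encoded next-sibling steps and sits arbitrarily deep below $x$ in $\enc'(f)$. For example, in $f=\sigma(a\,b)$ we have $\enc'(f)=\sigma^{10}(a^{01}(b^{00}))$, so $b$ is a child of $\sigma$ in $f$ but a grandchild of $\sigma$ in $\enc'(f)$; your formula $\bigl(\bigvee_{\sigma,j}\lab_{\sigma^{1j}}(x)\bigr)\wedge\down_1(x,y)$ misses the pair $(\sigma,b)$. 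The correct translation must express ``$y$ is reachable from the $\down_1$-child of $x$ by zero or more steps of the (encoded) next-sibling relation'', which is not a Boolean combination of atomic formulas but needs reachability. This is expressible in \abb{mso} (e.g.\ via set quantification over sets closed under your translated $\nex$ relation, or by noting that the reflexive-transitive closure of an \abb{mso} definable relation is \abb{mso} definable); the paper instead exhibits a tree-walking automaton computing the trip $\{(\enc'(f),u,v)\mid f\models\down(u,v)\}$ and invokes the converse of Proposition~\ref{prop:trips}. Your translations of $\lab_\sigma$ and $\nex$ are fine; only $\down$ needs this repair.
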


\begin{proof}
(Only if) 
Since $f$ and $\enc'(f)$ have the same nodes, for every 
forest $f$ over $\Sigma$, it suffices to show that the atomic formulas 
$\lab_\sigma(x)$, $\down(x,y)$, and $\nex(x,y)$ for forests can be expressed by 
an \abb{mso} formula for the ranked trees that encode the forests.
Clearly, $\lab_\sigma(x)$ can be expressed by the disjunction of all
$\lab_{\sigma^{k\ell}}(x)$ for $k,\ell\in\{0,1\}$, as in the proof of
Lemma~\ref{lem:xpathft}.
For $\down(x,y)$ we show that the trip 
$T=\{(\enc'(f),u,v)\mid f\models \down(u,v)\}$ is \abb{mso} definable. 
This follows from Proposition~\ref{prop:trips} because $T=T(\cB)$ 
for the \abb{ta} $\cB$ that has the rules
(for all $k,\ell\in\{0,1\}$, $j\in\{0,1,2\}$, and $\sigma\in\Sigma$):
\[
\begin{array}{lll}
\tup{p_0,\sigma^{1\ell},j} & \to & \tup{p,\down_1}, \\
\tup{p,\sigma^{11},j} & \to & \tup{p,\down_2}, \\
\tup{p,\sigma^{01},j} & \to & \tup{p,\down_1}, \\
\tup{p,\sigma^{k\ell},j} & \to & \tup{p_\infty,\stay},
\end{array}
\]
where $p_0$ is the initial and $p_\infty$ the final state of $\cB$. 
Thus, there is a formula $\phi(x,y)$ such that 
$\enc'(f)\models \phi(u,v)$ if and only if $f\models \down(u,v)$, 
for every forest $f$, which means that $\phi(x,y)$ expresses $\down(x,y)$
on the encoding of~$f$.\footnote{For the reader familiar with \abb{mso} logic 
we note that it is also easy to write down the formula $\phi(x,y)$ 
using the equivalences in the proof of Lemma~\ref{lem:xpathft} and the fact that 
the transitive-reflexive closure of an \abb{mso} definable relation 
is \abb{mso} definable. 
} 
The formula $\nex(x,y)$ can be treated in the same way, 
where $\cB$ now has the rules
$\tup{p_0,\sigma^{11},j} \to \tup{p_\infty,\down_2}$ and 
$\tup{p_0,\sigma^{01},j} \to \tup{p_\infty,\down_1}$,
and hence $T(\cB)=\{(\enc'(f),u,v)\mid f\models \nex(u,v)\}$.

(If) For the same reason as above, it suffices to show  
that the atomic formulas $\down_i(x,y)$ and $\lab_{\sigma^{k\ell}}(x)$
for ranked trees over $\Sigma'$ 
can be expressed by an \abb{mso} formula for the forests they encode.
For this we consider the path expressions $\downt'_i$ and
$\haslab'_{\sigma^{10}}$ in the proof of Lemma~\ref{lem:cattf},
and we define 
\[
\begin{array}{lll}
\phi_1(x,y) & \!\!\equiv\!\! & (\down(x,y)\wedge \first(y)) \vee
     (\leaf(x) \wedge \nex(x,y)), \\[1mm]
\phi_2(x,y) & \!\!\equiv\!\! & \neg\,\leaf(x) \wedge \nex(x,y),  \\[1mm]
\phi_{10}(x) & \!\!\equiv\!\! & 
     \lab_\sigma(x) \wedge \neg\,\leaf(x)\wedge \last(x),
\end{array}
\]
and similarly for the other $\phi_{k\ell}(x,y)$. 
Then $\enc'(f)\models \down_i(u,v)$ if and only if $f\models \phi_i(u,v)$, and
$\enc'(f)\models \lab_{\sigma^{k\ell}}(u)$ if and only if 
$f\models \phi_{k\ell}(u)$. 
\end{proof}

From now on, when we refer to Pebble XPath or Pebble CAT
we always mean their version on ranked trees.

\smallpar{Directive I-PTA's}
For the purpose of the proof of Theorem~\ref{thm:xpath} on ranked trees, 
we formulate the \abb{i-pta} in an alternative way and, for lack of a better name,
call it the \emph{directive} \abb{i-pta}. 
For an alphabet~$\Sigma$ (of which every element has rank at most $2$)
and a finite set $C$ of colours,
we define a \emph{directive} over $\Sigma$ and $C$ to be 
a path expression $\tau$ with the syntax 
$\tau ::= \alpha_0 \mid \;?\phi_0 \mid \;?\neg\,\phi_0$ 
for the same $\Sigma$ and~$C$ (where $\alpha_0$ and $\phi_0$ are 
as in Tables~\ref{tab:sempathrank} and~\ref{tab:semnoderank}). 
The finite set of directives over $\Sigma$ and $C$ is denoted $D_{\Sigma,C}$. 

A \emph{directive} \emph{\abb{i-pta}} is a tuple 
$\cA = (\Sigma, Q, Q_0, F, C, R)$,
where $\Sigma$, $Q$, $Q_0$, $F$, and~$C$ are 
as for an ordinary \abb{i-pta} (with $C=C_\mathrm{i}$), 
and $R$ is a finite set of rules of the form
$\tup{q,\tau,q'}$ where $q,q'\in Q$ and $\tau\in D_{\Sigma,C}$.
Thus, syntactically, $\cA$ can be viewed as a finite automaton of which 
each state transition is labeled by a directive, i.e., 
either by a basic path expression of Pebble XPath,
or by a basic node expression of Pebble XPath, or its negation, 
where the node expressions are turned into path expressions by the ?-operator. 
Intuitively, $?\phi_0$ and $?\neg\,\phi_0$ represent a basic test on the current situation, 
whereas $\alpha_0$ is a basic instruction to be executed on the current situation. Just as 
for an ordinary \abb{i-pta}, a situation on a tree $t\in T_\Sigma$ is a pair $\tup{u,\pi}\in \sit(t)$ 
and a configuration is a triple $\tup{q,u,\pi}$ with $q\in Q$ and $\tup{u,\pi}\in \sit(t)$. 
We write $\tup{q,u,\pi}\Rightarrow_{t,\cA} \tup{q',u',\pi'}$ if there is a rule $\tup{q,\tau,q'}$ 
such that $(\tup{u,\pi},\tup{u',\pi'})\in \semt{\tau}$, where $\semt{\tau}$
is the semantics of path expression $\tau$ on~$t$ (cf. Tables~\ref{tab:sempathrank}
and~\ref{tab:semnoderank} for $\alpha_0$ and $\phi_0$, 
and Table~\ref{tab:sempath} for the \mbox{?-operator}).
To indicate the directive $\tau$ that is executed by $\cA$ in this computation step
we also write $\tup{q,u,\pi}\Rightarrow^\tau_{t,\cA} \tup{q',u',\pi'}$.
Moreover, we define the semantics $\semt{\cA}$ of $\cA$ on tree $t$ as 
$\semt{\cA} = \{(\tup{u,\pi},\tup{u',\pi'})\in \sit(t)\times \sit(t) \mid
\exists\, q_0\in Q_0, \,q_\infty\in F: 
\tup{q_0,u,\pi}\Rightarrow^*_{t,\cA} \tup{q_\infty,u',\pi'}\}$.
Finally, the trip computed by~$\cA$ on $T_\Sigma$ is 
$T(\cA)=\{(t,u,v)\mid \exists\, \pi\in(N(t)\times C)^*: 
(\tup{u,\epsilon},\tup{v,\pi})\in \semt{\cA}\}$.

For the sake of the proofs below we also define 
$\sem{\cA}_t$ for an ordinary \abb{i-pta}~$\cA$ on a tree $t$,
in entirely the same way as above for a directive \abb{i-pta}. 

A directive \abb{i-pta} $\cA$ \emph{with look-ahead tests} 
is defined similarly to the ordinary case in Section~\ref{sec:look-ahead} 
(restricted to automata), 
by additionally allowing rules of the form $\tup{q,?\tup{\cB},q'}$ and 
$\tup{q,?\neg\,\tup{\cB},q'}$
where $\cB$ is another directive \abb{i-pta}. 
The above semantics stays the same, with (as expected) 
\[
\semt{?\tup{\cB}} = \{(\tup{u,\pi},\tup{u,\pi})\mid 
   \exists \tup{u',\pi'}\colon 
   (\tup{u,\pi},\tup{u',\pi'})\in \semt{\cB}\}
\]
and similarly for $\semt{?\neg\,\tup{\cB}}$ (with $\neg\,\exists$).
A directive \abb{i-pta} \emph{with iterated look-ahead tests}
is defined as in Section~\ref{sec:look-ahead}.
We will use \abb{i-pta$^\text{la}$} as an abbreviation of 
`\abb{i-pta} with iterated look-ahead tests'.

We now show that the directive \abb{i-pta} has the same expressive power 
as the \abb{i-pta} (and similarly with iterated look-ahead tests). 
Hence Theorems~\ref{thm:trips} and~\ref{thm:iterated} also hold for the directive \abb{i-pta},
i.e., it computes the \abb{mso} definable trips, and 
it can perform iterated look-ahead tests.  
In what follows, we only consider \abb{i-pta}'s of which every input symbol 
has at most rank~$2$. 
 
\begin{lemma}\label{lem:iptaft}
For every directive \abb{i-pta$^\text{la}$} $\cA$ there is an 
\abb{i-pta$^\text{la}$} $\cA'$ such that $T(\cA')=T(\cA)$.
\end{lemma}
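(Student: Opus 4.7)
The plan is to perform a direct syntactic translation from directive rules to ordinary \abb{i-pta} rules, proceeding by induction on the look-ahead depth of $\cA$. The core observation is that a single directive step corresponds to one computation step in an ordinary \abb{i-pta}; the only mismatch is that an ordinary rule $\tup{q,\sigma,j,b}\to\tup{q',\alpha}$ simultaneously encodes a test (on label $\sigma$, child number $j$, and observable pebbles $b$) and an action $\alpha$, whereas a directive is either a pure test or a pure action. I would therefore compile each rule $\tup{q,\tau,q'}$ of $\cA$ into a (finite) family of rules of $\cA'$ that enumerate the possible test outcomes, keeping the same set of states, initial states, final states, and invisible colours as in $\cA$.

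Concretely, for a transition labelled by an action $\alpha_0 \in \{\downt_1,\downt_2,\upt,\dropt_c,\liftt_c\}$, I would add all rules $\tup{q,\sigma,j,b}\to\tup{q',\alpha_0'}$ where $\alpha_0'$ is the corresponding \abb{i-pta} instruction, letting $\sigma$, $j$, and $b\subseteq C$ (with $\#(b\cap C_{\mathrm{i}})\leq 1$) range freely, except that for $\liftt_c$ I impose $c\in b$ (matching the semantics $\semt{\liftt_c}=\{(\tup{u,\pi p},\tup{u,\pi})\mid p=(u,c)\}$, since in an invisible-pebble \abb{i-pta} the topmost pebble on the current node is precisely the observable pebble there). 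For a transition labelled by a test $?\phi_0$ or $?\neg\phi_0$, I would add rules $\tup{q,\sigma,j,b}\to\tup{q',\stay}$ for exactly those triples $(\sigma,j,b)$ that satisfy (respectively violate) $\phi_0$: select by label for $\haslab_\sigma$, by $j=i$ for $\ischild_i$, and by $c\in b$ for $\haspeb_c$ (again using that observability coincides with being the topmost pebble).

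The look-ahead case is handled by induction on depth. If $\tau = ?\tup{\cB}$ (or its negation) with $\cB$ a directive \abb{i-pta$^\text{la}$} of strictly smaller depth, the induction hypothesis provides an ordinary \abb{i-pta$^\text{la}$} $\cB'$ with $T(\cB')=T(\cB)$; moreover, the inductive construction on the underlying automata is semantics-preserving not just at the trip level but on all situations (since the translation is situation-wise, $\semt{\cB'}=\semt{\cB}$), so we may use $\cB'$ as a look-ahead test of $\cA'$. I would then add the rules $\tup{q,\sigma,j,b,\cB'}\to\tup{q',\stay}$ (respectively $\tup{q,\sigma,j,b,\neg\,\cB'}\to\tup{q',\stay}$) for all admissible $\sigma,j,b$.

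Correctness is then a routine check that, for each tree $t$, $\semt{\cA'}=\semt{\cA}$ and hence $T(\cA')=T(\cA)$: a directive step $\tup{q,u,\pi}\Rightarrow^\tau_{t,\cA}\tup{q',u',\pi'}$ is matched by exactly one applicable rule from the group built for $\tup{q,\tau,q'}$, and conversely every rule in $\cA'$ originates from some directive. The only delicate point worth spelling out is the equivalence, for invisible pebbles, between ``$c$ is the topmost pebble at $u$'' (as used in $\haspeb_c$ and $\liftt_c$) and ``$c\in b$'' in the ordinary \abb{i-pta} rule; everything else is bookkeeping, and I expect no real obstacle.
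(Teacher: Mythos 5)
Your proposal is correct and follows essentially the same route as the paper's proof: a rule-by-rule compilation that enumerates the triples $(\sigma,j,b)$ for action directives, filters them for test directives, and handles look-ahead by induction on depth using the stronger invariant $\semt{\cA'}=\semt{\cA}$ (not just equality of trips). Your explicit justification that, for invisible pebbles, $c\in b$ coincides with ``$(u,c)$ is the topmost stack element'' is exactly the point the paper relies on implicitly, so nothing is missing.
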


\begin{proof}
Let $\cA = (\Sigma, Q, Q_0, F, C, R)$ be a directive \abb{i-pta}. 
We will, in fact, define the \abb{i-pta} $\cA'$ such that 
$\semt{\cA'}=\semt{\cA}$ for every $t\in T_\Sigma$,
which implies the result. 

We let $\cA' = (\Sigma, Q, Q_0, F, C, \nothing, C_\mathrm{i}, R', 0)$ 
where $C_\mathrm{i}=C$ and $R'$ is defined as follows.
If $\tup{q,\alpha_0,q'}$ is a rule of $\cA$, 
where $\alpha_0$ is a basic path expression,
then $\cA'$ has all rules $\tup{q,\sigma,j,b}\to \tup{q',\alpha_0}$. 
We now turn to the basic node expressions. 
A rule $\tup{q,?\haslab_\sigma,q'}$ is simulated by 
all rules $\tup{q,\sigma,j,b}\to \tup{q',\stay}$, 
and a rule $\tup{q,?\neg\,\haslab_\sigma,q'}$ by 
all rules $\tup{q,\tau,j,b}\to \tup{q',\stay}$ 
with $\tau\in\Sigma\setminus\{\sigma\}$. 
A rule $\tup{q,?\ischild_j,q'}$ is simulated by 
all rules $\tup{q,\sigma,j,b}\to \tup{q',\stay}$,
and a rule $\tup{q,?\neg\,\ischild_j,q'}$ by 
the two rules $\tup{q,\sigma,j',b}\to \tup{q',\stay}$ 
with $j'\in\{0,1,2\}\setminus\{j\}$.
A rule $\tup{q,?\haspeb_c,q'}$ is simulated by 
all rules $\tup{q,\sigma,j,\{c\}}\to \tup{q',\stay}$,
and a rule $\tup{q,?\neg\,\haspeb_c,q'}$ by
all rules $\tup{q,\sigma,j,\nothing}\to \tup{q',\stay}$
and all rules $\tup{q,\sigma,j,\{c'\}}\to \tup{q',\stay}$
with $c'\in C\setminus\{c\}$. 

Finally we consider look-ahead.
If $\tup{q,?\tup{\cB},q'}$ is a rule of $\cA$, and $\cB'$ is an
\abb{i-pta$^\text{la}$} such that 
$\semt{\cB'}=\semt{\cB}$ for every $t\in T_\Sigma$, 
then $\cA'$ has all rules 
$\tup{q,\sigma,j,b,\cB'}\to \tup{q',\stay}$ 
that use $\cB'$ as a look-ahead test. 
Similarly, the rule $\tup{q,?\neg\,\tup{\cB},q'}$ is simulated by all rules
$\tup{q,\sigma,j,b,\neg\,\cB'}\to \tup{q',\stay}$.
\end{proof}

\begin{lemma}\label{lem:iptatf}
For every \abb{i-pta} $\cA$ 
there is a directive \abb{i-pta} $\cA'$ such that 
$T(\cA')=T(\cA)$.
\end{lemma}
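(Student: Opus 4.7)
The plan is to invert the construction of Lemma~\ref{lem:iptaft}: whereas that proof expanded each directive rule into many \abb{i-pta} rules, here I would compress each \abb{i-pta} rule into a short "gadget" of directive rules. The key observation is that an ordinary \abb{i-pta} rule $\tup{q,\sigma,j,b}\to\tup{q',\alpha}$ packages together three tests (label, child number, observable pebbles) and one action (a move or pebble instruction), each of which is directly expressible by a single directive of the types $?\phi_0$, $?\neg\,\phi_0$, or $\alpha_0$.

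Concretely, for each rule $\rho: \tup{q,\sigma,j,b}\to\tup{q',\alpha}$ of $\cA$, I would enumerate $C=\{c_1,\dots,c_n\}$, introduce fresh intermediate states $q_1^\rho,\dots,q_{n+1}^\rho$, and add the following chain of rules to $R'$: first $\tup{q,?\haslab_\sigma,q_1^\rho}$, then $\tup{q_1^\rho,?\ischild_j,q_2^\rho}$, then for $i\in[1,n]$ a rule $\tup{q_{i+1}^\rho,\tau_i,q_{i+2}^\rho}$ where $\tau_i \equiv \;?\haspeb_{c_i}$ if $c_i\in b$ and $\tau_i \equiv \;?\neg\,\haspeb_{c_i}$ otherwise, and finally one more rule $\tup{q_{n+2}^\rho,\alpha',q'}$ implementing the action, with $\alpha'$ equal to $\upt$, $\downt_i$, $\dropt_c$, or $\liftt_c$ according to $\alpha$. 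For the degenerate case $\alpha=\stay$, no final directive is required: simply let the last test rule end directly in $q'$ instead of a fresh state. Since $\cA$ is an \abb{i-pta} with $C=C_\mathrm{i}$, the condition $\#(b)\leq 1$ guarantees that the pebble-test chain correctly characterizes $b$: when $b=\{c\}$ the single directive $?\haspeb_c$ already implies $?\neg\,\haspeb_{c'}$ for $c'\neq c$, and when $b=\nothing$ the conjunction of all $?\neg\,\haspeb_c$ holds precisely when either the stack is empty or the topmost pebble lies on a node different from the current one.

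Taking $Q' = Q\cup\{q_i^\rho \mid \rho\in R,\; 1\leq i \leq n+2\}$ with $Q_0$ and $F$ unchanged, I would then verify that $\semt{\cA'}=\semt{\cA}$ for every $t\in T_\Sigma$, whence $T(\cA')=T(\cA)$. The forward direction is immediate: a transition $\tup{q,u,\pi}\Rightarrow_{t,\cA}\tup{q',u',\pi'}$ via $\rho$ is mirrored by a chain of directive steps in $\cA'$ through the states $q_1^\rho,\dots,q_{n+2}^\rho$, each of which is enabled exactly because the corresponding ingredient of $\rho$ is applicable. For the converse, the freshness of the $q_i^\rho$ states (disjoint across distinct rules and lying outside $Q$) ensures that any computation of $\cA'$ decomposes uniquely into blocks, each traversing a single gadget from $q$ to $q'$ and thereby simulating exactly one rule application of $\cA$. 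I do not anticipate any significant obstacle; this is essentially the straightforward dual of Lemma~\ref{lem:iptaft}, and the only care required is in the bookkeeping of the pebble-test chain and the special handling of $\alpha=\stay$.
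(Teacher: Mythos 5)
Your construction is correct and is essentially the paper's own proof: the paper first closes the directive syntax under composition $\tau/\tau'$ (noting that a composed rule expands into a chain of rules through fresh intermediate states, exactly your gadget) and then maps each rule $\tup{q,\sigma,j,b}\to\tup{q',\alpha}$ to the directive $\tau_\sigma/\tau_j/\tau_b/\alpha$ with $\tau_{\{c\}}=\;?\haspeb_c$ and $\tau_\nothing=\;?\neg\,\haspeb_{c_1}/\cdots/?\neg\,\haspeb_{c_n}$. The only differences are cosmetic (your redundant negative pebble tests when $b=\{c\}$, and a harmless off-by-one in naming the intermediate states), so nothing further is needed.
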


\begin{proof}
Let $\cA = (\Sigma, Q, Q_0, F, C, \nothing, C_\mathrm{i}, R, 0)$ 
be an \abb{i-pta} with $C_\mathrm{i}=C$. 
To simplify the proof we extend the
syntax of the directive \abb{i-pta} by allowing rules $\tup{q,\tau,q'}$ with 
$\tau ::= \alpha_0 \mid \;?\phi_0 \mid \;?\neg\,\phi_0 \mid \tau/\tau'$, 
where $\tau'$ is an alias of $\tau$.
This clearly does not extend their power, 
because a rule $\tup{q,\tau/\tau',q'}$ can be replaced by the two rules
$\tup{q,\tau,p}$ and $\tup{p,\tau',q'}$ where $p$ is a new state. 
We now construct $\cA' = (\Sigma, Q, Q_0, F, C, R')$ 
where $R'$ is defined as follows. 
If $\cA$ has a rule $\tup{q,\sigma,j,b}\to \tup{q',\alpha}$ 
then $\cA'$ has the rule $\tup{q,\tau,q'}$ such that
$\tau = \tau_\sigma/\tau_j/\tau_b/\alpha$ if $\alpha\neq\stay$, and 
$\tau = \tau_\sigma/\tau_j/\tau_b$ if $\alpha=\stay$,
where
$\tau_\sigma = \text{?}\haslab_\sigma$,
$\tau_j = \text{?}\ischild_j$, 
$\tau_{\{c\}} = \text{?}\haspeb_c$, and
$\tau_\nothing = \text{?}\neg\,\haspeb_{c_1}/\cdots/?\neg\,\haspeb_{c_n}$,
where $C=\{c_1,\dots,c_n\}$. 
\end{proof}

As observed before, a directive \abb{i-pta} $\cA$ 
can be viewed as a finite automaton of which each state transition is labeled by 
a directive. Thus, viewing the set $D_{\Sigma,C}$ as an alphabet,
$\cA$ accepts a string language 
$L_\str(\cA)\subseteq D^*_{\Sigma,C}$. We now show the (rather obvious) fact that 
the semantics $\semt{\cA}$ of $\cA$ (for every tree $t$ over $\Sigma$) 
depends only on the language $L_\str(\cA)$, 
cf.~\cite[Theorem~3.11]{Eng74} and~\cite[Lemma~3]{bloem}. 
We do this (as in~\cite[Definition~2.7]{Eng74} and~\cite[Section~4]{bloem}) 
by associating a semantics 
$\semt{L}$ with every language $L\subseteq D^*_{\Sigma,C}$. 
Intuitively, a string $w=\tau_1\cdots \tau_n$ 
of directives can be viewed as the path expression $\tau_1/\cdots/\tau_n$ and 
a language $L=\{w_1,w_2,\dots\}$ of such strings can be viewed 
as the (possibly infinite) 
path expression $w_1\cup w_2\cup\cdots$. Thus, for a tree $t$ over $\Sigma$ 
we formally define $\semt{\epsilon}$ to be the identity on $\sit(t)$,
$\semt{\tau_1\cdots\tau_n}=\semt{\tau_1}\circ\cdots\circ\semt{\tau_n}$,
and $\semt{L}=\bigcup_{w\in L}\semt{w}$. 
The next lemma is a special case of~\cite[Theorem~3.11]{Eng74}. 
Its proof is entirely similar to the one of~\cite[Lemma~3]{bloem}. 

\begin{lemma}\label{lem:langipta}
$\semt{\cA}=\semt{L_\str(\cA)}$.
\end{lemma}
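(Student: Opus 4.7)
The plan is to establish the two inclusions $\semt{\cA}\subseteq\semt{L_\str(\cA)}$ and $\semt{L_\str(\cA)}\subseteq\semt{\cA}$ by essentially reading off the correspondence between computation paths of $\cA$ (viewed as a tree-walking machine) and accepting paths of $\cA$ (viewed as a finite automaton over the alphabet $D_{\Sigma,C}$). The key observation is that the only way the rules of $\cA$ interact with the tree $t$ is through the directive label on a transition, and this interaction is exactly captured by $\semt{\tau}$. Thus a computation of $\cA$ on $t$ and a recognizing run on a string in $L_\str(\cA)$ are two views of the same object.

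For the forward inclusion, I would fix $(\tup{u,\pi},\tup{u',\pi'})\in\semt{\cA}$ and the associated computation $\tup{q_0,u,\pi}\Rightarrow^*_{t,\cA}\tup{q_\infty,u',\pi'}$ with $q_0\in Q_0$ and $q_\infty\in F$, say of length~$n$. By the definition of $\Rightarrow_{t,\cA}$, each step $i$ uses some rule $\tup{q_{i-1},\tau_i,q_i}\in R$ with $(\tup{u_{i-1},\pi_{i-1}},\tup{u_i,\pi_i})\in\semt{\tau_i}$. Then $w=\tau_1\cdots\tau_n$ labels an accepting run of $\cA$ as a finite automaton, so $w\in L_\str(\cA)$. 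By the inductive definition $\semt{\tau_1\cdots\tau_n}=\semt{\tau_1}\circ\cdots\circ\semt{\tau_n}$ (a routine induction on $n$, with the base case $\semt{\epsilon}=\mathrm{id}_{\sit(t)}$ handling $n=0$), the intermediate situations $\tup{u_i,\pi_i}$ witness that $(\tup{u,\pi},\tup{u',\pi'})\in\semt{w}\subseteq\semt{L_\str(\cA)}$.

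For the reverse inclusion, I would take $(\tup{u,\pi},\tup{u',\pi'})\in\semt{L_\str(\cA)}$, pick some $w=\tau_1\cdots\tau_n\in L_\str(\cA)$ with $(\tup{u,\pi},\tup{u',\pi'})\in\semt{w}$, and fix an accepting run $q_0,q_1,\ldots,q_n$ of $\cA$ on $w$ with $q_0\in Q_0$, $q_n\in F$ and $\tup{q_{i-1},\tau_i,q_i}\in R$ for each $i$. Unfolding $\semt{w}=\semt{\tau_1}\circ\cdots\circ\semt{\tau_n}$ yields situations $\tup{u_0,\pi_0}=\tup{u,\pi}$, $\tup{u_n,\pi_n}=\tup{u',\pi'}$, and intermediate $\tup{u_i,\pi_i}$ with $(\tup{u_{i-1},\pi_{i-1}},\tup{u_i,\pi_i})\in\semt{\tau_i}$; this gives $\tup{q_{i-1},u_{i-1},\pi_{i-1}}\Rightarrow^{\tau_i}_{t,\cA}\tup{q_i,u_i,\pi_i}$, hence a complete computation from an initial to a final configuration, so $(\tup{u,\pi},\tup{u',\pi'})\in\semt{\cA}$.

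There is no genuine obstacle here; the proof is a direct translation between the two interpretations of $\cA$. The only mildly delicate point is being careful about the base case $n=0$ (ensuring $q_0\in Q_0\cap F$ corresponds to the empty string in $L_\str(\cA)$ and to the identity on $\sit(t)$ appearing in $\semt{\epsilon}$), and about the concatenation lemma $\semt{ww'}=\semt{w}\circ\semt{w'}$, which follows straight from the semantic clause $\semt{\alpha/\beta}=\semt{\alpha}\circ\semt{\beta}$ together with the convention that a string of directives is interpreted as their composition.
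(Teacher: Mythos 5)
Your proof is correct and follows essentially the same route as the paper's: both establish the bijective correspondence between computations of $\cA$ on $t$ and pairs consisting of a string $w\in L_\str(\cA)$ together with an accepting state run, with the intermediate situations witnessing membership in $\semt{w}=\semt{\tau_1}\circ\cdots\circ\semt{\tau_n}$ (which, note, is the paper's \emph{definition} of $\semt{w}$ rather than something needing its own induction). The paper merely packages your two inclusions as a single ``if and only if'' proved by induction on $|w|$, quantified over all start and end states via the relation $R_\cA(w)$.
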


\begin{proof}
A string $w$ of directives induces a state transition relation 
$R_\cA(w)\subseteq Q\times Q$ as follows. For $\tau\in D_{\Sigma,C}$,
$R_\cA(\tau)=\{(q,q')\mid \tup{q,\tau,q'}\in R\}$. For the empty string,
$R_\cA(\epsilon)$ is the identity on $Q$, and 
$R_\cA(\tau_1\cdots\tau_n)=R_\cA(\tau_1)\circ\cdots\circ R_\cA(\tau_n)$.
Then $L_\str(\cA)=
\{w\in D^*_{\Sigma,C} \mid R_\cA(w)\cap (Q_0\times F)\neq\nothing\}$.

It is straightforward to show by induction that, 
for all configurations $\tup{q,u,\pi}$ and $\tup{q',u',\pi'}$
and for every $w=\tau_1\cdots\tau_n$ over $D_{\Sigma,C}$, 
there is a computation 
\[
\tup{q_1,u_1,\pi_1} \Rightarrow^{\tau_1}_{t,\cA} 
\tup{q_2,u_2,\pi_2} \Rightarrow^{\tau_2}_{t,\cA}
\cdots \Rightarrow^{\tau_n}_{t,\cA}
\tup{q_{n+1},u_{n+1},\pi_{n+1}}
\]
with $\tup{q_1,u_1,\pi_1}=\tup{q,u,\pi}$ and 
$\tup{q_{n+1},u_{n+1},\pi_{n+1}}=\tup{q',u',\pi'}$
if and only if 
$(\tup{u,\pi},\tup{u',\pi'})\in \semt{w}$ and $(q,q')\in R_\cA(w)$. 
From this equivalence it follows that 
$\semt{\cA}$ consists of all $(\tup{u,\pi},\tup{u',\pi'})$ such that
\[
\exists\, q_0\in Q_0, q_\infty\in F, w\in D^*_{\Sigma,C}: 
(\tup{u,\pi},\tup{u',\pi'})\in \semt{w},\;(q,q')\in R_\cA(w)
\]
i.e., such that 
$\exists\, w\in L_\str(\cA): (\tup{u,\pi},\tup{u',\pi'})\in \semt{w}$,
which means that it equals $\semt{L_\str(\cA)}$.
\end{proof}

\smallpar{Proof of Theorem~\ref{thm:xpath}}
We assume the syntax for path expressions $\alpha$ of Pebble XPath and Pebble CAT
to be in normal form.
We also add $\alpha ::= \nothing$ to the syntax,
with $\semt{\nothing}=\nothing$ for every tree $t$. 
That is possible because, e.g., 
$\semt{?\ischild_0/?\neg\ischild_0}=\nothing$. 

We first show that Pebble CAT has the same power as \abb{MSO}.
Let us recall that the set $D_{\Sigma,C}$ of directives $\tau$ 
of the directive \abb{i-pta} 
was defined by the syntax $\tau ::= \alpha_0 \mid \;?\phi_0 \mid \;?\neg\phi_0$. 
Thus, the path expressions of Pebble CAT are, in fact, exactly the usual regular expressions 
over the ``alphabet'' $D_{\Sigma,C}$.
Accordingly, we define for such a path expression $\alpha$ the string language 
$L_\str(\alpha)\subseteq D^*_{\Sigma,C}$ in the obvious way, 
interpreting the operators $\cup$, $/$, and $^*$ as union, concatenation, and 
Kleene star of string languages, respectively. The next lemma is the analogue of 
Lemma~\ref{lem:langipta}, with a straightforward proof.  

\begin{lemma}\label{lem:langcat}
$\semt{\alpha}=\semt{L_\str(\alpha)}$.
\end{lemma}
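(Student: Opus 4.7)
The plan is to proceed by structural induction on the Pebble CAT path expression $\alpha$ (in normal form, extended with the atomic case $\alpha = \nothing$), exactly paralleling the way Lemma~\ref{lem:langipta} relates a directive \abb{i-pta} to its string language. The key observation is that Pebble CAT path expressions are nothing other than the classical regular expressions over the finite ``alphabet'' $D_{\Sigma,C}$, and that the pointwise semantics $\semt{\cdot}$ on $D^*_{\Sigma,C}$ commutes with unions by the very definition $\semt{L} = \bigcup_{w\in L}\semt{w}$.

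For the base, if $\alpha = \tau \in D_{\Sigma,C}$ then $L_\str(\alpha) = \{\tau\}$ and immediately $\semt{L_\str(\alpha)} = \semt{\tau} = \semt{\alpha}$; the case $\alpha = \nothing$ gives $L_\str(\alpha) = \nothing$ and both sides are the empty relation. For $\alpha = \beta \cup \gamma$, we have $L_\str(\alpha) = L_\str(\beta) \cup L_\str(\gamma)$, so by splitting the union in the definition of $\semt{L}$ and applying the induction hypothesis, $\semt{L_\str(\alpha)} = \semt{L_\str(\beta)}\cup \semt{L_\str(\gamma)} = \semt{\beta}\cup \semt{\gamma} = \semt{\alpha}$.

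The slightly more substantive cases are composition and Kleene star. For $\alpha = \beta/\gamma$, we have $L_\str(\alpha) = L_\str(\beta)\cdot L_\str(\gamma)$ and $\semt{\alpha} = \semt{\beta} \circ \semt{\gamma}$. Using the induction hypothesis, the identity $\semt{w_1 w_2} = \semt{w_1} \circ \semt{w_2}$ (which is immediate from $\semt{\tau_1\cdots\tau_n} = \semt{\tau_1} \circ \cdots \circ \semt{\tau_n}$), and the distributivity of $\circ$ over arbitrary unions of binary relations on both sides, one computes
\[
\semt{L_\str(\alpha)} = \bigcup_{w_1\in L_\str(\beta),\, w_2\in L_\str(\gamma)} \semt{w_1}\circ\semt{w_2} = \semt{L_\str(\beta)} \circ \semt{L_\str(\gamma)} = \semt{\alpha}.
\]
For $\alpha = \beta^*$, we have $L_\str(\alpha) = \bigcup_{n\geq 0} L_\str(\beta)^n$ and $\semt{\alpha} = \bigcup_{n\geq 0} \semt{\beta}^n$; a short induction on $n$ using the composition case yields $\semt{L_\str(\beta)^n} = \semt{L_\str(\beta)}^n$, and then commuting the outer union with $\semt{\cdot}$ closes the argument.

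There is no genuine obstacle here: the whole statement is an instance of the standard fact that the map assigning to a regular expression over an alphabet of binary relations the relation it denotes is a homomorphism of Kleene algebras. The only bookkeeping point is to verify that the infinitary step in $\semt{L} = \bigcup_{w \in L}\semt{w}$ interacts correctly with composition, which is covered by the distributivity of $\circ$ over arbitrary unions.
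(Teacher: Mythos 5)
Your proof is correct and follows essentially the same route as the paper: a structural induction on $\alpha$ resting on the identities $\semt{L_1\cup L_2}=\semt{L_1}\cup\semt{L_2}$, $\semt{L_1L_2}=\semt{L_1}\circ\semt{L_2}$, and $\semt{L_1^*}=\semt{L_1}^*$, which the paper states without detail (citing \cite[Lemma~2.9]{Eng74}) and which you justify via distributivity of relation composition over arbitrary unions. No gaps.
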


\begin{proof}
It is easy to see, for string languages $L_1,L_2\subseteq D_{\Sigma,C}^*$, 
that $\semt{L_1\cup L_2}= \semt{L_1}\cup\semt{L_2}$,
$\semt{L_1L_2}= \semt{L_1}\circ\semt{L_2}$, and 
$\semt{L_1^*}= \semt{L_1}^*$, cf.~\cite[Lemma~2.9]{Eng74}. 
Then the proof is by induction on the structure of $\alpha$.
\end{proof}

By Kleene's classical theorem, a string language can be accepted by a finite automaton 
if and only if it can be defined by a regular expression. 
Thus, by Lemmas~\ref{lem:langipta} and~\ref{lem:langcat}, 
a trip is definable in Pebble CAT if and only if it 
can be computed by a directive \abb{i-pta}, and hence, 
by Theorem~\ref{thm:trips} (for $k=0$) and 
Lemmas~\ref{lem:iptaft} and~\ref{lem:iptatf},
if and only if it is \abb{mso} definable. 

It remains to show that if a trip is definable in Pebble XPath, then it can be computed by a directive \abb{i-pta}. We will prove below that for every Pebble XPath path expression $\alpha$ there is a directive \abb{i-pta$^\text{la}$} $\cA$, i.e., a directive \abb{i-pta} with iterated look-ahead tests, such that $\semt{\cA} = \semt{\alpha}$ for every $t$. This  implies that $\alpha$ and $\cA$ define the same trip, and then we obtain from Theorem~\ref{thm:iterated} (and Lemmas~\ref{lem:iptaft} and~\ref{lem:iptatf}) a directive \abb{i-pta} (without look-ahead) computing that same trip. 

Let $n_\alpha$ be the nesting depth of subexpressions of $\alpha$ of the form $\tup{\beta}$. The proof is by induction on $n_\alpha$, and $\cA$ will be of look-ahead depth $n_\alpha$. If $n_\alpha=0$, i.e., there are no such subexpressions at all, then $\alpha$ is a Pebble CAT expression, and we are done by the first part of the proof. 
Now suppose that the result holds for nesting depth $n$, and let $n_\alpha=n+1$. For every subexpression $\tup{\beta}$ of~$\alpha$ that is not nested within another such subexpression, let $\cA_\beta$ be a directive \abb{i-pta$^\text{la}$} of look-ahead depth $n$ (or less) such that $\semt{\cA_\beta} = \semt{\beta}$ for all $t$. We now define the extended ``alphabet'' $D^n_{\Sigma,C}$ to consist of all path expressions $\tau$ with the syntax 
$\tau ::= \alpha_0 \mid 
\;?\phi_0 \mid \;?\neg\phi_0 \mid \;?\tup{\beta} \mid \;?\neg\tup{\beta}$
where $\tup{\beta}$ ranges over the above subexpressions of $\alpha$. 
Then $\alpha$ can be viewed as a regular expression over the alphabet 
$D^n_{\Sigma,C}$, and it should be clear that Lemma~\ref{lem:langcat}
is also valid in this case. 
Also, using $D^n_{\Sigma,C}$ instead of $D_{\Sigma,C}$ in the rules of 
the directive \abb{i-pta}, and identifying each ``symbol'' $?\tup{\beta}$ with the ``symbol'' $?\tup{\cA_\beta}$ (and similarly for the negated tests), we obtain a subclass of the directive \abb{i-pta$^\text{la}$} of look-ahead depth $n+1$,
because the semantics of the path expression $?\tup{\beta}$ is exactly the same as the meaning of the look-ahead test $?\tup{\cA_\beta}$.
Again, it should be clear that Lemma~\ref{lem:langipta}
is also valid for these directive \abb{i-pta}'s, 
which are finite automata over $D^n_{\Sigma,C}$. 
Hence, by the same Kleene argument as in the first part of the proof, 
there is a directive \abb{i-pta$^\text{la}$} $\cA$ of look-ahead depth $n+1$ such that 
$\semt{\cA} = \semt{\alpha}$ for every tree $t$.  

This ends the proof of Theorem~\ref{thm:xpath}, both for ranked trees and
(by Lemmas~\ref{lem:xpathft}, \ref{lem:cattf}, and~\ref{lem:encmso}) 
for unranked forests.

\smallpar{Two remarks}
(1) Although the MSO definable trips are, of course, closed under complement and intersection, we do not know whether the XPath~2.0 operations {\tt intersect} and {\tt except} can be added to the syntax of path expressions of Pebble XPath ($\alpha ::= \alpha \cap \beta \mid \alpha \setminus \beta$). 
That is because it is not clear whether for every \abb{i-pta} $\cA$ there is an \abb{i-pta} $\cB$ such that $\semt{\cB} = \sit(t) - \semt{\cA}$ for every tree $t$.

(2) The language Pebble XPath meets the requirements as listed in \cite{GorMar05}.
It is \emph{simple}, defined in an algebraic language 
using simple operators: in particular we believe that 
pebbles form a user friendly concept.
It is \emph{understandable},
as its expressive power can be characterized in terms of automata.
It is \emph{useful} in the sense that the query evaluation problem 
`given path expression $\alpha$ and two nodes $u,v$ in forest $f$,
is $(f,u,v)\in T(\alpha)$?' is tractable.
At least, the latter property holds for Pebble CAT,
as $\alpha$ can be transformed into an \abb{I-PTA} in polynomial time,
and the problem `$(f,u,v)\in T(\alpha)$?' can then be
translated into the emptiness problem for push-down automata.
For Pebble XPath the query evaluation problem is tractable 
for every fixed path expression $\alpha$. 
This is explained in more detail in the next two paragraphs.

\smallpar{Query evaluation}
For a directive \abb{i-pta} $\cA = (\Sigma, Q, Q_0, C, R)$, the binary node relation $T$ computed by $\cA$ on an input tree $t$ can be evaluated in polynomial time as follows. It is straightforward to construct from $\cA$ and $t$ an ordinary pushdown automaton $\cP$ with state set $Q \times \nod{t}$ and pushdown alphabet $\nod{t}\times C$ in such a way that $\cP$ (with the empty string as input) has the same computation steps as $\cA$ on $t$. Note that the configurations of $\cP$ are exactly the configurations $\langle q,u,\pi\rangle$ of $\cA$ on $t$. Dropping and lifting a pebble corresponds to pushing and popping a pushdown symbol. Moving around in $t$ corresponds to a change of state. To decide whether $(t,u,v)\in T$, with $u,v\in\nod{t}$, decide whether $\cP$ has a computation from configuration 
$\langle q_0,u,\epsilon\rangle$ (for some $q_0\in Q_0$) 
to some final configuration $\langle q,v,\pi\rangle$. Clearly, $\cP$ can be constructed in polynomial time from $\cA$ and $t$, and the existence of such a computation can be verified in polynomial time.

By Lemma~\ref{lem:xpathft}, path expressions on forests can be translated into path expressions on ranked trees in polynomial time. 
Since for a Pebble CAT path expression on ranked trees the corresponding directive \abb{i-pta} can be constructed in polynomial time, using Kleene's construction, Pebble CAT path expressions can be evaluated in polynomial time. This does not seem to hold for Pebble XPath, as the construction in the proof of Theorem~\ref{thm:look-ahead} (which implements a look-ahead test by calling an \abb{i-pta} $\cB$) is at least 2-fold exponential (because determining the domain of the related \abb{i-pta} $\cB'$
takes 2-fold exponential time by Theorem~\ref{thm:typecheck}). 
However, the data complexity of the problem is of course polynomial, 
i.e., for a fixed path expression $\alpha$ 
we obtain a fixed directive \abb{i-pta} $\cA$ for which the binary node relation 
can be evaluated in polynomial time.

%% ============================================================
%% ============================================================
\section{Pattern Matching}\label{sec:pattern}
%% ============================================================
%% ============================================================

One of the basic tree transformations in the
context of XML is pattern matching. The transducer
must find all sequences of nodes satisfying a certain
description and generate the subtrees rooted at these
nodes, for each match.
More precisely, we consider queries of the form 
\[
\mbox{\tt for } \cX
\mbox{ \tt where } \phi
\mbox{ \tt return } r
\] 
in which $\cX$ is a finite set of node variables, 
$\phi$ is an \abb{mso} formula with its free variables in $\cX$,
and $r$ is a tree of which the leaves may be labeled with 
the variables in $\cX$.
In what follows we assume that $\cX$ and $r$ are fixed. 
Let $\cX=\{x_1,\dots,x_n\}$, 
where $x_1,\dots,x_n$ is an arbitrary order of the elements of $\cX$. 
The transducer must find all sequences of nodes $u_1,\dots,u_n$
of the input tree $t$ that match the pattern defined by $\phi(x_1,\dots,x_n)$,  
i.e., such that $t\models \phi(u_1,\dots,u_n)$, and for each match
it must generate the output tree~$r$ in which each occurrence of 
the variable $x_i$ is replaced by the subtree of $t$ with root~$u_i$. 
Usually the variables in~$\cX$ are indeed specified in a specific
order $\lambda=(x_1,\dots,x_n)$, and it is required that the transducer
finds (and generates) the matches in the lexicographic document order
induced by $\lambda$. We will, however, also consider the case where 
this requirement is dropped, and the most efficient order $\lambda$
can be selected.  

For convenience we assume that $r$ is of the form $\mu(x_1,\dots,x_n)$ 
for some symbol $\mu$ of rank~$n$, and so the output for each match is  
$\mu(t|_{u_1},\dots,t|_{u_n})$ where $t|_u$ is the subtree of $t$ with root $u$.
For convenience we also assume that the input tree $t$ is ranked. 
Moreover, we assume that 
the output alphabet is also ranked and contains the binary symbol $\at$ that allows
us to list the various output trees $\mu(t|_{u_1},\dots,t|_{u_n})$,
and the nullary symbol $e$ to indicate the end of the list of output trees
(similar to the binary tag \texttt{<result>} and the nullary tag 
\texttt{<endofresults>} of Example~\ref{ex:siberie}). 
In Section~\ref{sec:pft} we will consider pattern matching in forests. 

We now describe a total deterministic \abb{ptt} $\cA$ that executes the above query. 
In order to find all $n$-tuples of nodes
matching the $n$-ary pattern defined by the \abb{mso}
formula $\phi(x_1,\dots,x_n)$, and generate the corresponding output, 
the \abb{ptt}~$\cA$ systematically enumerates \emph{all} $n$-tuples 
of nodes of the input tree $t$. To do this, $\cA$~uses visible pebbles 
$c_1,\dots,c_n$ on the stack, 
representing the variables $x_1,\dots,x_n$, respectively.\footnote{It is not
necessary that all pebbles are visible,
as we will discuss below, but it simplifies the description of $\cA$.
} 
It drops them in this order and moves each of them
through the input tree $t$ in document order (i.e., in pre-order), 
in a nested fashion. Inductively speaking, 
$\cA$ moves pebble~$c_1$ in pre-order through $t$ 
(alternately dropping and lifting $c_1$), 
and for each position $u_1$ of~$c_1$ it uses pebbles $c_2,\dots,c_n$
to enumerate all possible $(n\!-\!1)$-tuples $u_2,\dots,u_n$ of nodes of $t$. 
For each enumerated $n$-tuple $u_1,\dots,u_n$,  
with pebble $c_i$ at position~$u_i$,  
$\cA$ performs the test~$\phi$, 
using an \abb{mso} test on the visible
configuration (Lemma~\ref{lem:visiblesites}), and, 
in case of success, spawns a process
that outputs the corresponding $n$-tuple of subtrees.  

More precisely, if the ranked input alphabet is $\Sigma$, then 
$\phi$ is an \abb{mso} formula over $\Sigma$, and 
$\cA$ has the ranked output alphabet $\Delta=\Sigma \cup\{\mu,\at,e\}$ where $\mu$ has rank~$n$, and $\at$ and $e$ have rank~2 and~0 respectively. For input tree $t$, the output tree $s$ is of the form $s=\at(r_1,\at(r_2,\dots \at(r_k,e)\cdots))$ where each $r_i$ corresponds to a match, i.e., there is a sequence of nodes $u_1,\dots,u_n$ of $t$ such that $t\models \phi(u_1,\dots,u_n)$ and $r_i=\mu(t|_{u_1},\dots,t|_{u_n})$. Moreover, the sequence $r_1,\dots,r_k$ corresponds to the sequence of all matches, in lexicographic document order. 
As explained above, the visible colour set of the \abb{ptt} $\cA$ 
is $C_\mathrm{v}=\{c_1,\dots,c_n\}$, and $\cA$
generates $s$ by enumerating 
all sequences $u_1,\dots,u_n$ of nodes of $t$ using pebbles $c_1,\dots,c_n$. 
To find out whether this sequence is a match, $\cA$ performs the \abb{mso} test
$\psi(x)$ on the visible configuration, defined by  
\[
\psi(x) \equiv \forall x_1,\dots,x_n(
    (\peb_{c_1}(x_1) \wedge \cdots\wedge \peb_{c_n}(x_n))\to 
        \phi'(x_1,\dots,x_n))
\]
where $\peb_c(x)$ is the disjunction of all $\lab_{(\sigma,b)}(x)$ 
such that $c\in b$, and where $\phi'$ is obtained from $\phi$ by changing 
every atomic subformula $\lab_\sigma(y)$ into 
the disjunction of all $\lab_{(\sigma,b)}(y)$. 
Note that $\psi(x)$ is an \abb{mso} formula over $\Sigma\times 2^C$, where $C$ is the colour set of $\cA$.
Note also that the variable $x$ (for the head position) does not, and need not, 
occur in $\psi(x)$. 
If the sequence $u_1,\dots,u_n$ is not a match, then 
$\cA$ continues the enumeration of $n$-tuples. 
If the sequence \emph{is} a match, then 
$\cA$ outputs the symbol $\at$ and branches into two subprocesses 
(as in the 5-th rule of Example~\ref{ex:siberie}). 
In the second (main) branch it continues the enumeration of $n$-tuples.
In the first branch it outputs the symbol $\mu$ and branches into $n$ subprocesses,
where the $i$-th process 
searches for visible pebble~$c_i$ and then outputs $t|_{u_i}$.
Note that, in this first branch, $\cA$ could easily output an arbitrary tree $r$ 
in which every occurrence of the variable $x_i$ is replaced by $t|_{u_i}$.
This ends the description of~$\cA$.

As the complexity of typechecking the transducer $\cA$
depends critically on the number of visible pebbles used 
(see Theorem~\ref{thm:typecheck}), we 
wish to minimize that number and use as few visible
pebbles as possible for matching.
It should be clear that, instead of using $n$ visible pebbles, 
$\cA$ can also use $n\!-\!2$ visible pebbles $c_1,\dots,c_{n-2}$, 
one invisible pebble $c_{n-1}$ on top (which is therefore always observable), 
and the head instead of the last pebble $c_n$. 
Then $\cA$ can perform the \abb{mso} test $\chi(x)$ 
on the observable configuration, defined by $\chi(x) \equiv $
\[
\forall x_1,\dots,x_{n-1}(
    (\peb_{c_1}(x_1) \wedge \cdots\wedge \peb_{c_{n-1}}(x_{n-1}))\to 
        \phi'(x_1,\dots,x_{n-1},x))
\]
where $x_n$ is renamed into $x$ in $\phi'$. 
Thus, from Theorem~\ref{thm:mso} we obtain the following result on 
the matching of arbitrary \abb{MSO} definable patterns.

\begin{theorem}\label{thm:matchall}
For $n\geq 2$, 
every \abb{MSO} definable $n$-ary pattern
can be matched by a total deterministic \abb{v$_{n-2}$i-ptt}.
Moreover, and in particular, 
every \abb{MSO} definable unary or binary pattern
can be matched by a total deterministic \abb{i-ptt}.
\end{theorem}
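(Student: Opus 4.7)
The proof essentially repeats, with one refinement, the construction of the \abb{ptt} $\cA$ sketched in the paragraphs preceding the theorem, so the plan is largely to assemble ingredients already made available.

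The plan is to construct a total deterministic \abb{v$_{n-2}$i-ptt} $\cA$ that, on input $t$, enumerates all $n$-tuples $(u_1,\dots,u_n)$ of nodes of $t$ in lexicographic document order, tests each one against $\phi$, and for every match outputs $\mu(t|_{u_1},\dots,t|_{u_n})$, linking successive matches through the binary symbol $\at$ and terminating the list with $e$. The enumeration is done by a nested pre-order sweep: pebble $c_i$ plays the role of the variable $x_i$ for $i\in[1,n-2]$, the (single, topmost) invisible pebble $c_{n-1}$ plays the role of $x_{n-1}$, and the reading head itself plays the role of $x_n$. Thus at most $n-2$ visible pebbles are simultaneously on the tree, and exactly one invisible pebble sits on top of the stack whenever the test for $\phi$ is invoked, so that the role of $c_{n-1}$ is always observable.

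First I would describe the enumeration skeleton: inductively, $\cA$ moves pebble $c_1$ in pre-order through $t$, and inside each such position it enumerates $(n-1)$-tuples using $c_2,\dots,c_{n-2},c_{n-1}$ together with the head, in the same pre-order, nested fashion. Moving a single pebble in pre-order while dropping/lifting is a routine total deterministic \abb{tt}-like subroutine and needs only a constant number of extra states; the same holds for moving the head in pre-order within the innermost loop. Second, at the innermost level, once all of $c_1,\dots,c_{n-1}$ have been dropped and the head sits at the current candidate for $u_n$, $\cA$ needs to test whether $t\models\phi(u_1,\dots,u_n)$. This is exactly an \abb{mso} test on the observable configuration $(\obs(t,\pi),h)$: the formula $\chi(x)$ given just before the theorem, which says ``for all $x_1,\dots,x_{n-1}$, if $x_i$ carries pebble $c_i$ then $\phi'(x_1,\dots,x_{n-1},x)$ holds'', is satisfied at the head position iff $\phi$ holds of the tuple represented by the observable configuration. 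By Theorem~\ref{thm:mso} such a test is available to every \abb{v$_{k}$i-}d\abb{ptt}, in particular here with $k=n-2$. Third, when the test succeeds, $\cA$ outputs $\at$ and branches: one branch outputs $\mu$ and spawns $n$ copies, where the $i$-th copy walks to the position of $c_i$ (or $c_{n-1}$, or stays at the head, accordingly) and then copies the subtree rooted there, using a standard top-down subtree-copy routine; the other branch resumes the enumeration where it left off. After the final tuple, the enumeration terminates by outputting $e$.

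For the moreover clause, the same construction with $n=2$ gives a total deterministic \mbox{\abb{v$_0$i-ptt}}, i.e.\ an \abb{i-ptt}, using one invisible pebble for $x_1$ and the head for $x_2$; for $n=1$, no pebbles are needed at all and the head alone represents $x_1$, with $\chi(x)\equiv\phi'(x)$ tested directly as an \abb{mso} head test (Lemma~\ref{lem:sites}). The one point that requires care, and which I expect to be the only mildly nontrivial step, is the bookkeeping that guarantees (i) totality (every candidate tuple is tested, the enumeration always terminates, and the $@$-list is properly closed by $e$) and (ii) determinism of the nested pre-order traversal when the control is passed from one level of the induction to the next; both are standard and follow the pattern already illustrated by Example~\ref{ex:siberie}, so no new technical idea beyond Theorem~\ref{thm:mso} is needed.
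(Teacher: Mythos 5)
Your proposal is correct and follows essentially the same route as the paper: a nested pre-order enumeration with $c_1,\dots,c_{n-2}$ visible, one invisible pebble $c_{n-1}$ kept on top of the stack, and the head standing in for $x_n$, with the match decided by the \abb{mso} test $\chi(x)$ on the observable configuration via Theorem~\ref{thm:mso} (and Lemma~\ref{lem:sites} for the unary case). No gaps.
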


To further reduce the number of visible pebbles,
we consider the more specific case of queries of the form 
\[
\mbox{\tt for } \cX
\mbox{ \tt where } \bool(\phi_1, \dots, \phi_m) 
\mbox{ \tt return } r
\]
in which $\bool(\phi_1, \dots, \phi_m)$ is a boolean combination
(using $\wedge,\vee,\neg$) of the \abb{mso} formulas
$\phi_1, \dots, \phi_m$, $m\geq 2$, and 
each $\phi_\ell$, $\ell\in[1,m]$, has its free variables in~$\cX$. 
We will make use of the fact that not all variables in $\cX$
need actually occur in each formula~$\phi_\ell$.
As discussed in the Introduction,
the {\tt for} $\cdots$\ {\tt where} construct in
XQuery often induces patterns $\phi_1 \wedge \cdots \wedge \phi_m$
such that each $\phi_\ell$ contains just two free variables,
cf. \cite{GotKocSch}.

Consider an arbitrary query as displayed above. 
Let $G_\phi=(V_\phi,E_\phi)$ be the undirected graph induced by the
pattern $\phi \equiv \bool(\phi_1, \dots, \phi_m)$, 
by which we mean that  
the set $V_\phi$ of vertices of $G_\phi$ consists of the free variables 
of $\phi$, i.e., $V_\phi=\cX$, 
and that the set $E_\phi$ of edges of $G_\phi$ consists of 
the unordered pairs $\{x,y\}$ 
(with $x,y\in V_\phi$, $x\neq y$)
for which there exists $\ell\in[1,m]$ 
such that both $x$ and $y$ occur (free) in $\phi_\ell$. 
Note that $G_\phi$ does not depend on any order 
of the variables in $\cX$. 
Note also that for every finite undirected graph $G$ there exists 
$\phi \equiv \phi_1 \wedge \cdots \wedge \phi_m$ 
such that $G$ is isomorphic to $G_\phi$.  

Pattern matching $\phi$, and executing the above query, 
can be done by a total deterministic \abb{ptt} $\cA$
as follows, similarly to the general \abb{ptt} $\cA$ above
(as discussed before Theorem~\ref{thm:matchall}).
Again, let $\lambda=(x_1,\dots,x_n)$ be an arbitrary order 
of the variables in $\cX$. 
Pebbles with distinct colours $c_1,\dots,c_{n-1}$ are used to
represent $x_1, \dots, x_{n-1}$, dropping them in that
order. For every $j\in[1,n]$, 
when pebbles $c_1,\dots,c_{j-1}$ are dropped on the tree 
and the head is at a candidate position $u_j$ for the variable $x_j$,  
all \abb{mso} tests $\phi_\ell$
are performed of which the free variables are in $\{x_1,\dots,x_j\}$
(and that have not been tested before).
Thus, when $\cA$ has enumerated a sequence $u_1,\dots,u_n$,
it can compute the boolean value of $\phi(u_1,\dots,u_n)$. 
For each match $u_1,\dots,u_n$ the tree~$r$ is generated,
such that for every occurrence of the variable $x_i$ in $r$
the subtree rooted at $u_i$ is generated,
by a separate process;
that is straightforward, even when $c_i$ is invisible:
lift pebbles $c_{n-1},\dots,c_{i+1}$ one by one (in that order), 
and then access $c_i$ and output $t|_{u_i}$.
Note that, as before, the matches are generated 
in the lexicographic document order
induced by the order $\lambda$. 

It remains to determine which are the visible and invisible pebbles,
keeping in mind that we wish to 
use as many invisible pebbles as possible for matching.
To do the \abb{mso} tests at position $u_j$ all pebbles $c_i$
for which $\{x_i,x_j\}\in E_\phi$ and $i<j$ should be observable.
Hence all such pebbles under the topmost pebble $c_{j-1}$
must be visible.
These are the pebbles corresponding to the set
\[
\mathrm{vis}(\lambda) = \{x_i \mid 
  \text{there exists }\{x_i,x_j\}\in E_\phi \text{ such that } i+1<j \}.
\]
Thus, for $\cA$ we define 
$C_\mathrm{v}=\{c_i\mid x_i\in \mathrm{vis}(\lambda)\}$
and $C_\mathrm{i}= \{c_i\mid x_i\notin \mathrm{vis}(\lambda)\}$.
Note that $c_{n-1}\in C_\mathrm{i}$.

In the case where the order $\lambda=(x_1,\dots,x_n)$ 
of the variables is irrelevant, 
we may want to determine an optimal order.
A finite undirected graph $G=(V,E)$ will be called 
a \emph{union of paths} if it is acyclic 
and has only vertices of degree at most $2$. Intuitively this means 
that each connected component of $G$ is a path. Thus, clearly, 
there is an order $v_1,\dots,v_p$ of the vertices of $G$ such that 
for all $i,j\in[1,p]$ with $i< j$, if $\{v_i,v_j\}\in E$ then $i+1=j$ 
(repeatedly pick a vertex of degree 0 or 1, and remove it from the graph
together with all its incident edges). We will call this 
an \emph{invisible order} of the vertices of $G$. Note that
a graph is a union of paths if and only if it has an invisible order.
Note also that every subgraph of $G$ is also a union of paths.

For an arbitrary finite undirected graph $G=(V,E)$, let us now say that 
a set $W \subseteq V$ of vertices of $G$ is a \emph{visible set} of $G$ if
the subgraph of $G$ induced by $V\setminus W$,
denoted by $G[V\setminus W]$, is a union of paths.
By the last sentence of the previous paragraph, every superset of a 
visible set is also a visible set. 

\begin{lemma}\label{lem:vis}
A set of variables $W \subseteq V_\phi$ is a visible set of $G_\phi$ 
if and only if there is an order $\lambda$ of $V_\phi$ 
such that $\mathrm{vis}(\lambda)\subseteq W$.
\end{lemma}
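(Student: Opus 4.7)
The plan is to prove both implications by a direct combinatorial translation between orders of $V_\phi$ and orderings of the vertices of the subgraph $G_\phi[V_\phi \setminus W]$. Recall that $x_i \in \mathrm{vis}(\lambda)$ iff $x_i$ has an edge to some $x_j$ with $i+1 < j$ in the order $\lambda = (x_1,\dots,x_n)$, i.e., to some strictly later vertex that is not its immediate successor. So the condition $\mathrm{vis}(\lambda)\subseteq W$ means: every vertex $x_i$ \emph{outside} $W$ has all its ``forward'' neighbours either immediately after it or before it in $\lambda$.

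For the \emph{if} direction, given $\lambda$ with $\mathrm{vis}(\lambda)\subseteq W$, I would restrict $\lambda$ to $V_\phi\setminus W$ and obtain a sequence $y_1,\dots,y_p$. I claim this is an invisible order of $G_\phi[V_\phi\setminus W]$, which will immediately imply that this induced subgraph is a union of paths and hence that $W$ is visible. To verify the claim, take an edge $\{y_a,y_b\}\in E_\phi$ with $a<b$, say $y_a = x_i$ and $y_b = x_j$ with $i<j$. Since $x_i\notin W$ and $\mathrm{vis}(\lambda)\subseteq W$, we have $x_i\notin\mathrm{vis}(\lambda)$, so $j=i+1$. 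But then there is no element of $V_\phi\setminus W$ strictly between $y_a$ and $y_b$ in $\lambda$, whence $b=a+1$, as required.

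For the \emph{only if} direction, given that $W$ is visible, let $y_1,\dots,y_p$ be an invisible order of $G_\phi[V_\phi\setminus W]$, which exists by the observation in the paragraph preceding the lemma. Let $w_1,\dots,w_q$ be any order of $W$, and set
\[
\lambda = (w_1,\dots,w_q,\,y_1,\dots,y_p).
\]
I would then verify $\mathrm{vis}(\lambda)\subseteq W$. Take any $x_i\notin W$; then $x_i = y_a$ with $i=q+a$. For any edge $\{x_i,x_j\}\in E_\phi$ with $j>i$, the index $j$ lies in the ``$y$-block'' (since all $W$-elements come before in $\lambda$), so $x_j = y_b$ with $j = q+b$ and $b>a$. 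By the invisible order property, $b=a+1$, hence $j=i+1$, so $x_i\notin\mathrm{vis}(\lambda)$.

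Neither direction presents a real obstacle; the only subtle point is to notice that $\mathrm{vis}(\lambda)$ only penalises \emph{forward} long-range edges, which is exactly what allows us to place all $W$-vertices at the front without creating any new visibility obligations on the remaining vertices. This asymmetry is the crux of the argument.
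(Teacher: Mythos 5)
Your proof is correct and follows essentially the same route as the paper: the only-if direction is identical (place $W$ first, then an invisible order of $G_\phi[V_\phi\setminus W]$), and your if direction uses the same key observation (vertices outside $\mathrm{vis}(\lambda)$ have only immediate-successor forward edges), merely applied directly to $V_\phi\setminus W$ where the paper first shows $\mathrm{vis}(\lambda)$ itself is visible and then invokes the fact that supersets of visible sets are visible.
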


\begin{proof}
(If) It is easy to verify that every $\mathrm{vis}(\lambda)$ 
is a visible set of $G_\phi$. 
In fact, for all $i<j$, if $x_i,x_j\notin \mathrm{vis}(\lambda)$
and $\{x_i,x_j\}\in E_\phi$, then $i+1=j$. 

(Only if) Define the order $\lambda$ on $V_\phi$ as follows.
First list the vertices of $W$ in any order.
Then list the remaining vertices according to an invisible order
of the vertices of $G_\phi[V_\phi\setminus W]$. 
Obviously $\mathrm{vis}(\lambda) \subseteq W$. 
\end{proof}
 
\begin{theorem}\label{thm:matching}
Pattern $\phi \equiv \bool(\phi_1, \dots, \phi_m)$
can be matched by a total deterministic \abb{v$_k$i-ptt}
where $k = \#(W)$ for a visible set $W$ of $G_\phi$.
In particular, if $G_\phi$ is a union of paths, then $\phi$ 
can be matched by a total deterministic \abb{i-ptt}. 
\end{theorem}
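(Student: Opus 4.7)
The plan is to use the pattern-matching \abb{ptt} $\cA$ that was described in detail just before the theorem, combined with the order selection guaranteed by Lemma~\ref{lem:vis}. Given a visible set $W$ of $G_\phi$ with $\#(W)=k$, I first apply Lemma~\ref{lem:vis} to obtain an order $\lambda=(x_1,\dots,x_n)$ of $V_\phi=\cX$ such that $\mathrm{vis}(\lambda)\subseteq W$. I then instantiate the \abb{ptt} $\cA$ with precisely this order, declaring $C_\mathrm{v}=\{c_i\mid x_i\in\mathrm{vis}(\lambda)\}$ and $C_\mathrm{i}=\{c_i\mid x_i\notin\mathrm{vis}(\lambda)\}$ (so $c_{n-1}$ is invisible and the role of $c_n$ is played by the head position). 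Since $\#(C_\mathrm{v})=\#(\mathrm{vis}(\lambda))\leq \#(W)=k$, this is a \abb{v$_k$i-ptt}, and it is total and deterministic by construction.

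Next I have to verify that the required \abb{mso} tests can indeed be carried out. At the moment pebble $c_{j-1}$ sits on top of the stack at a candidate position $u_j$ for $x_j$, $\cA$ must evaluate every formula $\phi_\ell$ whose free variables lie in $\{x_1,\dots,x_j\}$. Such a formula refers only to pebble positions $u_i$ with $\{x_i,x_j\}\in E_\phi$ and $i<j$ (and to the head position for $x_j$). If $i=j-1$, then $c_i$ is the topmost pebble and hence observable; if $i<j-1$, then the very definition of $\mathrm{vis}(\lambda)$ forces $x_i\in\mathrm{vis}(\lambda)$, so $c_i$ is visible and therefore observable as well. Consequently the whole test is an \abb{mso} test on the observable configuration, which~$\cA$ may perform by Theorem~\ref{thm:mso}; in fact, since $c_{j-1}$ is always the topmost pebble at the moment of testing, an \abb{mso} test on the visible configuration (Lemma~\ref{lem:visiblesites}) already suffices after absorbing $c_{j-1}$ into the \abb{mso} formula. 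Enumeration of candidates in document order and output generation for each match work exactly as in the description preceding the theorem, including the (routine) retrieval of the subtrees $t|_{u_i}$ for invisible pebbles by lifting the stack down to~$c_i$.

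Finally, for the special case where $G_\phi$ is a union of paths, the empty set $W=\nothing$ is a visible set of $G_\phi$, so $k=0$ and the above construction yields a total deterministic \abb{i-ptt}.

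There is essentially no obstacle: all the technical weight has already been absorbed into the construction of $\cA$ given before the theorem, into Lemma~\ref{lem:vis} (which converts a combinatorial property of $G_\phi$ into an order on the variables with the right visibility profile), and into Lemma~\ref{lem:visiblesites}/Theorem~\ref{thm:mso} (which justifies the \abb{mso} tests). The only point that needs care is the bookkeeping that checks $\mathrm{vis}(\lambda)\subseteq W$ implies that every variable appearing in an $\phi_\ell$ together with a strictly earlier non-adjacent variable has been made visible, which is exactly the content of the definition of $\mathrm{vis}(\lambda)$.
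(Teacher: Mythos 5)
Your proposal is correct and follows essentially the same route as the paper: apply Lemma~\ref{lem:vis} to obtain an order $\lambda$ with $\mathrm{vis}(\lambda)\subseteq W$, and then run the pattern-matching \abb{ptt} described before the theorem with $C_\mathrm{v}=\{c_i\mid x_i\in\mathrm{vis}(\lambda)\}$, the tests being justified by Theorem~\ref{thm:mso}. One small caveat: your parenthetical claim that a test on the \emph{visible} configuration (Lemma~\ref{lem:visiblesites}) would already suffice is not justified, since when $j-1\notin\mathrm{vis}(\lambda)$ the topmost pebble $c_{j-1}$ is invisible and its position $u_{j-1}$ (which the head, sitting at $u_j$, does not mark) is only available in the \emph{observable} configuration — but this aside is inessential, as your main argument correctly relies on Theorem~\ref{thm:mso}.
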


\begin{proof}
By Lemma~\ref{lem:vis} there is an order $\lambda$ of $V_\phi$ 
such that $\mathrm{vis}(\lambda)\subseteq W$.
Hence at most $\#(W)$ visible pebbles suffice.
If $G_\phi$ is a union of paths, then $W=\nothing$ is a visible set. 
\end{proof}

Lemma~\ref{lem:vis} shows that finding an order $\lambda$ 
for which $\mathrm{vis}(\lambda)$ is of minimal size, is the same as 
finding a visible set $W$ of minimal size. 
Unfortunately, this is an NP-complete problem. More precisely,
the problem whether for a given graph $G=(V,E)$ and a given number $k$
there is a set of vertices $V'\subseteq V$ with \mbox{$\#(V')\geq k$} such that
$G[V']$ is a union of paths, is NP-complete (see Problem~GT21 of~\cite{GJ}).

We now give some examples of visible sets of a graph $G$.
It suffices to take as visible vertices
those of degree $\ge 3$ in $G$ 
(plus one vertex in each connected component that is a cycle). 
But often one can choose a smaller set.
\newcommand{\ladder}{
\draw (0,2) -- (0,0) -- (4,0) -- (4,2) -- (0,2);
\draw (1,2) -- (1,0);
\draw (2,2) -- (2,0);
\draw (3,2) -- (3,0);
\draw [fill] (0,2) circle [radius=0.07];
\draw [fill] (0,1) circle [radius=0.07];
\draw [fill] (0,0) circle [radius=0.07];
\draw [fill] (1,2) circle [radius=0.07];
\draw [fill] (1,1) circle [radius=0.07];
\draw [fill] (1,0) circle [radius=0.07];
\draw [fill] (2,2) circle [radius=0.07];
\draw [fill] (2,1) circle [radius=0.07];
\draw [fill] (2,0) circle [radius=0.07];
\draw [fill] (3,2) circle [radius=0.07];
\draw [fill] (3,1) circle [radius=0.07];
\draw [fill] (3,0) circle [radius=0.07];
\draw [fill] (4,2) circle [radius=0.07];
\draw [fill] (4,1) circle [radius=0.07];
\draw [fill] (4,0) circle [radius=0.07];
}
\newcommand{\smallladder}{
\draw (0,2) -- (0,0) -- (3,0) -- (3,2) -- (0,2);
\draw (1.5,2) -- (1.5,0);
\draw [fill] (0,2) circle [radius=0.07];
\draw [fill] (0,1) circle [radius=0.07];
\draw [fill] (0,0) circle [radius=0.07];
\draw [fill] (1.5,2) circle [radius=0.07];
\draw [fill] (1.5,1) circle [radius=0.07];
\draw [fill] (1.5,0) circle [radius=0.07];
\draw [fill] (3,2) circle [radius=0.07];
\draw [fill] (3,1) circle [radius=0.07];
\draw [fill] (3,0) circle [radius=0.07];
}
\begin{figure}[b]
\begin{center}
\begin{tikzpicture}
\begin{scope}[scale=0.8]
%\draw[help lines] (0,0) grid (10,5);
\ladder
\draw (0,1) circle [radius=0.152];
\draw (1,1) circle [radius=0.152];
\draw (2,1) circle [radius=0.152];
\draw (3,1) circle [radius=0.152];
\begin{scope}[shift={(6,0)}]
\ladder
\end{scope}
\draw (7,2) circle [radius=0.152];
\draw (8,0) circle [radius=0.152];
\draw (10,2) circle [radius=0.152];
\begin{scope}[shift={(0,3)}]
\ladder
\end{scope}
\draw (1,3) circle [radius=0.152];
\draw (1,5) circle [radius=0.152];
\draw (2,3) circle [radius=0.152];
\draw (2,5) circle [radius=0.152];
\draw (3,3) circle [radius=0.152];
\draw (3,5) circle [radius=0.152];
\begin{scope}[shift={(6,3)}]
\ladder
\end{scope}
\draw (7,3) circle [radius=0.152];
\draw (7,5) circle [radius=0.152];
\draw (9,3) circle [radius=0.152];
\draw (9,5) circle [radius=0.152];
\end{scope}
\end{tikzpicture}
\end{center}
  \caption{Visible sets of different sizes.}
  \label{fig:vissets}
\end{figure}
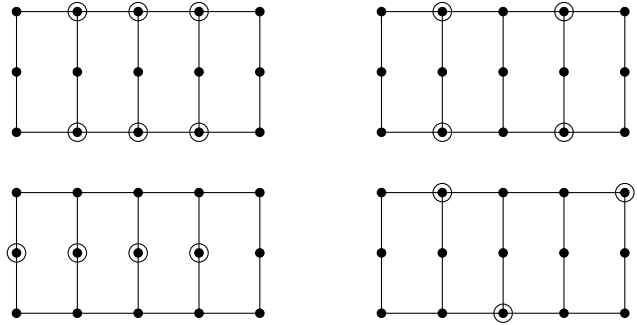
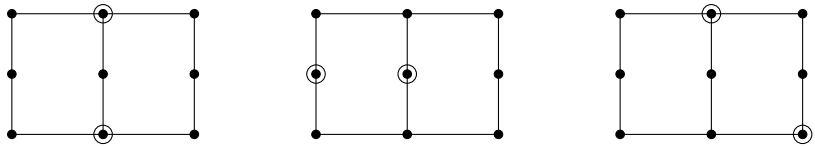
\begin{figure}
\begin{center}
\begin{tikzpicture}
\begin{scope}[scale=0.8]
%\draw[help lines] (0,0) grid (13,2);
\smallladder
\draw (1.5,2) circle [radius=0.152];
\draw (1.5,0) circle [radius=0.152];
\begin{scope}[shift={(5,0)}]
\smallladder
\end{scope}
\draw (5,1) circle [radius=0.152];
\draw (6.5,1) circle [radius=0.152];
\begin{scope}[shift={(10,0)}]
\smallladder
\end{scope}
\draw (11.5,2) circle [radius=0.152];
\draw (13,0) circle [radius=0.152];
\end{scope}
\end{tikzpicture}
\end{center}
  \caption{Three visible sets of minimal size.}
  \label{fig:minsizevis}
\end{figure}

\begin{example}
If $G$ is a cycle or a star, then it has a visible set $W$ with $\#(W)=1$
(for a cycle any singleton is a visible set, and for a star the visible set $W$ consists of 
the centre vertex). 

In Figs.~\ref{fig:vissets} and~\ref{fig:minsizevis} 
we show graphs with the vertices 
of a visible set $W$ encircled. For the graph $G$ in Fig.~\ref{fig:vissets}, 
the upper left $W$ consists of all vertices of degree~3. It is not minimal, 
in the sense that it has a proper subset that is also a visible set, 
as shown at the upper right. This one \emph{is} minimal, 
because dropping one of the vertices from $W$ produces a vertex of degree~3 
in the complement. Another minimal visible set (of the same size) 
is shown at the lower left: dropping the leftmost vertex of $W$ produces a cycle, 
and dropping one of the other vertices produces two vertices of degree~3. 
Finally, a visible set of size~3 is shown at the lower right. 
It is of minimal size, i.e., $\#(W)\geq 3$ for every visible set $W$ of $G$. 
In fact, removing a vertex of degree~2 from $G$ leaves a graph with two disjoint 
cycles that both must be broken, whereas removing a vertex of degree~3 from~$G$ 
either leaves a graph with two disjoint cycles or a graph with a cycle and a 
vertex of degree~3 of which the neighbourhood is disjoint with that cycle.
Thus, any pattern $\phi$ such that $G_\phi$ is isomorphic to $G$ 
can be matched with three visible pebbles.  

Visible sets of minimal size need not be unique. 
For the graph in Fig.~\ref{fig:minsizevis}, 
three different visible sets of minimal size are shown. 
\end{example}

If we allow matches to occur more than once in the output, then
Theorem~\ref{thm:matching} is not optimal (still assuming that 
the order $\lambda$ is irrelevant). Using the boolean laws,
the \abb{mso} formula $\phi \equiv \bool(\phi_1, \dots, \phi_m)$ can be written 
as a disjunction $\phi \equiv \psi_1\vee \cdots \vee \psi_k$ 
where each $\psi_i$ is a conjunction of some of the formulas 
$\phi_1, \dots, \phi_m$ or their negations. Now the \abb{ptt} $\cA$ 
can execute the queries  
`$\mbox{\tt for } \cX \mbox{ \tt where } \psi_i \mbox{ \tt return } r$'
consecutively for $i=1,\dots,k$. 
Obviously, $G_{\psi_i}$ is a subgraph of $G_\phi$ for every $i\in[1,k]$. 
Hence every visible set of $G_\phi$ is also a visible set of $G_{\psi_i}$,
and so the minimal size of the visible sets of $G_{\psi_i}$ is at most
the minimal size of the visible sets of $G_\phi$. Thus, pattern matching 
formulas $\psi_1,\dots,\psi_k$ consecutively needs at most 
as many visible pebbles as pattern matching $\phi$, but it may need less. 
As a simple example, let 
$\phi \equiv \phi_1(x,y)\wedge (\phi_2(y,z) \vee \phi_3(x,z))$.
Then $G_\phi$ is a triangle, which needs one visible pebble. 
But $\phi \equiv \psi_1\vee \psi_2$ where 
$\psi_1\equiv \phi_1(x,y)\wedge \phi_2(y,z)$ and
$\psi_2\equiv \phi_1(x,y)\wedge \phi_3(x,z)$. 
Both $G_{\psi_1}$ and $G_{\psi_2}$ are (unions of) paths, 
which do not need visible pebbles. 
Thus, $\phi$ can be matched by an \abb{i-ptt}. 
However, all matches for which $\phi_1\wedge \phi_2\wedge \phi_3$ 
holds occur twice in the output. 

We finally discuss another way to reduce the number of visible pebbles.
Suppose that, for some $i\in[1,m]$, the formula $\phi_i$
has exactly two free variables $x,y\in \cX$. 
Thus, the edge $\{x,y\}$ is in $E_\phi$.   
Suppose moreover that the trip defined by $\phi_i(x,y)$ is functional. 
Suppose finally that $W$ is a visible set of $G_\phi$ with $x,y\in W$. 
Then all other edges of $G_\phi$ incident with $y$ can be redirected
to $x$, and $y$ can be dropped from $W$. To be precise, every formula 
$\phi_j$ that contains the free variable $y$ can be changed into the 
formula $\forall y (\phi_i(x,y) \to \phi_j)$ that contains 
the free variable $x$ instead of $y$. 
The resulting query is obviously equivalent to the given one.

%% ============================================================
%% ============================================================
\section{Pebble Forest Transducers}\label{sec:pft}
%% ============================================================
%% ============================================================

The \abb{ptt} transforms ranked trees, whereas XML documents 
are unranked forests. However, it is not difficult to use, or slightly adapt, 
the \abb{ptt} for the transformation of forests. The most obvious, 
and well-known way to do this, is to encode the forests as binary trees. 
Let $\family{enc}'$ be the class of all encodings $\enc'$
(one encoding for each input alphabet $\Sigma$), and 
let $\family{dec}$ be the class of all decodings $\dec$
(one decoding for each output alphabet $\Delta$).
Then we can view the class $\family{enc}'\circ\VIPTT{k}\circ\family{dec}$ 
as the class of forest transductions realized by \abb{v$_k$i-ptt}'s. 
For the input forest $f$ this is a natural definition, because it is 
quite easy to visualize a \abb{ptt} walking on $\enc'(f)$ 
as actually walking on $f$ itself. For the output forest $g$ this is also 
a natural definition, as it is, in fact, easy to transform a \abb{ptt} 
that outputs $\enc(g)$ into a (slightly adapted type of) \abb{ptt} that 
directly outputs $g$ itself: change every output rule 
$\tup{q,\sigma,j,b}\to \delta(\tup{q_1,\stay},\tup{q_2,\stay})$ into 
$\tup{q,\sigma,j,b}\to \delta(\tup{q_1,\stay})\tup{q_2,\stay}$, 
and every output rule $\tup{q,\sigma,j,b}\to e$ into 
$\tup{q,\sigma,j,b}\to \epsilon$.
The definition is also natural with respect to typechecking, because 
a forest language $L$ is regular if and only if the tree language $\enc(L)$
is regular, and similarly for $\enc'(L)$. Since the transformation of the 
involved grammars can obviously be done in polynomial time, 
Theorem~\ref{thm:typecheck} in Section~\ref{sec:typechecking} also holds for 
\abb{v$_k$i-ptt} as forest transducers. 

We observe here that the class $\family{enc}'\circ\VIPTT{k}\circ\family{dec}$
does not depend on the chosen encodings and decodings, i.e., $\family{enc}'$
can be replaced by the class $\family{enc}$ of all encodings $\enc$, 
and $\family{dec}$ by the class $\family{dec}'$ of all decodings $\dec'$.
In fact, a \abb{ptt} that walks on $\enc'(f)$ can easily be simulated 
by one that walks on $\enc(f)$: the original label $\sigma^{kl}$ can be 
determined by inspecting the children of the node with label~$\sigma$.
Vice versa, a \abb{ptt} that walks on $\enc(f)$ can be simulated by one 
that walks on $\enc'(f)$: a node with label, e.g., $\sigma^{01}$ represents 
the original node and its first child with label $e$; the difference 
between these nodes can be stored in the finite state and in the pebble colours 
of the simulating \abb{ptt}. Moreover, a \abb{ptt} that outputs $\enc'(g)$ 
can easily be simulated by one that outputs $\enc(g)$: change, e.g., 
the rule $\tup{q,\sigma,j,b}\to \delta^{01}(\tup{q',\stay})$ into the two rules
$\tup{q,\sigma,j,b}\to \delta(\tup{p,\stay},\tup{q',\stay})$ and 
$\tup{p,\sigma,j,b}\to e$ where $p$ is a new state. 
Vice versa, a \abb{ptt} $\cA$ that outputs $\enc(g)$ can be simulated by  
a \abb{ptt} $\cA$ that outputs $\enc'(g)$, 
but that requires look-ahead (Theorem~\ref{thm:look-ahead}), as follows. 
If $\cA$ has an output rule
$\tup{q,\sigma,j,b}\to \delta(\tup{q_1,\stay},\tup{q_2,\stay})$,  
then $\cA'$ has the rule 
$\tup{q,\sigma,j,b,\cB_{01}}\to \delta^{01}(\tup{q_2,\stay})$
where $\cB_{01}$ is a look-ahead test that finds out whether $\cA$ 
can generate $e$ when started in state $q_1$ 
in the current situation. To be precise, 
$\cB_{01}$ is obtained from $\cA$ by changing its set of initial states 
into $\{q_1\}$ and removing all output rules that do not output $e$. 
And of course, $\cA'$ has similar rules for the other symbols~$\delta^{ij}$. 

So far so good, in particular for the input forest $f$. There is, however, 
another natural possibility for the output forest $g$, 
as introduced and investigated in~\cite{PerSei} for macro tree transducers. 
It is quite natural to allow a \abb{ptt} 
that directly outputs~$g$, as discussed above, to not only have 
output rules with right-hand sides $\delta(\tup{q_1,\stay})\tup{q_2,\stay}$ 
and $\epsilon$, but also right-hand sides 
$\tup{q_1,\stay}\tup{q_2,\stay}$ and $\delta(\tup{q',\stay})$ that
realize the concatenation of forests and the formation of a tree out of a forest.
  
Accordingly we define a \emph{tree-walking forest transducer with nested pebbles}
(abbreviated \abb{pft}) to be the same as a \abb{ptt} $\cM$, 
except that its output alphabet is unranked, and its output rules are of the form 
$\tup{q,\sigma,b,j} \to \zeta$
with  $\zeta = \delta(\tup{q',\mathrm{stay}})$
introducing a new node with label $\delta$
and generating a forest from state $q'$,
or $\zeta = \tup{q_1,\mathrm{stay}}\, \tup{q_2,\mathrm{stay}}$
concatenating two forests,
or $\zeta = \epsilon$ generating the empty forest. 
Note that a right-hand side $\delta(\tup{q_1,\stay})\tup{q_2,\stay}$
is also allowed, as it can easily be simulated in two steps. 

Formally, an output form of the \abb{pft} $\cM$ on an input tree $t$ is defined to be
a forest in $F_\Delta(\con(t))$. 
Let $s$ be an output form and let $v$ be a leaf of $s$ 
with label $\tup{q,u,\pi}\in \con(t)$. If the rule 
$\tup{q,\sigma,b,j} \to \zeta$ is relevant to $\tup{q,u,\pi}$
then we write $s\Rightarrow_{t,\cM} s'$ 
where $s'$ is obtained from $s$ as follows. 
If the rule is not an output rule, then the label of $v$ is changed 
in the same way as for the \abb{pta} and \abb{ptt}. 
If $\;\zeta = \delta(\,\tup{q',\mathrm{stay}}\,)$
then node $v$ is replaced by the subtree $\delta(\tup{q',u,\pi})$.
If $\;\zeta = \tup{q_1,\mathrm{stay}}\, \tup{q_2,\mathrm{stay}}$
then node $v$ is replaced by the two-node forest 
$\tup{q_1,u,\pi}\tup{q_2,u,\pi}$. 
And if $\zeta = \epsilon$ then the node $v$ is removed from~$s$. 
The transduction realized by $\cM$ consists of all 
$(t,s)\in T_\Sigma\times F_\Delta$ such that 
$\tup{q_0,\rt_t} \Rightarrow^*_{t,\cM} s$ for some $q_0\in Q_0$. 
Thus, we have defined the \abb{pft} as a transformer of ranked trees 
into unranked forests. The corresponding classes of transductions 
are denoted by $\VIPFT{k}$.
For forest transformations one can of course consider the classes 
$\family{enc}'\circ \VIPFT{k}$.

\begin{lemma}\label{lem:pftvsptt}
For every $k\geq 0$,
\[
(1) \;\;\VIPTT{k}\circ\family{dec} \subseteq \VIPFT{k} \quad\text{and}\quad
(2) \;\;\VIPFT{k}\circ\family{enc} \subseteq \VIPTT{k}\circ\IdPTT
\]
and similarly for the deterministic case. 
\end{lemma}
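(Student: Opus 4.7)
For (1), given a \abb{v$_k$i-ptt} $\cM$ whose output ranked alphabet has the form $\Delta\cup\{e\}$ with every $\delta\in\Delta$ of rank~$2$ and $e$ of rank~$0$, I plan to construct the \abb{v$_k$i-pft} $\cM'$ by a purely local rewriting of the rules. All non-output rules are copied unchanged, each output rule $\tup{q,\sigma,j,b}\to e$ is replaced by $\tup{q,\sigma,j,b}\to \epsilon$, and each rule $\tup{q,\sigma,j,b}\to \delta(\tup{q_1,\stay},\tup{q_2,\stay})$ is replaced by $\tup{q,\sigma,j,b}\to \delta(\tup{q_1,\stay})\tup{q_2,\stay}$. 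Since $\dec$ is defined by exactly these local equations ($\dec(e)=\epsilon$ and $\dec(\delta(s_1,s_2))=\delta(\dec(s_1))\cdot\dec(s_2)$), a routine induction on the length of computations gives $\tau_{\cM'}=\tau_{\cM}\circ\dec$, and the construction manifestly preserves determinism.

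For (2), my plan is to split the task into two stages. The first stage is an analogous local rewriting: from a \abb{v$_k$i-pft} $\cM$ with output alphabet~$\Delta$, I will define a \abb{v$_k$i-ptt} $\cM_1$ whose output ranked alphabet is $\Gamma=\{\tilde\delta\mid\delta\in\Delta\}\cup\{C,E\}$, with $\tilde\delta$ of rank~$1$, $C$ of rank~$2$, and $E$ of rank~$0$, by replacing each output rule $\tup{q,\sigma,j,b}\to\delta(\tup{q',\stay})$ of $\cM$ by $\tup{q,\sigma,j,b}\to\tilde\delta(\tup{q',\stay})$, each $\tup{q,\sigma,j,b}\to\tup{q_1,\stay}\tup{q_2,\stay}$ by $\tup{q,\sigma,j,b}\to C(\tup{q_1,\stay},\tup{q_2,\stay})$, and each $\tup{q,\sigma,j,b}\to\epsilon$ by $\tup{q,\sigma,j,b}\to E$. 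Every output tree $t'$ of $\cM_1$ then encodes an expression for a forest $\fl(t')$ under $\fl(E)=\epsilon$, $\fl(\tilde\delta(t''))=\delta(\fl(t''))$, and $\fl(C(t_1,t_2))=\fl(t_1)\cdot\fl(t_2)$, and a simple induction shows $\tau_{\cM}=\tau_{\cM_1}\circ\fl$; also $\cM_1$ is deterministic whenever $\cM$ is.

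The second stage, and the real obstacle, is to realise the ``flattening-and-encoding'' map $t'\mapsto\enc(\fl(t'))$ by a deterministic \abb{i-ptt} $\cM_2$. The natural recurrence is in continuation-passing style: setting $K_k(E)=k$, $K_k(\tilde\delta(t''))=\delta(K_e(t''),k)$, and $K_k(C(t_1,t_2))=K_{K_k(t_2)}(t_1)$, the goal value is $K_e(t')$. Since $k$ may itself be an arbitrarily large binary tree it cannot be stored in the finite state, but it can be represented implicitly by the current position in $t'$ together with the invisible pebble stack. I plan to equip $\cM_2$ with two principal states $q_e$ and $q_s$ (for ``empty continuation'' and ``stack-determined continuation''), an auxiliary ``walking-up'' state $q_\mathrm{up}$, and two invisible colours $d_e,d_s$ dropped only on $C$-nodes. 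On a $C$-node in state $q_e$ (resp.\ $q_s$), $\cM_2$ drops $d_e$ (resp.\ $d_s$) and descends to the left subtree in state $q_s$; on an $E$-node in state $q_e$ it outputs $e$; on a $\tilde\delta$-node in state $q_e$ it outputs $\delta(\tup{q_e,\down_1},e)$; on an $E$-node in state $q_s$ it walks up in state $q_\mathrm{up}$ to the first $C$-ancestor carrying the observable top pebble, lifts it, and descends into its right subtree in state $q_e$ or $q_s$ according to the colour (outputting $e$ if the root is reached without meeting a pebble); and on a $\tilde\delta$-node in state $q_s$ it outputs $\delta(\tup{q_e,\down_1},\tup{q_\mathrm{up},\up})$. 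The bulk of the remaining work will be to verify that the left-hand sides of these rules are pairwise distinct, so that $\cM_2$ is deterministic, and to prove by induction on $t'$ that $\cM_2$ indeed computes $K_e(t')$; combining the two stages then gives $\tau_{\cM}\circ\enc=\tau_{\cM_1}\circ\tau_{\cM_2}\in\VIPTT{k}\circ\IdPTT$, while deterministic $\cM$ yields $\cM_1\in\VIdPTT{k}$ as required.
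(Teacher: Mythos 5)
Your proof of (1) and of the first stage of (2) coincides with the paper's: the local rule rewriting for (1) is exactly the paper's, and your ranked alphabet $\Gamma$ is the paper's $\Delta_1=\Delta\cup\{\at^{(2)},e^{(0)}\}$ (each $\delta\in\Delta$ given rank~$1$), with the same factorization $\tau_\cM=\tau_{\cM_1}\circ\fl$. Where you genuinely diverge is in showing $\fl\circ\enc\in\IdPTT$. The paper notes that $\fl$ is realized by a trivial one-state \abb{dtl} program and invokes Theorem~\ref{thm:dtl} ($\dDTL\subseteq\IdPTT\circ\family{dec}$), so the continuation machinery is hidden inside that general simulation (at the cost of a forward reference); you instead build the \abb{i-ptt} for $t'\mapsto\enc(\fl(t'))$ directly, using the invisible pebble stack to represent the continuation $k$ of your recurrence $K_k$. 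Your construction is sound: pebbles are dropped only on $C$-nodes when descending into their left subtrees, so the pebbled nodes always form a chain of ancestors of the head and the topmost pebble is the one found first when walking up; moreover a configuration in state $q_e$ or $q_s$ never observes a pebble at the head position, so the left-hand sides of your rules are pairwise disjoint and $\cM_2$ is deterministic. One simplification is worth noting, and is remarked in the paper: the continuation of a node is already determined by its position in $t'$ --- walking up, a left child of a $C$-node means ``evaluate the right sibling'', a right child means ``continue upwards'', and a child of a $\tilde\delta$-node or the root means ``output $e$'' --- so $\fl\circ\enc$ is in fact in $\dTT$, and your pebbles, while harmless for the statement being proved, are not needed.
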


\begin{proof}
Inclusion~(1) is obvious from the discussion above:
change every rule 
$\tup{q,\sigma,j,b}\to \delta(\tup{q_1,\stay},\tup{q_2,\stay})$ into 
$\tup{q,\sigma,j,b}\to \delta(\tup{q_1,\stay})\tup{q_2,\stay}$, 
and every rule $\tup{q,\sigma,j,b}\to e$ into 
$\tup{q,\sigma,j,b}\to \epsilon$.

The proof of inclusion~(2) is similar to the proof in~\cite{PerSei}
that every macro forest transducer can be simulated by two macro tree transducers.   
Let $\cM$ be a \abb{v$_k$i-pft} with (unranked) output alphabet $\Delta$.
Let $\Delta_1$ be the ranked alphabet $\Delta\cup\{\at^{(2)},e^{(0)}\}$,
where every element of $\Delta$ has rank~1. 
We now obtain the \mbox{\abb{v$_k$i-ptt}} $\cM'$ from $\cM$ by changing every output rule
$\tup{q,\sigma,b,j} \to \tup{q_1,\stay}\, \tup{q_2,\stay}$ into 
$\tup{q,\sigma,b,j} \to \at(\tup{q_1,\stay},\tup{q_2,\stay})$ and every output rule
$\tup{q,\sigma,b,j} \to \epsilon$ into $\tup{q,\sigma,b,j} \to e$.
Let `$\fl$' be the mapping from $T_{\Delta_1}$ to $F_\Delta$ defined by
$\fl(\at(t_1,t_2)=\fl(t_1)\fl(t_2)$, $\fl(\delta(t))=\delta(\fl(t))$ and $\fl(e)=\epsilon$. 
Then obviously $\tau_\cM=\tau_{\cM'}\circ \fl$. 
Thus, it remains to show that the mapping $\fl\circ\enc$ is in $\IdPTT$.
We will prove this after Theorem~\ref{thm:dtl}. 
It is, in fact, not hard to see that $\fl\circ\enc$ is even in $\dTT$.
\end{proof}

\smallpar{Typechecking}
The inverse type inference problem and the typechecking problem
are defined for \abb{pft}'s as in Section~\ref{sec:typechecking},
except that $G_\mathrm{out}$ is a regular forest grammar rather 
than a regular tree grammar.
It follows from Lemma~\ref{lem:pftvsptt}(2),
together with Lemma~\ref{lem:nul-decomp}, Theorem~\ref{thm:decomp}, 
and Propositions~\ref{prop:invtypeinf} and~\ref{prop:typecheck}
that these problems can be solved for \abb{v$_k$i-pft}'s in 
$(k\!+\!4)$-fold and $(k\!+\!5)$-fold exponential time.
However, it is shown in~\cite[Section~7]{Eng09} that they
can be solved for \abb{v$_k$-pft}'s in 
the same time as for \abb{v$_k$-ptt}'s, i.e., in ($k+1$)-fold and 
($k+2$)-fold exponential time, respectively. 
This is due to the fact (shown in~\cite[Lemma~4]{Eng09}) 
that inverse type inference for the mapping $\fl\circ\enc$
can be solved in polynomial time, cf. the proof of Lemma~\ref{lem:pftvsptt}.
For exactly the same reason a similar result holds for \abb{v$_k$i-pft}'s.
In other words, Theorem~\ref{thm:typecheck} also holds for \abb{v$_k$i-pft}'s. 

\begin{theorem}\label{thm:typecheck-pft}
For fixed $k\geq 0$, 
the inverse type inference problem and the typechecking problem are solvable
for \abb{v$_k$i-pft}'s in $(k\!+\!2)$-fold and $(k\!+\!3)$-fold exponential time,
respectively.
\end{theorem}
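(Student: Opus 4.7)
The plan is to lift Theorem~\ref{thm:typecheck} from \abb{v$_k$i-ptt}'s to \abb{v$_k$i-pft}'s by passing through the same decomposition $\tau_\cM = \tau_{\cM'}\circ\fl$ that was used in the proof of Lemma~\ref{lem:pftvsptt}(2), and by invoking \cite[Lemma~4]{Eng09} to show that the inverse type inference for the flattening step costs only polynomial time. Concretely, given a \abb{v$_k$i-pft} $\cM$ with unranked output alphabet $\Delta$, I would construct (in polynomial time) the \abb{v$_k$i-ptt} $\cM'$ over the ranked alphabet $\Delta_1=\Delta\cup\{\at^{(2)},e^{(0)}\}$ exactly as in Lemma~\ref{lem:pftvsptt}(2), so that $\tau_\cM=\tau_{\cM'}\circ\fl$, where $\fl:T_{\Delta_1}\to F_\Delta$ is the flattening operation.

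For the inverse type inference problem, let $G_\mathrm{out}$ be a regular forest grammar specifying a forest language $L\subseteq F_\Delta$. First I would observe that $\enc(L)$ is effectively regular and a regular tree grammar for it can be computed in polynomial time from $G_\mathrm{out}$. Then I would apply \cite[Lemma~4]{Eng09}, which provides, in polynomial time, a regular tree grammar~$G$ over $\Delta_1$ for the language $(\fl\circ\enc)^{-1}(L) = \fl^{-1}(L)$ (this is the crux: the mapping $\fl\circ\enc$ admits polynomial-time inverse type inference despite the fact that $\fl$ collapses concatenation nodes into forest concatenations). Finally I would apply Theorem~\ref{thm:typecheck} to $\cM'$ and~$G$ to obtain, in $(k+2)$-fold exponential time, a regular tree grammar for $\tau_{\cM'}^{-1}(L(G))=\tau_\cM^{-1}(L)$. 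Since the first two steps add only a polynomial overhead, the total running time remains $(k+2)$-fold exponential.

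For the typechecking problem, I would use the standard reduction
\[
\tau_\cM(L(G_\mathrm{in}))\subseteq L(G_\mathrm{out})
\;\Longleftrightarrow\; L(G_\mathrm{in})\cap \tau_\cM^{-1}(F_\Delta\setminus L(G_\mathrm{out}))=\nothing,
\]
complement the regular forest language $L(G_\mathrm{out})$ (which costs one additional exponential for the determinization of a regular forest grammar), apply the above inverse type inference procedure to the complement, and then test emptiness of intersection with~$L(G_\mathrm{in})$ in polynomial time. Adding one exponential to the $(k+2)$-fold exponential cost of the inverse type inference gives the required $(k+3)$-fold exponential bound.

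The main obstacle is really just a bookkeeping one: I want to avoid the naive route through Lemma~\ref{lem:pftvsptt}(2) combined with Theorem~\ref{thm:typecheck} applied to the composition $\tau_{\cM'}\circ\IdPTT$, because an \abb{i-ptt} decomposes by Lemma~\ref{lem:nul-decomp} into two \abb{tt}'s and Proposition~\ref{prop:invtypeinf}(1) would then cost two further exponentials, yielding only a $(k+4)$/$(k+5)$ bound. The point, borrowed from \cite[Section~7]{Eng09}, is that the particular \abb{i-ptt} hidden in the decomposition is just $\fl\circ\enc$, whose inverse type inference is polynomial time by \cite[Lemma~4]{Eng09}; once this observation is in place, no exponential blow-up beyond that of $\cM'$ itself is incurred, and the theorem follows.
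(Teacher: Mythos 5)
Your proposal is correct and follows essentially the same route as the paper: decompose $\tau_\cM$ as $\tau_{\cM'}\circ\fl$ via Lemma~\ref{lem:pftvsptt}(2), observe that the naive further decomposition of the flattening step would cost two extra exponentials, and instead invoke \cite[Lemma~4]{Eng09} to handle inverse type inference for $\fl\circ\enc$ in polynomial time, so that the bounds of Theorem~\ref{thm:typecheck} carry over unchanged. The paper states this more tersely (citing \cite[Section~7]{Eng09} for the \abb{v$_k$-pft} case and noting the identical reasoning applies here), but the argument is the same.
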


\smallpar{\small{MSO} tests}
It should be clear that Theorem~\ref{thm:mso} also holds for the \abb{pft},
as \abb{mso} tests only concern the input tree. 

\smallpar{Pattern matching}
Pattern matching for forests can be defined in exactly 
the same way as we did for trees in Section~\ref{sec:pattern}.
Since, obviously, Lemma~\ref{lem:encmso} also holds for 
arbitrary $n$-ary patterns instead of trips, we may however assume that 
the input forest $f$ over $\Sigma$ of the query 
\[
\mbox{\tt for } \cX
\mbox{ \tt where } \phi
\mbox{ \tt return } r
\] 
is encoded as a binary tree $t=\enc'(f)$ over $\Sigma'$ for which
we execute the query 
\[
\mbox{\tt for } \cX
\mbox{ \tt where } \phi'
\mbox{ \tt return } r
\] 
where $\phi'$ is the encoding of the formula $\phi$ 
according to Lemma~\ref{lem:encmso}.
Consequently, we can use a \abb{pft} to execute this query 
and produce for each match of $\phi'(x_1,\dots,x_n)$ the required output $r$.
We may now assume that $r$ is a forest rather than a tree, and we may 
for simplicity assume that $r$ is of the form $\mu(x_1\cdots x_n)$ 
for some output symbol $\mu$. Thus, the output for each match 
$\phi'(u_1,\dots,u_n)$ is $\mu(f|_{u_1}\cdots f|_{u_n})$,  
and the output forest is of the form $s= r_1r_2\cdots r_ke$
where $r_1,\dots,r_k$ are the outputs corresponding to all the matches.
Note that $e$ is another output symbol, and so $\Delta=\Sigma\cup\{\mu,e\}$. 
It should be clear how the total deterministic \abb{ptt} $\cA$ 
in Section~\ref{sec:pattern} can be changed into 
a total deterministic \abb{pft} that executes this query. The only small problem 
is that $\cA$ outputs the encoded subtrees $t|_{u_i}$ rather than the required 
subtrees $f|_{u_i}$. However, a \abb{pft} can easily transform an encoded forest 
$\enc'(f|_u)$ into the forest~$f|_u$, using rules 
$\tup{q,\sigma^{11},j,b}\to \sigma(\tup{q,\down_1})\tup{q,\down_2}$,
$\tup{q,\sigma^{01},j,b}\to \sigma\tup{q,\down_1}$,
$\tup{q,\sigma^{10},j,b}\to \sigma(\tup{q,\down_1})$, and
$\tup{q,\sigma^{00},j,b}\to \sigma$.

From this it should be clear that Theorems~\ref{thm:matchall} 
and~\ref{thm:matching} also hold for forest pattern matching and \abb{pft}. 

\smallpar{Expressive power}
As in~\cite{PerSei}, the \abb{pft} is more powerful than the \abb{ptt}. 
In particular, the \abb{i-pft} is more powerful than the \abb{i-ptt} 
that generates encoded forests, i.e., 
$\IPTT\circ \family{dec}$ is a proper subclass of $\IPFT$. 
In fact, it is well known 
(cf.~\cite[Lemma~7]{EngMan03} and~\cite[Lemma~5.40]{FulVog}), and easy to see, 
that the height of the output tree of a functional \abb{tt} $\cM$ (which means 
that $\tau_\cM$ is a function) is linearly bounded by the size of the input tree:
otherwise $\cM$ would be in a loop and would generate infinitely many output trees 
for that input tree. Since $\IPTT \subseteq \TT \circ \TT$ 
by Lemma~\ref{lem:nul-decomp}, this implies that for a functional \abb{i-ptt}
the height of the output tree is exponentially bounded by 
the size of the input tree. However, the following total deterministic \abb{i-pft} 
$\cM_\text{2exp}$ outputs, for an input tree of size $n$, a forest of length 
double exponential in $n$. Since the height of the encoded output forest is at 
least the length of that forest, this transformation cannot be realized by 
an \abb{i-ptt} that generates encoded forests. 
The transducer~$\cM_\text{2exp}$ is similar to the \abb{i-ptt} $\cM_\text{sib}$ 
of Example~\ref{ex:siberie}, assuming that there are large cities only. 
Thus, using its pebbles, it enumerates $2^n$ itineraries (where $n$ is 
the number of intermediate cities). However, after marking an itinerary, 
it does not output the itinerary, but instead branches into two identical 
subprocesses that continue the enumeration. 
After the last itinerary, $\cM_\text{2exp}$ is branched into 
a forest of $2^{2^n}$ copies of itself, each of which finally outputs one symbol. 
Imitating~$\cM_\text{sib}$, the \abb{i-pft} $\cM_{\text{2exp}}$ 
first walks to the leaf:

$\tup{q_\mathrm{start},\sigma_1,j,\nothing} \to \tup{q_\mathrm{start},\down_1}$

$\tup{q_\mathrm{start},\sigma_0,1,\nothing} \to \tup{q_1,\up}$

\noindent
Then, in state $q_1$, it marks as many cities as possible: 

$\tup{q_1,\sigma_1,1,\nothing} \to \tup{q_1,\drop_c;\up}$

$\tup{q_1,\sigma_1,0,\nothing} \to   
  \tup{q_\mathrm{next},\down_1} 
  \tup{q_\mathrm{next},\down_1}$

\noindent
In state $q_\mathrm{next}$ it continues the search for itineraries
by unmarking the most recently marked city; 
when arriving at the leaf it outputs $e$:

$\tup{q_\mathrm{next},\sigma_1,1,\nothing} \to \tup{q_\mathrm{next},\down_1}$

$\tup{q_\mathrm{next},\sigma_1,1,\{c\}} \to \tup{q_1,\lift_c;\up}$

$\tup{q_\mathrm{next},\sigma_0,1,\nothing} \to e$

\noindent
This ends the description of the \abb{i-pft} $\cM_{\text{2exp}}$.

%% ============================================================
%% ============================================================
\section{Document Transformation}\label{sec:document}
%% ============================================================
%% ============================================================

In this section we compare the \abb{i-ptt} and \abb{i-pft} to 
the document transformation languages \abb{DTL} and \abb{TL},
which transform (unranked) forests. 
We prove that \abb{DTL} can be simulated by the \abb{i-ptt},
and that \abb{TL} has the same expressive power as the \abb{i-pft}. 

The \emph{Document Transformation Language}
\abb{DTL} was introduced and studied in \cite{ManNev00}.
A \emph{program}  in the \abb{DTL} framework
is a tuple $\cP=(\Sigma,\Delta,Q,Q_0,R)$ where $\Sigma$ and $\Delta$ 
are unranked alphabets, $Q$ is a finite set of states, $Q_0\subseteq Q$ is 
the set of initial states, and $R$ is a finite set of \emph{template rules} 
of the form $\tup{q,\phi(x)} \to f$,
where $f$ is a forest over $\Delta$,
the leaves of which 
can additionally be labelled by a \emph{selector}
of the form $\tup{q',\psi(x,y)}$; $q$ and $q'$ are states in $Q$, and 
$\phi$ and $\psi$ are \abb{mso} formulas over $\Sigma$,
with one and two free variables respectively.
Such a rule can be applied in state $q$ at an input node
$x$ that matches $\phi$, i.e., 
satisfies $\phi(x)$.
Then program $\cP$ outputs forest $f$,
where each selector $\tup{q',\psi(x,y)}$ 
is recursively computed as the result of a
sequence of copies of $\cP$,
started in state $q'$ at each of the nodes $y$
that satisfy $\psi(x,y)$,  
the nodes taken in pre-order (i.e., document order).
Thus, $\cP$ ``jumps'' from node $x$ to each node $y$,
according to the trip defined by the \abb{mso} formula~$\psi$.

Formally, a configuration of $\cP$ on input forest $t$ 
is a pair $\tup{p,u}$ where $u$ is a node of $t$ 
and $p$ is either a state or a selector of $\cP$.
An output form of $\cP$ on $t$ is a forest in $F_\Delta(\con(t))$,
where $\con(t)$ is the set of configurations of $\cP$ on $t$. 
As usual, the computation steps of $\cP$ on $t$ are formalized 
as a binary relation $\Rightarrow_{t,\cP}$ on $F_\Delta(\con(t))$.
Let $s$ be an output form and let $v$ be a leaf of $s$ with label 
$\tup{q,u}\in \con(t)$, where $q$ is a state of $\cP$. 
Moreover, let $\tup{q,\phi(x)} \to f$ be a template rule of $\cP$ 
such that $t\models \phi(u)$. Let $\theta_u(f)$ be the forest obtained 
from~$f$ by changing every selector $\tup{q',\psi(x,y)}$ into 
$\tup{\tup{q',\psi(x,y)},u}$. 
Then we write $s\Rightarrow_{t,\cP} s'$
where $s'$ is obtained from $s$ by replacing the node $v$ by the 
forest $\theta_u(f)$. 
Now let $s$ be an output form and let $v$ be a leaf of $s$ with label 
$\tup{\tup{q',\psi(x,y)},u}$. Then we write $s\Rightarrow_{t,\cP} s'$
where $s'$ is obtained from $s$ by replacing the node $v$ by the 
forest $\tup{q',u'_1}\cdots\tup{q',u'_\ell}$ where $u'_1,\dots,u'_\ell$
is the sequence of all nodes $u'$ of $t$, in document order, such that
$t\models \psi(u,u')$.  
The transduction $\tau_\cP$ realized by $\cP$ is defined by 
$\tau_\cP = \{(t,s)\in F_\Sigma\times F_\Delta \mid \exists\,q_0\in Q_0: 
\tup{q_0,\rt_t} \Rightarrow^*_{t,\cP} s\}$. 

The \abb{dtl} program $\cP$ is \emph{deterministic} if for every two rules 
$\tup{q,\phi(x)} \to f$ and $\tup{q,\phi'(x)} \to f'$ with the same state $q$,
the tests $\phi(x)$ and $\phi'(x)$ are exclusive, 
meaning that the sites they define are disjoint. 

We observe here that in \cite{ManNev00} the selectors 
have a more complicated form, which we will discuss after the next lemma. 

We have defined the \abb{dtl} program such that 
the input $t$ is an unranked forest,
and thus it can in particular be a ranked tree. 
It should be clear from Lemma~\ref{lem:encmso} 
(which also holds for sites instead of trips) that we may in fact 
restrict ourselves to ranked trees and assume that input forests 
are encoded as binary trees. Thus, \emph{from now on we assume that} 
in the above definition $\Sigma$ is a ranked alphabet and 
$t\in T_\Sigma$ is a ranked input tree. This allows us to compare 
\abb{dtl} programs with \abb{pft}'s. 

Let $\DTL$ denote the transductions 
realized by \abb{dtl} programs and $\dDTL$ those realized 
by deterministic \abb{dtl} programs, from ranked trees to unranked forests. 
Thus, the class of forest transductions realized by \abb{dtl} programs
is equal to $\family{enc}'\circ \DTL$, and similarly for the deterministic case. 

\begin{lemma}\label{lem:dtl}
$\DTL \subseteq \IPFT$ and $\dDTL \subseteq \IdPFT$.
\end{lemma}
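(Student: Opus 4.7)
The plan is to simulate a DTL program $\cP=(\Sigma,\Delta,Q,Q_0,R)$ by an i-PFT $\cM$ that is allowed to use MSO head tests (Lemma~\ref{lem:sites}) and MSO tests on the observable configuration (Theorem~\ref{thm:mso}); these enrichments can be removed afterwards without enlarging the class. The states of $\cM$ will be $Q$ together with auxiliary states introduced below, the initial states will be $Q_0$, and the invisible colour set will contain one dedicated colour $c_\psi$ for each MSO formula $\psi$ occurring in a selector of $\cP$. For every template rule $\tup{q,\phi(x)}\to f$ of $\cP$, $\cM$ will carry a rule in state $q$ guarded by the head test $\phi$, whose right-hand side generates the fixed forest $f$ symbol-by-symbol using an auxiliary state $q_{f'}$ for each subforest $f'$ of $f$: for a nonempty forest the rule will be $\tup{q_{\delta(g)\,h},\sigma,j,b}\to \delta(\tup{q_g,\stay})\,\tup{q_h,\stay}$, and for the empty forest $\tup{q_\epsilon,\sigma,j,b}\to\epsilon$. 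A selector leaf $\tup{q',\psi(x,y)}$ of $f$ will be handled by passing control to an enumeration subroutine $E_{q',\psi}$ described next.

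The subroutine $E_{q',\psi}$, starting at the node $u$ where the current template rule fired, will first drop the pebble $c_\psi$ at $u$, then walk up to the root of $t$, and there enter a pre-order traversal state $w_\mathrm{tree}$. At each node $y$ visited in state $w_\mathrm{tree}$, $\cM$ will emit the two-process forest $\tup{w_\mathrm{test},\stay}\,\tup{w_{\mathrm{ch},1},\stay}$: the ``test'' process $w_\mathrm{test}$ will perform an MSO test on the observable configuration — which sees the head position $y$ together with the top pebble $c_\psi$, hence the position of $u$ — to check whether $\psi(u,y)$ holds, outputting $\tup{q',\stay}$ if so and $\epsilon$ otherwise; the ``recurse'' processes $w_{\mathrm{ch},i}$ will run through the children of $y$ via $\tup{w_{\mathrm{ch},i},\sigma,j,b}\to \tup{w_\mathrm{tree},\down_i}\,\tup{w_{\mathrm{ch},i+1},\stay}$ for $1\leq i\leq\rank(\sigma)$ and $\tup{w_{\mathrm{ch},i},\sigma,j,b}\to\epsilon$ otherwise. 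Because pre-order visits the nodes of $t$ in document order, the $q'$-processes are spawned at exactly the right positions in the right order, matching the DTL semantics of the selector.

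The main obstacle to this plan is reconciling the forest-branching of PFT output rules — which copy the pebble stack into every parallel subprocess — with the nested-life-time pebble discipline: the enumeration forks into many parallel subprocesses, none of which is distinguished as the one that should eventually lift $c_\psi$. The resolution is that $c_\psi$ is simply never lifted; being invisible and always below any pebbles later dropped by descendant $q'$-copies of $\cP$, it is unobservable to them and therefore inert, so leaving it on the stack is harmless. Once this is accepted the construction is routine, and preservation of determinism is handled by turning a mutually exclusive family $\phi_1,\dots,\phi_k$ of guards for template rules sharing a left-hand state into a chain of binary MSO head tests in auxiliary states, branching to ``apply rule $i$'' on success of the $i$-th test and to the next test on failure. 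All MSO tests reduce to Lemma~\ref{lem:sites} and Theorem~\ref{thm:mso}, yielding $\DTL\subseteq\IPFT$ and $\dDTL\subseteq\IdPFT$.
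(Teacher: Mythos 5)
Your proposal is correct and follows essentially the same route as the paper's proof: guard each template rule by an \abb{mso} head test, generate the right-hand side symbol by symbol via states indexed by its subforests, and for a selector drop an invisible pebble on the context node (never lifted, exactly as in the paper) and enumerate the nodes of $t$ in document order, using \abb{mso} tests on the observable configuration to detect the $\psi$-matches and spawn $q'$-copies there, finally removing the built-in tests by Lemma~\ref{lem:sites} and Theorem~\ref{thm:mso}. The only deviations are inessential implementation details (one colour per formula instead of a single $\odot$, and a fully forking pre-order enumeration instead of the paper's sequential depth-first walk that branches only at matches).
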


\begin{proof}
Let $\cP=(\Sigma,\Delta,Q,Q_0,R)$ be a \abb{dtl} program. 
We construct an equivalent \abb{I-PFT} $\cM$ with \abb{mso} tests,
cf.\ Theorem~\ref{thm:mso}. 
It has the same alphabets $\Sigma$ and $\Delta$ as $\cP$.
Since $\cM$ stepwise simulates $\cP$, 
its set of states consists of the states and selectors of $\cP$,
plus the states that it needs to execute the subroutines discussed below. 
It has the same initial states $Q_0$ as $\cP$. 
Moreover, it uses invisible pebbles of a single colour $\odot$,
and never lifts its pebbles.

For an input tree $t$, the transducer $\cM$ simulates a template rule
$\tup{q,\phi(x)} \to f$ in state $q$ at node $u$ of $t$ 
by first using an \abb{mso} head test to check whether $t\models \phi(u)$.
With a positive test result, it calls a subroutine $S$ that outputs 
the $\Delta$-labelled nodes of the right-hand side $f$. 
The subroutine $S$ is started by $\cM$ in state~$[f]$.
If its state is of the form $[sf']$, for a tree $s$
and a forest $f'$, it uses a rule 
$\tup{[sf'],\sigma,j,b} \to \tup{[s],\mathrm{stay}}\,\tup{[f'],\mathrm{stay}}$,
branching the computation.
If the state is of the form $[\delta(f')]$,
the rule is 
$\tup{[\delta(f')],\sigma,j,b} \to \delta(\tup{[f'],\mathrm{stay}})$, 
and if it is of the form $[\epsilon]$,
the rule is $\tup{[\epsilon],\sigma,j,b}\to \epsilon$. 
If the state is of the form
$[\tup{q',\psi(x,y)}]$, for a selector $\tup{q',\psi(x,y)}$, 
the subroutine $S$ returns control to 
(this copy of) $\cM$ in state $\tup{q',\psi(x,y)}$. 
In that state, $\cM$ first drops a pebble $\odot$ on the current node~$u$
and then calls a subroutine $S_{q',\psi}$
that finds all nodes $u'$ in the input tree $t$ for which $\psi(u,u')$ holds.
The subroutine does this by performing a depth-first traversal of $t$,
starting at the root, checking in each node $u'$
whether $t\models \psi(u,u')$ 
using an \abb{mso} test on the observable configuration.
If true, then $S_{q',\psi}$ branches into two
concatenated processes. The left branch returns control to $\cM$ in
state~$q'$, and the right branch continues the depth-first search.
When the search ends, $S_{q',\psi}$ outputs $\epsilon$.
Thus, $S_{q',\psi}$ transforms the configuration 
$\tup{\tup{q',\psi(x,y)},u,\pi}$ of $\cM$ into the forest of configurations
$\tup{q',u'_1,\pi}\cdots\tup{q',u'_\ell,\pi}$, 
where $u'_1,\dots,u'_\ell$ are all such nodes $u'$, in document order. 
With this definition of $\cM$, it should be clear that $\tau_\cM=\tau_\cP$. 
\end{proof}

The selectors in \cite{ManNev00} are more general than those defined above. 
They can be of the form $\tup{q'_1,\psi_1(x,y), \dots, q'_m,\psi_m(x,y)}$,
such that the \abb{mso} formulas $\psi_1(x,y),\dots,\psi_m(x,y)$ 
are mutually exclusive, i.e., 
the trips they define are mutually disjoint. Let $\psi(x,y)$ be the disjunction 
of all $\psi_i(x,y)$, $i\in[1,m]$. The execution of the above selector 
at node $u$ of the input tree
results in the forest $\tup{q'_{i_1},u'_1}\cdots\tup{q'_{i_\ell},u'_\ell}$ 
where $u'_1,\dots,u'_\ell$ is the sequence of all nodes $u'$ of $t$ 
in document order such that $t\models \psi(u,u')$, and for every $j\in[1,\ell]$,
$i_j$ is the unique number in $[1,m]$ such that $t\models \psi_{i_j}(u,u'_j)$.
It should be clear that Lemma~\ref{lem:dtl} is still valid 
with these more general selectors. To execute the above selector, 
the \abb{i-pft} $\cM$ calls subroutine $S_{q'_1,\psi_1,\dots,q'_m,\psi_m}$
which in each node $u'$ tests each of the formulas $\psi_i(u,u')$;
if $\psi_i(u,u')$ is true, then the subroutine branches in two, 
in the first branch returning control to $\cM$ in state $q_i$. 

To compare $\DTL$ to $\IPTT$ rather than $\IPFT$ we also consider \abb{dtl} programs 
that transform ranked trees. 
A \abb{dtl} program $\cP=(\Sigma,\Delta,Q,Q_0,R)$ is \emph{ranked} if 
$\Sigma$ and $\Delta$ are both ranked alphabets, and 
every rule $\tup{q,\phi(x)}\to f$ satisfies the following two restrictions:
\begin{enumerate}
\item[(R1)] $f$ is a ranked tree in $T_\Delta(S)$ where $S$ is the set of selectors, and 
\item[(R2)] for every selector $\tup{q',\psi(x,y)}$ that occurs in $f$,
every input tree $t\in T_\Sigma$, and every node $u\in N(t)$,
if $t\models \phi(u)$ then there is a unique node $v\in N(t)$ such that $t\models \psi(u,v)$.
\end{enumerate}
In other words, the trip $T(\psi(x,y))$ is functional and,  
for fixed input tree $t\in T_\Sigma$, it is defined for every node of $t$ 
that satisfies $\phi(x)$. 
Thus, execution of the selector $\tup{q',\psi(x,y)}$ results in a ``jump'' 
from node $x$ to exactly one node $y$. This clearly implies that all reachable output forms
of $\cP$ are ranked trees in $T_\Delta(\con(t))$. 
Thus $\tau_\cP\subseteq T_\Sigma\times T_\Delta$ is a ranked tree transformation.
The class of transductions realized by ranked \abb{tl} programs will be denoted by $\DTLr$,
and by $\dDTLr$ in the deterministic case. 

\begin{corollary}\label{cor:dtlr}
$\DTLr \subseteq \IPTT$ and $\dDTLr \subseteq \IdPTT$.
\end{corollary}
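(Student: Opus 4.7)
The plan is to adapt the construction of Lemma~\ref{lem:dtl}, exploiting restrictions (R1) and (R2) so that the resulting transducer is an \abb{i-ptt} rather than an \abb{i-pft}. The proof of Lemma~\ref{lem:dtl} used the forest operations of the \abb{i-pft} (concatenation and $\epsilon$-output) in exactly two places: (a) the subroutine $S$ that outputs the $\Delta$-labelled part of the right-hand side $f$ of a template rule, which uses concatenation to handle forest-valued right-hand sides and $\epsilon$-output for $\epsilon$; and (b) the subroutine $S_{q',\psi}$ that, given a selector, produces a concatenation $\tup{q',u'_1}\cdots\tup{q',u'_\ell}$ of configurations, one per matching node. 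Restrictions (R1) and (R2) are tailored to eliminate both uses.

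First I would handle (a). By (R1), every right-hand side $f$ is a ranked tree in $T_\Delta(S)$, with $S$ the set of selectors. Hence the subroutine $S$ reduces to standard \abb{ptt} output rules, one per internal node of $f$: for a state of the form $[\delta(f_1,\dots,f_m)]$ use
\[
\tup{[\delta(f_1,\dots,f_m)],\sigma,j,b} \;\to\;
\delta\bigl(\tup{[f_1],\stay},\dots,\tup{[f_m],\stay}\bigr),
\]
and for a state $[\tup{q',\psi(x,y)}]$ return control to the main simulation in state $\tup{q',\psi(x,y)}$. No concatenation or $\epsilon$-output is needed.

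Next I would handle (b). By (R2), for every selector $\tup{q',\psi(x,y)}$ occurring in the right-hand side of a template rule $\tup{q,\phi(x)}\to f$, the modified formula $\psi'(x,y) \equiv \phi(x)\wedge\psi(x,y)$ defines a globally functional \abb{mso} trip; note that distinct rules can be renamed to own their selectors uniquely. By Theorem~\ref{thm:trips} this trip is computed by a deterministic \abb{i-pta}, which by the remark after that theorem may be assumed to leave no pebbles behind when it halts. The \abb{i-ptt} invokes this automaton as a navigational subroutine: starting at the current node $u$, it walks to the unique node $v$ with $t\models\psi(u,v)$ and continues the main simulation in state $q'$ at $v$. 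A single configuration is produced per selector, so, again, no forest operation is needed. For the deterministic case, we additionally use Lemma~\ref{lem:sites} to carry out the \abb{mso} head tests on the $\phi$'s deterministically, which is legitimate because in a deterministic \abb{dtl} program the $\phi$'s attached to a common state are mutually exclusive.

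The one technical nuisance — standard but worth calling out — is that when the navigation subroutine is invoked mid-simulation, the pebble stack of the main \abb{i-ptt} is in general nonempty, and its topmost invisible pebble might be observable at the current node, potentially clashing with colours the subroutine expects. This is routinely handled by dropping a dedicated ``marker'' invisible pebble on top of the stack before the call, translating the subroutine's rules to treat this marker as ``no pebble observed'', and popping the marker after the subroutine returns; because the subroutine cleans up its own pebbles before halting, the original stack is restored intact. This is the main step where care is required, but it introduces no real obstacle.
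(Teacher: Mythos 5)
Your construction is correct and reaches the corollary, but your treatment of the selectors genuinely departs from the paper's. The paper's proof keeps the mechanism of Lemma~\ref{lem:dtl} essentially intact: the pebble $\odot$ is still dropped on the source node $u$, the subroutine $S_{q',\psi}$ still performs a depth-first traversal of the whole input tree testing $\psi(u,u')$ at each node via an \abb{mso} test on the observable configuration (Theorem~\ref{thm:mso}), and the only changes forced by (R1)/(R2) are that $S$ uses ordinary ranked output rules and that $S_{q',\psi}$ stops at the first --- hence, by (R2), unique --- match instead of spawning a forest of continuations. You instead observe that (R2) makes $\phi(x)\wedge\psi(x,y)$ a functional trip and invoke the deterministic \abb{i-pta} of Theorem~\ref{thm:trips} as a navigational subroutine that walks from $u$ straight to $v$. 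This is a legitimate alternative: it trades the full-tree search plus the observable-configuration test (whose implementation rests on the bead machinery of Theorem~\ref{thm:mso}) for the lighter combination of Proposition~\ref{prop:trips} and Lemma~\ref{lem:sites}, and it makes the role of (R2) more transparent. What the paper's route buys is that it is a two-line delta from Lemma~\ref{lem:dtl} and inherits that proof's correctness argument verbatim. Your handling of part (a) coincides with the paper's.

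One detail of your bookkeeping does not work as written: you cannot ``pop the marker after the subroutine returns'', because the marker was dropped on $u$ while the head has meanwhile moved to $v$, and invisible pebbles are weak --- lifting requires the head to be at the marker's position, and walking back to $u$ would lose $v$. The repair is exactly the one already used in Lemma~\ref{lem:dtl}: never lift the marker. The simulation of a \abb{dtl} program never needs to observe anything below the top of its stack, so abandoned markers are harmless; alternatively you can dispense with the marker entirely, since the head-test subroutine of Lemma~\ref{lem:sites} is explicitly designed to run on top of an arbitrary inherited stack, so the trip automaton only needs its \abb{ta}-level rules duplicated for the finitely many possible observable colours of the main transducer.
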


\begin{proof}
The proof is the same as the one of Lemma~\ref{lem:dtl}, except for 
the subroutines~$S$ and~$S_{q',\psi}$. 
The states of $S$ are now of the form $[s]$ 
where $s$ is a subtree of a right-hand side of a rule. 
Instead of the rules for states $[sf']$, $[\delta(f')]$, and $[\epsilon]$, 
subroutine~$S$ has rules $\tup{[\delta(s_1,\dots,s_m)],\sigma,j,b}\to 
\delta(\tup{[s_1],\stay},\dots,\tup{[s_m],\stay})$ for every $\delta$ of rank $m$
and all trees $s_1,\dots,s_m$ (restricted to subtrees of right-hand sides). 
When subroutine $S_{q',\psi}$ finds a node $u'$ such that $t\models \psi(u,u')$
(and it always finds one by restriction (R2)), it returns control to $\cM$ and 
does not continue the depth-first search. 
\end{proof}

It can, in fact, be shown that when output forests
are encoded as binary trees, $\DTL$ is included in $\IPTT$. 
Thus, instead of $\IPFT$ we consider the class $\IPTT \circ \family{dec}$
(which equals the class $\IPTT \circ \family{dec}'$), cf. Section~\ref{sec:pft}.
The next theorem will not be used in what follows (except in the paragraph
directly after the theorem). 

\begin{theorem}\label{thm:dtl}
$\DTL \subseteq \IPTT \circ \family{dec}$ and
$\dDTL \subseteq \IdPTT \circ \family{dec}$.
\end{theorem}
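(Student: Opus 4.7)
The plan is to refine the I-PFT $\cM_0$ from the proof of Lemma~\ref{lem:dtl} into an I-PTT $\cM$ which outputs $\enc(s)$ whenever $\cM_0$ would output $s$, so that $\tau_\cP = \tau_\cM \circ \dec$, yielding $\DTL \subseteq \IPTT \circ \family{dec}$ and, by preservation of determinism below, $\dDTL \subseteq \IdPTT \circ \family{dec}$. The essential obstacle being addressed is that $\cM_0$ produces forests by parallel branching (its rule $\tup{[sf'],\ldots} \to \tup{[s],\stay}\tup{[f'],\stay}$ realizes forest concatenation), whereas the binary encoding needed by an I-PTT, $\enc(\delta(g_1)g_2 \cdot f') = \delta(\enc(g_1),\enc(g_2 f'))$, is highly context-sensitive: the shape of the right child depends on what follows the first top-level element, which is known only dynamically. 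The solution is continuation-passing, where a state $[f]$ of $\cM$ has the intended meaning ``produce $\enc(\theta_u(f)\cdot\kappa)$'', with $\theta_u$ the DTL-expansion at the current node $u$ and $\kappa$ a continuation represented by the inherited invisible pebble stack.

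Concretely, $\cM$ uses the states of $\cM_0$ (states of $\cP$, selectors of $\cP$, and subforest states $[f]$ for $f$ a subforest of a right-hand side of $\cP$---all finite) plus auxiliary states for subroutines, and three families of invisible pebble colours: a boundary colour $\perp$; a single ``current position'' colour $\beta$; and context colours $(\tup{q',\psi}, f_2)$ indexed over the finite set of selectors $\tup{q',\psi}$ and right-hand-side subforests $f_2$ of $\cP$. The rules of $\cM$ are: $(R_1)$ in state $q\in Q$ at $u$, use an \abb{mso} head test (Lemma~\ref{lem:sites}) for each $\phi$ of a DTL rule $\tup{q,\phi(x)}\to f$ and transition to $[f]$; $(R_2)$ in state $[\delta(f_1)f_2]$ with $\delta\in\Delta$, use the output rule $\to \delta(\tup{[f_1]^\perp,\stay},\tup{[f_2],\stay})$, where $[f_1]^\perp$ first drops $\perp$ and becomes $[f_1]$; $(R_3)$ in state $[\tup{q',\psi}f_2]$ at $u$, \abb{mso}-head-test whether some $y$ satisfies $\psi(u,y)$, and either transition to $[f_2]$ (if none) or drop a context pebble $(\tup{q',\psi},f_2)$ at $u$, walk to the first matching $y$, drop $\beta$ there, and transition to $q'$; $(R_4)$ in state $[\epsilon]$, inspect the top pebble and either output $e$ (no pebble), or pop $\perp$ and output $e$, or walk to and pop $\beta$ (thereby exposing the context pebble as observable) and then search on the observable configuration (Theorem~\ref{thm:mso}) for the next $y$ satisfying $\psi(u',y)$ in document order strictly after the popped position (where $u'$ is the context pebble's position), dropping $\beta$ there and transitioning to $q'$ on success, or---on failure---walking to and popping the context pebble and transitioning to $[f_2]$ (with $f_2$ read from the popped colour). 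All the walks and searches are standard subroutines realizable by an I-PTT.

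Correctness follows by induction on the derivation of $\cP$: every output form of $\cP$ whose leaves are configurations $\tup{q,u}$ corresponds to a reachable output form of $\cM$ whose leaves are $\cM$-configurations $\tup{q,u,\pi}$ such that $\pi$ encodes, in the above protocol, the continuation prescribed by the DTL output form, while the already-generated parts of the two outputs agree through $\enc$. Determinism is preserved since in each of $(R_1)$--$(R_4)$ at most one case applies. The principal obstacle is the verification of the invariants that make this pebble-stack protocol sound: $(i)$ the boundary pebbles isolate the left branches of $(R_2)$, so that an $\epsilon$ encountered inside triggers an $e$-output rather than consuming an outer continuation; $(ii)$ after popping $\beta$ in $(R_4)$, the newly topmost invisible pebble is exactly the context pebble of the iteration being resumed---which relies on each selector invocation pushing and eventually popping exactly two pebbles, so the stack discipline is preserved across arbitrary nesting; and $(iii)$ the search ``next match of $\psi(u',\cdot)$ in document order after the current head'' is expressible as an \abb{mso} test on the observable configuration, as required to invoke Theorem~\ref{thm:mso}.
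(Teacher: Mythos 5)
Your proposal is correct and takes essentially the same route as the paper's proof: a depth-first (leftmost) simulation of the \abb{dtl} program in which the invisible-pebble stack stores the postponed selector iterations as a continuation, with a bottom marker ($\perp$, the paper's $\bot$) isolating the left branches of the binary encoding. The only deviation is your auxiliary pebble $\beta$ marking the previous match node; the paper instead keeps the head at that node and resumes the search for the next match from the current head position, which is an inessential implementation difference.
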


\begin{proof}
Let $\cP=(\Sigma,\Delta,Q,Q_0,R)$ be a \abb{dtl} program. 
The main difficulty in outputting the binary encoding $\enc(f)$ of a forest $f$
as opposed to the construction in 
the proof of Lemma~\ref{lem:dtl} 
is that here the first symbol $\delta$ of $f$ has to be determined before
any other output can be generated.
We reconsider that construction, and here essentially 
make a depth-first sequential search over 
nodes in the computation tree 
(implemented using a stack of pebbled nodes)
instead of the recursive approach.
In that way an \abb{i-ptt} $\cM$ can simulate the leftmost computations 
of the \abb{dtl} program $\cP$. 

As unranked forests with selectors 
can be generated by the recursive definition
$f ::= \delta(f)f' \mid \tup{q,\psi} f \mid \epsilon$,
where $f'$ is an alias of $f$,  
\abb{DTL} rules are of the form $\tup{q,\phi(x)} \to f$,
where $f$ is $\delta(f_1)f_2$,  $\tup{q,\psi} f'$, or $\epsilon$.
The set of states of the transducer $\cM$ to be constructed 
consists of the states of $\cP$ and all states $[f]$ 
where $f$ is a subforest of a right-hand side of a rule of $\cP$,
plus the states of the subroutines $S'_{q',\psi}$ and $S''_{q',\psi}$ 
discussed below. 
The state $[f]$ is used to generate the binary encoding of
the subforest $f$, similarly to its use by the subroutine $S$ 
in the proof of Lemma~\ref{lem:dtl}. 
The initial states of $\cM$ are those of $\cP$. 
The pebble colours used by $\cM$ are $\tup{q,\psi,f}$ 
where $\tup{q,\psi} f$ occurs in the right-hand side of a rule of $\cP$,
and the special colour $\bot$. The state and pebble stack of $\cM$ 
store a part of the output form of $\cP$ that still has to be evaluated. 
The output alphabet of $\cM$ is $\Delta\cup\{e\}$ 
where each $\delta\in\Delta$ has rank~2 and $e$ has rank~$0$.

The transducer $\cM$ starts by dropping $\bot$ on the root.
To simulate, in state~$q$, a rule
$\tup{q,\phi(x)} \to f$ of $\cP$, it uses an \abb{MSO} head test 
to check whether $\phi$ holds for the current node, and goes into state $[f]$.
We consider the above three cases for $[f]$.

In state $[\tup{q',\psi} f']$, 
pebble $\tup{q',\psi,f'}$ is dropped on the current node $u$.
As in the proof of Lemma~\ref{lem:dtl},
$\cM$ then calls a subroutine $S'_{q',\psi}$ which, this time, finds \\[0.6mm]
the first node $u'$ (in document order) for which $\psi(u,u')$ holds,
where it returns control to $\cM$ in state $q'$.
If $S'_{q',\psi}$ does not find such a matching node $u'$, then 
it moves to the topmost pebble $\tup{q',\psi,f'}$, lifts it, 
and returns control to $\cM$ in state $[f']$. 

In state $[f]=[\delta(f_1)f_2]$,
the root $\delta$ of the first tree of the forest 
is explicitly given, and this is captured by the 
\abb{I-PTT} output rule 
$\tup{[f],\sigma,j,b} \to 
\delta(\, \tup{[f_1],\mathrm{drop}_\bot}, 
\tup{[f_2],\mathrm{stay}}\,)$.
The symbol $\bot$ is pushed, and never popped afterwards, 
making the stack of pebbles effectively empty:
the first copy of the transducer evaluates $f_1$ as left child of $\delta$.
The second copy inherits the stack and evaluates $f_2$
as right child of $\delta$, together with all postponed duties
as stored in the stack of pebbles. 
This will generate the siblings of $\delta$ in the original forest.

In state $[\epsilon]$, the transducer $\cM$ determines the colour of 
the topmost pebble, using an \abb{mso} test on the observable configuration. 
If it is $\bot$, it outputs $e$ for the empty forest.
Otherwise it calls subroutine $S''_{q',\psi}$ to continue the search 
corresponding to the topmost pebble $\tup{q',\psi,f'}$.
That subroutine finds the first node $u'$ after the current node $u$ 
(in document order) for which $\psi(v,u')$ holds, 
where $v$ is the position of the topmost pebble. Similar to $S'_{q',\psi}$,
if a matching node is found it returns control to $\cM$ in state $q'$,
and otherwise it lifts the topmost pebble
and returns control to $\cM$ in state $[f']$.

This ends the description of $\cM$. To understand its correctness, we show how the 
output forms of $\cM$ represent output forms of $\cP$. We disregard 
the output forms of $\cM$ that contain states of the subroutines 
$S'_{q',\psi}$ and $S''_{q',\psi}$, and view the execution of such a subroutine 
as one big computation step of $\cM$ that (deterministically) changes 
one configuration into another. The mapping `$\rep$' from such restricted 
output forms of $\cM$ to output forms of $\cP$ is defined as follows. 
The $\Delta$-labelled part of the output form of $\cM$ is decoded, i.e., 
$\rep(e)=\epsilon$ and $\rep(\delta(s_1,s_2))=\delta(\rep(s_1))\rep(s_2)$.
It remains to define `$\rep$' for the configurations on an input tree $t$ 
that occur in the restricted 
output forms of~$\cM$, i.e., for every configuration $\tup{p,u,\pi}$ where $p$ is 
a state $q$ of $\cP$ or a state $[f]$. We will write $\rep(p,u,\pi)$ 
instead of $\rep(\tup{p,u,\pi})$. The definition is by induction on the structure 
of $\pi$, of which the topmost pebble is of the form 
$(v,\bot)$ or $(v,\tup{q',\psi,f'})$. 
For a state $[f]$, we define $\rep([f],u,\pi(v,\bot))=\theta_u(f)$ and 
\[
\rep([f],u,\pi(v,\tup{q',\psi,f'}))=
\theta_u(f)\tup{q',u'_1}\cdots\tup{q',u'_\ell}\rep([f'],v,\pi)
\]
where $u'_1,\dots,u'_\ell$ are all nodes $u'$ after $u$ (in document order)
such that $t\models \psi(v,u')$. 
Note that $\rep([f],u,\pi)=\theta_u(f)\rep([\epsilon],u,\pi)$ 
because $\theta_u(\epsilon)=\epsilon$, and hence 
$\rep([f_1f_2],u,\pi)=\theta_u(f_1)\rep([f_2],u,\pi)$. 
For a state $q$ of $\cP$ we define $\rep(q,u,\pi)=\tup{q,u}\rep([\epsilon],u,\pi)$. 

It is now straightforward to prove, for every initial state $q_0$ of $\cP$, 
every input tree $t$, and every output form $s$ of $\cP$, that 
$\tup{q_0,\rt_t} \Rightarrow^*_{t,\cP} s$ if and only if 
there exists a restricted output form $s'$ of $\cM$ such that 
$\tup{q_0,\rt_t,(\rt_t,\bot)} \Rightarrow^*_{t,\cM} s'$ and $\rep(s')=s$.
The proof of the if-direction of this equivalence is by induction on 
the length of the computation, and consists of four cases,
depending on the state of the configuration of $\cM$ that is rewritten, 
as discussed above,
viz., $q$, $[\tup{q',\psi} f']$, $[\delta(f_1)f_2]$, or $[\epsilon]$. 
From the last two cases it follows 
that for every restricted output form 
$s'$ of $\cM$ there exists a restricted output form $s''$ of $\cM$ 
such that $s' \Rightarrow^*_{t,\cM} s''$,
$\rep(s'')=\rep(s')$, and the states of $\cM$ that occur in $s''$ are 
either states $q$ of $\cP$ or states of the form $[\tup{q',\psi}f']$.
In the only-if-direction we only consider leftmost computations of $\cP$,
i.e., computations in which always the first configuration 
of the output form (in pre-order) is rewritten.  
If $\rep(s')=\rep(s'')=s$, with $s''$ as above, then the first configuration 
of $\cM$ in $s''$ corresponds to the first configuration of $\cP$ in $s$, 
and the proof is similar to the first two cases of the proof of the if-direction.
The details are left to the reader. 
Since $\rep(s')=\dec(s')$ for every output tree $s'$ of $\cM$,  
the above equivalence implies that $\tau_\cM \circ \dec=\tau_\cP$. 
\end{proof}

We are now able to finish the proof of Lemma~\ref{lem:pftvsptt}(2). 
Consider the mapping $\fl: T_{\Delta_1}\to F_\Delta$ defined in that proof.
It can be realized by the one-state deterministic \abb{dtl} program with rules 
$\tup{q,\lab_\at(x)}\to \tup{q,\down_1(x,y)}\tup{q,\down_2(x,y)}$, 
$\tup{q,\lab_\delta(x)}\to \delta(\tup{q,\down_1(x,y)})$ for every $\delta\in\Delta$, and
$\tup{q,\lab_e(x)}\to \epsilon$. Hence, by Theorem~\ref{thm:dtl}, 
it is in $\IdPTT \circ \family{dec}$, which means that the mapping 
$\fl\circ\enc$ is in $\IdPTT$.

\smallskip
In \cite{ManBerPerSei} the language \abb{dtl} was
extended to the \emph{Transformation Language}~\abb{tl} where the states
have parameters that hold unevaluated
forests, similar to macro tree
transducers with outside-in parameter evaluation~\cite{EngVog85}.
In a \abb{tl} program~$\cP=(\Sigma,\Delta,Q,Q_0,R)$, 
the set of states $Q$ is a ranked alphabet
such that the initial states in $Q_0$ have rank~$0$. 
The rules of \abb{tl} program $\cP$ are of the form
$\tup{q,\phi(x)}(z_1,\dots,z_n) \to f$, 
where $n=\rank_Q(q)$ and $z_1,\dots,z_n$ are the formal parameters of $q$,
taken from a fixed infinite parameter set $Z=\{z_1,z_2,\dots\}$.
The right-hand side $f$ of the rule is a forest of
which the nodes can be labeled by a symbol from $\Delta$,
by a selector $\tup{q',\psi(x,y)}$, or by a formal parameter 
$z_i$ with $i\in[1,n]$.  
A node labeled by $\tup{q',\psi(x,y)}$ must have $\rank(q')$ children,
and a node labeled by parameter $z_i$ must be a leaf. 
Thus, in such a forest (called an \emph{action} in~\cite{ManBerPerSei}), 
selectors can be nested.
We could as well allow in \abb{tl} the more general selectors discussed after
Lemma~\ref{lem:dtl}, but we restrict ourselves to the usual selectors for simplicity
(and because they are the selectors in~\cite{ManBerPerSei}). 
Determinism of program $\cP$ is defined as for \abb{dtl}.

An output form of $\cP$ on input forest $t$ is a forest 
of which the nodes can be labeled 
either by a symbol from $\Delta$, 
or by a configuration $\tup{q,u}$ or $\tup{\tup{q,\psi(x,y)},u}$ of $\cP$
in which case the node must have $\rank(q)$ children. 
A node of an output form, or of a right-hand side of a rule, is said to be 
\emph{outermost} if all its proper ancestors are labelled 
by a symbol from $\Delta$. 
The computation steps of $\cP$ are formalized as 
a binary relation on output forms, as follows (similar to the \abb{dtl} case). 
Let $s$ be an output form, and let $v$ be an outermost node of $s$
with label $\tup{q,u}$, where $q$ is a state of $\cP$. Moreover,
let $\tup{q,\phi(x)}(z_1,\dots,z_n) \to f$ be a rule of $\cP$ 
such that $t\models \phi(u)$. Let $\theta_u(f)$ be defined as in the \abb{dtl} case.
Then we write $s\Rightarrow_{t,\cP} s'$
where $s'$ is obtained from $s$ by replacing the subtree $s|_v$ with root $v$ 
by the forest $\theta_u(f)$ in which every parameter $z_i$ is replaced by the subtree
$s|_{vi}$, for $i\in[1,\rank(q)]$. Intuitively, the subtree $s|_{vi}$ 
rooted at the $i$-th child $vi$ of $v$ is 
the $i$-th actual parameter of (this occurrence of) the state $q$. 
Now let $s$ be an output form and let $v$ be an outermost node 
of $s$ with label $\tup{\tup{q',\psi(x,y)},u}$ and $\rank(q')=m$. 
Then we write $s\Rightarrow_{t,\cP} s'$
where $s'$ is obtained from $s$ by replacing the subtree $s|_v$ with root $v$ 
by the forest 
$\tup{q',u'_1}(s|_{v1},\dots,s|_{vm})\cdots\tup{q',u'_\ell}(s|_{v1},\dots,s|_{vm})$ 
where $u'_1,\dots,u'_\ell$ is the sequence of all nodes $u'$ of $t$, 
in document order, such that $t\models \psi(u,u')$.
Intuitively, the actual parameters of (this occurrence of) 
the selector $\tup{q',\psi(x,y)}$ 
are passed to each new occurrence of the state $q'$. 
As in the \abb{dtl} case, the transduction realized by $\cP$ 
is defined by 
$\tau_\cP = \{(t,s)\in F_\Sigma\times F_\Delta \mid \exists\,q_0\in Q_0: 
\tup{q_0,\rt_t} \Rightarrow^*_{t,\cP} s\}$. 

In \cite{ManBerPerSei} the denotational semantics of a \abb{tl} program is given 
as a least fixed point. It is straightforward to show that that semantics is 
equivalent to the above operational semantics.\footnote{It is similar to the ``alternative''
fixed point characterization of the OI context-free tree languages mentioned 
after~\cite[Definition~5.19]{EngSch78}.
}
Also, in \cite{ManBerPerSei} the syntactic formulation of \abb{tl} is such that 
in the right-hand side of a rule the states can have forests as parameters 
rather than trees. Such a forest parameter $s_1\cdots s_m$, 
where each $s_i$ is a tree, can be expressed in our syntactic formulation of \abb{tl}
as the tree $\tup{@_m,x=y}(s_1,\dots,s_m)$, where $@_m$ is a special state 
of rank $m$ that has the unique rule 
$\tup{@_m,x=x}(z_1,\dots,z_m)\to z_1\cdots z_m$. 
 
\begin{example}
\label{ex:tlsiberie}
The transformation from Example~\ref{ex:siberie} can be computed
by a deterministic \abb{TL} program $\cP_\mathrm{sib}$
with the following rules, 
where the variables $i$, $\sigma_i$, $c$, and $\lambda_i$
range over the same values as in Example~\ref{ex:siberie}, 
with $c=1$ or $i=1$ in rule $\rho_4$.

\smallskip
$\rho_1: \tup{q_\mathrm{start},\rt(x)}  \to  \tup{q_\mathrm{start},\leaf(y)}$

\smallskip
$\rho_2: \tup{q_\mathrm{start},\neg\rt(x)\wedge\lab_{\sigma_0}(x)}  
\to  \tup{q_1,\up(x,y)}(\sigma_0,e)$

\smallskip
$\rho_3: \tup{q_0,\neg\rt(x)\wedge\lab_{\lambda_0}(x)}(z_1,z_2)  
\to  \tup{q_0,\up(x,y)}(z_1,z_2)$

\smallskip
$\rho_4: \tup{q_c,\neg\rt(x)\wedge\lab_{\lambda_i}(x)}(z_1,z_2)$ 

$\hspace{2cm}\to 
\tup{q_i,\up(x,y)}(\lambda_i(z_1),\tup{q_c,\up(x,y)}(z_1,z_2))$

\smallskip
$\rho_5: \tup{q_c,\rt(x)\wedge\lab_{\sigma_1}(x)}(z_1,z_2) \to r(\sigma_1(z_1),z_2)$

\smallskip
\noindent
Intuitively, $z_1$ represents an itinerary from some city to Vladivostok,
and $z_2$ represents a list of itineraries from Moscow to Vladivostok
(viz. all itineraries that do not have $z_1$ as postfix), where
we only consider itineraries that do not visit a small city twice in a row.

The selectors in the right-hand sides of the rules all define functional trips,
and hence select just one node. 
Rule $\rho_1$ jumps from the root to the 
leaf, and rules $\rho_2$, $\rho_3$, $\rho_4$ just move to the parent.

To show the correctness of $\cP_\mathrm{sib}$, 
let $u$ be a node of an input tree $t$, such that $u$ is not the leaf of $t$.
Moreover, let $\zeta_1$ be an output tree that is an itinerary 
from the child of $u$ to the leaf, of which the first stop 
is large ($c=1$) or small ($c=0$), and let $\zeta_2$ be an arbitrary output form.
Then $\tup{q_c,u}(\zeta_1,\zeta_2)$ generates the output form  
$r(s_1(\zeta_1),r(s_2(\zeta_1),\dots r(s_n(\zeta_1),\zeta_2)\cdots ))$ 
where $s_1,\dots,s_n$
are all possible itineraries from the root to $u$ 
such that every $s_i(\zeta_1)$ is an itinerary from root to leaf.
This can be proved by induction on the number of nodes between the root and $u$. 
The base of the induction is by rule $\rho_5$, which generates 
the root label $\sigma_1$, and the induction step is by
rules $\rho_3$ and $\rho_4$. In rule $\rho_3$ a small city is skipped. In rule $\rho_4$,
the outermost selector $\tup{q_i,\up(x,y)}$ generates all itineraries~$s_i$ 
from the root to $x$ that include $x$ (or rather, its label $\lambda_i$), 
whereas the innermost selector 
$\tup{q_c,\up(x,y)}$ generates all those that do not include $x$.
Taking $c=1$, $u$ equal to the parent of the leaf, and 
$\sigma_0$ to the label of the leaf, shows that 
$\tup{q_1,u}(\sigma_0,e)$ generates all required itineraries.
That implies the correctness of $\cP_\mathrm{sib}$ by rule $\rho_2$.  

An XSLT~1.0 program with exactly the same structure as 
$\cP_\mathrm{sib}$ is given in Section~\ref{sec:siberie}. 
\end{example}

As in the case of $\DTL$, we will assume that in the above definition 
of \abb{TL} program, the input alphabet $\Sigma$ is ranked
and the input forest $t$ is a ranked tree in $T_\Sigma$. 
Also, \emph{ranked} \abb{tl} programs are defined as for \abb{dtl} programs.
In particular, for every rule
$\tup{q,\phi(x)}(z_1,\dots,z_n)\to f$, the right-hand side 
$f$ is a ranked tree in $T_\Delta(S\cup Z_n)$ where 
$S$ is the set of selectors and $Z_n=\{z_1,\dots,z_n\}$.
The program $\cP_\mathrm{sib}$ of Example~\ref{ex:tlsiberie} is ranked.

Let $\TL$ denote the class of transductions 
realized by \abb{tl} programs and $\dTL$ the class of those realized 
by deterministic \abb{tl} programs, from ranked trees to unranked forests.
Moreover, $\TLr$ and $\dTLr$ denote the classes of transductions 
realized by ranked programs, from ranked trees to ranked trees.

In what follows we will prove that
$\TL = \IPFT$, and similarly for the deterministic case and for the ranked case 
(Theorem~\ref{thm:tl}). 
Note that this also implies that \abb{TL} programs and \abb{i-pft}'s 
realize the same forest transductions, i.e., 
$\family{enc}'\circ \TL = \family{enc}'\circ \IPFT$. 
These equalities
are variants of the well-known fact that macro grammars are
equivalent to indexed grammars \cite{Fis},
see also~\cite[Theorem~5.24]{EngVog}.

\begin{lemma}\label{lem:tlipft}
$\TL \subseteq \IPFT$ and $\dTL \subseteq \IdPFT$.
Moreover, $\TLr \subseteq \IPTT$ and $\dTLr \subseteq \IdPTT$.
\end{lemma}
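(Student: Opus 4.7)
The plan is to extend the construction of Lemma~\ref{lem:dtl} to handle the parameter mechanism of \abb{tl}. Let $\cP = (\Sigma, \Delta, Q, Q_0, R)$ be a \abb{tl} program. I would build an \abb{i-pft} $\cM$ with \abb{mso} head tests (removable by Theorem~\ref{thm:mso}) that simulates $\cP$ outside-in. The states of $\cM$ are the states of $\cP$, plus auxiliary states $[\rho,f']$ where $\rho$ is a rule and $f'$ a subforest of $\rho$'s right-hand side; these drive output generation as in Lemma~\ref{lem:dtl}, but now the leaves of $f'$ may also be parameter symbols $z_i$. The invisible pebble colours are selector occurrences, i.e., pairs $(\rho,p)$ where $p$ addresses a selector in $\rho$'s right-hand side.

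The invariant to be preserved is: whenever $\cM$ is about to process a pending outermost configuration $\tup{q,u}(t_1,\dots,t_n)$ of $\cP$, its head is at $u$, its state is $q$, and its pebble stack is a sequence of frames whose top $(\rho_k,p_k)$ sits at the input position $v_k$ where the rule $\rho_k$ containing the selector that invoked $q$ was applied, with deeper frames recording the earlier calls in the same way. Three groups of transitions realize this. For \emph{rule application}, $\cM$ in state $q$ at $u$ uses an \abb{mso} head test to pick an applicable rule $\tup{q,\phi(x)}(z_1,\dots,z_n)\to f_\rho$ and moves to state $[\rho,f_\rho]$. For \emph{right-hand-side traversal}, the $[\rho,f']$ states generate $\Delta$-labelled nodes and branch exactly as in Lemma~\ref{lem:dtl}; on reaching a selector $\tup{q',\psi}$ at position $p$ of $f_\rho$, $\cM$ drops a new pebble $(\rho,p)$ at the current input position and invokes the \abb{dtl} subroutine $S_{q',\psi}$, each of whose matches returns control in state $q'$ at the matching position. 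For \emph{parameter resolution}, on entering state $[z_i]$ at any position, $\cM$ walks (via the depth-first search for the topmost pebble used in the proof of Theorem~\ref{thm:composition}) to the position $v_k$ of the top pebble $(\rho_k,p_k)$, reads off from its finite control the $i$-th child $c$ of the selector at $p_k$ in $\rho_k$'s right-hand side, lifts the pebble, and either enters state $[\rho_k,g]$ at $v_k$ if $c$ is a subforest $g$, or iteratively repeats the procedure for $z_j$ against the new top pebble if $c$ is itself a parameter symbol $z_j$.

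The principal obstacle is correctness: one shows by induction on the length of derivations of $\cP$ that the pebble stack of $\cM$ always encodes the correct chain of nested parameter bindings in precisely the way outside-in substitution requires, and in particular that the branchings inherent to the forest transducer (when a rule outputs $\delta(\tup{q_1,\stay})\tup{q_2,\stay}$ or when a selector matches several positions) yield independent private copies of the stack, faithfully mirroring the sharing and duplication of actual parameters in \abb{tl}. Determinism of $\cP$ then transfers to $\cM$ through the mutually exclusive \abb{mso} tests, yielding $\TL\subseteq\IPFT$ and $\dTL\subseteq\IdPFT$. The ranked inclusions $\TLr\subseteq\IPTT$ and $\dTLr\subseteq\IdPTT$ follow from the same construction: ranked right-hand sides correspond to \abb{ptt} output rules, and the functional trips of ranked selectors render $S_{q',\psi}$ non-branching, so that $\cM$ is in fact an \abb{i-ptt}.
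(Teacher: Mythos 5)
Your proposal is correct and follows essentially the same route as the paper: extend the construction of Lemma~\ref{lem:dtl} by pushing the actual parameters of each selector onto the invisible-pebble stack when the selector is invoked, and resolve a formal parameter $z_i$ by locating and popping the topmost pebble and continuing with the $i$-th stored actual parameter (your colours $(\rho,p)$ pointing into a rule's right-hand side are just a finite re-encoding of the paper's tuples $([s_1],\dots,[s_m])$ of subtree states). The correctness invariant you sketch — the pebble stack encodes the chain of pending outside-in parameter bindings, with branching duplicating the stack exactly as \abb{tl} duplicates unevaluated actuals — is precisely the representation mapping the paper makes explicit, and your treatment of the ranked case matches the paper's as well.
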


\begin{proof}
The construction extends the one in the proof of Lemma~\ref{lem:dtl}.
The main idea is to use pebbles to store the actual parameters.
Thus, the pebble colours are of the form 
$([s_1],\dots,[s_m])$ where $m\geq 0$ and $s_1,\dots,s_m$ are subtrees 
of a right-hand side of a rule (in particular, the subtrees 
rooted at the children of a node that is labelled by a selector).
 
As in the \abb{dtl} case, for an input tree $t$, the transducer $\cM$
simulates a rule $\tup{q,\phi(x)}(z_1,\dots,z_n) \to f$ in state $q$ 
at node $u$ of $t$ by testing whether $t\models \phi(u)$ and, if successful, 
calling subroutine $S$. In this (nested) case, $S$ outputs 
the outermost $\Delta$-labelled nodes of $f$, plus the outermost 
$\Delta$-labelled nodes of the actual parameters that are the values of the 
formal parameters $z_i$ that occur outermost in $f$. 
For the states $[sf']$, $[\delta(f')]$, and $[\epsilon]$, 
the rules of $S$ are as in the proof of Lemma~\ref{lem:dtl}
(and see the proof of Corollary~\ref{cor:dtlr} for the ranked case). 
If the state of $S$ is of the form
$[\tup{q',\psi(x,y)}(s_1,\dots,s_m)]$,
then it drops a pebble $([s_1],\dots,[s_m])$
on the current node $u$ to represent the parameters, 
and returns control to (this copy of) $\cM$ in state $\tup{q',\psi(x,y)}$.
In that state, $\cM$ calls subroutine $S_{q',\psi}$, 
which works as in the \abb{dtl} case. 
Note that $\cM$ need not drop a pebble~$\odot$, 
as $S_{q',\psi}$ can use the pebble $([s_1],\dots,[s_m])$ instead. 
Finally, if the state of $S$ is of the form $[z_i]$ for
some formal parameter~$z_i$, this means that the corresponding actual parameter
has to be evaluated. To do this, the subroutine~$S$ searches for the topmost pebble,
which has some colour $([s_1],\dots,[s_m])$. Then $S$ lifts that pebble
and changes its state to $[s_i]$, ready to evaluate $s_i$. 

It is easy to show, for every $i\in\nat$, that 
whenever $\cM$ is in state $q$ or state $\tup{q,\psi(x,y)}$ 
with $i\in[1,\rank(q)]$,  
and whenever $S$ is in state $[f]$ and $z_i$ occurs in~$f$, 
then the top pebble with colour $([s_1],\dots,[s_m])$ satisfies $i\in[1,m]$.
Hence the last sentence of the previous paragraph never fails.  

To understand the correctness of $\cM$, we show how the output forms of $\cM$ 
represent output forms of $\cP$, similar to the correctness proof of 
Theorem~\ref{thm:dtl}. 
We restrict ourselves to output forms
in which all the states of $\cM$ are states of~$\cP$ or selectors of $\cP$
or states of the subroutine $S$, i.e., we disregard the states of 
the subroutines $S_{q',\psi}$ and view the execution of such a subroutine 
as one big step in the computation of $\cM$, changing a configuration
$\tup{\tup{q',\psi(x,y)},u,\pi}$ deterministically into a forest 
$\tup{q',u'_1,\pi}\cdots\tup{q',u'_\ell,\pi}$ (which is just a one-node tree
$\tup{q',u',\pi}$ in the ranked case).   
Thus, we define a mapping `$\rep$' from such restricted 
output forms of $\cM$ to the output forms of $\cP$.
The $\Delta$-labelled part of the output form is not changed by `$\rep$', 
i.e., $\rep(sf)=\rep(s)\rep(f)$, $\rep(\epsilon)=\epsilon$, and 
$\rep(\delta(f))=\delta(\rep(f))$ for $\delta\in\Delta$, 
where $s$ is a tree and $f$ a forest (or 
$\rep(\delta(s_1,\dots,s_m))= \delta(\rep(s_1),\dots,\rep(s_m))$ in the ranked case). 
It remains to define `$\rep$' for the configurations of $\cM$ 
that occur in restricted output forms,
i.e., for every configuration $\tup{p,u,\pi}$ where~$p$ is a state $q$ of $\cP$,
or a selector $\tup{q',\psi(x,y)}$ of $\cP$, or a state $[f]$ of $S$ 
(where~$f$ is a subforest of a right-hand side of a rule of $\cP$). 
As before, we will write $\rep(p,u,\pi)$ instead of $\rep(\tup{p,u,\pi})$.
The definition is by induction on the structure of $\pi$, 
of which we consider the topmost pebble:  
let $\pi=\pi'(v,([s_1],\dots,[s_m]))$. 
If $p=q$ or $p=\tup{q',\psi(x,y)}$, then 
$\rep(p,u,\pi)=\tup{p,u}(\rep([s_1],v,\pi'),\dots,\rep([s_m],v,\pi'))$. 
For $p=[f]$ we define
$\rep([f],u,\pi)$ to be the forest $\theta_u(f)$ in which 
every parameter $z_i$ is replaced by $\rep([s_i],v,\pi')$,
Finally, for $\pi=\epsilon$, we define $\rep(p,u,\epsilon)=\tup{p,u}$ 
in the first case, and $\rep([f],u,\epsilon)=\theta_u(f)$ in the second case. 
If we consider only reachable output forms of $\cM$, 
then `$\rep$' is well defined (cf. the previous paragraph).

It is now straightforward to prove, for every initial state $q_0$ of $\cP$,
every input tree $t$, and every output form $s$ of $\cP$, 
that $\tup{q_0,\rt_t} \Rightarrow^*_{t,\cP} s$ if and only if 
there exists a restricted output form $s'$ of $\cM$ such that 
$\tup{q_0,\rt_t,\epsilon} \Rightarrow^*_{t,\cM} s'$ and $\rep(s')=s$. 
In the proof one should use the rather obvious fact that 
for every restricted output form 
$s'$ of $\cM$ there exists a restricted output form $s''$ of $\cM$ 
such that $s' \Rightarrow^*_{t,\cM} s''$,
$\rep(s'')=\rep(s')$, and no states $[f]$ of $S$ occur in $s''$. 
The above equivalence implies that $\tau_\cM=\tau_\cP$. 
\end{proof}

\begin{example}
\label{ex:pttsiberie}
The \abb{i-ptt} $\cM$ corresponding to the (ranked) \abb{tl} program $\cP_\mathrm{sib}$
of Example~\ref{ex:tlsiberie}, according to the proof of Lemma~\ref{lem:tlipft},
works in essentially the same way as the \abb{i-ptt} $\cM_{\text{sib}}$ of 
Example~\ref{ex:siberie}. Rules $\rho_1$ to $\rho_5$ are translated into rules for~$\cM$ 
that are similar to the first 5 rules of $\cM_{\text{sib}}$. 
Rule $\rho_1$ can be translated into the first rule of $\cM_{\text{sib}}$, 
which implements the jump to the leaf. Rule $\rho_2$ can be translated into 
the rule $\tup{q_\mathrm{start},\sigma_0,1,\nothing}\to 
\tup{q_1,\drop_{([\sigma_0],[e])};\up}$. Thus, $\cM$ drops the special pebble 
$([\sigma_0],[e])$ at the leaf, where $\cM_{\text{sib}}$ does not drop a pebble. 
Rule $\rho_3$ can be translated into the rule 
$\tup{q_0,\lambda_0,1,\nothing}\to \tup{q_0,\drop_{([z_1],[z_2])};\up}$. 
Thus, $\cM$ drops the ``empty'' pebble $([z_1],[z_2])$ whenever 
$\cM_{\text{sib}}$ does not drop a pebble.
Rule $\rho_4$ can be translated into the rule 
$\tup{q_c,\lambda_i,1,\nothing}\to \tup{q_i,\drop_{c(\lambda_i)};\up}$, 
where $c(\lambda_i)$ is the pebble
$([\lambda_i(z_1)],[\tup{q_c,\up(x,y)}(z_1,z_2)])$ 
which is dropped by~$\cM$ instead of the pebble $c$. Note that the pebble colours 
$c(\lambda_i)$ and $([\sigma_0],[e])$ include the label 
($\lambda_i$ or $\sigma_0$) of the node on which the 
pebble is dropped, which is of course superfluous information. 
Finally, rule $\rho_5$ can be translated into the rule 
$\tup{q_c,\sigma_1,0,\nothing}\to 
r(\tup{[\sigma_1(z_1)],\stay},\tup{[z_2],\stay})$,
which calls the states $[\sigma_1(z_1)]$ and $[z_2]$ of the subroutine $S$.
In state $[\sigma_1(z_1)]$, $S$ outputs $\sigma_1$ and 
goes into state $[z_1]$. 
We note that at any moment of time, when $\cM$ is at node $u$ of the input tree,
all descendants of $u$, possibly including $u$ itself, carry a pebble. Thus, 
in state~$[z_i]$, $S$ moves down to the child of $u$, lifts pebble 
$([s_1],[s_2])$ and goes into state $[s_i]$. It is now easy to see that states 
$[z_1]$ and $[z_2]$ of $\cM$ correspond to states $q_\mathrm{out}$ and 
$q_\mathrm{next}$ of $\cM_{\text{sib}}$, respectively. 
In state $[z_1]$, $S$ moves down and outputs the labels of all nodes 
that are marked by some pebble $c(\lambda_i)$ or $([\sigma_0],[e])$, 
lifting those pebbles one by one.
In state $[z_2]$, $S$ moves down to the first pebble $c(\lambda_i)$,
replaces that pebble by the ``empty'' pebble $([z_1],[z_2])$, and returns control 
to~$\cM$, which then goes into state $q_c$ and moves up to the parent.
When, in state $[z_2]$, $S$ reaches the leaf with pebble $([\sigma_0],[e])$,
it lifts that pebble and outputs $e$. 
\end{example}

Lemma~\ref{lem:tlipft} and Theorem~\ref{thm:typecheck-pft} (for $k=0$) 
together provide an alternative proof
of the main result of \cite{ManBerPerSei}: 
the inverse type inference problem and the typechecking problem are solvable 
for \abb{tl} programs.
The proofs are, however, similar. In~\cite{ManBerPerSei}
every \abb{tl} program is decomposed into three macro tree transducers,
whereas we have decomposed every \abb{i-ptt} into two \abb{tt}'s. 
In general, decomposition into
\abb{tt}'s leads to more efficient typechecking
than decomposition into macro tree transducers, because
(cf. Proposition~\ref{prop:invtypeinf}) 
inverse type inference of a macro tree transducer takes double exponential
time, unless the number of parameters is bounded and 
the output type is fixed \cite{PerSei}.
Let us define a \abb{tl}$^\text{\abb{db}}$ program to be a \abb{tl} program 
in which the \abb{mso} formulas $\phi(x)$ and $\psi(x,y)$ in the 
template rules of the program are represented
by deterministic bottom-up finite-state tree automata that recognize 
the corresponding regular sites $\tmark(T(\phi))$ and trips $\tmark(T(\psi))$. 

\begin{theorem}\label{thm:tltypecheck} 
The inverse type inference problem and the typechecking problem are solvable
for \abb{tl}$^\text{\abb{db}}$ programs in $3$-fold and $4$-fold exponential time,
respectively.
\end{theorem}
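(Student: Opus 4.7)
The plan is to reduce to Theorem~\ref{thm:typecheck-pft} for \abb{i-pft}'s by converting a given \abb{tl}$^\text{\abb{db}}$ program into an equivalent ordinary \abb{i-pft} of at most singly exponential size, after which the $(k=0)$ bounds of Theorem~\ref{thm:typecheck-pft} yield the claim once we apply the composition rule that an $m$-fold exponential of a $j$-fold exponential is $(m+j)$-fold exponential.

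First, I would apply the construction of Lemma~\ref{lem:tlipft} to a given \abb{tl}$^\text{\abb{db}}$ program $\cP$ of size $n$, obtaining an equivalent \abb{i-pft} $\cM$ with \abb{mso} head tests (for the rule guards $\phi(x)$) and \abb{mso} tests on the observable configuration (inside the selector subroutines $S_{q',\psi}$). Inspecting that proof, the states, pebble colours, and rules of $\cM$ are polynomial in $n$: the pebbles $([s_1],\dots,[s_m])$ range over tuples of subforests of right-hand sides of bounded length, and the \abb{mso} tests annotating the rules are inherited from $\cP$ as deterministic bottom-up finite-state tree automata of size at most $n$. I would then eliminate those tests from $\cM$. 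The head tests are handled by Lemma~\ref{lem:sites}, whose construction starting from a deterministic bottom-up FSA has polynomial overhead. The tests on the observable configuration are handled by Theorem~\ref{thm:mso}'s bead technique: each bead $(S,\delta,d)$ carries a subset $S$ of states of a \abb{ta}-with-\abb{mso}-head-tests $\cB_d$ computing the auxiliary trip $T_d$, multiplying the number of pebble colours by $2^{|Q_{\cB_d}|}$. Since $\cB_d$ (together with its head tests, also represented as deterministic bottom-up FSAs) can be extracted in polynomial time from a deterministic bottom-up FSA for $T_d$ via Proposition~\ref{prop:trips}, this step yields an ordinary \abb{i-pft} $\cM'$ of size $|\cM'|=2^{\text{poly}(n)}$.

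Applying Theorem~\ref{thm:typecheck-pft} to $\cM'$ with $k=0$, inverse type inference takes $2$-fold exponential time in $|\cM'|$ and typechecking takes $3$-fold exponential time in $|\cM'|$; composing with the single exponential blowup from $\cP$ to $\cM'$ gives $3$-fold and $4$-fold exponential time in $n$, respectively. The main obstacle is pinning down the polynomial complexity of the conversion in Proposition~\ref{prop:trips} from a deterministic bottom-up FSA for a regular trip to a tree-walking automaton with \abb{mso} head tests specified as such FSAs; a careful traversal of the proof in~\cite{bloem} should justify this, and as a fallback one can bypass $\cB_d$ entirely and encode the trip test directly into the beads using the states of the bottom-up FSA for $T_d$ (essentially replaying the post-order evaluation of Lemma~\ref{lem:regular} on the path from the topmost pebble to the head), which also keeps the blowup at one exponential.
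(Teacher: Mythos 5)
Your proposal is correct and follows essentially the same route as the paper's proof: convert the \abb{tl}$^\text{\abb{db}}$ program to an \abb{i-pft} with \abb{mso} tests via Lemma~\ref{lem:tlipft} in polynomial time, remove the tests via Theorem~\ref{thm:mso} at the cost of one exponential (caused by the state sets $S$ of $\cB_d$ in the bead colours, with the constructions of Lemmas~\ref{lem:regular}, \ref{lem:sites}, \ref{lem:visiblesites} and Proposition~\ref{prop:trips} all polynomial), and then invoke Theorem~\ref{thm:typecheck-pft} for $k=0$. You even flag the same delicate point the paper does, namely the polynomial-time complexity of the construction in Proposition~\ref{prop:trips}.
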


\begin{proof}
By Theorem~\ref{thm:typecheck-pft},
these problems are solvable for \abb{i-pft}'s in 2-fold and \mbox{3-fold} exponential time.
Let us now assume that the regular sites and trips used in \abb{mso} tests of \abb{i-pft}'s
are also represented by deterministic bottom-up finite-state tree automata.
Then it is easy to see that the construction in
the proof of Lemma~\ref{lem:tlipft} takes polynomial time.
However, the \abb{mso} tests that are used by the resulting \abb{i-pft} have to be removed,
and the construction in the proof of Theorem~\ref{thm:mso} takes exponential time, as can be 
checked in a straightforward way. That involves
checking that the constructions in the proofs of 
Lemmas~\ref{lem:regular},~\ref{lem:sites}, and~\ref{lem:visiblesites} 
take polynomial time, and so does the 
construction in the proof of Proposition~\ref{prop:trips} (for the nonfunctional case),  
i.e., in the proof of~\cite[Theorem~8]{bloem}. 
The exponential in the proof of Theorem~\ref{thm:mso} is due to the use 
of the sets of states~$S$ of $\cB_d$ in the colours of the beads. 
Hence, solving the above problems 
takes one more exponential for \abb{tl}$^\text{\abb{db}}$ programs than for \abb{i-pft}. 
\end{proof}

A \abb{tl} program $\cP=(\Sigma,\Delta,Q,Q_0,R)$ is a \emph{macro tree transducer}, 
more precisely an \abb{OI} macro tree transducer (see~\cite{EngVog85}), if it is ranked, 
and for every rule $\tup{q,\phi(x)}(z_1,\dots,z_n)\to f$ the following hold. 
First, $\phi(x)\equiv \lab_\sigma(x)$ for some $\sigma\in\Sigma$. 
Second, for every selector $\tup{q',\psi(x,y)}$ that occurs in $f$, we have
$\psi(x,y)\equiv \down_i(x,y)$ for some $i\in[1,\rank_\Sigma(\sigma)]$.
It follows immediately from Lemma~\ref{lem:tlipft} that macro tree transducers 
can be simulated by \abb{i-ptt}. 
Let $\family{MT}_\text{\abb{OI}}$ denote the class of tree transductions realized by 
\abb{OI} macro tree transducers, and $\family{dMT}_\text{\abb{OI}}$ 
the corresponding deterministic class. 

\begin{corollary}\label{cor:mtiptt}
$\family{MT}_\text{\abb{OI}} \subseteq \IPTT$ and
$\family{dMT}_\text{\abb{OI}} \subseteq \IdPTT$. 
\end{corollary}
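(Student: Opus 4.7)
The plan is to derive the corollary directly from Lemma~\ref{lem:tlipft} by observing that every \abb{OI} macro tree transducer \emph{is} a ranked \abb{tl} program (with a restricted but equivalent syntax), and that the two semantics coincide.

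First I would verify the syntactic inclusion. Given an \abb{OI} macro tree transducer $\cP=(\Sigma,\Delta,Q,Q_0,R)$ in the sense of the paragraph preceding the corollary, every rule $\tup{q,\lab_\sigma(x)}(z_1,\dots,z_n)\to f$ is a ranked \abb{tl} template rule, because $\lab_\sigma(x)$ is a legal \abb{mso} formula with one free variable and every selector $\tup{q',\down_i(x,y)}$ in $f$ uses an \abb{mso} formula with two free variables whose associated trip is functional (the $i$-th child is unique when it exists) and the right-hand side~$f$ is a ranked tree in $T_\Delta(S\cup Z_n)$. Thus $\cP$ satisfies restrictions~(R1) and~(R2) and qualifies as a \emph{ranked} \abb{tl} program. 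Moreover, if $\cP$ is a deterministic macro tree transducer, then for each state $q$ and each input symbol~$\sigma$ there is at most one rule, so the label tests $\lab_\sigma(x)$ for distinct $\sigma$ are mutually exclusive, hence $\cP$ is a deterministic ranked \abb{tl} program.

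Next I would check that the transduction $\tau_\cP$ realized by $\cP$ as a macro tree transducer coincides with the transduction realized by $\cP$ as a ranked \abb{tl} program. The operational semantics of a \abb{tl} program rewrites an outermost node in the current output form: a node labelled $\tup{q,u}$ is rewritten using a template rule, and a node labelled by a selector $\tup{\tup{q',\down_i(x,y)},u}$ is rewritten by replacing it by $\tup{q',u'}$ applied to the passed-through parameter subtrees, where $u'$ is the unique $i$-th child of $u$. This is exactly the \abb{OI} (outside-in) derivation of a macro tree transducer, where the outer state occurrence is expanded before its arguments and formal parameters are replaced by whole actual-parameter trees. Hence $\tau_\cP$ is the same set of input/output pairs under both semantics.

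Having established these two facts, the corollary follows at once: by Lemma~\ref{lem:tlipft} we have $\TLr\subseteq\IPTT$ and $\dTLr\subseteq\IdPTT$, so $\family{MT}_\text{\abb{OI}}\subseteq\TLr\subseteq\IPTT$ and likewise $\family{dMT}_\text{\abb{OI}}\subseteq\dTLr\subseteq\IdPTT$. The only part that requires a brief argument rather than pure bookkeeping is the semantic coincidence in the previous paragraph, which is standard folklore relating \abb{OI} derivation of macro tree transducers to outermost rewriting, but it is worth stating explicitly since the two formalisms use superficially different definitional frameworks (substitution of actual parameters versus tree replacement with parameter passing through selectors).
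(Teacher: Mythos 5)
Your proof is correct and takes exactly the paper's route: the corollary is an immediate consequence of Lemma~\ref{lem:tlipft} once one observes that \abb{OI} macro tree transducers are ranked \abb{tl} programs, since $\lab_\sigma(x)$ and $\down_i(x,y)$ with $i\in[1,\rank_\Sigma(\sigma)]$ satisfy restrictions (R1) and (R2) and distinct label tests are mutually exclusive. Note only that the paper \emph{defines} $\family{MT}_\text{\abb{OI}}$ as a subclass of ranked \abb{tl} programs, so your second paragraph on the coincidence of the two semantics is automatic in this framework; it would be needed only to identify this class with the classical substitution-based OI macro tree transductions of the literature, which the paper takes for granted.
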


The inclusions are proper because for every \abb{OI} macro tree transduction
the height of the output tree is exponentially bounded by the height of the input tree
\cite[Theorem~3.24]{EngVog85}, whereas it is not difficult to construct 
a deterministic \abb{i-ptt} $\cM$ with input alphabet $\{\sigma,e\}$, 
where $\rank(\sigma)=2$ and $\rank(e)=0$, such that the height of the output tree 
is exponential in the \emph{size} of the input tree. 
The transducer~$\cM$ is similar to the \abb{i-ptt} $\cM_\text{sib}$ 
of Example~\ref{ex:siberie}, viewing the nodes of the input tree as large cities that
are ordered by document order; thus, the number of itineraries is indeed
exponential in the size of the input tree.
Note that by~\cite[Corollary~7.2]{EngMan99} and~\cite[Theorem~6.18]{EngVog85}, 
$\family{dMT}_\text{\abb{OI}}$ properly contains the class $\family{DMSOT}$ of deterministic 
\emph{\abb{mso} definable tree transductions} (see also~\cite[Section~8]{thebook}).  
Note also that, since $\family{dB}$ is properly contained in $\family{dMT}_\text{\abb{OI}}$
by~\cite[Corollary~6.16]{EngVog85}, the second part of Corollary~\ref{cor:mtiptt} strengthens 
the second part of Theorem~\ref{thm:bottomup}. It is open whether or not $\family{B}$ 
is contained in $\family{MT}_\text{\abb{OI}}$.  

We now turn to the inclusion $\family{I-PFT}\subseteq\family{TL}$.
To prove that, we need a normal form for \abb{i-pft}. 
We say that a rule of an \abb{i-pft} is \emph{initial} 
if the state in its left-hand side is an initial state. 
We define an \abb{i-pft} 
$\cM = (\Sigma, \Delta, Q, Q_0, C, \nothing, C_\mathrm{i}, R, 0)$ 
with $C=C_\mathrm{i}$
to be in \emph{normal form} 
if its rules satisfy the following five requirements:

(1) Initial states do not appear in the right-hand side of a rule.

(2) All initial rules are of the form 
$\tup{q_0,\sigma,0,\nothing}\to \tup{q,\drop_c}$ for some 
$q_0\in Q_0$, $\sigma\in\Sigma$, $q\in Q\setminus Q_0$, and $c\in C$.
Intuitively,  $\cM$ starts its computation by dropping a pebble 
on the root of the input tree. 

(3) All non-initial rules have a left-hand side of the form 
$\tup{q,\sigma,j,\{c\}}$ with $c\in C$.
Intuitively, $\cM$ always observes the topmost pebble, i.e., 
that pebble is always at the position of the head. 

(4) All non-initial non-output rules have a right-hand side $\tup{q',\alpha}$
with $q'\in Q\setminus Q_0$ and $\alpha=\stay$ or $\alpha=\mu;\drop_c$ or
$\alpha=\lift_c;\mu$ where $c\in C$ and 
$\mu\in\{\up,\stay\}\cup\{\down_i\mid i\in[1,\m_\Sigma]\}$. 
We will identify $\stay;\drop_c$ with $\drop_c$ and $\lift_c;\stay$ with $\lift_c$. 
Intuitively, to force that $\cM$ always observes the topmost pebble, 
$\cM$ always drops a pebble after moving, and always moves after lifting a pebble.
Note that, in a successful computation, 
$\cM$ never lifts the pebble that it dropped with an initial rule.

(5) There is a function $\delta$ from $C$ to 
$\{\up,\stay\}\cup\{\down_i\mid i\in[1,\m_\Sigma]\}$ such that 
(i) if a rule of $\cM$ has right-hand side $\tup{q',\lift_c;\mu}$, 
then $\mu=\delta(c)$, and
(ii)~for every rule $\tup{q,\sigma,j,\{d\}}\to \tup{q',\mu;\drop_c}$ of $\cM$,
if $\mu=\up$ then $\delta(c)=\down_j$, 
if $\mu=\stay$ then $\delta(c)=\stay$, and 
if $\mu=\down_i$ then $\delta(c)=\up$.  
Intuitively this means that $\cM$, after lifting a pebble, always knows
where to find the new topmost pebble. 
 
This ends the definition of normal form. Obviously, it can also be defined
for \abb{i-ptt}'s and for \abb{i-pta}'s. The \abb{i-pta} in normal form 
can be viewed as a reformulation of the two-way
backtracking pushdown tree automaton of~\cite{Slu}.
The \abb{i-ptt} in normal form can be viewed as a reformulation of the 
RT(P($S$))-transducer of~\cite{Eng86,EngVog}, 
where $S$ is the storage type Tree-walk of~\cite{Eng86}.\footnote{See 
also~\cite[Section~3.3]{EngMan03} where the \abb{tt} is related to 
the RT($S$)-transducer for $S =$ Tree-walk. 
}

\begin{lemma}\label{lem:ipftnf}
For every \abb{i-pft} $\cM$ an equivalent \abb{i-pft} $\cM'$ in normal form 
can be constructed. If $\cM$ is deterministic, then so is $\cM'$. 
The same holds for \abb{i-ptt}. 
\end{lemma}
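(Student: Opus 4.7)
My plan is to construct $\cM'$ in two stages. First I preprocess $\cM$ to ensure (1) and (2): for each initial state $q_0 \in Q_0$, introduce a fresh non-initial copy $\bar q_0$, replace every occurrence of $q_0$ in the right-hand side of any rule by $\bar q_0$, and duplicate every rule whose left-hand side uses $q_0$ to also have a version with $\bar q_0$. Then take new initial states $\tilde q_0$, one per old $q_0$; their only rules are $\tup{\tilde q_0,\sigma,0,\nothing} \to \tup{\bar q_0,\drop_{c_\bot}}$ for each $\sigma \in \Sigma$, where $c_\bot$ is a fresh ``bottom'' colour. These new states never appear in any right-hand side, so (1) and (2) hold and are preserved by the next stage.

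Next I enrich the colour set to $C' = X \times D$ with $X = \{\bot,*\} \cup C$ and $D = \{\stay,\up\}\cup\{\down_i\mid i\in[1,\m_\Sigma]\}$, and define $\delta((x,d)) = d$; this makes (5)(i) automatic. The automaton $\cM'$ simulates $\cM$ step by step while maintaining the invariant that the top of $\cM'$'s stack is always at the head position, and that its colour is $(c,\stay)$ with $c \in C$ iff $\cM$'s top pebble lies at the head with colour $c$; the bookkeeping colours $(*,d)$ and the initial $(\bot,\stay)$ otherwise signal that $\cM$'s top pebble is elsewhere or absent. Every $\cM$-rule $\tup{q,\sigma,j,b}\to\zeta$ is replicated as an $\cM'$-rule for each compatible top colour: for $b=\nothing$, each $(x,d)$ with $x\in\{\bot,*\}$; for $b=\{c\}$ only $(c,\stay)$.

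The central device is a local cancel-or-drop rule that simulates each $\cM$-step in exactly one $\cM'$-step. An $\cM$-move $\up$ is simulated by $\lift_{(*,\up)};\up$ if the top is $(*,\up)$ (the move would undo the last bookkeeping drop), and otherwise by $\up;\drop_{(*,\down_j)}$, where $j$ is the current child number; symmetrically, $\down_i$ cancels $(*,\down_i)$ and otherwise drops $(*,\up)$. A pure $\stay$ maps to pure $\stay$; an $\cM$-drop of $c$ to $\stay;\drop_{(c,\stay)}$; an $\cM$-lift of $c$ to $\lift_{(c,\stay)};\stay$. Output rules are copied verbatim, with the enriched top colour threaded through their left-hand side. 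Requirement (5)(ii) now holds by inspection, since every dropped colour records precisely the inverse of its accompanying move.

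The hard part will be verifying the invariant. The crucial observation is that any walk in a tree that returns to its starting point must consist of well-nested edge traversals (since trees are acyclic), so the local LIFO cancel-or-drop rule exactly implements the natural unwinding: whenever $\cM$'s head returns to the position of its top pebble $c$, all bookkeeping dropped after that pebble has already been cancelled in turn, so $(c,\stay)$ is exposed on top of $\cM'$'s stack at the very moment $\cM$ applies its lift rule. A routine induction on the length of $\cM$'s computation then establishes a bijection between reachable $\cM$-configurations and reachable $\cM'$-configurations. Determinism transfers directly, since each new rule is selected solely by the current top colour, which is determined by the preceding $\cM'$-step; and exactly the same construction works for \abb{i-ptt}, because output rules affect neither the head position nor the pebble stack.
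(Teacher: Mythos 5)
Your construction is essentially the paper's own proof: your bookkeeping colours $(*,d)$ are exactly the ``beads'' $\up$, $\down_i$ used there, your cancel-or-drop simulation of moves and your choice of $\delta((x,d))=d$ coincide with the paper's rules and its function $\delta$, and the well-nestedness/LIFO argument is the same correctness observation. The proposal is correct; the only difference is that you spell out how to enforce requirements (1) and (2) via fresh initial states and a bottom colour, which the paper simply assumes without loss of generality.
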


\begin{proof}
The idea of the construction is a simplified version of the one in the proof of 
Theorem~\ref{thm:mso}, where ``beads'' are used to cover the shortest path between 
the head and the topmost pebble. Assuming that the \abb{i-pta} $\cA$ in that proof 
starts by dropping a pebble on the root (which is never lifted), 
the constructed \abb{i-pta} $\cA'$ satisfies the above requirements on the rules. 
To show the details, we will repeat that construction, in a simplified form.
Here, the only information a bead has to carry is the position 
of the previous pebble or bead. Moreover, we do not have to drop a bead 
on the position of the topmost pebble. 

Let $\cM$ be an \abb{i-pft} with colour set $C$. We may obviously assume 
that~$\cM$ already satisfies the first two requirements above. 
We construct $\cM'$ with the same states and initial states as $\cM$,
and with the colour set $C\cup B$ where 
$B=\{\up\}\cup\{\down_i\mid [1,\m_\Sigma]\}$. 
The function $\delta$ of requirement~(5) is defined by 
$\delta(d)=d$ for every $d\in B$,
and $\delta(c)=\stay$ for every $c\in C$. 
The rules of $\cM'$ are obtained from those of $\cM$ as follows. 
The initial rules of $\cM$ are also rules of~$\cM'$. 

If $\tup{q,\sigma,j,\nothing}\to \tup{q',\up}$ is a rule of $\cM$,
then $\cM'$ has the rules $\tup{q,\sigma,j,\{\up\}}\to \tup{q',\lift_\up;\up}$
and $\tup{q,\sigma,j,\{\down_i\}}\to \tup{q',\up;\drop_{\down_j}}$ for every $i$.
Also, if $\tup{q,\sigma,j,\{c\}}\to \tup{q',\up}$ is a rule of $\cM$,
then $\cM'$ has the rule $\tup{q,\sigma,j,\{c\}}\to \tup{q',\up;\drop_{\down_j}}$.

Similarly, if $\tup{q,\sigma,j,\nothing}\to \tup{q',\down_i}$ is a rule of $\cM$,
then $\cM'$ has the rules 
$\tup{q,\sigma,j,\{\down_i\}}\to \tup{q',\lift_{\down_i};\down_i}$ and
$\tup{q,\sigma,j,\{\mu\}}\to \tup{q',\down_i;\drop_\up}$ 
for every $\mu\in\{\up\}\cup\{\down_k\mid k\neq i\}$. 
Also, if $\tup{q,\sigma,j,\{c\}}\to \tup{q',\down_i}$ is a rule of $\cM$,
then $\cM'$ has the rule $\tup{q,\sigma,j,\{c\}}\to \tup{q',\down_i;\drop_\up}$.

The remaining rules of $\cM$ (viz. rules with right-hand side $\tup{q',\stay}$, 
output rules, rules that lift, and non-initial rules that drop) are treated as follows. 
If $\tup{q,\sigma,j,\nothing}\to \zeta$ is such a rule of $\cM$, then 
$\cM'$ has the rules $\tup{q,\sigma,j,\{\mu\}}\to \zeta$
for every bead $\mu\in B$.
If $\tup{q,\sigma,j,\{c\}}\to \zeta$ is such a rule of $\cM$, then  
it is also a rule of $\cM'$.

It should be clear that $\cM'$ is equivalent to $\cM$. 
Whenever $\cM$ observes the topmost pebble $c$, so does $\cM'$. 
Whenever $\cM$ does not observe $c$, $M'$ observes a bead 
that indicates the direction of the topmost pebble. 
Note that if $\cM'$ lifts pebble $c$ of~$\cM$, 
the new topmost pebble/bead is always at the same position,
because when $c$ was dropped $\cM'$ was observing the topmost pebble/bead. 
\end{proof}

The \abb{tl} program that we will construct to simulate a given \abb{i-pft} $\cM$
will use \abb{mso} formulas $\phi(x)$ and $\psi(x,y)$ that closely resemble 
the tests and instructions in the left-hand and right-hand sides 
of the rules of $\cM$, respectively.  Those tests and instructions are 
``local'' in the sense that they only concern the node~$x$, its parent, and its children.
Thus, we say that a \abb{tl} program $\cP$ is \emph{local} if in the left-hand side 
of a rule it only uses a formula $\phi_{\sigma,j}(x)$ for $\sigma\in\Sigma$
and $j\in[0,\m_\Sigma]$, where $\phi_{\sigma,0}(x)\equiv \lab_\sigma(x)\wedge \rt(x)$
and $\phi_{\sigma,j}(x)\equiv \lab_\sigma(x)\wedge \childno_j(x)$ for $j\neq 0$,
and in the right-hand side of that rule it only uses the formulas $\up(x,y)$ 
(provided $j\neq 0$), $\stay(x,y)$, and $\down_i(x,y)$ 
for $i\in[1,\rank_\Sigma(\sigma)]$.\footnote{Recall that
$\rt(x) \equiv \neg\,\exists z(\down(z,x))$, 
$\childno_i(x) \equiv \exists z(\down_i(z,x))$, 
$\up(x,y) \equiv \down(y,x)$,
and $\stay(x,y)\equiv x=y$.
} 
Thus, $\cP$~also satisfies restriction (R2) in the definition of a ranked \abb{tl} program.
Note that macro tree transducers, as defined before Corollary~\ref{cor:mtiptt}, 
are local ranked \abb{tl} programs. 
The classes of transductions realized by local \abb{tl} programs will be decorated 
with a subscript~$\ell$.

\begin{lemma}\label{lem:ipfttl}
$\IPFT\subseteq\TLl$ and $\IdPFT\subseteq\dTLl$. 
Moreover, $\IPTT\subseteq\TLlr$ and $\IdPTT\subseteq\dTLlr$.
\end{lemma}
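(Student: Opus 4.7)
My plan is to first put the given \abb{i-pft} $\cM$ into normal form by Lemma~\ref{lem:ipftnf}, so that $\cM$ always observes its topmost pebble, always drops after moving and moves after lifting, and the function $\delta:C\to\{\up,\stay\}\cup\{\down_i\}$ records where the new topmost pebble lies after a lift. The key idea is then the standard ``indexed $=$ macro'' correspondence: the linearly managed pebble stack of $\cM$ is encoded in the nested parameters of selectors of a \abb{tl} program $\cP$, where a parameter holds the ``continuation'' forest that will eventually be produced once the current top pebble is lifted.

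Concretely, enumerate $Q=\{q_1,\dots,q_n\}$ and introduce \abb{tl} states $[q,c]$ of rank~$n$ for each state $q$ and pebble colour $c$ of $\cM$, with formal parameters $z_{q_1},\dots,z_{q_n}$; an invocation $\langle[q,c],u\rangle(\zeta_{q_1},\dots,\zeta_{q_n})$ represents the configuration of $\cM$ in state $q$ at node $u$ with top pebble $c$, and $\zeta_{q'}$ is the forest that would be output if $\cM$ later lifts $c$ and resumes in state $q'$ at the node one step in direction $\delta(c)$. For each non-initial rule $\tup{q,\sigma,j,\{c\}}\to\zeta$ of $\cM$, $\cP$ has a rule with left-hand side $\tup{[q,c],\phi_{\sigma,j}(x)}(z_{q_1},\dots,z_{q_n})$, whose right-hand side is: $\tup{[q',c],\stay(x,y)}(z_{q_1},\dots,z_{q_n})$ if $\zeta=\tup{q',\stay}$; the single parameter $z_{q'}$ if $\zeta=\tup{q',\lift_c;\mu}$ (passing the continuation back, with $\mu=\delta(c)$); and $\tup{[q',c'],\mu(x,y)}(A_{q_1},\dots,A_{q_n})$ if $\zeta=\tup{q',\mu;\drop_{c'}}$, where the $q''$-th actual parameter is $A_{q''}=\tup{[q'',c],\stay(x,y)}(z_{q_1},\dots,z_{q_n})$, reflecting that after later lifting $c'$ and returning in state $q''$, $\cM$ is back at node $x$ with $c$ on top and the same inherited continuations. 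Output rules are passed through the usual way, e.g.\ $\tup{q,\sigma,j,\{c\}}\to\delta(\tup{q_1,\stay})\tup{q_2,\stay}$ becomes $\tup{[q,c],\phi_{\sigma,j}(x)}(\vec z)\to\delta(\tup{[q_1,c],\stay(x,y)}(\vec z))\tup{[q_2,c],\stay(x,y)}(\vec z)$. For each initial rule $\tup{q_0,\sigma,0,\nothing}\to\tup{q,\drop_c}$ of $\cM$, introduce an initial \abb{tl} state $[q_0]$ of rank~$0$ and the rule $\tup{[q_0],\phi_{\sigma,0}(x)}\to\tup{[q,c],\stay(x,y)}(\dagger,\dots,\dagger)$, where $\dagger$ is any fixed forest (a single output symbol of rank~$0$, say), since $\cM$ never lifts its initial pebble in a successful computation and so these parameters are never accessed.

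The program $\cP$ is local by inspection: tests have the form $\phi_{\sigma,j}(x)$ and selectors use only $\stay(x,y)$, $\up(x,y)$ (which is guarded by $\childno_j$, $j\neq 0$), or $\down_i(x,y)$ with $i\leq\rank_\Sigma(\sigma)$. Determinism is preserved because distinct rules of $\cP$ sharing state $[q,c]$ must come from distinct $(\sigma,j)$-left-hand sides of $\cM$, and the $\phi_{\sigma,j}(x)$ are pairwise exclusive. For the ranked case $\IPTT\subseteq\TLlr$ and $\IdPTT\subseteq\dTLlr$, all right-hand sides of the construction are ranked trees and every selector formula defines a functional trip that is total wherever the left-hand test holds, so restrictions (R1) and (R2) for ranked \abb{tl} are satisfied. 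Correctness follows from a straightforward induction on computation length establishing that, for every reachable configuration $\langle q,u,\pi\rangle$ of $\cM$ with $\pi=(u_1,c_1)\cdots(u_k,c_k)$ and $u=u_k$, the nested \abb{tl} invocation obtained by peeling off the stack from the top produces exactly the output forests that $\cM$ generates from that configuration. The only delicate point — and the main place where care is needed — is the parameter-passing bookkeeping: when $\cM$ drops a new pebble, $\cP$ must forward as parameters to the new top-level state precisely the $\lvert Q\rvert$-tuple of ``return continuations'' of the current state, and the normal form of Lemma~\ref{lem:ipftnf}, together with the state-indexed parameters, makes this correspondence entirely local and mechanical rather than requiring any global reasoning about computations.
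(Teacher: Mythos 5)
Your proposal is correct and follows essentially the same route as the paper's proof: normalize via Lemma~\ref{lem:ipftnf}, use \abb{tl} states indexed by (state, top pebble colour) of rank $\#Q$ whose parameters hold the per-state return continuations, translate drop as a selector call that re-packages the current continuations, lift as projection onto the appropriate parameter, and verify correctness by an inductive representation of configurations as nested invocations. The only differences are cosmetic (state-indexed versus positionally ordered parameters, and a dummy forest $\dagger$ in place of the paper's dead state $q_\bot$ for the never-accessed initial parameters).
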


\begin{proof}
Let $\cM = (\Sigma, \Delta, Q, Q_0, C, \nothing, C_\mathrm{i}, R, 0)$ 
with $C=C_\mathrm{i}$ be an \abb{i-pft} in normal form. 
We construct a \abb{tl} program $\cP$ that is equivalent to $\cM$. 
The set of states of $\cP$ is 
\[
Q_0\cup ((Q\setminus Q_0)\times C) \cup \{q_\bot\}.
\]
Each initial state has rank~$0$, 
each pair $\tup{q,c}$ has rank~$\#(Q\setminus Q_0)$,
and $q_\bot$ has rank~$0$.  
The set of initial states of $\cP$ is $Q_0$.
The rules of $\cP$ are defined as follows, 
where we denote a state $\tup{q,c}$ as $q^c$.
Let $Q\setminus Q_0=\{q_1,\dots,q_n\}$ 
where we fix the order $q_1,\dots,q_n$. 

First, if $\tup{q_0,\sigma,0,\nothing}\to \tup{q,\drop_c}$ 
is an initial rule of $\cM$, then $\cP$ has the rule 
$\tup{q_0,\phi_{\sigma,0}(x)}\to \tup{q^c,\stay(x,y)}(\bot,\dots,\bot)$,
where $\bot$ abbreviates $\tup{q_\bot,\stay(x,y)}$.
There are no rules of $\cP$ with $q_\bot$ in the left-hand side. 

Second, let $\tup{q,\sigma,j,\{c\}}\to \zeta$ be a (non-initial) rule of $\cM$ 
that does not contain a drop- or lift-instruction. Thus, $\zeta$ is of the form 
$\tup{p,\stay}$, $\tup{p_1,\stay}\tup{p_2,\stay}$, $\delta(\tup{p,\stay})$, 
or $\epsilon$, with $p,p_1,p_2\in Q$ and $\delta\in\Delta$.\footnote{In the case where 
$\cM$ is an \abb{i-ptt}, $\zeta$ is of the form 
$\tup{p,\stay}$ or $\delta(\tup{p_1,\stay},\dots,\tup{p_m,\stay})$.
}
Then $\cP$ has the rule
$\tup{q^c,\phi_{\sigma,j}(x)}(z_1,\dots,z_n)\to\zeta'$, where 
$\zeta'$ is obtained from $\zeta$ by replacing every $\tup{p,\stay}$ by 
$\tup{p^c,\stay(x,y)}(z_1,\dots,z_n)$.  

Third, let $\tup{q,\sigma,j,\{d\}}\to \tup{p,\mu;\drop_c}$ be a rule 
of $\cM$. Note that for every $\mu\in\{\up,\stay\}\cup\{\down_i\mid i\in[1,\m_\Sigma]\}$,
there is an \abb{mso} formula $\mu(x,y)$. 
Then $\cP$ has the rule 
\[
\tup{q^d,\phi_{\sigma,j}(x)}(z_1,\dots,z_n)\to\tup{p^c,\mu(x,y)}(s_1,\dots,s_n)
\]
where $s_i=\tup{q_i^d,\stay(x,y)}(z_1,\dots,z_n)$ for every $i\in[1,n]$;
thus, the rule is
\begin{eqnarray*}
\lefteqn{\tup{q^d,\phi_{\sigma,j}(x)}(z_1,\dots,z_n)\to} \\[1mm]
& & \tup{p^c,\mu(x,y)}
(\tup{q_1^d,\stay(x,y)}(z_1,\dots,z_n),\dots,
\tup{q_n^d,\stay(x,y)}(z_1,\dots,z_n)).
\end{eqnarray*}

Fourth and final, if $\tup{q,\sigma,j,\{c\}}\to \tup{q_i,\lift_c;\mu}$ is a rule 
of $\cM$, then $\cP$ has the rule 
$\tup{q^c,\phi_{\sigma,j}(x)}(z_1,\dots,z_n)\to z_i$. 

Intuitively, $\cP$ is in state $q^c$ when $\cM$ is in state $q$ 
and the topmost pebble of~$\cM$ is $c$. The parameter $z_i$ of $q^c$
contains the continuation of $\cM$'s computation 
just after pebble $c$ is lifted and $\cM$ goes into state $q_i$.
At the moment that $\cM$ drops pebble~$c$, $\cP$ does not know what the state 
$q_i$ of $\cM$ will be after lifting $c$ and thus prepares the continuation 
for every possible state. The correct continuation is then chosen 
by $\cP$ when it simulates $\cM$'s lifting of $c$. 
Note that due to requirement~(5) of the normal form, when $\cM$ lifts a pebble, 
it returns to the same node where it decided to drop the pebble 
(at that node, or at the parent or at one of the children of that node). 

Formally, we define a mapping `$\rep$' from the output forms of $\cM$ 
(except the initial one) to those restricted output forms of $\cP$ 
of which the outermost nodes are labelled by a symbol from $\Delta$ 
or by a configuration $\tup{q,u}$ where $q$~is a state of $\cP$ (thus, they are 
not labelled by a configuration $\tup{p,u}$ where $p$~is a selector of $\cP$). 
As in the proof of Lemma~\ref{lem:tlipft}, 
the $\Delta$-labelled part of the output form is not changed. 
Thus, it remains to define `$\rep$' for the configurations of $\cM$ that contain 
non-initial states, which are of the form $\tup{q,u,\pi(u,c)}$ because the 
topmost pebble is always at the position of the head. 
We define $\rep(q,u,\pi(u,c))=\tup{q^c,u} \rep'(\pi)$, 
where $\rep'$ maps the pebble stacks of $\cM$ to sequences of output forms of $\cP$,
recursively as follows:
$\rep'(\epsilon)=(\bot,\dots,\bot)$ and $\rep'(\pi(u,c))=(s_1,\dots,s_n)$ where 
$s_i=\tup{\tup{q_i^c,\stay(x,y)},u}\rep'(\pi)$ for every $i\in[1,n]$.
Note that `$\rep$' is injective. 

It is now straightforward to prove, for every $q\in Q\setminus Q_0$, 
every $c\in C$, every input tree $t$, 
and every output form $s$ of $\cP$ (restricted as described above), 
that $\tup{q^c,\rt_t}(\bot,\dots,\bot) \Rightarrow^*_{t,\cP} s$ if and only if 
there exists an output form $s'$ of $\cM$ such that 
$\tup{q,\rt_t,(\rt_t,c)} \Rightarrow^*_{t,\cM} s'$ and $\rep(s')=s$.
Since `$\rep$' is injective, $s'$ is in fact unique. 
Note that each computation step of $\cM$  
is simulated by two (or three) computation steps of $\cP$, 
where the second (and third) step executes a selector 
to satisfy the restriction on the output forms of $\cP$. 
Due to its special form, the execution of such a selector $\psi(x,y)$
changes the label $\tup{\tup{q',\psi(x,y)},u}$ of a node of the output form 
into $\tup{q',u'}$ where $u'$ is the unique node of the input tree for which
$\psi(u,u')$ holds. 

Taking into account the initial rules of $\cM$, it should be clear that the above 
equivalence proves that $\tau_\cP=\tau_\cM$. 
\end{proof}

\begin{example}
We illustrate Lemma~\ref{lem:ipfttl} with the deterministic 
\abb{i-ptt} $\cM_\mathrm{sib}$
of Example~\ref{ex:siberie}. 
We first construct an \abb{i-ptt} $\cM'_\mathrm{sib}$ in normal form 
that is equivalent to $\cM_\mathrm{sib}$. We also allow tuples
$\tup{q',\lift_d;\mu}$ in the output rules for any colour~$d$, 
which can easily be handled too. 
The transducer $\cM'_\mathrm{sib}$ has a new initial state~$q_\mathrm{in}$, 
in which it drops pebble $\odot$ on the root, which also serves as the pebble `$\up$'.
The pebble `$\down_1$' is denoted by $\downarrow$. 
The normal form function $\delta$ is defined by $\delta(\odot)=\up$,
$\delta(\downarrow)=\down_1$, and $\delta(c)=\stay$ for $c\in\{0,1\}$. 
There are new states $\overline{q}_0$ and $\overline{q}_1$ in which $\cM'_\mathrm{sib}$
moves up, drops pebble $\downarrow$, and goes into the corresponding unbarred state.
Thus the rules for them are  \\[1mm]
\indent
$\rho_{c,d}: \tup{\overline{q}_c,\sigma,1,\{d\}} \to 
\tup{q_c,\up;\drop_\downarrow}$\\[1mm]
\noindent
with $\sigma\in\Sigma$ and $d\in\{\odot,\downarrow,0,1\}$. 
The other rules (with $c=1$ or $i=0$ in rule $\rho_4$ as usual) are \\[1mm]
\indent
$\rho_0: \tup{q_\mathrm{in},\sigma_1,0,\nothing} \to 
   \tup{q_\mathrm{start},\drop_\odot}$ \\[1mm]
\indent
$\rho_1: \tup{q_\mathrm{start},\sigma_1,j,\{\odot\}} \to 
  \tup{q_\mathrm{start},\mathrm{down}_1;\drop_\odot}$ \\[1mm]
\indent
$\rho_2: \tup{q_\mathrm{start},\sigma_0,1,\{\odot\}} \to 
   \tup{\overline{q}_1,\stay}$ \\[1mm]
\indent
$\rho_3: \tup{q_0,\lambda_0,1,\{\downarrow\}} \to \tup{\overline{q}_0,\stay}$ \\[1mm]
\indent
$\rho_4: \tup{q_c,\lambda_i,1,\{\downarrow\}} \to 
   \tup{\overline{q}_i,\mathrm{drop}_c}$ \\[1mm]
\indent
$\rho_5: \tup{q_c,\sigma_1,0,\{\downarrow\}} \to 
r(\tup{q_\mathrm{out},\mathrm{stay}}, 
  \tup{q_\mathrm{next},\lift_\downarrow;\mathrm{down}_1})$ \\[1mm]
\indent
$\rho_6: \tup{q_\mathrm{out},\sigma_1,0,\{\downarrow\}} \to 
\sigma_1(\tup{q_\mathrm{out},\lift_\downarrow;\mathrm{down}_1})$ \\[1mm]
\indent
$\rho_7: \tup{q_\mathrm{out},\sigma_1,1,\{\downarrow\}} \to 
\tup{q_\mathrm{out},\lift_\downarrow;\mathrm{down}_1}$ \\[1mm]
\indent
$\rho_8: \tup{q_\mathrm{out},\sigma_1,1,\{c\}} \to 
\sigma_1(\tup{q_\mathrm{out},\lift_c})$ \\[1mm]
\indent
$\rho_9: \tup{q_\mathrm{out},\sigma_0,1,\{\odot\}} \to \sigma_0$ \\[1mm]
\indent
$\rho_{10}: \tup{q_\mathrm{next},\sigma_1,1,\{\downarrow\}} \to 
\tup{q_\mathrm{next},\lift_\downarrow;\mathrm{down}_1}$ \\[1mm]
\indent
$\rho_{11}: \tup{q_\mathrm{next},\sigma_1,1,\{c\}} \to 
  \tup{\overline{q}_c,\lift_c}$ \\[1mm]
\indent
$\rho_{12}: \tup{q_\mathrm{next},\sigma_0,1,\{\odot\}} \to e$ \\[1mm]
\noindent
We now construct the deterministic \abb{tl} program $\cP$ 
corresponding to $\cM'_\mathrm{sib}$.
The states of $\cM'_\mathrm{sib}$ after lifting $\downarrow$ are 
$q_\mathrm{out}$ and $q_\mathrm{next}$. Thus, the states of $\cP$ 
that are active when the topmost pebble is $\downarrow$ only need two parameters 
$z_1,z_2$ corresponding to $q_\mathrm{out}$ and $q_\mathrm{next}$. 
Similarly, the states of $\cP$ that are active when the topmost pebble is $c$ 
only need two parameters $z_1,z_2$ corresponding to 
$q_\mathrm{out}$ and~$\overline{q}_c$. The states of $\cP$ 
that are active when the topmost pebble is $\odot$ do not need parameters, 
because $\odot$ is never lifted. 
Program $\cP$ has the states $q_\mathrm{in}$, $q_\mathrm{start}^\odot$,
$q_c^{\downarrow}$, $\overline{q}_c^d$, $q_\mathrm{out}^d$, and $q_\mathrm{next}^d$,
where $c\in\{0,1\}$ and $d\in\{\odot,\downarrow,0,1\}$. 
Note that the state $q_\bot$ is superfluous. The initial state 
$q_\mathrm{in}$ and all states with superscript $\odot$ have 
rank~$0$, and the other states have rank~$2$. 

Program $\cP$ has the following rule corresponding to rule
$r_{c,d}$ of $\cM'_\mathrm{sib}$, with $d\neq\odot$:
\begin{eqnarray*}
\lefteqn{\rho_{c,d}: \tup{\overline{q}_c^d,\phi_{\sigma,1}(x)}(z_1,z_2)\to} \\[1mm]
& & \quad\quad\tup{q_c^{\downarrow},\up(x,y)}
      (\tup{q_\mathrm{out}^d,\stay(x,y)}(z_1,z_2),
       \tup{q_\mathrm{next}^d,\stay(x,y)}(z_1,z_2))
\end{eqnarray*}
and for $d=\odot$ the same rule without the parameters $(z_1,z_2)$. 
The other rules of $\cP$ are \\[1mm]
\indent
$\rho_0: \tup{q_\mathrm{in},\phi_{\sigma_1,0}(x)}\to 
  \tup{q_\mathrm{start}^\odot,\stay(x,y)}$ \\[1mm]
\indent
$\rho_1: \tup{q_\mathrm{start}^\odot,\phi_{\sigma_1,j}(x)}\to 
  \tup{q_\mathrm{start}^\odot,\down_1(x,y)}$ \\[1mm]
\indent
$\rho_2: \tup{q_\mathrm{start}^\odot,\phi_{\sigma_0,1}(x)}\to 
  \tup{\overline{q}_1^\odot,\stay(x,y)}$ \\[1mm]
\indent
$\rho_3: \tup{q_0^{\downarrow},\phi_{\lambda_0,1}(x)}(z_1,z_2)\to
  \tup{\overline{q}_0^\downarrow,\stay(x,y)}(z_1,z_2)$ \\[1mm]
\indent
$\rho_4: \tup{q_c^{\downarrow},\phi_{\lambda_i,1}(x)}(z_1,z_2)\to$ \\[1mm]
\indent
$\quad\quad\quad\tup{\overline{q}_i^c,\stay(x,y)}
      (\tup{q_\mathrm{out}^\downarrow,\stay(x,y)}(z_1,z_2),
       \tup{\overline{q}_c^\downarrow,\stay(x,y)}(z_1,z_2))$ \\[1mm]
\indent
$\rho_5: \tup{q_c^{\downarrow},\phi_{\sigma_1,0}(x)}(z_1,z_2)\to
         r(\tup{q_\mathrm{out}^\downarrow,\stay(x,y)}(z_1,z_2),z_2)$ \\[1mm]
\indent
$\rho_6: \tup{q_\mathrm{out}^\downarrow,\phi_{\sigma_1,0}(x)}(z_1,z_2)\to 
   \sigma_1(z_1)$ \\[1mm]
\indent
$\rho_7: \tup{q_\mathrm{out}^\downarrow,\phi_{\sigma_1,1}(x)}(z_1,z_2)\to z_1$ \\[1mm]
\indent
$\rho_8: \tup{q_\mathrm{out}^c,\phi_{\sigma_1,1}(x)}(z_1,z_2)\to \sigma_1(z_1)$ \\[1mm]
\indent
$\rho_9: \tup{q_\mathrm{out}^\odot,\phi_{\sigma_0,1}(x)}\to \sigma_0$ \\[1mm]
\indent
$\rho_{10}: \tup{q_\mathrm{next}^\downarrow,\phi_{\sigma_1,1}(x)}(z_1,z_2)\to z_2$ \\[1mm]
\indent
$\rho_{11}: \tup{q_\mathrm{next}^c,\phi_{\sigma_1,1}(x)}(z_1,z_2)\to z_2$ \\[1mm]
\indent
$\rho_{12}: \tup{q_\mathrm{next}^\odot,\phi_{\sigma_0,1}(x)}\to e$ \\[1mm]
\noindent
Applying rule $\rho_6$ to the right-hand side of rule $\rho_5$, we obtain the rule
$\rho'_5: \tup{q_c^{\downarrow},\phi_{\sigma_1,0}(x)}(z_1,z_2)\to
         r(\sigma_1(z_1),z_2)$,
which is in fact rule $\rho_5$ of program $\cP_\mathrm{sib}$ of 
Example~\ref{ex:tlsiberie}, if we identify the states $q_c^{\downarrow}$ and $q_c$.  
Rules $\rho_0$ and $\rho_1$ of $\cP$ correspond to rule $\rho_1$ of $\cP_\mathrm{sib}$ 
in an obvious way (with $q_\mathrm{start}^\odot$ and $q_\mathrm{start}$ identified).
Since program $\cP$ is deterministic, and its states generate trees 
(rather than forests), we can also apply rules $\rho_7-\rho_{12}$ 
to the right-hand side of rule $\rho_{c,d}$, and we obtain the rules \\[1mm]
\indent
$\rho'_{c,\downarrow}: \tup{\overline{q}_c^\downarrow,\phi_{\sigma_1,1}(x)}(z_1,z_2)\to 
   \tup{q_c^{\downarrow},\up(x,y)}(z_1,z_2)$ \\[1mm]
\indent
$\rho'_{i,c}: \tup{\overline{q}_i^c,\phi_{\sigma_1,1}(x)}(z_1,z_2)\to 
   \tup{q_i^{\downarrow},\up(x,y)}(\sigma_1(z_1),z_2)$ \\[1mm]
\indent
$\rho'_{c,\odot}: \tup{\overline{q}_c^\odot,\phi_{\sigma_0,1}(x)}\to 
   \tup{q_c^{\downarrow},\up(x,y)}(\sigma_0,e)$ \\[1mm]
\noindent
Applying $\rho'_{1,\odot}$ to the right-hand side of $\rho_2$ we obtain 
$\rho'_2: \tup{q_\mathrm{start}^\odot,\phi_{\sigma_0,1}(x)}\to 
  \tup{q_1^{\downarrow},\up(x,y)}(\sigma_0,e)$, 
which is rule $\rho_2$ of $\cP_\mathrm{sib}$. 
Applying $\rho'_{0,\downarrow}$ to the right-hand side of $\rho_3$ we obtain 
$\rho'_3: \tup{q_0^{\downarrow},\phi_{\lambda_0,1}(x)}(z_1,z_2)\to
  \tup{q_0^{\downarrow},\up(x,y)}(z_1,z_2)$
which is rule $\rho_3$ of $\cP_\mathrm{sib}$.
Finally, applying rules $\rho'_{i,c}$, $\rho_7$, and $\rho'_{c,\downarrow}$ to the selectors 
in the right-hand side of rule $\rho_4$, respectively, we obtain the right-hand side 
$\tup{q_i,\up(x,y)}(\lambda_i(z_1),\tup{q_c,\up(x,y)}(z_1,z_2))$
of rule $\rho_4$ of $\cP_\mathrm{sib}$.
Thus, program $\cP$ is essentially the same as program $\cP_\mathrm{sib}$
of Example~\ref{ex:tlsiberie}. 
\end{example}

Lemmas~\ref{lem:tlipft} and~\ref{lem:ipfttl} together prove that \abb{tl} programs 
have the same expressive power as \abb{i-pft}'s.  
Additionally, they prove that for every \abb{tl} program there is an equivalent local one. 

\begin{theorem}\label{thm:tl}
$\TL=\TLl=\IPFT$ and $\dTL=\dTLl=\IdPFT$. 
Moreover, $\TLr=\TLlr=\IPTT$ and $\dTLr=\dTLlr=\IdPTT$.
\end{theorem}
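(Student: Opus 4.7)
The plan is essentially to chain the two preceding lemmas together with the trivial observation that local \abb{tl} programs form a syntactic subclass of all \abb{tl} programs.

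First I would note the trivial inclusions $\TLl \subseteq \TL$, $\dTLl \subseteq \dTL$, $\TLlr \subseteq \TLr$, and $\dTLlr \subseteq \dTLr$, which hold because a local \abb{tl} program is, by definition, an ordinary \abb{tl} program whose \abb{mso} formulas happen to be of the restricted local form $\phi_{\sigma,j}(x)$ in left-hand sides and $\up(x,y)$, $\stay(x,y)$, or $\down_i(x,y)$ in right-hand sides (and similarly deterministic/ranked locality is preserved).

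Then I would chain: by Lemma~\ref{lem:tlipft}, $\TL \subseteq \IPFT$, and by Lemma~\ref{lem:ipfttl}, $\IPFT \subseteq \TLl$. Combined with $\TLl \subseteq \TL$ this yields $\TLl \subseteq \TL \subseteq \IPFT \subseteq \TLl$, so all three classes coincide. The same chain works with ``d'' prefixed throughout, using the determinism-preserving parts of both lemmas, and also with ``r'' subscript throughout (replacing $\IPFT$ by $\IPTT$), since both lemmas explicitly give the ranked variants. This gives all six equalities in the theorem.

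There is no real obstacle here — the theorem is just the combination of the two inclusion lemmas already proved, together with the observation that locality is a restriction rather than an extension of the \abb{tl} formalism. The proof will be just a few lines.
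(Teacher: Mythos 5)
Your proposal is correct and is exactly the paper's argument: the theorem is obtained by chaining Lemma~\ref{lem:tlipft} ($\TL\subseteq\IPFT$, with the deterministic and ranked variants) with Lemma~\ref{lem:ipfttl} ($\IPFT\subseteq\TLl$, with variants) and the trivial syntactic inclusions $\TLl\subseteq\TL$ etc. Nothing is missing.
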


Since local \abb{tl} programs satisfy restriction (R2) in the definition 
of a ranked \abb{tl} program, the equation $\TL=\TLl$ shows that 
the pattern matching aspect that is involved in the execution 
of selectors, can be viewed as an extended feature. Moreover, even the ``jumps''
in the execution of selectors,
and the arbitrary \abb{mso} head tests in the left-hand sides of rules,
can be viewed as extended features of~$\TLl$.

Note that for \abb{tl}$^\text{\abb{db}}_\ell$ programs the construction in
the proof of Lemma~\ref{lem:tlipft} can easily be simplified to one that
takes polynomial time and that results in an \abb{i-pft} that
does not use \abb{mso} tests. 
That implies that the inverse type inference problem for such programs
is solvable in \mbox{$2$-fold} exponential time, and hence typechecking can be done in
$3$-fold exponential time (cf. Theorem~\ref{thm:tltypecheck}). 

The local ranked \abb{tl} program is an obvious reformulation of 
the ``macro tree-walking transducer'' (2-mtt) of~\cite{ManBerPerSei}.
The inclusion $\TLr\subseteq\TLlr$ is a (slightly stronger) version 
of~\cite[Theorem~5]{ManBerPerSei}. 
Moreover, the local ranked \abb{tl} program is the same as the 
\mbox{``$0$-pebble} macro tree transducer'' 
of~\cite[Section~5.1]{EngMan03} and it is the CFT($S$)-transducer of \cite{EngVog} 
for the storage type $S=$ Tree-walk, both of which generalize the 
macro attributed tree transducer of~\cite{KuhVog,FulVog} which additionally satisfies
a noncircularity condition. 
It follows from Lemma~\ref{lem:nul-decomp} and Theorem~\ref{thm:tl} that 
$\TLlr\subseteq \TT^2$, which was stated as an open problem in~\cite[Section~8]{EngMan03}
(where $\TLlr$ and $\TT$ are denoted 0-PMTT and 0-PTT, respectively). 
In view of Lemma~\ref{lem:ipftnf}, the equality $\TLlr=\IPTT$ is the same as the equality 
CFT($S$) $=$ RT(P($S$)) of~\cite[Theorem~5.24]{EngVog} for $S= \text{Tree-walk}$,
and similarly for the deterministic case.

%% ===========================================================
%% ==========================================================
\section{A TL Program in XSLT}\label{sec:siberie}
%% ===========================================================
%% ==========================================================

In Tables~\ref{tab:input} and~\ref{tab:output} we listed 
a possible input document and 
the resulting output document for 
the \abb{i-ptt} $\cM_\mathrm{sib}$ of Example~\ref{ex:siberie}.
In this section we present in Table~\ref{tab:xslt} an XSLT~1.0 program
with the same structure as the \abb{TL} program $\cP_\mathrm{sib}$ of 
Example~\ref{ex:tlsiberie}. 
In what follows we comment on the XSLT program 
and its relationship to $\cP_\mathrm{sib}$, abbreviated as $\cP$.

\begin{table*}[p]
\begin{scriptsize}
\begin{verbatim}
<xsl:stylesheet xmlns:xsl="http://www.w3.org/1999/XSL/Transform" version="1.0"> 
<xsl:output method="xml"/> 

    <xsl:template match="/">
        <xsl:for-each select="//stop[@final=1]">
            <xsl:call-template name="start" />
        </xsl:for-each>
    </xsl:template>

    <xsl:template name="start">
        <xsl:apply-templates select="parent::stop">            
            <xsl:with-param name="nextstoplarge" select="@large" />       
            <xsl:with-param name="stoplist">
                <xsl:copy>
                    <xsl:copy-of select="attribute::*" />
                </xsl:copy>
            </xsl:with-param>
            <xsl:with-param name="additionalresults">
                <endofresults />
            </xsl:with-param>
        </xsl:apply-templates>
    </xsl:template>
  
    <xsl:template match="stop">
        <xsl:param name="nextstoplarge" />
        <xsl:param name="stoplist" />
        <xsl:param name="additionalresults" />
        <xsl:if test="@initial = 1">
            <result>
                <xsl:copy>
                    <xsl:copy-of select="attribute::*" />
                    <xsl:copy-of select="$stoplist" />
                </xsl:copy>
                <xsl:copy-of select="$additionalresults" />
            </result>
        </xsl:if>
        <xsl:if test="not(@initial = 1)">
            <xsl:variable name="results">
                <xsl:apply-templates select="parent::stop">
                    <xsl:with-param name="nextstoplarge" select="$nextstoplarge" />
                    <xsl:with-param name="stoplist" select="$stoplist" />
                    <xsl:with-param name="additionalresults" select="$additionalresults" />
                </xsl:apply-templates>
            </xsl:variable>
            <xsl:if test="@large = 1 or $nextstoplarge = 1">
                <xsl:apply-templates select="parent::stop">
                    <xsl:with-param name="nextstoplarge" select="@large" />
                    <xsl:with-param name="stoplist">
                        <xsl:copy>
                            <xsl:copy-of select="attribute::*" />
                            <xsl:copy-of select="$stoplist" />
                        </xsl:copy>
                    </xsl:with-param>
                    <xsl:with-param name="additionalresults" select="$results" />
                </xsl:apply-templates>
            </xsl:if>
            <xsl:if test="@large = 0 and $nextstoplarge = 0">
                <xsl:copy-of select="$results" />
            </xsl:if>
        </xsl:if>
    </xsl:template>

</xsl:stylesheet>
\end{verbatim}
\end{scriptsize}
\caption{XSLT Program}\label{tab:xslt}
\end{table*}

The first rule $\rho_1$ of $\cP$ corresponds to the first template
of the XSLT program: this template initalizes the algorithm by matching the root 
of the input document, 
jumping to the leaf by selecting the final stop, 
and invoking named template \texttt{start} on it.

The second rule $\rho_2$ of $\cP$ corresponds to template \texttt{start}: 
it moves up, using the \texttt{apply-templates} instruction 
which selects the parent, 
and thus invokes the third template on that parent, 
which is the only template for nonroot document elements. 
It invokes that template with the appropriate parameters: 
\texttt{nextstoplarge} is 1 
because $\mathtt{large = 1}$ for the final stop, 
\texttt{stoplist} is a list containing only the final stop, and 
\texttt{additionalresults} is the single element \texttt{<endofresults />}. 

The remaining rules of $\cP$ correspond to the third template, 
which is applied to all nonfinal stops. 
That template takes a partial stop list \texttt{stoplist} 
(from the current stop to the final stop) 
and generates all allowed ways to complete that stop list 
using the stops between the current one and the initial one. 
Nested below the deepest element of the output, 
it includes the result tree fragment passed in \texttt{additionalresults}. 
The third template has three parameters:
\begin{enumerate}
\item[] \texttt{nextstoplarge}: a boolean indicating whether or not 
the ``next'' stop 
(i.e., the stop at the front of \texttt{stoplist}) is a large stop; 
it corresponds to states $q_1$ and $q_0$ in $\cP$, respectively,
\item[] \texttt{stoplist}: a partial list of stops 
(taken from the current stop to the final stop)
for which this template will recursively generate 
all (allowed) ways in which it can be completed;
it corresponds to parameter $z_1$ in $\cP$,
\item[] \texttt{additionalresults}: results that are to be appended to 
the results that this template generates;
it corresponds to parameter $z_2$ in $\cP$,
\end{enumerate}
where both \texttt{stoplist} and \texttt{additionalresults} are of type `result tree fragment'.

Corresponding to rule $\rho_5$ of $\cP$, the third template, 
when invoked on the initial stop (for which $\mathtt{initial = 1}$),
has computed a complete stop list (after adding this stop) and
outputs it: it copies the initial stop  
and nests the remainder of the stop list (i.e., the value of 
its parameter \texttt{stoplist}) in it; 
it also includes the additional results 
(i.e., the value of parameter \texttt{additionalresults}).

Corresponding to rules $\rho_3$ and $\rho_4$ of $\cP$, the third template, 
when invoked on an intermediate stop
(for which $\mathtt{not(initial = 1)}$), 
has not yet computed a complete stop list, and now
calculates all allowed ways to complete it. 
Intuitively, it computes two result sets: 
one that \emph{does not} add the current stop, and one that \emph{does}.
They are combined by passing the first result set as 
``additional results'' to the calculation of the second one. 
Thus, the third template starts by computing the first result set,
and, to abbreviate the remaining code, it assigns its value
to a variable called \texttt{results}. 
In rules $\rho_3$ and $\rho_4$ of $\cP$ this result set corresponds to 
the selector $\tup{q_c,\up(x,y)}(z_1,z_2)$, where $c=0$ in $\rho_3$. 
In the case that \texttt{large = 0} and \texttt{nextstoplarge = 0},
we are not allowed to stop here 
because that would create two consecutive small stops. 
Thus the template only outputs the results 
that it just stored in the variable 
(corresponding to rule $\rho_3$ of $\cP$).
In the case that \texttt{large = 1} or \texttt{nextstoplarge = 1},
the template calculates all possible ways to complete the stop list 
that contain this stop, and includes as additional results 
those that are stored in the variable
(corresponding to rule $\rho_4$ of $\cP$).

%% ============================================================
%% ============================================================
\section{Data Complexity}
%% ============================================================
%% ============================================================

In this section we show that 
the transduction of a deterministic \abb{ptt} $\cM$ 
can be realized in (1-fold) exponential time, 
in the sense that there is an exponential time algorithm that, 
for every given input tree $t$, computes a regular tree grammar $G$
that generates the language $\{\tau_\cM(t)\}$. 
If $t$ is in the domain of $\cM$, then~$G$ can be viewed as a 
DAG (directed acyclic graph) that defines the output tree $\tau_\cM(t)$, in the usual sense.
Thus, producing the actual output tree would take 2-fold exponential time. 
If $t$ is not in the domain of $\cM$, then $G$ generates the empty tree language 
(which can be decided in time linear in the size of $G$). 

\begin{theorem}\label{thm:expocom}
For every deterministic \abb{ptt} $\cM$ 
there is an exponential time algorithm that, for given input tree $t$,
computes a regular tree grammar $G$ such that $L(G)=\{s\mid (t,s)\in\tau_\cM\}$. 
\end{theorem}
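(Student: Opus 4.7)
\textbf{Proposal for the proof of Theorem~\ref{thm:expocom}.}

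The plan is to construct $G$ directly from the configuration graph of $\cM$ on the fixed input tree $t$. The nonterminals of $G$ will be (codes of) reachable configurations $\langle q,u,\pi\rangle\in\con(t)$ of $\cM$ on $t$; the initial nonterminal is $X_{\langle q_0,\rt_t,\epsilon\rangle}$ (where $q_0$ is the unique initial state of the deterministic $\cM$). For each such configuration $C$, let $C\Rightarrow_{t,\cM}C'$ if $\cM$ applies a non-output rule at $C$: then $G$ contains the unit rule $X_C\to X_{C'}$. If $\cM$ applies the output rule $\tup{q,\sigma,j,b}\to\delta(\tup{q_1,\stay},\dots,\tup{q_m,\stay})$ at $C=\tup{q,u,\pi}$, then $G$ contains the rule $X_C\to\delta(X_{C_1},\dots,X_{C_m})$ with $C_i=\tup{q_i,u,\pi}$; in particular, for $m=0$ we get $X_C\to\delta$. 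Because $\cM$ is deterministic, every nonterminal has at most one production, so $L(G)$ is a singleton if $\tau_\cM(t)$ is defined and is empty otherwise (nonterminals whose computation does not halt are simply unproductive, and unit-loops collapse to the empty language under the usual definition of derivation).

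The main task is to bound the number of reachable configurations and to carry out the enumeration in exponential time. I would bound the pebble-stack depth by a polynomial in $|t|$ (for fixed $\cM$) as follows. Call the triple consisting of state, head position, the sequence of positions/colours of visible pebbles, and the top pebble the \emph{observable part} of a configuration; there are only polynomially many observable parts. In a non-returning sub-computation, i.e.\ one that starts at stack height $h$ and never descends below $h$, the sequence of observable parts contains no repetition: since $\cM$ is deterministic and the stack below height $h$ is never inspected, a repetition would induce an infinite non-returning loop, contradicting $\cM$ producing a tree from $C$. Hence any non-returning segment has polynomial length and can push only polynomially many pebbles; iterating along the outermost level gives a polynomial bound on the maximum stack height during a computation that eventually yields an output, and therefore an exponential bound on the number of distinct stacks, and hence an exponential bound on the number of reachable configurations that are productive. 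Any computation exceeding this bound must contain a non-returning loop, which can be detected and the corresponding nonterminal simply left unproductive.

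Based on this, the algorithm performs a depth-first exploration from $\langle q_0,\rt_t,\epsilon\rangle$, memoising visited configurations, storing the induced grammar rule at each, and aborting a branch once the stack height or the number of visited configurations exceeds the exponential bound (treating such configurations as unproductive). A final pruning pass removes unproductive nonterminals and their occurrences in right-hand sides, exactly as in the standard elimination of useless nonterminals in a regular tree grammar; this runs in time linear in the size of the (exponentially large) grammar. The total running time is then $2^{O(p(|t|))}$ for some polynomial $p$ depending on $\cM$, giving the desired exponential-time algorithm. I expect the key obstacle to be the polynomial bound on the stack height; a clean proof requires identifying, via a pumping argument on the underlying deterministic pushdown system (state set of size polynomial in $|t|$, pebble-coloured positions as stack alphabet), the precise threshold beyond which repetition of an observable configuration at a non-decreasing stack level forces non-termination.
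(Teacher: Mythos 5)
Your overall architecture is the same as the paper's: bound the pebble-stack height by a polynomial in $|t|$, enumerate the (exponentially many) configurations with stacks of that height, and turn the deterministic computation relation into grammar rules (the paper avoids your unit rules by deciding $\tup{q,u,\pi}\Rightarrow^*_{t,\cM}\tup{q',u',\pi'}$ with an auxiliary pushdown automaton, but your chain-rule elimination is an acceptable variant). The genuine gap is exactly where you suspect it: the stack-height bound. Your claim that in a non-returning sub-computation the sequence of ``observable parts'' contains no repetition is false. Two configurations in such a segment can have the same state, head position, visible-pebble placement and top pebble while differing in the invisible pebbles strictly between the base height $h$ and the top; those hidden pebbles \emph{are} inspected by later pops above $h$, so determinism does not force the two continuations to coincide, and no infinite loop follows. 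Moreover, even if the claim held it would prove too much: the entire computation is itself a non-returning segment (it starts with the empty stack), so you would conclude that the whole computation has polynomial length --- which is false, since the output tree of a deterministic \abb{i-ptt} can have height exponential in $|t|$.

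The repair is to apply the pigeonhole not to all configurations of a non-returning segment but to one designated configuration per stack level on the way to the deepest stack $\pi=(v_1,c_1)\cdots(v_m,c_m)$: for each prefix $\pi_\ell$ choose the configuration $\tup{q_\ell,u_\ell,\pi_\ell}$ after which the stack always retains $\pi_\ell$ as a prefix. For these configurations a coincidence of state, head position, top pebble and visible-pebble placement at levels $i<j$ really does allow the segment from level $i$ to level $j$ to be iterated forever (the hidden part of the stack below $\pi_i$, resp.\ $\pi_j$, is never consulted), yielding an infinite computation of the branch-following automaton obtained from $\cM$ by replacing each output rule $\tup{q,\sigma,j,b}\to\delta(\tup{q_1,\stay},\dots,\tup{q_m,\stay})$ by the rules $\tup{q,\sigma,j,b}\to\tup{q_i,\stay}$; since $\cM$ is deterministic and halts on $t$, all computations of that automaton are finite, a contradiction. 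This gives $m\le |Q|\cdot(|C|+1)^{k+1}\cdot n^{k+2}$ and the rest of your algorithm goes through.
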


\begin{proof}
Let $\cM = (\Sigma, \Delta, Q, \{q_0\},C, C_\mathrm{v}, C_\mathrm{i}, R,k)$ 
be a deterministic \abb{v$_k$i-ptt}. 
For an input tree $t\in T_\Sigma$ in the domain of $\cM$, 
let us consider the computation $\tup{q_0,\rt_t,\epsilon}\Rightarrow^*_{t,\cM} s$,
where $s=\tau_\cM(t)$, and let $\tup{q,u,\pi}$
be a configuration of~$\cM$ that occurs in that computation. 
We claim that the length of $\pi$ is at most $N=|Q|\cdot(|C|+1)^{k+1}\cdot n^{k+2}$, 
where $n$ is the size of $t$. 

To prove this claim we define, 
as an auxiliary tool,  
the nondeterministic \abb{v$_k$i-pta} $\cA$ 
that is obtained from~$\cM$ by changing every output rule 
$\tup{q,\sigma,j,b}\to \delta(\tup{q_1,\stay},\dots,\tup{q_m,\stay})$ of $\cM$ 
into the rules $\tup{q,\sigma,j,b}\to \tup{q_i,\stay}$ for all $i\in[1,m]$.
Intuitively, whenever $\cM$ branches, $\cA$ nondeterministically 
follows one of those branches. Thus, all computations of $\cA$ 
that start with $\tup{q_0,\rt_t,\epsilon}$ are finite. 
Obviously, $\tup{q,u,\pi}$ occurs in such a computation of~$\cA$. 
Let $\pi=(v_1,c_1)\cdots (v_m,c_m)$ and suppose that $m > N$.
For every $\ell\in[1,m]$ we define $\pi_\ell=(v_1,c_1)\cdots (v_\ell,c_\ell)$. 
Then there exist configurations $\tup{q_\ell,u_\ell,\pi_\ell}$, $\ell\in[1,m]$, such that 
$\tup{q_0,\rt_t,\epsilon} \Rightarrow^*_{t,\cA} \tup{q_1,u_1,\pi_1}$ and 
$\tup{q_\ell,u_\ell,\pi_\ell} \Rightarrow^*_{t,\cA} \tup{q_{\ell+1},u_{\ell+1},\pi_{\ell+1}}$
for every $\ell\in[1,m-1]$, and such that, moreover, 
every configuration occurring in the computation
$\tup{q_\ell,u_\ell,\pi_\ell} \Rightarrow^*_{t,\cA} \tup{q_{\ell+1},u_{\ell+1},\pi_{\ell+1}}$
has a pebble stack with prefix~$\pi_\ell$.  
Due to the choice of $m$, there exist $i,j\in[1,m]$ with $i<j$ such that 
$q_i=q_j$, $u_i=u_j$, $(v_i,c_i)=(v_j,c_j)$, and 
for every $v\in N(t)$ and \mbox{$c\in C_\mathrm{v}$}: 
$(v,c)$~occurs in $\pi_i$ if and only $(v,c)$ occurs in $\pi_j$. 
This implies that the computation 
$\tup{q_i,u_i,\pi_i} \Rightarrow^*_{t,\cA} \tup{q_j,u_j,\pi_j}$ can be repeated arbitrarily many times,
leading to an infinite computation of $\cA$, which is a contradiction and proves the claim.  

We now construct the regular tree grammar $G$. Its nonterminals are 
the configurations $\tup{q,u,\pi}$ of $\cM$ on $t$ such that $|\pi|\leq N$.
Since $N$ is polynomial in $n$, the number of nonterminals of $G$ is exponential in $n$. 
The initial nonterminal of $G$ is $\tup{q_0,\rt_t,\epsilon}$. 
If $\tup{q,u,\pi} \Rightarrow^*_{t,\cM} \tup{q',u',\pi'} \Rightarrow_{t,\cM}
\delta(\tup{q_1,u',\pi'},\dots,\tup{q_m,u',\pi'})$, then 
$\tup{q,u,\pi} \to \delta(\tup{q_1,u',\pi'},\dots,\tup{q_m,u',\pi'})$
is a rule of $G$. To decide whether 
$\tup{q',u',\pi'} \Rightarrow_{t,\cM}
\delta(\tup{q_1,u',\pi'},\dots,\tup{q_m,u',\pi'})$
it suffices to inspect the output rules of $\cM$. 
To decide whether $\tup{q,u,\pi} \Rightarrow^*_{t,\cM} \tup{q',u',\pi'}$ 
we construct from~$\cM$ and $t$
an ordinary pushdown automaton $\cP$ that simulates the non-output behaviour of $\cM$ on $t$, 
as in the query evaluation paragraph at the end of Section~\ref{sec:xpath}.
Since, as opposed to that paragraph, $\cM$ also has visible pebbles,
$\cP$~should keep track of those pebbles in its finite state.
Let $\Gamma$ be the set of all mappings $\gamma: C_\mathrm{v}\to N(t)\cup\{\bot\}$ 
such that $\#(\{c\in C_\mathrm{v}\mid \gamma(c)\neq \bot\})\leq k$.
During $\cP$'s computation, the mapping $\gamma$ in its finite state indicates for every visible pebble 
whether it occurs in the current stack and, if so, on which node it is dropped.
Thus, we define $\cP$ to have state set $Q\times N(t)\times \Gamma$ and pushdown alphabet $N(t)\times C$. 
A configuration $\tup{q,u,\pi}$ of $\cM$ is simulated by the configuration $\cP(\tup{q,u,\pi})=\tup{p,\pi}$
of $\cP$ such that $p=(q,u,\gamma)$ where, for every $c\in C_\mathrm{v}$, 
if $\gamma(c)\in N(t)$ then $(\gamma(c),c)$ occurs in $\pi$,
and if $\gamma(c)=\bot$ then $c$ does not occur in $\pi$. 
The transitions of the automaton $\cP$ are defined in such a way that $\cP$ (with the empty string as input)
has the same computation steps as $\cM$ (without its output rules), i.e., such that 
$\tup{q,u,\pi} \Rightarrow_{t,\cM} \tup{q',u',\pi'}$ if and only if 
$\cP(\tup{q,u,\pi}) \Rightarrow_\cP \cP(\tup{q',u',\pi'})$, where $\Rightarrow_\cP$ is the 
computation step relation of $\cP$. 
For instance, let $\cP$ be in state $(q,u,\gamma)$ and let the top element of its stack be $(v,c)$.
Let $u$ have label $\sigma$ and child number $j$, and let
$b$ consist of all $c'\in C_\mathrm{v}$ with $\gamma(c')= u$ plus $c$ if $v=u$.
If $\tup{q,\sigma,j,b}\to\tup{q',\drop_d}$ is a rule of $\cM$ such that $d\in C_\mathrm{v}$, 
$\gamma(d)=\bot$, and $\#(\{c'\in C_\mathrm{v}\mid \gamma(c')\neq \bot\}) < k$, 
then $\cP$ pushes $(u,d)$ on its stack and goes into state $(q',u,\gamma')$ 
where $\gamma'(d)=u$ and $\gamma'(c')=\gamma(c')$ for all $c'\neq d$.
If $\tup{q,\sigma,j,b}\to\tup{q',\lift_c}$ is a rule of $\cM$ 
such that $c\in C_\mathrm{i}$ and $v=u$,
then $\cP$ pops $(v,c)$ from its stack and goes into state $(q',u,\gamma)$. 
The transitions of $\cP$ are defined similarly for the other non-output rules of $\cM$. 
It should be clear that $\cP$ can be constructed in time polynomial in $n$. 
Since it can be decided in polynomial time for configurations $\tup{p,\pi}$ and $\tup{p',\pi'}$ 
of $\cP$ whether $\tup{p,\pi} \Rightarrow^*_\cP \tup{p',\pi'}$, it can be decided 
whether $\tup{q,u,\pi} \Rightarrow^*_{t,\cM} \tup{q',u',\pi'}$ in polynomial time.
Hence the total time to construct $G$ is exponential. 
\end{proof}

Note that the first part of the above proof also shows that for every deterministic \abb{ptt}
the height of the output tree is exponential in the size of the input tree. 

A natural question is whether Theorem~\ref{thm:expocom} also holds for forest transducers, 
i.e., for deterministic \abb{pft}'s. That is indeed the case (as the reader can easily verify), 
except that $G$ is not a regular forest grammar, but a forest generating context-free grammar. 
To be precise, $G$ is a context-free grammar of which every rule is of the form 
$X_0\to \delta(X_1)$ or $X_0\to X_1X_2$ or $X\to\epsilon$ where $\delta$~is a symbol
from an unranked alphabet. If $L(G)=\{f\}$, then $G$ 
can still be viewed as a DAG that defines the forest $f$. 
Thus, in this sense, by Theorem~\ref{thm:tl}, deterministic \abb{tl} programs 
can be executed in exponential time,
in accordance with the result of~\cite{JanKorBus} that XSLT~1.0 programs 
can be executed in exponential time.

Another natural question is whether there exist interesting subclasses of \abb{ptt}'s
that can be realized in polynomial time. Here we discuss one such subclass. 
We define a \abb{ptt} to be \emph{bounded} if there exists $m\in \nat$ such that output rules 
can only be applied when the pebble stack contains at most $m$ pebbles. 
Intuitively it means that the infinitely many invisible pebbles are mainly used to check 
\abb{mso} properties of the observable configuration. 
Formally it can either be required as a dynamic property of the (successful) computations 
of the \abb{ptt} or be incorporated statically in the semantics of the \abb{ptt}. 
We now show that bounded \abb{ptt}'s can be realized in polynomial time, 
even in the nondeterministic case.

\begin{theorem}\label{thm:polycom}
For every bounded \abb{ptt} $\cM$ 
there is a polynomial time algorithm that, for given input tree $t$,
computes a regular tree grammar~$G$ such that $L(G)=\{s\mid (t,s)\in\tau_\cM\}$. 
\end{theorem}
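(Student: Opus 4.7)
The plan is to adapt the construction in the proof of Theorem~\ref{thm:expocom}, exploiting boundedness to keep the set of nonterminals polynomial in $n=|t|$. The decisive observation is that in every reachable output form of $\cM$ on $t$, every leaf labelled by a configuration has pebble stack of depth at most $m$. Indeed, the initial configuration has empty stack, an output rule can only be applied when the stack has depth~$\le m$, and each subprocess spawned by an output rule inherits exactly the stack at the moment of branching. Since $m$ is fixed, the set of configurations $\tup{q,u,\pi}$ with $|\pi|\le m$ has size $O(n^{m+1})$, which will be the nonterminal set of~$G$ (with initial nonterminal $\tup{q_0,\rt_t,\epsilon}$).

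For every nonterminal $\tup{q,u,\pi}$, every output rule $\tup{q',\sigma',j',b'}\to \delta(\tup{q_1,\stay},\dots,\tup{q_\ell,\stay})$, and every $\tup{q',u',\pi'}$ with $|\pi'|\le m$ that makes this rule relevant, I add the production
\[
\tup{q,u,\pi} \to \delta(\tup{q_1,u',\pi'},\dots,\tup{q_\ell,u',\pi'})
\]
provided that $\tup{q,u,\pi} \Rightarrow^*_{t,\cM} \tup{q',u',\pi'}$ via a sequence of \emph{non-output} steps. The soundness and completeness of $G$ then follows by the same inductive correspondence between $G$-derivations and successful $\cM$-computations as in Theorem~\ref{thm:expocom}, restricted to the small-stack configurations that actually appear in output forms under the boundedness hypothesis.

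The one nontrivial algorithmic ingredient is deciding, for each pair of small-stack endpoints, whether non-output reachability holds; note that the intermediate computation may grow the stack arbitrarily far beyond $m$, so the set of intermediate configurations cannot be enumerated. I would handle this exactly as in the proof of Theorem~\ref{thm:expocom}: build the pushdown automaton $\cP$ simulating the non-output behaviour of~$\cM$ on~$t$, whose state set $Q\times N(t)\times \Gamma$ and stack alphabet $N(t)\times C$ are of polynomial size in $n$, and invoke the classical polynomial-time reachability algorithm for PDA configurations (which the excerpt already cites in the proof of Theorem~\ref{thm:expocom}). Running this once per ordered pair of small-stack endpoints gives all productions in time polynomial in $n$, since there are polynomially many such pairs.

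The anticipated main obstacle is purely bookkeeping: confirming that the ``small-stack endpoints'' we enumerate really cover all the configurations that can appear at the leaves of reachable output forms, and that no other productions are needed for~$G$ to generate $\{s\mid (t,s)\in\tau_\cM\}$. The boundedness hypothesis, together with the fact that spawned subprocesses inherit the branching stack, makes this closure argument straightforward, so no genuinely new automata-theoretic work is required beyond reusing the PDA from Theorem~\ref{thm:expocom}.
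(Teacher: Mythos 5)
Your proposal is correct and follows essentially the same route as the paper: nonterminals are the configurations with stack depth at most $m$, productions are determined by non-output reachability between small-stack endpoints, and that reachability is decided with the polynomial-size pushdown automaton $\cP$ from Theorem~\ref{thm:expocom}. The only cosmetic differences are that the paper adds a fresh initial nonterminal $S$ with rules $S\to\tup{q_0,\rt_t,\epsilon}$ to accommodate several initial states, and that your opening claim that every configuration-labelled leaf of a reachable output form has stack depth at most $m$ is slightly too strong (a leaf may be mid-way through a non-output phase with a deeper stack), but this does not affect the construction since only the endpoints where output rules are applied or spawned need to be nonterminals.
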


\begin{proof}
The construction of $G$ is exactly the same as in the proof of Theorem~\ref{thm:expocom},
except that its nonterminals are now the configurations 
$\tup{q,u,\pi}$ of $\cM$ on $t$ such that $|\pi|\leq m$.\footnote{Additionally, 
$G$ has an initial nonterminal $S$ with rules $S\to \tup{q_0,\rt_t,\epsilon}$
for every initial state $q_0$ of $\cM$. 
} 
The number of nonterminals of $G$ is therefore polynomial in the size of $t$,
and since the pushdown automaton $\cP$ can also be constructed (and tested) in polynomial time,
the total time to construct $G$ is polynomial.  
\end{proof}

Again, the same result holds for \abb{pft}'s, 
taking $G$ to be a forest generating context-free grammar. 
Note that for a nondeterministic \abb{pft} $\cM$ and an input tree $t$, 
the set $\{s\mid (t,s)\in\tau_\cM\}$ is not necessarily a regular forest language. 

Also, the same result holds for bounded \abb{ptt}'s 
that use \abb{mso} tests on the observable configuration. 
That is not immediate, because the construction in the proof of Theorem~\ref{thm:mso}
does not preserve boundedness, due to the use of beads. 
However, it is easy to adapt the construction of the pushdown automaton $\cP$
in the proof of Theorem~\ref{thm:expocom} to incorporate the \abb{mso} tests of 
the \abb{v$_k$i-ptt} $\cM$. In fact, the observable configuration tree $\obs(t,\pi)$
can be constructed from $t$, from the mapping $\gamma$ in the state of $\cP$, and 
from the top element of its stack, and then $\obs(t,\pi)$ can be tested in linear time
using a deterministic bottom-up finite-state tree automaton. 
An example of bounded \abb{ptt}'s (with \abb{mso} tests) are 
the pattern matching \abb{ptt}'s of Section~\ref{sec:pattern}. In that section, 
every \abb{ptt} that matches an $n$-ary pattern is bounded, with bound $n$ 
or even $n-1$. Hence, pattern matching \abb{ptt}'s can be evaluated in polynomial time. 
And the same is true for pattern matching \abb{pft}'s, see Section~\ref{sec:pft}.

%% ============================================================
%% ============================================================
\section{Variations of Decomposition}\label{sec:variations}
%% ============================================================
%% ============================================================

In this section we present a number of results the proofs of which are based on 
variations of the decomposition techniques used in Section~\ref{sec:decomp}.
In the first part of the section we consider deterministic \abb{ptt}'s,
and in the second part we consider \abb{ptt}'s with strong (visible) pebbles. 

\smallpar{Deterministic PTT's}
As observed at the end of Section~\ref{sec:decomp} it is open whether 
$\IdPTT \subseteq \dTT \circ \dTT$. We first show that a subclass of $\IdPTT$
is included in $\dTT \circ \dTT$ and then we show that $\IdPTT \subseteq \dTT^3$.
Hence, every deterministic \abb{ptt} can be decomposed into deterministic \abb{tt}'s. 

Recall that $\dTTmso$ denotes the class of transductions that are realized 
by deterministic \abb{tt}'s with \abb{mso} head tests. By Lemma~\ref{lem:sites}
it is a subclass of $\IdPTT$. We will show that such transducers can be decomposed 
into two deterministic \abb{tt}'s of which the first never moves up. 
To do this we need a lemma with an alternative proof of the inclusion 
$\dTTmso \subseteq \IdPTT$, showing that the resulting \abb{i-ptt}
uses its pebbles in a restricted way. The \abb{i-ptt} that is constructed in the proof of 
Lemma~\ref{lem:sites} does not satisfy that restriction. 

For the definition of normal form of an \abb{i-ptt} 
see the paragraphs before Lemma~\ref{lem:ipftnf}.
We now define an \abb{i-ptt} (or \abb{i-pta}) to be \emph{root-oriented} if 
it satisfies requirements (1)$-$(3) of the normal form,
and all non-initial non-output rules have a right-hand side 
of one of the following five forms:
$\tup{q',\down_i;\drop_c}$, $\tup{q',\lift_c;\up}$, 
$\tup{q',\lift_c;\drop_d}$, or $\tup{q',\stay}$, 
where $q'\in Q\setminus Q_0$, $i\in\nat$ and \mbox{$c,d\in C$}.
Thus, except in an initial configuration, 
every pebble stack is of the form $(u_1,c_1)\cdots(u_n,c_n)$ 
where $u_1,\dots,u_n$ is the path from the root to the current node.
The \abb{i-pta} in the proof of Lemma~\ref{lem:regular} is root-oriented. 

The next lemma follows from~\cite[Theorem~8.12]{thebook}, 
but we provide its proof for completeness sake. 
Let $\rIdPTT$ denote the class of transductions realized by 
root-oriented deterministic \abb{i-ptt}'s.\footnote{In~\cite[Chapter~8]{thebook} 
root-oriented \abb{i-ptt}'s are called tree-walking pushdown transducers,
and $\rIdPTT$ is denoted $\family{P-DTWT}$.
They are the \abb{RT(P(TR))}-transducers of~\cite{EngVog}, also called 
indexed tree transducers. 
}

\begin{lemma}\label{lem:in-ridptt}
$\dTTmso \subseteq \rIdPTT$.
\end{lemma}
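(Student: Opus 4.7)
The plan is to build a root-oriented deterministic I-PTT $\cM'$ that simulates the given dTT with MSO head tests $\cM$ step by step. For every MSO head test $\phi_i$ used by $\cM$, I would fix a deterministic bottom-up tree automaton $\cA_i=(\Sigma\times\{0,1\},P_i,F_i,\delta_i)$ recognising $\tmark(T(\phi_i))$. The design is that in every non-initial configuration of $\cM'$ the pebbles sit \emph{exactly} on the nodes of the path from $\rt_t$ to $\cM$'s current head position, and the colour of each such pebble at a node $v$ carries, for every $i$, two bounded pieces of data: the tuple $(q_i(v\ell))_{\ell\in[1,r_v]}$ of bottom-up states $q_i(v\ell)=\cA_i^*(t|_{v\ell})$ on the (unmarked) children-subtrees of $v$, together with a ``context function'' $C_i(v)\colon P_i\to P_i$ sending any hypothetical state of $\cA_i$ at $v$ (in the marked tree) to the resulting state at $\rt_t$. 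Since the number of tests, the maximum rank, and each $|P_i|$ are constants, the colour set is finite.

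With this information, testing $\phi_i$ at the current position $u$ is a purely local finite-state computation on the topmost (hence observable) pebble: $t\models\phi_i(u)$ iff
\[
C_i(u)\bigl(\delta_i((\sigma_u,1),q_i(u\cdot 1),\ldots,q_i(u\cdot r_u))\bigr)\in F_i.
\]
Simulating the easy moves of $\cM$ is then direct. A stay-move is a stay; a move to the parent is a $\lift_c;\up$, which trivially preserves the root-oriented invariant because the pebble that now becomes topmost, already present at the parent, was populated earlier with the correct data (which depends only on the fixed subtree structure, not on where $\cM$'s head currently is). Output rules of $\cM$ carry over to $\cM'$ verbatim, since outputting touches neither the head nor the stack.

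The delicate and main part of the argument is simulating $\cM$'s move from $v$ to its $i$-th child $v\cdot i$. First $\cM'$ reads the current pebble and accumulates in its finite state the new context $C_i(v\cdot i)$, which is uniformly computable from $C_i(v)$, $\sigma_v$, $i$, and the off-path $q_i(v\ell)$ values. Then $\cM'$ moves down and drops a placeholder pebble at $v\cdot i$ to maintain the root-oriented invariant. For each child $v\cdot i\cdot\ell'$ and each test $i$ the value $q_i(v\cdot i\cdot\ell')$ must now be produced; this is a genuine bottom-up evaluation, which I would implement by invoking, as an internal sub-routine, a root-oriented variant of the bottom-up simulator of Lemma~\ref{lem:regular}, whose auxiliary pebbles are dropped only on the nodes of the path from $v\cdot i$ down to the currently visited node, so that \emph{globally} the pebbles still lie along the root-to-head path. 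Each completed sub-evaluation returns $\cM'$ to $v\cdot i$ with the $q_i$-tuple extended in the finite-state accumulator; once all children are processed, the placeholder is exchanged in a single $\lift_c;\drop_d$ step for the definitive pebble carrying $C_i(v\cdot i)$ and all computed $q_i(v\cdot i\cdot\ell')$. The initial step, in which the first pebble is dropped at $\rt_t$ with $C_i(\rt_t)$ the identity and the $q_i(\rt_t\cdot\ell')$ supplied by the same sub-routine, is handled in exactly the same manner.

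The remaining obligations are routine: verify that every rule of $\cM'$ falls into one of the shapes $\stay$, $\down_i;\drop_c$, $\lift_c;\up$, $\lift_c;\drop_d$ (the sub-routine alternates drops on descent and lifts on ascent, so this is the case), and that determinism is preserved since every micro-step is forced by the rule of $\cM$ being simulated together with the deterministic $\delta_i$. The principal obstacle to overcome is the bookkeeping of the bottom-up sub-routine in such a way that the root-oriented invariant holds at \emph{every} intermediate configuration, not merely at the ``clean'' configurations between simulated steps of $\cM$; this is where the careful use of a placeholder pebble and a single $\lift;\drop$ swap at the end is essential.
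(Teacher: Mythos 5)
Your proposal is correct and follows essentially the same route as the paper's proof: pebbles along the root-to-head path carrying downward "context" information (your context function $C_i(v)$ is an equivalent encoding of the paper's set $\suc_t(v)$ of successful states, namely its $F_i$-preimage) together with the bottom-up states of the off-path sibling subtrees, computed by the root-oriented subroutine of Lemma~\ref{lem:regular}, with the MSO test reduced to a local check on the observable topmost pebble. The paper's construction realizes your placeholder-then-swap bookkeeping via pebble colours $(S,\epsilon)$ that are updated to $(S,p_1\cdots p_m)$ by $\lift;\drop$ steps, exactly as you describe.
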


\begin{proof}
Let $\cM$ be a deterministic \abb{tt} that uses a regular site $T$ as \abb{mso} head test.
For simplicity we will assume that $\cM$ tests $T$ in every rule. 
Let $\cA=(\Sigma\times\{0,1\},P,F,\delta)$ be a deterministic bottom-up finite-state tree automaton 
that recognizes $\tmark(T)$. As usual we identify the symbols $(\sigma,0)$ and $\sigma$.
For every tree $t\in T_\Sigma$ and every node $u\in N(t)$, we define the set $\suc_t(u)$
of \emph{successful states} of $\cA$ at $u$ to consist of all states $p\in P$ such that 
$\cA$ recognizes~$t$ when started at $u$ in state $p$. To be precise, $\suc_t(\rt_t)=F$ and 
if $u$ has label $\sigma\in\Sigma^{(m)}$ and $i\in[1,m]$, then $\suc_t(ui)$ is 
the set of all states $p\in P$
such that $\delta(\sigma,p_1,\dots,p_{i-1},p,p_{i+1},\dots,p_m)\in \suc_t(u)$,
where $p_j$ is the state in which $\cA$~arrives at $uj$ for every $j\in[1,m]\setminus\{i\}$.

We construct a root-oriented deterministic \abb{i-ptt} $\cM'$ that stepwise simulates~$\cM$ 
and simultaneously keeps track of $\suc_t(v)$ for all nodes $v$ 
on the path from the root to the current node $u$,
by storing that information in its pebble colours. 
It uses the \abb{i-pta} $\cA'$ of Lemma~\ref{lem:regular} 
(with $\cA$ restricted to $\Sigma\times\{0\}$)
as a subroutine to compute the states in which $\cA$ arrives at the children of $u$. 
Using these states and $\suc_t(u)$, it can easily test whether $(t,u)\in T$. 
Morover, when moving down to a child $ui$ of $u$ it can use this information to 
compute $\suc_t(ui)$.

Formally, in addition to the pebble colours $p_1\cdots p_m$ of $\cA'$, 
the transducer~$\cM'$ uses pebble colours $(S,p_1\cdots p_m)$ where $S\subseteq P$. 
As states it uses (apart from its initial state) the states of $\cM$
and states of the form $(\tilde{q},q)$ where $\tilde{q}$ is a state of~$\cM$ 
and $q$ a state of $\cA'$;
in fact, $q$ is either the main state $q_\circ$ of $\cA'$ 
or it is $\bar{q}_p$ for some $p\in P$. 
Initially, $\cM'$ drops pebble $(F,\epsilon)$ on the root and  
goes into state $(\tilde{q}_0,q_\circ)$ where $\tilde{q}_0$ is the initial state of $\cM$. 
This incorporates rule $\rho_1$ of $\cA'$. 
The other rules of $\cM'$ that correspond to $\cA'$ are as follows. 
First, the rule $\rho_2$ of $\cA'$ 
together with the corresponding rule for pebble colour $(S,p_1\cdots p_m)$,
both for $m<\rank(\sigma)$:
\[
\begin{array}{lll}
\tup{(\tilde{q},q_\circ),\sigma,j,\{p_1\cdots p_m\}} & \to & 
          \tup{(\tilde{q},q_\circ),\down_{m+1};\drop_\epsilon} \\[1mm]
\tup{(\tilde{q},q_\circ),\sigma,j,\{(S,p_1\cdots p_m)\}} & \to & 
          \tup{(\tilde{q},q_\circ),\down_{m+1};\drop_\epsilon}. 
\end{array}
\]
Second, the rule $\rho_3$ of $\cA'$,
for $m=\rank(\sigma)$ and $p=\delta(\sigma,p_1,\dots,p_m)$: 
\[
\begin{array}{llll}
\tup{(\tilde{q},q_\circ),\sigma,j,\{p_1\cdots p_m\}} & \to & 
       \tup{(\tilde{q},\bar{q}_p),\lift_{p_1\cdots p_m};\up}
       & \text{if }  j\neq 0.
\end{array}
\]
Third, the rule $r_6$ of $\cA'$ 
together with the corresponding rule for pebble colour $(S,p_1\cdots p_m)$,
both for $m<\rank(\sigma)$:
\[
\begin{array}{lll}
\tup{(\tilde{q},\bar{q}_p),\sigma,j,\{p_1\cdots p_m\}} & \to 
              & \tup{(\tilde{q},q_\circ),\lift_{p_1\cdots p_m};\drop_{p_1\cdots p_mp}} \\[1mm]
\tup{(\tilde{q},\bar{q}_p),\sigma,j,\{(S,p_1\cdots p_m)\}} & \to 
              & \tup{(\tilde{q},q_\circ),\lift_{(S,p_1\cdots p_m)};\drop_{(S,p_1\cdots p_mp)}}. 
\end{array}
\]
The subroutine $\cA'$ is always called at a node $u$ where $\cM'$ observes a pebble 
of the form $(S,\epsilon)$, and when $\cA'$ is finished $\cM'$ is back at the same node $u$
and observes the pebble $(S,p_1\cdots p_m)$ where $p_1,\dots,p_m$ are the states at which 
$\cA$~arrives at the children of $u$. 

Finally we consider the simulation of a step of $\cM$, which either occurs when 
the subroutine $\cA'$ is finished (instead of its rules $\rho_4$ and $\rho_5$), 
or just after the simulation of another step of $\cM$, in which it does not move down. 
Suppose that $\cM$ has a rule $\tup{\tilde{q},\sigma,j,T}\to \zeta$
and that $\delta((\sigma,1),p_1,\dots,p_m)\in S$, or 
suppose that it has a rule $\tup{\tilde{q},\sigma,j,\neg T}\to \zeta$
and $\delta((\sigma,1),p_1,\dots,p_m)\notin S$.
Then $\cM'$ has the following two rules, for $m=\rank(\sigma)$:
\[
\begin{array}{lll}
\tup{(\tilde{q},q_\circ),\sigma,j,\{(S,p_1\cdots p_m)\}} & \to & \zeta'  \\[1mm]
\tup{\tilde{q},\sigma,j,\{(S,p_1\cdots p_m)\}} & \to & \zeta'     
\end{array}
\]
such that 
\begin{enumerate}
\item[(1)] if $\zeta=\tup{\tilde{q}',\up}$, then 
$\zeta'=\tup{\tilde{q}',\lift_{(S,p_1\cdots p_m)};\up}$,
\item[(2)] if $\zeta=\tup{\tilde{q}',\down_i}$, then 
$\zeta'=\tup{(\tilde{q}',q_\circ),\down_i;\drop_{(S',\epsilon)}}$ \\
where $S'=\{p\in P \mid \delta(\sigma,p_1,\dots,p_{i-1},p,p_{i+1},\dots,p_m)\in S\}$, and
\item[(3)] $\zeta'=\zeta$ otherwise. 
\end{enumerate}
This ends the formal description of $\cM'$. 
In general, $\cM$ uses regular sites $T_1,\dots,T_n$ as \abb{mso} head tests,
and correspondingly $\cM'$ has pebble colours of the form $(S_1,\dots,S_n,p_1\cdots p_m)$ 
where $S_i$ is a set of states of an automaton $\cA_i$ recognizing $\tmark(T_i)$.  
\end{proof}

Let $\dTT\!_\downarrow$ denote the class of transductions realized by 
deterministic \abb{tt}'s that do not use the $\up$-instruction. 
Such transducers are equivalent to classical deterministic top-down tree transducers.
The next lemma is shown in~\cite[Theorem~8.15]{thebook} 
but we provide its proof again, 
to show the connection to Lemma~\ref{lem:nul-decomp}. 

\begin{lemma}\label{lem:ridptt-in}
$\rIdPTT \subseteq \dTT\!_\downarrow \circ \dTT$. 
\end{lemma}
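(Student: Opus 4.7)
The plan is to imitate the preprocessing idea of Lemma~\ref{lem:nul-decomp} but exploit the fact that, for a root-oriented \abb{i-ptt} $\cM$, the pebble stack at any non-initial configuration is uniquely determined as one pebble per ancestor of the current node (root included), so only the \emph{colour sequence} along the root-to-head path carries information. Since no $\hat{t}_u$-style ``folding'' is needed (pebbles never venture off the current path), the preprocessor can be made top-down: the deterministic $\dTT\!_\downarrow$ transducer $\cN$ will produce, from input tree $t$, an expanded tree $t'$ in which every node is additionally labelled with a pebble colour, and at every internal node every child subtree is replicated once for each possible top colour at its root.

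Concretely, for $C=\{d_1,\dots,d_\gamma\}$ with $\gamma=\#(C)$, the output alphabet of $\cN$ consists of symbols $(\sigma,d)$ of rank $\rank_\Sigma(\sigma)\cdot\gamma$ for $\sigma\in\Sigma$ and $d\in C$. In state $[d]$ at a node of label $\sigma$ with children $u1,\dots,um$, $\cN$ outputs
\[
(\sigma,d)\bigl(\tup{[d_1],\down_1},\dots,\tup{[d_\gamma],\down_1},\,\tup{[d_1],\down_2},\dots,\tup{[d_\gamma],\down_m}\bigr),
\]
so the $((i\!-\!1)\gamma+k)$-th child of this output node is a fresh copy of the subtree rooted at $ui$, coloured with $d_k$ at its root. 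The unique initial rule of $\cM$ for the root label $\sigma_0$ has the form $\tup{q_0,\sigma_0,0,\nothing}\to\tup{q,\drop_{c_0}}$, so $\cN$ starts in state $[c_0]$ at the root (a finite-state lookup on the root label). The resulting $\cN$ is clearly a total deterministic top-down tree transducer.

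The second transducer $\cM'$ is an ordinary \abb{tt}, walking on $t'$ without pebbles, that stepwise simulates $\cM$ on $t$ using the invariant: whenever $\cM$ is in configuration $\tup{q,u,\pi(u,c)}$, $\cM'$ is in state $q$ at the node of $t'$ whose label is $(\sigma,c)$ and which corresponds to $u$. This invariant is maintained as follows. A rule $\tup{q,\sigma,j,\{c\}}\to\tup{q',\down_i;\drop_d}$ becomes $\cM'$'s move to child number $(i\!-\!1)\gamma+\mathrm{idx}(d)$ (where $\mathrm{idx}$ is a fixed enumeration of $C$), landing on a node labelled $(\sigma',d)$ as required. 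A rule $\tup{q,\sigma,j,\{c\}}\to\tup{q',\lift_c;\up}$ becomes a single $\up$-step of $\cM'$. A rule $\tup{q,\sigma,j,\{c\}}\to\tup{q',\lift_c;\drop_d}$ is handled in two steps: $\cM'$ first moves up, then reads the child number $j'$ of the node it came from, recovers the original $\Sigma$-child number as $i=\lceil j'/\gamma\rceil$, and moves down to child $(i\!-\!1)\gamma+\mathrm{idx}(d)$; this requires only finitely many auxiliary states because $j'$ ranges over a bounded set. Stay-rules and output rules are simulated directly on the current node of $t'$, and the left-hand-side test of every rule is matched by reading the label $(\sigma,c)$ of $\cM'$'s current node together with its child number $j'$ in $t'$ (from which $j$ is recovered). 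The initial configuration of $\cM'$ is $q$ at the root of $t'$, matching the effect of $\cM$'s initial drop.

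The main technical obstacle, and the only point demanding care, is the accounting at the root: the initial configuration of $\cM$ has empty stack, while every non-initial configuration carries a pebble on the root. This is resolved by the fact that $\cM$'s first action is forced (by requirement (2) of root-orientation), so folding that single step into the initialisation of $\cN$ and $\cM'$ is unambiguous. The correctness proof is then a routine induction on the length of $\cM$'s computation showing that every output form $s$ of $\cM$ on $t$ is matched by an output form of $\cM'$ on $t'=\tau_\cN(t)$ via the stated invariant, and vice versa; this yields $\tau_\cN\circ\tau_{\cM'}=\tau_\cM$ and hence $\rIdPTT\subseteq\dTT\!_\downarrow\circ\dTT$.
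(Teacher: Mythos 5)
Your overall strategy is the right one and is essentially the paper's: specialize the ``tree of trees'' preprocessor of Lemma~\ref{lem:nul-decomp} to the root-oriented case, where the pebble stack always sits exactly on the root-to-head path, so the preprocessor need only replicate each child subtree once per colour and can therefore be made total, deterministic and top-down, while the second \abb{tt} turns drop/lift instructions into down/up moves. The bookkeeping for $\lift_c;\drop_d$ (remember the original child number, go up, come back down into the sibling copy carrying the other colour) is also fine, modulo the small ordering slip that the child number must be recorded \emph{before} moving up (it is available in the left-hand side of the rule applied at the node you are leaving, not after you have left it).

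There is, however, a concrete step that fails, and it is precisely at the point you single out as ``the only point demanding care''. In your $t'$ the root of $t$ has exactly \emph{one} colour-copy, namely the root of $t'$, labelled $(\sigma_0,c_0)$. But a root-oriented \abb{i-ptt} may apply a rule $\tup{q,\sigma_0,0,\{c_0\}}\to\tup{q',\lift_{c_0};\drop_d}$ with $d\neq c_0$ and thereby recolour the root pebble; the root-oriented \abb{i-pta} of Lemma~\ref{lem:regular} does exactly this (rule $\rho_6$ with $j=0$, each time control returns to the root from one of its subtrees). Your simulation of $\lift_c;\drop_d$ must move to the $d$-coloured sibling copy of the current node, but the root of $t'$ has no parent and no siblings, so the move cannot be executed; and after the recolouring, every subsequent left-hand-side test $\{d\}$ at the root would be matched against the stale label $(\sigma_0,c_0)$. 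Folding the forced initial drop into the initialisation handles only the very first configuration, not later recolourings of the root. The repair is cheap --- either keep an extra uncoloured node on top whose $\gamma$ subtrees are the colour-copies of the root (this is in effect what the paper's adapted preprocessor does, since its state $\g$ emits a $\sigma'$-node whose last $\gamma$ children are the coloured copies), or let the second transducer carry the current root colour in its finite state, which is possible because that colour changes only while the head is at the root --- but as written the construction is incorrect.
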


\begin{proof}
Let $\cM$ be a root-oriented deterministic \abb{i-ptt}. 
Looking at the proof of Lemma~\ref{lem:nul-decomp}, it should be clear that,
for every input tree $t$, 
the simulating transducer $\cM'$ only visits those nodes of $t'$ that correspond to 
a sequence of instructions of $\cM$ that starts with a drop-instruction 
and then consists alternatingly of a down-instruction and a drop-instruction. 
Consequently, the ``preprocessor'' $\cN$ can be adapted so as to generate just 
that part of $t'$. The new $\cN$ does not need the states $f_i$ any more,
but just has the initial state $g$ and the state $f$. Its rules are  
\[
\begin{array}{lll}
\tup{\g,\sigma,j} & \to & \sigma'(\bot^m, \tup{f,\stay}^\gamma) \\[1mm]
\langle f,\sigma,j\rangle & \to &
\sigma'_{0,j}(\tup{\g,\down_1},\dots,\tup{\g,\down_m}, \bot^\gamma, \bot)
\end{array}
\]
where $m$ is the rank of $\sigma$ 
and $\bot^n$ abbreviates the sequence $\bot,\dots,\bot$ of length~$n$. 
Note that the child number $j$ is irrelevant. 
With this new, total deterministic preprocessor $\cN$ the proof of 
Lemma~\ref{lem:nul-decomp} is still valid. 
\end{proof}

The following corollary was shown in~\cite[Theorem~8.22]{thebook}, 
but we repeat it here for completeness sake, cf. Corollary~\ref{cor:mtiptt}. 

\begin{corollary}\label{cor:ridpttmt}
$\rIdPTT = \dTT\!_\downarrow \circ \dTT = \family{dMT}_\text{\abb{OI}}$.
\end{corollary}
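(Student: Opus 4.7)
The plan is to establish the corollary as a cycle of inclusions
$$\family{dMT}_\text{\abb{OI}} \;\subseteq\; \rIdPTT \;\subseteq\; \dTT\!_\downarrow \circ \dTT \;\subseteq\; \family{dMT}_\text{\abb{OI}},$$
where the middle inclusion is exactly Lemma~\ref{lem:ridptt-in}, so only the two outer inclusions need separate arguments.

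For $\family{dMT}_\text{\abb{OI}} \subseteq \rIdPTT$, I would start from the observation (made just before Corollary~\ref{cor:mtiptt}) that every deterministic \abb{OI} macro tree transducer is a local ranked \abb{tl} program in which every selector $\psi(x,y)$ is of the form $\down_i(x,y)$ and every left-hand side test is of the form $\lab_\sigma(x)$. Feeding such a program through the construction in the proof of Lemma~\ref{lem:tlipft} yields an equivalent \abb{i-ptt} $\cM$, and the goal is to verify that $\cM$ is in fact root-oriented. Inspecting the construction, pebbles of $\cM$ only arise from the clause that handles outermost selectors: a subroutine state $[\tup{q',\psi(x,y)}(s_1,\dots,s_m)]$ drops the colour $([s_1],\dots,[s_m])$ on the current node and then the head moves according to $\psi$; a pebble is lifted only when a state $[z_i]$ is entered, in which case the head returns to the pebble's node and control passes to the parent configuration. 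Since the selectors are restricted to $\down_i$ in MTTs, pebbles are always dropped at the parent before moving down, and always lifted in tandem with a move back up, so the pebble stack on the tree is exactly the path from the root to the head. After the routine normalization that Lemma~\ref{lem:ipftnf} describes, the resulting \abb{i-ptt} satisfies the syntactic requirements of root-orientation.

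For $\dTT\!_\downarrow \circ \dTT \subseteq \family{dMT}_\text{\abb{OI}}$, I would use the classical idea of folding a top-down transducer and a subsequent tree-walking transducer into a single macro tree transducer. Given $\cM_1\in\dTT\!_\downarrow$ and $\cM_2\in\dTT$, build an MTT $M$ whose states combine a state of $\cM_1$ at an input node $u$ with a ``continuation sequence'' that records, for each of $\cM_2$'s possible return states, the work still to be performed on the sibling subtrees of $\cM_1$'s current output node. Whenever $\cM_1$ fires an output rule $\langle p,\sigma,j\rangle\to\delta(\langle p_1,\stay\rangle,\dots,\langle p_m,\stay\rangle)$, the node $\delta$ in $\cM_1(t)$ is uniquely determined by the input node $u$, and $M$ represents the $i$-th sibling subtree of that node as an actual parameter, passed unevaluated through the rules of $M$ in the OI fashion. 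Downward moves of $\cM_2$ correspond to recursive calls of $M$ that descend into $u$'s children and instantiate new parameters; upward moves of $\cM_2$ are realized by selecting the appropriate parameter previously stored. Since $\cM_1$ is deterministic and top-down, each output node has a unique ``birth position'' in $t$, which is precisely what allows $M$ to traverse the virtual tree $\cM_1(t)$ in both directions without ever materializing it.

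The main obstacle will be Part~2: making the bookkeeping in $M$'s state and parameter structure precise, particularly encoding $\cM_2$'s upward moves on $\cM_1(t)$ as OI parameter accesses in $M$. A convenient way to keep this manageable is to normalize $\cM_2$ so that it alternates moves with output steps in a controlled way, and to give $M$ one parameter slot per possible re-entry state of $\cM_2$ at each $\cM_1$-generated output node; the correctness proof is then a routine induction on the length of the combined $\cM_1$/$\cM_2$ computation, matching reachable output forms of $M$ on $t$ with the corresponding output forms of $\cM_2$ on $\cM_1(t)$.
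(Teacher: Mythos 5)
Your overall architecture is the paper's: the cycle $\family{dMT}_\text{\abb{OI}} \subseteq \rIdPTT \subseteq \dTT\!_\downarrow \circ \dTT \subseteq \family{dMT}_\text{\abb{OI}}$, with the middle inclusion being Lemma~\ref{lem:ridptt-in} and the first obtained by adapting the construction of Lemma~\ref{lem:tlipft}. The gap is in your justification that the adapted transducer is root-oriented. You appeal to ``the routine normalization that Lemma~\ref{lem:ipftnf} describes'', but that lemma produces the \emph{normal form}, which is strictly weaker than root-orientation (otherwise $\IdPTT$ would equal $\rIdPTT = \family{dMT}_\text{\abb{OI}}$, contradicting the remark after Corollary~\ref{cor:mtiptt}). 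Concretely, two things go wrong. First, in the raw Lemma~\ref{lem:tlipft} construction the topmost parameter pebble sits at the \emph{parent} of the head while a right-hand side is evaluated, so requirement~(3) fails; the bead that Lemma~\ref{lem:ipftnf} adds to repair this sits at the head's node, and when the next selector is evaluated the parameter pebble is dropped \emph{on top of that bead at the same node} via a $\stay;\drop_c$ rule. This both violates the root-oriented rule formats (only $\down_i;\drop_c$ and $\lift_c;\drop_d$ may drop) and destroys the invariant that the stack is in bijection with the root-to-head path. Second, after a parameter pebble is lifted at a node $u$ to access an actual parameter, the evaluation of that parameter must continue \emph{at} $u$, while the new topmost pebble lies at $u$'s parent --- again incompatible with requirement~(3) unless something is re-dropped at $u$. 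The paper's fix (its footnote) is to thread a dummy pebble $\odot$ through the computation: drop $\odot$ with every $\down_i$, replace it by the parameter pebble via a $\lift_\odot;\drop_d$ rule when a selector is evaluated, and conversely restore $\odot$ when a consumed parameter pebble is popped. With that invariant every rule falls into one of the four root-oriented shapes; without it your argument does not go through.

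For $\dTT\!_\downarrow \circ \dTT \subseteq \family{dMT}_\text{\abb{OI}}$ the paper constructs nothing: it cites $\dTT \subseteq \family{dMT}_\text{\abb{OI}}$ and the closure $\dTT\!_\downarrow \circ \family{dMT}_\text{\abb{OI}} \subseteq \family{dMT}_\text{\abb{OI}}$ from the literature. Your direct product construction is essentially a simultaneous re-proof of both; the idea (one parameter slot per re-entry state of the walking transducer at each node generated by the top-down transducer) is the standard one and sound, but you explicitly defer the bookkeeping and the correctness induction, which is where all the work lies. Either carry that out in full or, as the paper does, reduce to the two known results.
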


\begin{proof}
The inclusion $\dTT\!_\downarrow \circ \dTT \subseteq \family{dMT}_\text{\abb{OI}}$
follows from the inclusions $\dTT\subseteq \family{dMT}_\text{\abb{OI}}$, 
shown in~\cite[Theorem~35 for $n=0$]{EngMan03}, and
$\dTT\!_\downarrow \circ \family{dMT}_\text{\abb{OI}}\subseteq \family{dMT}_\text{\abb{OI}}$,
shown in~\cite[Theorem~7.6(3)]{EngVog85}. By Lemma~\ref{lem:ridptt-in} it now suffices 
to show that $\family{dMT}_\text{\abb{OI}} \subseteq \rIdPTT$ (which strengthens 
the second inclusion of Corollary~\ref{cor:mtiptt}). 
There are two ways of proving this, which are essentially the same. 
First, the proof of Lemma~\ref{lem:tlipft}
can be adapted in a straightforward way.\footnote{The transducer $\cM$ uses 
an additional pebble $\odot$, which it drops initially on the root 
and whenever it moves down (instead of calling subroutine $S_{q',\psi}$). 
When necessary it replaces $\odot$ by a pebble $([s_1],\dots,[s_m])$. 
When subroutine $S$ is in state $[z_i]$ for some parameter $z_i$,
it lifts $\odot$ and moves up where it finds a pebble $([s_1],\dots,[s_m])$.
}  
Second, the equality $\rIdPTT = \family{dMT}_\text{\abb{OI}}$ is shown 
for total functions in~\cite[Theorem~5.16]{EngVog}. 
By~\cite[Theorem~6.18]{EngVog85}, 
every transduction $\tau\in \family{dMT}_\text{\abb{OI}}$ is of the form 
$\tau_1\circ\tau_2$ where $\tau_1$ is the identity on a regular tree language $R$
and $\tau_2\in\family{dMT}_\text{\abb{OI}}$ is a total function. 
Thus, $\tau_2$ is in $\rIdPTT$. This implies that 
$\tau_1 \circ\tau_2$ is in $\rIdPTT$: the \abb{i-ptt} just starts by checking that 
the input tree is in $R$, using the root-oriented \abb{i-ptt} $\cA'$ 
in the proof of Lemma~\ref{lem:regular} as a subroutine. 
\end{proof}

We now turn to the decomposition of an arbitrary deterministic \abb{i-ptt}
into deterministic \abb{tt}'s.  

\begin{lemma}\label{lem:decompidptt}
$\IdPTT \subseteq \tdTTmso\circ \dTT$.
\end{lemma}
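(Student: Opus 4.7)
\smallskip\noindent\textbf{Proof plan.}
The plan is to refine the decomposition of Lemma~\ref{lem:nul-decomp} so that, for a deterministic input \abb{i-ptt}~$\cM$, the (originally nondeterministic) preprocessor is replaced by a total deterministic \abb{tt} with \abb{mso} head tests, while the second transducer remains a plain deterministic \abb{tt}. The starting observation is that, because $\cM$ is deterministic, its computation on a given input tree $t$ is unique and hence the multiset of pebble stacks that arise in that computation is completely determined by~$t$; the preprocessor should therefore generate exactly the ``tree of trees'' whose root-to-leaf paths record the stacks that actually appear (or a canonical superset thereof).

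First I would check that the relevant reachability information is regular. For each state $q$ of $\cM$, each $i\geq 0$, and each sequence of colours $c_1\cdots c_i$, the $(i{+}1)$-ary pattern
\[
T_{q,c_1\cdots c_i}=\{(t,v_1,\dots,v_i,v)\mid \text{the computation of }\cM\text{ on }t\text{ reaches }\tup{q,v,(v_1,c_1)\cdots(v_i,c_i)}\}
\]
is regular, by Theorem~\ref{thm:regt} applied to the obvious \abb{vi-pta} that simulates $\cM$ on an input tree marked at $v_1,\dots,v_i,v$ and halts successfully exactly when $\cM$ enters the indicated configuration. In particular, by Corollary~\ref{cor:domptt} the set of inputs on which $\cM$ ever reaches a given type of configuration is regular, so \abb{mso} head tests can inspect it.

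Next I would construct the total deterministic \abb{tt} with \abb{mso} head tests $\cN$. It walks on $t$ in the same pattern as the preprocessor $\cN_0$ of Lemma~\ref{lem:nul-decomp}, but wherever $\cN_0$ chose non\-deter\-min\-ist\-ic\-ally between outputting $\bot$ and continuing into a fresh copy $\hat{t}_v^c$, our $\cN$ resolves the choice deterministically by combining a bounded \emph{stack type} carried in its finite state with an \abb{mso} head test at the current node. The stack type is updated deterministically (again via an \abb{mso} head test) each time $\cN$ enters a freshly spawned $\hat{t}_v^c$, so that after the path $(v_1,c_1)\cdots(v_i,c_i)$ has been generated the stored type records everything that matters about that path for the decisions still to be made. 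The deterministic \abb{tt} $\cM'$ is then obtained from the simulator of Lemma~\ref{lem:nul-decomp} by routinely reindexing its rules over the slightly richer labels placed by $\cN$; it still uses no pebbles and no tests, and we get $\tau_\cN\circ\tau_{\cM'}=\tau_\cM$ as required.

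The hard part will be establishing that finitely many stack types suffice, uniformly in $t$: the stack of $\cM$ can grow polynomially long with $|t|$ (as in the proof of Theorem~\ref{thm:expocom}), so the explicit path of pebble drops cannot be stored in $\cN$'s finite state. The resolution rests on the regularity of the patterns $T_{q,c_1\cdots c_i}$: since $\cM$ is fixed, the deterministic bottom-up tree automata recognising their marked versions share a single, bounded state set depending only on $\cM$, so the Myhill--Nerode class of a partial pebble path with respect to the whole family of decisions $\cN$ must make takes only boundedly many values. It is this class that $\cN$ carries in its state and updates by \abb{mso} head tests, and getting this abstraction correct (and arguing that it is indeed updated by a unary \abb{mso} test at the newly pebbled node) is where the real work of the proof lies.
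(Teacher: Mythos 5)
Your overall architecture matches the paper's: keep the simulator $\cM'$ of Lemma~\ref{lem:nul-decomp} unchanged and replace the nondeterministic preprocessor by a deterministic \abb{tt} with \abb{mso} head tests that spawns exactly the copies corresponding to reachable pebble stacks. But there is a genuine gap at the one point you yourself flag as ``where the real work lies''. Your finiteness argument rests on the claim that the bottom-up automata recognising the marked versions of the unboundedly many patterns $T_{q,c_1\cdots c_i}$ (whose arities grow with $i$) ``share a single, bounded state set depending only on $\cM$''. Nothing gives such a uniform bound: each pattern is regular individually (via Theorem~\ref{thm:regt} / Corollary~\ref{cor:domptt}), but the alphabets $\Sigma\times\{0,1\}^{i+1}$ grow with $i$ and the automaton sizes are not controlled uniformly in $i$. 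Consequently ``the Myhill--Nerode class of a partial pebble path'' is not a well-defined finite abstraction as you have set things up, and the claimed bounded stack type is never actually constructed.

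The missing idea is a locality property of invisible pebbles: if $\pi(u,c)$ is reachable, then whether $\pi(u,c)(v,d)$ is reachable depends only on $(t,u,c,v,d)$ and not on $\pi$, because during the computation segment that pushes $(v,d)$ on top of $(u,c)$ the pebbles below $(u,c)$ are unobservable. This reduces the whole reachability question to finitely many regular objects: one regular site $T_c$ (is the one-pebble stack $(u,c)$ reachable?) and one regular binary trip $T_{c,d}$ per pair of colours (can the automaton associated with $\cM$ go from configuration $\tup{\chi(c),u,(u,c)}$ to $\tup{\chi(d),v,(u,c)(v,d)}$ while keeping $(u,c)$ at the bottom?). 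Note also that the decision at $v$ is inherently binary --- it depends on where the previous pebble $u$ sits --- so it cannot be made by a unary \abb{mso} head test at $v$ alone, contrary to what your sketch suggests. The paper instead lets the preprocessor simulate, by a subset construction as in the proof of Theorem~\ref{thm:mso}, a tree-walking automaton with \abb{mso} head tests computing $T_{c,d}$ (Proposition~\ref{prop:trips}) along the very walk from $u$ to $v$ that it already performs to generate the copy $\hat t_u$. Without the locality reduction to the trips $T_{c,d}$ and this mechanism for evaluating them, the construction does not go through.
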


\begin{proof}
Let $\cM = (\Sigma,\Delta,Q,\{q_0\},C,\nothing,C_\mathrm{i},R,0)$ 
be a deterministic \abb{i-ptt} with $C=C_\mathrm{i}$.
We may assume that there is a mapping $\chi: C\to Q$ such that $\chi(c)=q'$ 
for every rule $\tup{q,\sigma,j,b}\to\tup{q',\drop_c}$ of $\cM$. 
If not, then we change $C$ into $C\times Q$ and we change every rule 
$\tup{q,\sigma,j,b}\to\tup{q',\drop_c}$ into $\tup{q,\sigma,j,b}\to\tup{q',\drop_{(c,q')}}$
and every rule $\tup{q,\sigma,j,\{c\}}\to\tup{q',\lift_c}$ into all the rules
$\tup{q,\sigma,j,\{(c,p)\}}\to\tup{q',\lift_{(c,p)}}$. 
Moreover, we may assume that $C=[1,\gamma]$ for some $\gamma\in\nat$.

As in the proof of Lemma~\ref{lem:ridptt-in} we consider the proof of 
Lemma~\ref{lem:nul-decomp} and adapt the preprocessor $\cN$ to the needs of $\cM$. 
Every copy of the input tree that is generated by $\cN$ corresponds 
to a unique potential pebble stack $\pi$ of $\cM$. The simulating deterministic \abb{tt} $\cM'$
walks on that copy whenever $\cM$ has pebble stack~$\pi$.
The idea is now to construct a variation $\cN'$ of $\cN$ that only generates 
those copies of the input tree $t$ that correspond to reachable pebble stacks. A~pebble stack  
$\pi$ is \emph{reachable} (on $t$) if $\cM$ has a reachable output form 
that contains a configuration $\tup{q,v,\pi}$ for some $q\in Q$ and $v\in N(t)$.
For a given~$t$ in the domain of $\cM$, the number of reachable stacks
is finite because $\cM$ is deterministic and thus has a unique computation on~$t$.
Consequently $\cN'$ can preprocess $t$ deterministically. 
Then we can define a total deterministic preprocessor $\cN''$ that 
starts by performing an \abb{mso} head test whether or not the input tree 
is in the domain of $\cM$ (which is regular by Corollary~\ref{cor:domptt}). 
If it is, then $\cN''$ calls $\cN'$, and if it is not, 
then $\cN''$ outputs~$\bot$ and halts.

As an auxiliary tool, we define (as in the proof of Theorem~\ref{thm:expocom}) 
the nondeterministic \abb{i-pta} $\cA$ 
that is obtained from~$\cM$ by changing every output rule 
$\tup{q,\sigma,j,b}\to \delta(\tup{q_1,\stay},\dots,\tup{q_m,\stay})$ of $\cM$ 
into the rules $\tup{q,\sigma,j,b}\to \tup{q_i,\stay}$ for $i\in[1,m]$.
Intuitively, whenever $\cM$ branches, $\cA$ nondeterministically 
follows one of those branches. 
Obviously a nonempty pebble stack $\pi$ with top element $(u,c)$ 
is reachable if and only if $\tup{\chi(c),u,\pi}$ is a 
reachable configuration of $\cA$ (see footnote~\ref{foot:reachcon}). 
Note that $\tup{\chi(c),u,\pi}$ is the configuration of $\cM$ 
just after dropping pebble~$c$ at node $u$. 

For pebble colour $c$, we consider the site $T_c$ consisting of all pairs $(t,u)$
such that one-pebble stack $(u,c)$ is reachable, i.e., such that
$\cA$ has a computation starting in the initial configuration and ending 
in the configuration $\tup{\chi(c),u,(u,c)}$. 
It is not difficult to see that $T_c$ is a regular site. 
In fact, $\tmark(T_c)$ is the domain of an \abb{i-pta} $\cB$ with stack tests that 
simulates $\cA$; whenever it arrives at the marked node $u$ in state $\chi(c)$
and it observes pebble $c$, then it may lift the pebble, 
check that its stack is empty, and accept. Stack tests are allowed by 
Lemma~\ref{lem:stacktests}, and the domain of $\cB$ is regular by Corollary~\ref{cor:domptt}.

We now turn to reachable pebble stacks with more than one pebble, i.e., of the form 
$\pi(u,c)(v,d)$. Assuming that we already know that $\pi(u,c)$ is reachable, 
we can find out whether $\pi(u,c)(v,d)$ is reachable through a regular trip, as follows. 
For pebble colours $c$ and $d$, we consider the trip $T_{c,d}$ consisting of all triples 
$(t,u,v)$ such that $\cA$ has a computation on $t$ 
starting in configuration $\tup{\chi(c),u,(u,c)}$
and ending in configuration $\tup{\chi(d),v,(u,c)(v,d)}$; 
moreover, in every intermediate configuration 
the bottom element of the pebble stack must be $(u,c)$. 
The trip $T_{c,d}$ is regular because $\tmark(T)$ is the domain of an \abb{i-pta}~$\cB'$ 
with stack tests that first walks to the marked node $u$. Then $\cB'$ simulates $\cA$, 
starting in state $\chi(c)$, interpreting the mark of $u$ as pebble $c$ (which cannot be lifted). 
Similar to $\cB$ above, whenever $\cB'$ arrives at the marked node $v$ in state $\chi(d)$
and it observes pebble $d$, then it may lift the pebble, 
check that the stack is empty, and accept. 
Obviously, if $\pi(u,c)$ is reachable, then 
$\pi(u,c)(v,d)$ is reachable if and only if $(t,u,v)\in T_{c,d}$. 
Let $\cB_{c,d}$ be a (nondeterministic) \abb{ta} with \abb{mso} head tests 
that computes $T_{c,d}$, as in Proposition~\ref{prop:trips}. 

The new preprocessor $\cN'$ is a deterministic \abb{tt} with \abb{mso} head tests
that works in the same way as $\cN$ but only creates the copies of the input tree $t$ that
correspond to reachable pebble stacks. Initially it uses the test $T_c$ at node $u$ to decide 
whether it has to create a copy of $t$ corresponding to pebble stack $(u,c)$.
If the test is positive, then, just as $\cN$, it creates a copy of $t$ 
by walking from $u$ to every other node $v$ of $t$, copying $v$ to the output.
Now recall that $\cN$ walks from $u$ to $v$ 
along the shortest (undirected) path in $t$. Thus, by Proposition~\ref{prop:trips}, 
$\cN'$ can simulate the behaviour of \abb{ta} $\cB_{c,d}$ from $u$ to $v$,
for every pebble colour $d$ (using a subset construction 
as in the proof of Theorem~\ref{thm:mso}). Thus, arriving at $v$ it can use 
the trip $T_{c,d}$ to decide whether it has to create a copy of $t$ corresponding 
to pebble stack $(u,c)(v,d)$. At the next level it simulates all $\cB_{d,d'}$
for every $d'\in C$, etcetera. 

More formally, $\cN'$ has initial state $\g$, and all other states are of the form 
$(q,c,S_1,\dots,S_\gamma)$ where $q$ is a state of $\cN$, $c\in C$, 
and $S_d$ is a set of states of $\cB_{c,d}$ for every $d\in C=[1,\gamma]$.
We will call them ``extended'' states in what follows. 
To describe the rules of $\cN'$, we 
first recall the rules of the transducer $\cN$ from the proof of Lemma~\ref{lem:nul-decomp}.
Apart from the rules $\langle f,\sigma,j\rangle \to \bot$, $\cN$ has the rules
\[
\begin{array}{llll}
\rho_\g: & \langle \g,\sigma,j\rangle & \to &
\sigma'(\langle \g,\down_1\rangle,\dots,\langle \g,\down_m\rangle, 
      \langle f,\stay\rangle^\gamma) \\[1mm]
\rho_f: & \langle f,\sigma,j\rangle & \to &
\sigma'_{0,j}(\langle \g,\down_1\rangle,\dots,\langle \g,\down_m\rangle, 
      \tup{f,\stay}^\gamma, \xi_j) \\[1mm]
\rho_{f_i}: & \langle f_i,\sigma,j\rangle & \to & \sigma'_{i,j}(
    \langle \g,\down_1\rangle,\dots,\langle \g,\down_{i-1}\rangle, 
    \bot, \\
&&& \quad\langle \g,\down_{i+1}\rangle,\dots,\langle \g,\down_m\rangle, 
    \tup{f,\stay}^\gamma, 
    \xi_j)    
\end{array}
\]
where $\xi_j = \langle f_j,\up\rangle$ for $j\neq 0$, and $\xi_0 = \bot$.

The rules of $\cN'$ for state $\g$ are obtained from rule $\rho_\g$ 
by adding all possible combinations 
of the \abb{mso} head tests $T_c$ and their negations to the left-hand side.
In the right-hand side, the sequence $\tup{f,\stay}^\gamma$ should be replaced by
the sequence $\zeta_1,\dots,\zeta_\gamma$
where $\zeta_c=\tup{(f,c,I_{c,1},\dots,I_{c,\gamma}),\stay}$ if $T_c$ is true,
$I_{c,d}$ being the set of initial states of $\cB_{c,d}$, and 
$\zeta_c=\bot$ if $T_c$ is false.\footnote{More precisely, 
$I_{c,d}$ consists of all initial states of $\cB_{c,d}$, plus all states that $\cB_{c,d}$
can reach from an initial state by applying a relevant rule with a stay-instruction.
}
The rules of $\cN'$ for an ``extended'' state $(q,c,S_1,\dots,S_\gamma)$ are obtained 
from rule $\rho_q$ as follows.
In the left-hand side change $q$ into $(q,c,S_1,\dots,S_\gamma)$.
Moreover, add all \abb{mso} head tests of $\cB_{c,d}$ for every $d\in C$.   
In the right-hand side change every occurrence of a state $q'\neq f$ into the extended state 
$(q',c,S'_1,\dots,S'_\gamma)$ where the set $S'_d$ is obtained from the set $S_d$ by simulating
$\cB_{c,d}$ appropriately, moving down to the \mbox{$\ell$-th} child 
if $q'=\g$ in $\tup{\g,\down_\ell}$ and moving up if $q'=f_j$. Moreover, 
the sequence $\tup{f,\stay}^\gamma$ should be replaced by $\zeta_1,\dots,\zeta_\gamma$
where $\zeta_d=\tup{(f,d,I_{d,1},\dots,I_{d,\gamma}),\stay}$ if $S_d$ contains a final state 
of $\cB_{c,d}$, and $\zeta_d=\bot$ otherwise 
(where $I_{d,d'}$ is defined similarly to $I_{c,d}$ above). 

It should be clear that $\cN'$ produces an output for every input tree $t$ on which $\cM$ 
has finitely many reachable pebble stacks. Thus, $\cN'$ preprocesses $t$ appropriately 
and the deterministic \abb{tt} $\cM'$ in the proof of Lemma~\ref{lem:nul-decomp} 
can simulate $\cM$ on $\tau_{\cN'}(t)$. 
Hence $\tau_{\cM'}(\tau_{\cN'}(t))=\tau_\cM(t)$ for every $t$ in the domain of $\cM$. 
\end{proof}

It is easy to adapt the proof of Theorem~\ref{thm:composition}
to the case where the first (deterministic) \abb{tt} $\cM_1$ uses \abb{mso} head tests; 
those tests can also be executed by the constructed \abb{i-ptt} $\cM$, by Lemma~\ref{lem:sites}. 
Moreover, that proof can also easily be adapted to the case where the second transducer $\cM_2$
is a root-oriented \abb{i-ptt}. From this and Lemmas~\ref{lem:in-ridptt} and~\ref{lem:decompidptt}
we obtain the following characterizations of $\IdPTT$ as a corollary.
We do not know whether similar characterizations hold for $\IPTT$.

\begin{theorem}\label{thm:charidptt}
$\IdPTT = \dTTmso\circ \dTT = \dTTmso\circ \dTTmso = \dTTmso\circ \rIdPTT$.
\end{theorem}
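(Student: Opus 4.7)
The plan is to establish a cyclic chain of inclusions
\[
\IdPTT \subseteq \dTTmso\circ \dTT \subseteq \dTTmso\circ \dTTmso \subseteq \dTTmso\circ \rIdPTT \subseteq \IdPTT,
\]
from which the four claimed equalities follow at once. Each link is essentially a direct appeal to a result developed earlier in this section, or to the remark in the paragraph preceding the theorem statement.

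First I would invoke Lemma~\ref{lem:decompidptt} to obtain $\IdPTT \subseteq \tdTTmso\circ \dTT$, and then weaken ``total'' to get $\IdPTT \subseteq \dTTmso\circ \dTT$. The second inclusion $\dTTmso\circ \dTT \subseteq \dTTmso\circ \dTTmso$ is immediate since a \abb{dTT} is a special case of a \abb{dTT$^{\text{mso}}$} (one that simply uses no \abb{mso} tests). For the third inclusion I would apply Lemma~\ref{lem:in-ridptt}, which states $\dTTmso \subseteq \rIdPTT$, to the second factor of the composition, yielding $\dTTmso\circ \dTTmso \subseteq \dTTmso\circ \rIdPTT$.

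The fourth (closing) inclusion is the one that requires the remark right before the theorem statement: the proof of Theorem~\ref{thm:composition} can be adapted in two independent ways, allowing the first transducer to perform \abb{mso} head tests (using Lemma~\ref{lem:sites} to let the simulating \abb{i-ptt} carry out those tests) and allowing the second transducer to be a root-oriented deterministic \abb{i-ptt} (since the construction of Theorem~\ref{thm:composition} records, in its invisible pebbles, enough of the output-generating steps of the first transducer to drive an \abb{i-ptt} rather than just a \abb{tt} on its output). Combining both adaptations simultaneously gives $\dTTmso\circ \rIdPTT \subseteq \IdPTT$, which closes the cycle.

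The only non-routine point is this last inclusion, since one must check that the two adaptations of Theorem~\ref{thm:composition} are compatible with each other. The key observation is that both adaptations are orthogonal to the pebble-based bookkeeping used in the original proof: the \abb{mso} head tests of $\cM_1$ are absorbed into the simulation of $\cM_1$'s forward computation via Lemma~\ref{lem:sites}, while allowing $\cM_2$ to be a root-oriented deterministic \abb{i-ptt} only affects how the simulating \abb{i-ptt} interprets $\cM_2$'s own drop/lift instructions in its finite state and pebble stack, on top of the pebbles already used to record $\cM_1$'s output rules. Since the two sets of added pebbles and finite-state information do not interact, the combined construction still yields a deterministic \abb{i-ptt}. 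Once this is established, the cycle of four inclusions immediately gives equality of all four classes.
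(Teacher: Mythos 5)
Your proposal is correct and follows essentially the same route as the paper: the same cyclic chain of inclusions, closed by Lemma~\ref{lem:decompidptt}, the trivial inclusion $\dTT\subseteq\dTTmso$, Lemma~\ref{lem:in-ridptt}, and the adaptation of Theorem~\ref{thm:composition} to an \abb{mso}-testing first transducer and a root-oriented deterministic \abb{i-ptt} as second transducer. The paper's proof simply spells out the details of that last adaptation (the root-orientedness lets $\cM_2$'s pebbles ride along in the triples $(\rho,i,c)$ that already track the root-to-current-node path of the intermediate tree), which is exactly the compatibility point you identify.
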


\begin{proof}
Let us show for completeness sake that $\dTT\circ \rIdPTT \subseteq \IdPTT$.
The proof of Theorem~\ref{thm:composition} can easily be generalized to 
a root-oriented \abb{i-ptt} $\cM_2$, because the path from the root of $s$ to the current node $v$ of $\cM_2$
is represented by the pebble stack of the constructed transducer $\cM$, and so the pebbles of $\cM_2$
can also be stored in the pebble stack of $\cM$. For each node on that path, the stack contains a pebble 
with the rule of $\cM_1$ that generates that node, with its child number, and with the pebble that $\cM_2$
drops on that node. 

Formally, the pebble colours of $\cM$ are now triples $(\rho,i,c)$ where $c$ is a pebble colour of $\cM_2$,
and the states of $\cM$ are the states of $\cM_2$ and all 4-tuples $(p,i,c,q)$ where $c$ is again 
a pebble colour of $\cM_2$. The initial state of $\cM$ is now the one of $\cM_2$, and if 
$\cM_2$ has an initial rule $\tup{q_0,\delta,0,\nothing}\to \tup{q,\drop_c}$, then $\cM$ has the rule
$\tup{q_0,\delta,0,\nothing}\to \tup{(p_0,0,c,q),\stay}$. 
The rules of $\cM$ that simulate $\cM_1$ are defined as in the proof of Theorem~\ref{thm:composition},
replacing $i$ by $i,c$ everywhere for each~$c$. 
The rules of $\cM$ that simulate the non-initial rules of $\cM_2$ are defined as follows.
Let $\tup{q,\delta,i,\{c\}}\to \zeta$ be a non-initial rule of~$\cM_2$ and 
let $\rho: \tup{p,\sigma,j}\to \delta(\tup{p_1,\stay},\dots,\tup{p_m,\stay})$ be an output rule of~$\cM_1$. 
Then $\cM$ has the rule $\tup{q,\sigma,j,\{(\rho,i,c)\}}\to \zeta'$ 
where $\zeta'$ is defined as follows. 
If $\zeta=\tup{q',\down_\ell;\drop_d}$, then $\zeta'= \tup{(p_\ell,\ell,d,q'),\stay}$.
If $\zeta=\tup{q',\lift_c;\up}$, then $\zeta'= \tup{q',\lift_{(\rho,i,c)};\totop}$. 
If $\zeta=\tup{q',\lift_c;\drop_d}$, then $\zeta'= \tup{q',\lift_{(\rho,i,c)};\drop_{(\rho,i,d)}}$. 
In the remaining cases, $\zeta'=\zeta$.
\end{proof}

As another corollary we obtain from the three Lemmas~\ref{lem:in-ridptt}, \ref{lem:ridptt-in}, 
and~\ref{lem:decompidptt} that $\IdPTT \subseteq \dTT^3$. Moreover, 
$\IdPTT \subseteq \family{dMT}_\text{\abb{OI}}^2$ by 
the second equality of Corollary~\ref{cor:ridpttmt}. 
Together with Theorem~\ref{thm:tl}, that implies that 
$\dTLlr\subseteq \family{dMT}_\text{\abb{OI}}^2$, which was stated 
as an open problem in~\cite[Section~8]{EngMan03}
(where $\dTLlr$ and $\family{dMT}_\text{\abb{OI}}$ are denoted 
0-DPMTT and DMTT, respectively). 

\begin{corollary}\label{cor:idptt-tt3}
$\IdPTT \subseteq \dTT\!_\downarrow \circ \dTT \circ \dTT \subseteq 
\family{dMT}_\text{\abb{OI}}\circ\family{dMT}_\text{\abb{OI}}$.
\end{corollary}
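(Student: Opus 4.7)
The plan is to chain together the three lemmas and corollary that immediately precede this statement, together with Theorem~\ref{thm:charidptt}. For the first inclusion $\IdPTT \subseteq \dTT\!_\downarrow \circ \dTT \circ \dTT$, I would start from the equality $\IdPTT = \dTTmso \circ \dTT$ that is part of Theorem~\ref{thm:charidptt}. Combining Lemma~\ref{lem:in-ridptt} ($\dTTmso \subseteq \rIdPTT$) with Lemma~\ref{lem:ridptt-in} ($\rIdPTT \subseteq \dTT\!_\downarrow \circ \dTT$) yields $\dTTmso \subseteq \dTT\!_\downarrow \circ \dTT$. Composing on the right with $\dTT$ gives
\[
\IdPTT \;=\; \dTTmso \circ \dTT \;\subseteq\; \dTT\!_\downarrow \circ \dTT \circ \dTT,
\]
which is the first inclusion. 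No new construction is needed; it is a purely formal combination of previously proved inclusions.

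For the second inclusion, I would rely on Corollary~\ref{cor:ridpttmt}, which states in particular that $\dTT\!_\downarrow \circ \dTT = \family{dMT}_\text{\abb{OI}}$, and on the fact (cited in the proof of that corollary from \cite[Theorem~35]{EngMan03}) that $\dTT \subseteq \family{dMT}_\text{\abb{OI}}$. Applying the first equality to the leftmost two factors and the inclusion to the rightmost factor, I get
\[
\dTT\!_\downarrow \circ \dTT \circ \dTT \;=\; \family{dMT}_\text{\abb{OI}} \circ \dTT \;\subseteq\; \family{dMT}_\text{\abb{OI}} \circ \family{dMT}_\text{\abb{OI}},
\]
which is the second inclusion.

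Since every ingredient has already been established in the excerpt, there is no real obstacle; the whole argument is a two- or three-line citation chain. The only point that needs a word of care is observing that the composition operator $\circ$ on classes of transductions is monotone in both arguments, so that inclusions $\cR \subseteq \cR'$ and $\cS \subseteq \cS'$ give $\cR \circ \cS \subseteq \cR' \circ \cS'$; this is immediate from the definition of $\cR \circ \cS$ given in Section~\ref{sec:trees}. I would present the proof as two displayed chains of inclusions with the relevant lemma/corollary numbers cited inline.
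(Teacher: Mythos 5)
Your proposal is correct and follows essentially the same citation chain as the paper, which derives the first inclusion from Lemmas~\ref{lem:in-ridptt}, \ref{lem:ridptt-in}, and~\ref{lem:decompidptt} (your use of Theorem~\ref{thm:charidptt} in place of Lemma~\ref{lem:decompidptt} is an immaterial difference, since that theorem is itself obtained from the lemma) and the second from the equality $\dTT\!_\downarrow \circ \dTT = \family{dMT}_\text{\abb{OI}}$ of Corollary~\ref{cor:ridpttmt} together with $\dTT \subseteq \family{dMT}_\text{\abb{OI}}$.
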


We are now able to prove the deterministic analogue of Theorem~\ref{thm:decomp} 
for \abb{ptt}'s with at least one visible pebble. 

\begin{theorem}\label{thm:detdecomp}
For every $k\ge 1$,
$\VIdPTT{k} \subseteq \dTT^{k+2}$.
\end{theorem}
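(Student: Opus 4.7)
The plan is to combine an iterated application of Lemma~\ref{lem:decomp} with a strengthening of Lemma~\ref{lem:decompidptt} to the single-visible-pebble case. Applying Lemma~\ref{lem:decomp} $k-1$ times yields
\[
\VIdPTT{k}\;\subseteq\;\tdTT^{k-1}\circ \VIdPTT{1}.
\]
Since $\tdTT\subseteq\dTT$, it therefore suffices to prove the base case $\VIdPTT{1}\subseteq\dTT^{3}$, as that would give $\VIdPTT{k}\subseteq\dTT^{k-1}\circ\dTT^{3}\subseteq\dTT^{k+2}$. Note that the straightforward bound $\VIdPTT{1}\subseteq\tdTT\circ\IdPTT\subseteq\tdTT\circ\dTT^{3}\subseteq\dTT^{4}$ obtained from Lemma~\ref{lem:decomp} and Corollary~\ref{cor:idptt-tt3} is one $\dTT$ too many, so the base case really does require a specialised argument.

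For the base case my plan is to establish the sharper inclusion $\VIdPTT{1}\subseteq\tdTTmso\circ\dTT$ by a direct adaptation of the proof of Lemma~\ref{lem:decompidptt}. Given a deterministic \abb{v$_1$i-ptt} $\cM$, form the nondeterministic auxiliary \abb{v$_1$i-pta} $\cA$ as in that proof. The sets and relations ``stack $(u,c)$ is reachable'' and ``stack $\pi(u,c)(v,d)$ is reachable given that $\pi(u,c)$ is reachable'' remain regular sites and trips, witnessed by \abb{vi-pta}'s with stack tests whose domains are regular by Corollary~\ref{cor:domptt} (here Lemma~\ref{lem:stacktests} justifies the use of stack tests). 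The preprocessor~$\cN''$ is then a total deterministic \abb{tt} with \abb{mso} head tests that generates, for every~$t$ in $\dom(\cM)$ (also regular, by Corollary~\ref{cor:domptt}), a structured output containing one folded copy of $t$ (as in Lemmas~\ref{lem:nul-decomp} and~\ref{lem:decompidptt}) per reachable pebble stack, and outputs $\bot$ otherwise; a simulator $\cM'\in\dTT$ then walks on $\cN''(t)$ and simulates $\cM$ step by step as in Lemma~\ref{lem:decompidptt}. Finally, Lemmas~\ref{lem:in-ridptt} and~\ref{lem:ridptt-in} (noting that both constructions preserve totality), together with $\tdTT\!_\downarrow\subseteq\dTT$, give $\tdTTmso\subseteq\tdTT\!_\downarrow\circ\dTT\subseteq\dTT^{2}$, whence $\VIdPTT{1}\subseteq\tdTTmso\circ\dTT\subseteq\dTT^{3}$ as required.

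The main obstacle is the handling of the visible pebble inside the preprocessor and the simulator. A visible pebble $c$ remains observable at its node even after further invisible pebbles have been dropped on top of it, whereas in Lemma~\ref{lem:decompidptt} every pebble is observable only while it is at the top of the stack. Accordingly, inside each folded copy generated by $\cN''$ the simulator $\cM'$ must be able to detect both the top-of-stack pebble (visible or invisible) and any visible pebble that was dropped at an earlier, deeper-in-the-stack configuration corresponding to an ancestor copy. My plan is to annotate each copy not only with the node on which the most recent pebble sits (as in Lemma~\ref{lem:decompidptt}) but also, throughout that copy, with the colour and position of the visible pebble currently on the stack, if any; since at most one visible pebble is present at any moment this annotation stays bounded. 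The reachability analysis that drives the \abb{mso} head tests of~$\cN''$ must then take the visible-pebble positions into account, but since the auxiliary pebble automaton $\cA$ is permitted one visible pebble and its relevant domains are still regular by Corollary~\ref{cor:domptt}, this remains effective.
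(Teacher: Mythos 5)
Your reduction to the base case $k=1$ via Lemma~\ref{lem:decomp} is fine, but the base case is where the argument fails, and it fails irreparably: the inclusion $\VIdPTT{1}\subseteq\tdTTmso\circ\dTT$ that you set out to prove is false. Indeed, $\tdTTmso\circ\dTT\subseteq\dTTmso\circ\dTT=\IdPTT$ by Theorem~\ref{thm:charidptt}, whereas $\IdPTT=\VIdPTT{0}$ is \emph{properly} contained in $\VIdPTT{1}$: by the argument in the proof of Theorem~\ref{thm:dethier}, $\VIdPTT{1}\not\subseteq\family{dMT}_\text{\abb{OI}}^{2}$ (via \cite[Theorem~41]{EngMan03} and the replacement of the topmost visible pebble by an invisible one), while $\IdPTT\subseteq\family{dMT}_\text{\abb{OI}}^{2}$ by Corollary~\ref{cor:idptt-tt3}. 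So no construction along the lines you sketch can succeed. The concrete breaking point is exactly what you call ``the main obstacle'': the tree-of-trees preprocessing of Lemmas~\ref{lem:nul-decomp} and~\ref{lem:decompidptt} is sound only because an invisible pebble must be observable solely while it is on top of the stack, so each copy needs a single marked node (the fold point). Once invisible pebbles are stacked on top of the visible pebble $c$ sitting at node $v$, the simulator is working inside a copy corresponding to a deeper stack, and that copy would have to carry a mark at $v$ as well. The preprocessor, a finite-state tree-walker on $t$, cannot place that mark: $v$ is an arbitrary node determined by the computation history encoded in the path through the tree of trees, it is not a function of the current fold point, and it cannot be held in bounded state; correspondingly the reachability relations $T_{c,d}$ cease to be binary trips and become ternary (they depend on $v$), which a deterministic \abb{tt} with \abb{mso} head tests cannot evaluate while walking. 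This is precisely the obstruction the paper describes in the paragraph following Lemma~\ref{lem:decomp}.

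The paper's proof circumvents the difficulty entirely: it decomposes all the way down, $\VIdPTT{k}\subseteq\tdTT^{k}\circ\IdPTT\subseteq\tdTT^{k-1}\circ\tdTT\circ\dTT_\downarrow\circ\dTT\circ\dTT$ by Lemma~\ref{lem:decomp} and Corollary~\ref{cor:idptt-tt3}, and then recovers the one superfluous transducer by proving $\tdTT\circ\dTT_\downarrow\subseteq\dTT$. That composition is an easy product construction precisely because the second transducer never moves up, so the composed transducer never has to backtrack over the first one's computation and needs no pebbles. You should redirect your effort to that composition lemma rather than to a one-visible-pebble analogue of Lemma~\ref{lem:decompidptt}.
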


\begin{proof} 
Since it follows from Lemma~\ref{lem:decomp} and Corollary~\ref{cor:idptt-tt3} that 
$\VIdPTT{k} \subseteq \tdTT^{k-1}\circ \tdTT\circ \dTT_\downarrow \circ \dTT \circ \dTT$,  
it suffices to show that $\tdTT \circ \dTT_\downarrow \subseteq \dTT$.
For the sake of the proof of Lemma~\ref{lem:tdttmso}, we will show more generally that 
for all deterministic \abb{tt}'s $\cM_1$ and $\cM_2$
such that $\cM_2$ does not use the up-instruction, 
a deterministic \abb{tt} $\cM$ can be constructed such that 
$\tau_\cM(t)=\tau_{\cM_2}(\tau_{\cM_1}(t))$ for every input tree~$t$ in the domain of $\cM_1$. 
This can be proved by a straightforward product construction, which is an easy 
adaptation of the construction in the proof of Theorem~\ref{thm:composition}. 
Since transducer $\cM_2$ never moves up,
there is no need to backtrack on the computation of $\cM_1$.
Therefore, the constructed transducer $\cM$ only considers the topmost pebble.
Since that pebble is always at the position of the head, its colour 
can as well be stored in the finite state of $\cM$. 
Hence $\cM$~can be turned into a \abb{tt} rather than an \abb{i-ptt}. 

Formally, let $\cM_1=(\Sigma,\Delta,P,\{p_0\},R_1)$ and $\cM_2=(\Delta,\Gamma,Q,\{q_0\},R_2)$. 
The deterministic \abb{tt} $\cM$ has input alphabet $\Sigma$ and output alphabet $\Gamma$. 
Its states are of the form $(p,i,q)$ or $(\rho,i,q)$, 
where $p\in P$, $i\in[0,\m_\Delta]$, $q\in Q$, and 
$\rho$ is an output rule of $\cM_1$, i.e., a rule of the form 
$\tup{p,\sigma,j}\to \delta(\tup{p_1,\stay},\dots,\tup{p_m,\stay})$. 
Its initial state is $(p_0,0,q_0)$. 
As in the proof of Theorem~\ref{thm:composition}, 
state $(p,i,q)$ is used by $\cM$ when simulating the computation of $\cM_1$ 
that generates the $i$-th child of the current node of $\cM_2$ 
(keeping the state $q$ of $\cM_2$ in memory).
A state $(\rho,i,q)$ is used by $\cM$ when simulating a computation step of $\cM_2$
on the node that $\cM_1$ has generated with rule $\rho$. 
The rules of $\cM$ are defined as follows. 

First the rules that simulate $\cM_1$. Let $\rho: \tup{p,\sigma,j}\to\zeta$ be a rule in~$R_1$.
If $\zeta=\tup{p',\alpha}$, where $\alpha$ is a move instruction, 
then $\cM$ has the rules $\tup{(p,i,q),\sigma,j}\to \tup{(p',i,q),\alpha}$ 
for every $i\in[0,\m_\Delta]$ and $q\in Q$. 
If $\rho$ is an output rule, then  
$\cM$ has the rules $\tup{(p,i,q),\sigma,j}\to \tup{(\rho,i,q),\stay}$
for every $i$ and $q$ as above.

Second the rules that simulate $\cM_2$.  
Let $\tup{q,\delta,i}\to \zeta$ be a rule in~$R_2$ and 
let $\rho: \tup{p,\sigma,j}\to \delta(\tup{p_1,\stay},\dots,\tup{p_m,\stay})$
be an output rule in~$R_1$ (with the same $\delta$). 
Then $\cM$ has the rule $\tup{(\rho,i,q),\sigma,j}\to \zeta'$ 
where $\zeta'$ is obtained from~$\zeta$ by changing 
every $\tup{q',\stay}$ into $\tup{(\rho,i,q'),\stay}$,
and every $\tup{q',\down_\ell}$ into $\tup{(p_\ell,\ell,q'),\stay}$. 
\end{proof}

Since the topmost pebble of a \abb{v-ptt} can be replaced by 
an invisible pebble, we obtain from Theorem~\ref{thm:detdecomp} that 
$\VdPTT{k} \subseteq \dTT^{k+1}$, which was proved in~\cite[Theorem~10]{EngMan03}.

Theorem~\ref{thm:detdecomp} allows us to show that, in the deterministic case,
$k+1$ visible pebbles are more powerful than $k$ visible pebbles.

\begin{theorem}\label{thm:dethier}
For every $k\ge 0$,
$\VIdPTT{k} \subsetneq \VIdPTT{k+1}$.
\end{theorem}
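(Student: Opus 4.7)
The plan is to prove strict inclusion for each $k\ge 0$. The easy half $\VIdPTT{k}\subseteq\VIdPTT{k+1}$ has already been noted in Section~\ref{sec:autotrans}: the bound $k$ on the number of visible pebbles of a \abb{v$_k$i-dptt} is simply replaced by $k+1$, using no additional visible colour. For the strict half I would appeal to a growth-rate argument combined with Theorem~\ref{thm:detdecomp} (and Corollary~\ref{cor:idptt-tt3} for $k=0$).

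Define $\exp_0(n)=n$ and $\exp_{i+1}(n)=2^{\exp_i(n)}$. A standard fact in the theory of tree transducers is that a deterministic \abb{tt} can inflate the output size by at most one exponential: $|\tau(t)|\le c^{|t|}$ for some constant $c$ depending on the transducer (the deterministic computation visits each configuration at most once and each output rule has bounded branching). By induction on the number of components, every transduction in $\dTT^m$ satisfies $|\tau(t)|\le \exp_m(c\,|t|)$. Combining this with $\VIdPTT{k}\subseteq\dTT^{k+2}$ from Theorem~\ref{thm:detdecomp} (respectively Corollary~\ref{cor:idptt-tt3} when $k=0$) yields an upper bound of the form $\exp_{k+2}(c\,|t|)$ on the output size of any transduction in $\VIdPTT{k}$.

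On the other hand I would construct, by induction on $k$, a deterministic \abb{v$_{k+1}$i-ptt} $\cM^*_{k+1}$ which, on a monadic input of size $n$, produces an output tree of size exceeding every $\exp_{k+2}(cn)$. The construction generalises the \abb{i-pft} $\cM_\text{2exp}$ from the end of Section~\ref{sec:pft}: each additional visible pebble drives an extra outer ``doubling'' enumeration whose $\exp_1(n)$ branches are combined with independent copies of the previously constructed sub-transducer, cascading one further iterated exponential onto the output size. The ranked (tree) version of $\cM_\text{2exp}$ already provides the base step, since the branching trick on enumerations of pebble positions can be recast as a binary tree with $\exp_2(n)$ leaves; the inductive step adds a visible pebble $v_{k+2}$ that controls an outer enumeration, each of whose branches invokes a fresh copy of $\cM^*_k$. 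The resulting witness lies in $\VIdPTT{k+1}$ but, by the size bound above, cannot lie in $\VIdPTT{k}$.

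The main obstacle, and where the proof really has to be done carefully, is the witness construction: one must arrange the doubling trick so that the $\exp_1(n)$ outer branches \emph{each} trigger a truly independent invocation of the inner sub-transducer (and do not merely share one output), giving multiplicative rather than additive combinations of tree sizes, and this multiplicative effect must cascade correctly across all $k+1$ visible-pebble layers to deliver the claimed $(k+3)$-fold exponential blow-up. A secondary issue is the base case $k=0$, where the crude bounds from Corollary~\ref{cor:idptt-tt3} give the same $\exp_3$ envelope to both $\IdPTT$ and $\VIdPTT{1}$; either a tighter output-size bound for $\IdPTT$ (for instance, showing that without visible pebbles the doubling trick can only cascade a bounded number of times) or a direct structural argument exhibiting a specific $\family{V}_1\text{I-dPTT}$-transduction whose ``simultaneous remembrance'' of one input position is incompatible with any invisible-only pebble stack, will be needed to close this case.
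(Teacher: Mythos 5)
Your easy half ($\VIdPTT{k}\subseteq\VIdPTT{k+1}$) is fine, but the strict half rests on a witness that cannot exist, and the obstruction is not confined to the base case $k=0$ as you suspect: it kills the growth-rate strategy at every level. As observed immediately after Theorem~\ref{thm:expocom}, the height of the output tree of \emph{any} deterministic \abb{ptt} --- with any number of visible pebbles --- is at most singly exponential in the size of the input tree, because the length of the pebble stack in any reachable configuration is polynomially bounded (the counting argument at the start of that proof, with bound $N=|Q|\cdot(|C|+1)^{k+1}\cdot n^{k+2}$). Since output symbols have bounded rank, the output size of every transduction in $\VIdPTT{k}$ is at most $2$-fold exponential, for every $k\geq 0$, and this rate is already attained at $k=0$ (an \abb{i-ptt} with exponential output height, as discussed after Corollary~\ref{cor:mtiptt}, combined with binary branching). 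So visible pebbles do not increase the achievable growth rate at all, and your cascading transducer $\cM^*_{k+1}$ with output size beyond $(k+2)$-fold exponential would contradict Theorem~\ref{thm:expocom}. The $(k+2)$-fold exponential upper bound you extract from $\VIdPTT{k}\subseteq\dTT^{k+2}$ is correct for arbitrary compositions of deterministic \abb{tt}'s, but it is far from tight for the special compositions produced by the decomposition, which is exactly why it cannot be met from below inside $\VIdPTT{k+1}$.

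The paper's proof avoids size arguments entirely. It combines the upper bound $\VIdPTT{k}\subseteq\family{dMT}_\text{\abb{OI}}^{k+2}$ (from Theorem~\ref{thm:detdecomp}, Corollary~\ref{cor:idptt-tt3}, and $\dTT\subseteq\family{dMT}_\text{\abb{OI}}$) with the structural separation $\VdPTT{k}\not\subseteq\family{dMT}_\text{\abb{OI}}^{k}$ of~\cite[Theorem~41]{EngMan03}: since $\VdPTT{k+2}\subseteq\VIdPTT{k+1}$ (replace the topmost visible pebble by an invisible one), $\VIdPTT{k+1}$ is not included in $\family{dMT}_\text{\abb{OI}}^{k+2}$, whereas $\VIdPTT{k}$ is. Any self-contained replacement would have to reprove a separation of that structural kind, not a growth-rate one.
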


\begin{proof}
It follows from Theorem~\ref{thm:detdecomp} and Corollary~\ref{cor:idptt-tt3}
(and the inclusion $\dTT\subseteq \family{dMT}_\text{\abb{OI}}$ 
in Corollary~\ref{cor:ridpttmt}) 
that $\VIdPTT{k} \subseteq \family{dMT}_\text{\abb{OI}}^{k+2}$ for every $k\geq 0$. 
But it is proved in~\cite[Theorem~41]{EngMan03} that, for every $k\geq 1$, 
$\VdPTT{k}$ is not included in $\family{dMT}_\text{\abb{OI}}^k$. Hence, 
since the topmost pebble of a \abb{v-ptt} can be replaced by an invisible pebble,
$\VIdPTT{k}$ is not included in $\family{dMT}_\text{\abb{OI}}^{k+1}$.
\end{proof}

The above proof also shows that Theorem~\ref{thm:detdecomp} is optimal, in the sense that, 
for every $k\geq 1$, $\VIdPTT{k}$ is not included in $\dTT^{k+1}$. 

Another consequence of Theorem~\ref{thm:detdecomp} is that, by the results of~\cite{man},
all total deterministic \abb{vi-pft} transformations for which the size of the
output document is linear in the size of the input document,
can be programmed in $\TL$. Let $\family{LSI}$ be the class of all total functions $\phi$
for which there exists a constant $c\in\nat$ such that $|\phi(t)|\leq c\cdot|t|$ 
for every input tree $t$. 

\begin{theorem}\label{thm:lsi}
For every $k\geq 0$,
\[ 
\VIdPTT{k} \cap \family{LSI} \subseteq \IdPTT = \dTL_\mathrm{r} \quad\text{and}\quad
\VIdPFT{k} \cap \family{LSI} \subseteq \IdPFT = \dTL.
\]
\end{theorem}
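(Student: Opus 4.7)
The two equalities $\IdPTT = \dTL_\mathrm{r}$ and $\IdPFT = \dTL$ are immediate from Theorem~\ref{thm:tl}, so the content of the statement lies in the two $\family{LSI}$ inclusions. The plan is to reduce both of them to the known characterization (from~\cite{man}, building on~\cite{EngMan99}) that the $\family{LSI}$-restriction of any composition hierarchy of deterministic tree transducers (or even of deterministic macro tree transducers) coincides with the class $\family{DMSOT}$ of deterministic \abb{mso} definable tree transductions, i.e.\ $\dTT^n \cap \family{LSI} \subseteq \family{DMSOT}$ for every $n\in\nat$. Combined with the facts $\family{DMSOT}\subseteq \family{dMT}_\text{\abb{OI}} = \rIdPTT \subseteq \IdPTT$ (the first inclusion is recalled after Corollary~\ref{cor:mtiptt} and the equality is Corollary~\ref{cor:ridpttmt}), this is the key ingredient.

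For the first inclusion, I would argue as follows. Let $\tau \in \VIdPTT{k} \cap \family{LSI}$. By Theorem~\ref{thm:detdecomp}, $\tau \in \dTT^{k+2}$; intersecting with $\family{LSI}$ and applying the above-mentioned result of~\cite{man}, we get $\tau \in \family{DMSOT}$, and hence $\tau \in \IdPTT$ as recalled in the previous paragraph. The equality $\IdPTT = \dTL_\mathrm{r}$ then finishes this case.

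For the forest case I would use the standard \enc/\dec sandwich. Let $\tau \in \VIdPFT{k} \cap \family{LSI}$. First, since $\enc$ is clearly linear-size ($|\enc(f)| = O(|f|)$), the composition $\tau \circ \enc$ is still in $\family{LSI}$. By Lemma~\ref{lem:pftvsptt}(2) (in its deterministic form), $\tau \circ \enc \in \VIdPTT{k} \circ \IdPTT$, which by Theorem~\ref{thm:detdecomp} and Corollary~\ref{cor:idptt-tt3} is contained in $\dTT^{k+2}\circ\dTT^3 = \dTT^{k+5}$. Intersecting with $\family{LSI}$ and invoking~\cite{man} once more gives $\tau\circ\enc \in \family{DMSOT} \subseteq \IdPTT$. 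Since $\enc$ is a bijection on (nonempty) forests with inverse $\dec$, we have $\tau = (\tau\circ\enc)\circ \dec$, and by Lemma~\ref{lem:pftvsptt}(1) applied to $\tau\circ\enc \in \IdPTT$, this places $\tau$ in $\IdPTT\circ\family{dec}\subseteq \IdPFT$. Together with the equality $\IdPFT = \dTL$ from Theorem~\ref{thm:tl}, this completes the argument.

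The only nontrivial step is the appeal to~\cite{man}: everything else is decomposition bookkeeping (Theorem~\ref{thm:detdecomp}, Corollary~\ref{cor:idptt-tt3}, Lemma~\ref{lem:pftvsptt}) together with the characterizations of $\IdPTT$ and $\IdPFT$ in terms of $\dTL$-programs and $\family{dMT}_\text{\abb{OI}}$. The main obstacle is therefore simply invoking the correct statement from~\cite{man}, namely that $\dTT^n \cap \family{LSI} \subseteq \family{DMSOT}$ for all $n$; once that is cited, the rest is essentially a chain of inclusions already proved in the paper.
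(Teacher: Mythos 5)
Your proposal is correct and follows essentially the same route as the paper's own proof: decompose via Theorem~\ref{thm:detdecomp} (and Lemma~\ref{lem:pftvsptt} with the $\enc$/$\dec$ sandwich for the forest case), invoke Maneth's collapse result from~\cite{man}, and then climb back into $\IdPTT$ via $\family{dMT}_\text{\abb{OI}}=\rIdPTT$ and Theorem~\ref{thm:tl}. The only cosmetic differences are that the paper cites~\cite{man} in the form $\family{dMT}_\text{\abb{OI}}^k \cap \family{LSI} \subseteq \family{dMT}_\text{\abb{OI}}$ rather than through $\family{DMSOT}$, and uses $\IPTT\subseteq\family{dMT}_\text{\abb{OI}}^2$ instead of $\dTT^3$ in the forest case, neither of which affects the argument.
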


\begin{proof}
It is shown in~\cite{man} that 
$\family{dMT}_\text{\abb{OI}}^k \cap \family{LSI} \subseteq \family{dMT}_\text{\abb{OI}}$
for every $k\geq 1$. 
By Theorem~\ref{thm:detdecomp} and Corollary~\ref{cor:ridpttmt},
$\VIdPTT{k} \subseteq \family{dMT}_\text{\abb{OI}}^{k+2}$. 
And by Corollary~\ref{cor:mtiptt} and Theorem~\ref{thm:tl}, 
$\family{dMT}_\text{\abb{OI}} \subseteq \IdPTT = \dTL_\mathrm{r}$. This proves the first inclusion.
To prove the second inclusion, let $\phi\in\VIdPFT{k} \cap \family{LSI}$. 
Obviously, $\phi\circ\enc$ is also in $\family{LSI}$, and 
$\phi\circ\enc\in \VIdPTT{k}\circ\IdPTT$ by Lemma~\ref{lem:pftvsptt}(2). 
Hence $\phi\circ\enc \in \family{dMT}_\text{\abb{OI}}^{k+4}\subseteq\IdPTT$, as above. 
In other words, $\phi\in \IdPTT\circ\dec$. Consequently, by Lemma~\ref{lem:pftvsptt}(1)
and Theorem~\ref{thm:tl}, $\phi\in\IdPFT = \dTL$. 
\end{proof}

In fact, $\VIdPTT{k} \cap \family{LSI}$ is the class 
of total functions in the class
$\family{DMSOT}$ of deterministic 
\abb{mso} definable tree transductions
discussed after Corollary~\ref{cor:mtiptt}, and similarly, 
$\VIdPFT{k} \cap \family{LSI}$ is the class of
total functions in the class of
deterministic \abb{mso} definable
tree-to-forest transductions (which equals $\family{DMSOT}\circ\family{dec}$,
because both $\dec$ and $\enc$ are \abb{mso} definable). 

For the reader familar with results about attribute grammars 
(which are a well-known compiler construction tool) and related formalisms, 
we now briefly discuss the relationship between those results and some of the above.
As explained in detail in~\cite[Section~3.2]{EngMan03}, 
the total deterministic tree-walking tree transducer, i.e., the td\abb{TT}, is essentially 
a notational variant of the attributed tree transducer (\abb{AT}) of~\cite{Ful,FulVog}, 
except that the \abb{at} is in addition required to be ``noncircular'', 
which means that no configuration can generate an output form 
in which that same configuration occurs. 
As observed at the end of Section~\ref{sec:document}, the deterministic \abb{i-ptt} has the 
same expressive power as the deterministic \abb{tl} program that is local and ranked,
which corresponds to the macro attributed tree transducer (\abb{MAT}) of~\cite{KuhVog,FulVog} 
in the same way, i.e., the \abb{MAT} is the ``noncircular'' td\abb{tl}$_{\ell {\sf r}}$ program.
Since $\rIdPTT = \family{dMT}_\text{\abb{OI}}$ by Corollary~\ref{cor:ridpttmt},
Lemma~\ref{lem:in-ridptt} ($\dTTmso \subseteq \rIdPTT$) is closely related to 
the well-known fact that \abb{at} (with look-ahead) can be simulated 
by deterministic macro tree transducers.
Lemma~\ref{lem:ridptt-in} ($\rIdPTT \subseteq \dTT_\downarrow \circ \dTT$) is related 
to the fact that every total deterministic macro tree transducer can be decomposed into 
a deterministic top-down tree transducer followed by a YIELD mapping, which can be realized
by an \abb{at}.  
Theorem~\ref{thm:charidptt} ($\IdPTT = \dTTmso\circ \dTT = \dTTmso\circ \rIdPTT$) 
is closely related to 
the fact that every \abb{mat} can be decomposed into two \abb{at}'s, 
and that the composition of an \abb{at} and a total deterministic macro tree transducer 
can be simulated by a \abb{mat},
as shown in~\cite[Theorem~4.8]{KuhVog} and its proof (see also~\cite[Corollary~7.30]{FulVog}). 
The inclusion $\tdTT \circ \dTT_\downarrow \subseteq \dTT$ in the proof of 
Theorem~\ref{thm:detdecomp} is closely related to the closure of \abb{AT} under right-composition 
with deterministic top-down tree transducers, as shown in~\cite[Theorem~4.3]{Ful} 
(see also~\cite[Lemma~4.11]{KuhVog} and~\cite[Lemma~5.46]{FulVog}).
We finally mention that the class $\family{DMSOT}$ of deterministic \abb{mso} 
definable tree transductions is properly included in $\dTTmso$ 
(see~\cite[Theorems~8.6 and~8.7]{thebook}),
as shown for attribute grammars (with look-ahead) in~\cite{EngBlo}.

\smallpar{Strong pebbles}
In the litterature there are pebble automata with weak and strong pebbles.
Weak pebbles (which are the pebbles considered until now) 
can only be lifted when the reading head is at the position where they were dropped,
whereas strong pebbles can also be lifted from a distance, i.e., 
when the reading head is at any other position.
So, strong pebbles are more like dogs that can be whistled back, or like pointers 
that can be erased from memory. Formally, we define a pebble colour $c$ to be \emph{strong}
as follows. For a rule $\tup{q,\sigma,j,b}\to\tup{q',\lift_c}$
we do not require any more that $c\in b$. If the rule is 
relevant to configuration $\tup{q,u,\pi}$, then it is applicable whenever the topmost
element of the pebble stack is $(v,c)$ for some node $v$ (not necessarily equal to~$u$).
That top pebble is then popped from the stack, i.e., 
$\pi=\pi'(v,c)$ where $\pi'$ is the new stack. 
Strong pebbles were investigated, e.g., in~\cite{fo+tc,MusSamSeg,expressive,SamSeg,FulMuzFI}. 

It turns out that strong invisible pebbles are too strong, in the sense that 
they allow the recognition of nonregular tree languages, 
cf. the paragraph after Theorem~\ref{thm:regt}.  
For example, the nonregular language $\{a^n\#b^n\mid n\in\nat\}$
can be accepted by an \abb{i-pta} with strong pebbles as follows. 
After checking that the input string $w$ is in $a^*\#b^*$, 
the automaton drops a pebble on $\#$ and walks to the left, 
dropping a pebble on every $a$. Next it walks to the end of $w$,
and then walks to the left, lifting a pebble (from a distance) 
for every $b$ it passes. It accepts $w$ if it arrives at $\#$ and 
observes a pebble on $\#$. 

Thus, we will only consider the \abb{pta} and \abb{ptt} with strong \emph{visible} pebbles, 
abbreviated as \abb{v$^+$i-pta} and \abb{v$^+$i-ptt} (and similarly for the classes 
of transductions they realize). Obviously, $\VIPTT{k} \subseteq \VsIPTT{k}$ for every $k\geq 0$.
We do not know whether the inclusion is proper. 

Let us first show that the \abb{v$^+$i-pta} and \abb{v$^+$i-ptt} can perform stack tests. 

\begin{lemma}\label{lem:stacktestsplus}
Let $k\geq 0$. For every \abb{v$^+_k$i-pta} with stack tests $\cA$ 
an equivalent (ordinary) \abb{v$^+_k$i-pta} $\cA'$ 
can be constructed in polynomial time. 
The construction preserves determinism and
the absence of invisible pebbles.
The same holds for the corresponding \abb{ptt}'s.  
\end{lemma}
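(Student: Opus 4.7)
The plan is to adapt the construction of Lemma~\ref{lem:stacktests} so that it survives the presence of strong visible pebbles. The only real difficulty is that a strong lift of a visible pebble $c$ can occur while the reading head is not at $c$'s position, so the simulating automaton cannot discover the colour of the pebble directly below $c$ by observing $c$ at the current node. The fix is to carry that information in the finite state. Since a \abb{v$^+_k$i-pta} has at most $k$ visible pebbles on the tree at any moment and $k$ is fixed, the simulating automaton $\cA'$ can store in its finite state, in addition to the current state of $\cA$ and the colour $\gamma\in C\cup\{\epsilon\}$ of the topmost pebble, a sequence $\phi'$ of pairs $(c,\gamma_c)$ — one pair per visible pebble currently on the stack, listed in drop order — where $\gamma_c\in C\cup\{\epsilon\}$ is the colour of the pebble that was on top of the stack just before $c$ was dropped. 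The length of $\phi'$ is bounded by $k$.

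For invisible pebbles, which are still weak, the classical trick of Lemma~\ref{lem:stacktests} suffices: I would replace $C_{\mathrm{i}}$ by $C'_{\mathrm{i}}=C_{\mathrm{i}}\times(C\cup\{\epsilon\})$, each invisible pebble storing the top colour that was present just before it was dropped, so that it can be recovered from the observation of that pebble at its position. The visible colour set $C_{\mathrm{v}}$ is left unchanged. The state space of $\cA'$ is $Q\times\Phi\times(C\cup\{\epsilon\})$, where $\Phi$ is the set of sequences of pairs in $C_{\mathrm{v}}\times(C\cup\{\epsilon\})$ of length at most $k$ with pairwise distinct first components; this is polynomial in the size of $\cA$ for fixed $k$.

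The rule translations are then immediate. A rule $\tup{q,\sigma,j,b,\gamma}\to\tup{q',\alpha}$ of $\cA$ is simulated by $\cA'$ only from states whose top-colour component matches~$\gamma$, and the observable set $b$ at the head position is recovered from $\cA'$'s observable set $b'$ by forgetting the augmenting components of invisible pebbles. A $\drop_c$ with $c\in C_{\mathrm{v}}$ appends $(c,\gamma)$ to $\phi'$ and sets the new top to $c$; a $\drop_d$ with $d\in C_{\mathrm{i}}$ actually drops $(d,\gamma)$ and sets the new top to $d$, leaving $\phi'$ unchanged. For a strong lift $\lift_c$ of a visible pebble $c$, applicable whenever the state's top-colour component is $c$ and hence $\phi'=\phi''\cdot(c,\gamma'')$ for some $\phi''$ and $\gamma''$, the state is updated by replacing $\phi'$ with $\phi''$ and the top-colour with $\gamma''$; crucially, $\gamma''$ is read from the state, so $\cA'$ need not observe $c$. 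A weak lift $\lift_d$ of an invisible pebble is handled as in the original proof: the augmented pebble $(d,\gamma'')$ is observable at the current node, and the state's top-colour is updated to~$\gamma''$. Move rules copy $\phi'$ and $\gamma$ unchanged.

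Determinism is preserved because each rule of $\cA$ yields exactly one matching rule of~$\cA'$ per state-component, and the absence of invisible pebbles is preserved because $C'_{\mathrm{i}}=\nothing$ whenever $C_{\mathrm{i}}=\nothing$. The extension to \abb{ptt}'s is routine, exactly as at the end of the proof of Lemma~\ref{lem:stacktests}: an output rule $\tup{q,\sigma,j,b,\gamma}\to\delta(\tup{q_1,\stay},\dots,\tup{q_m,\stay})$ is translated into $\tup{(q,\phi',\gamma),\sigma,j,b'}\to\delta(\tup{(q_1,\phi',\gamma),\stay},\dots,\tup{(q_m,\phi',\gamma),\stay})$, so the full stack information is copied to every spawned subprocess. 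The main obstacle — simulating a strong lift without observing the pebble being lifted — is overcome precisely because $k$ is fixed and the bounded ``below-colour'' information for each visible pebble fits into the finite state.
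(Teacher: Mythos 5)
Your proposal is correct and follows essentially the same route as the paper: the paper's proof also uses states of the form $(q,\gamma,\phi)$ where $\phi$ is a repetition-free string over $C_\mathrm{v}\times(C\cup\{\epsilon\})$ of length at most $k$ recording, for each visible pebble on the stack in drop order, the colour that was on top just before it was dropped, precisely so that a strong lift can be simulated without observing the lifted pebble. The only (cosmetic) difference is that the paper also keeps the augmented colours $(c,\gamma)$ on the visible pebbles themselves, duplicating the information held in $\phi$, whereas you leave $C_\mathrm{v}$ unchanged; both variants work.
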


\begin{proof}
Let $\cA = (\Sigma, Q, Q_0, F, C, C_\mathrm{v}, C_\mathrm{i}, R,k)$.
We construct $\cA'$ in the same way as in the proof of Lemma~\ref{lem:stacktests},
except that it additionally keeps track of the visible pebbles in its \emph{own} stack, 
in the order in which they were dropped, 
cf. the construction of a counting \abb{pta} after Lemma~\ref{lem:stacktests}. 
Thus, its states are of the form 
$(q,\gamma,\phi)$ where $q\in Q$, $\gamma\in C\cup\{\epsilon\}$, and 
$\phi\in (C'_\mathrm{v})^*=(C_\mathrm{v}\times (C\cup\{\epsilon\}))^*$
is a string without repetitions of length $\leq k$. 
Its initial states are $(q,\epsilon,\epsilon)$ with $q\in Q_0$.

The rules of $\cA'$ are defined as follows. 
Let $\tup{q,\sigma,j,b,\gamma} \to \tup{q',\alpha}$ be a rule of~$\cA$,
let $\phi$ be a string over $C'_\mathrm{v}$ as above, 
and let $b'$ be (the graph of) a mapping from $b$ to $C\cup\{\epsilon\}$.
If $\alpha$ is a move instruction, then $\cA'$ has the rule 
$\tup{(q,\gamma,\phi),\sigma,j,b'} \to \tup{(q',\gamma,\phi),\alpha}$ 
(and similarly for an output rule of a \abb{ptt}).
If $\alpha = \drop_c$, then $\cA'$ has the rule
$\tup{(q,\gamma,\phi),\sigma,j,b'} \to \tup{(q',c,\phi'),\drop_{(c,\gamma)}}$ 
where $\phi'=\phi$ if $c\in C_\mathrm{i}$ and $\phi'=\phi(c,\gamma)$ 
otherwise (provided $|\phi|< k$ and $(c,\gamma)$ does not occur in $\phi$). 
Now let $\alpha=\lift_c$ and $\gamma=c$.
If $c \in C_\mathrm{i}$ and $(c,\gamma')\in b'$,
then $\cA'$~has the rule  
$\tup{(q,\gamma,\phi),\sigma,j,b'} \to \tup{(q',\gamma',\phi),\lift_{(c,\gamma')}}$.
If $c \in C_\mathrm{v}$, then $\cA'$~has the rule  
$\tup{(q,\gamma,\phi(c,\gamma')),\sigma,j,b'} \to \tup{(q',\gamma',\phi),\lift_{(c,\gamma')}}$
for every $\gamma'\in C\cup\{\epsilon\}$ such that $(c,\gamma')$ does not occur in $\phi$
(with $|\phi|< k$).
\end{proof}

Using this lemma, we now show that every \abb{v$^+$-ptt} can be decomposed into \abb{tt}'s,
as already shown in~\cite{FulMuzFI} in a different way.\footnote{In that paper the authors 
``think that those proofs cannot be generalized for the strong pebble case 
because the mapping EncPeb [$\cdots$] is strongly based on weak pebble handling'', where `those proofs' 
mainly refers to the proof of~\cite[Lemma~9]{EngMan03} in which the preprocessor is called EncPeb, 
see Lemma~\ref{lem:decomp}.
}
 
\begin{lemma}\label{lem:decompplus}
For every $k\ge 1$, 
$\VsPTT{k} \subseteq \TT \circ \VsPTT{k-1}$.
For fixed $k$, the construction takes polynomial time. 
\end{lemma}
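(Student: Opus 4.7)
The plan is to adapt the proof of Lemma~\ref{lem:decomp} to the strong-pebble setting. I would keep essentially the same preprocessor $\cN$: a (total deterministic) \abb{tt} that turns $t$ into the tree $\p(t)$ by attaching the folded copy $\hat{t}_u$ as an additional last child of each node $u$ of a primed main copy of $t$. The simulator $\cM'$ is a \abb{v$^+_{k-1}$-ptt}; by Lemma~\ref{lem:stacktestsplus} I may assume it has stack tests, which is essential because the pebble $c$ of $\cM$ will be represented only implicitly (by the position of $\cM'$ inside $\hat{t}_u$) and a stack test lets $\cM'$ detect when $c$ is on top of $\cM$'s stack.

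The simulation runs exactly as in Lemma~\ref{lem:decomp} so long as $\cM$ has no visible pebble on the stack. When $\cM$ drops its first visible pebble $c$ at a node $u$, $\cM'$ moves from the main copy of $u$ into the marked root of $\hat{t}_u$, stores $c$ in its finite state, and thereafter simulates $\cM$ on $\hat{t}_u$. The remaining visible pebbles of $\cM$ are represented one-to-one by $\cM'$'s $k-1$ strong visible pebbles, and the marked root of $\hat{t}_u$ plays the role of $c$ itself. Drops and (weak-style) lifts of these other visible pebbles are local to $\hat{t}_u$ and translate one-to-one; strong lifts of the \emph{other} pebbles, in particular, can be simulated directly because $\cM'$ also has strong pebbles.

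The main obstacle is the simulation of a \emph{lift}-$c$ rule. In the weak case of Lemma~\ref{lem:decomp} such a rule could only fire when $\cM$'s head was at $u$, so $\cM'$ was at the root of $\hat{t}_u$ and a single move up into the main copy completed the simulation. In the strong case the lift-$c$ rule can fire while $\cM$'s head is at an arbitrary node $v$ of $t$, provided only that $c$ is on top of $\cM$'s stack---a condition $\cM'$ can verify by the stack test above. At that moment $\cM'$ sits at the $\hat{t}_u$-copy $v'$ of $v$ and, after the simulated lift, must arrive at the main copy of $v$ inside $\p(t)$. The shortest undirected path between those two nodes of $\p(t)$ climbs from $v'$ up to the root of $\hat{t}_u$, steps up one more edge into the main copy of $u$, and then descends a potentially unbounded path inside the main copy down to $v$; that final descent cannot be performed by $\cM'$ with only bounded state and $k-1$ pebbles.

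The way I would resolve this is to enrich $\cN$ so that $\p(t)$ also carries, hanging off each node $v'$ of each $\hat{t}_u$, an auxiliary ``return shortcut'' along which $\cM'$ can reach the main copy of $v$ in a locally programmable number of moves. Designing this shortcut so that $\cN$ remains a polynomial-size \abb{tt}, and so that $\cM'$ can navigate it within the syntactic restrictions of a \abb{v$^+_{k-1}$-ptt}, is where the technical work of the proof concentrates. Once it is in place, every rule of $\cM$ translates into a bounded number of rules of $\cM'$ by an adaptation of the rule translation in Lemma~\ref{lem:decomp}, and the resulting construction satisfies $\tau_\cN \circ \tau_{\cM'} = \tau_\cM$ and runs in polynomial time for fixed $k$.
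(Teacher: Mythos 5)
You have correctly isolated the crux of the problem — after a strong lift of the bottom visible pebble $c$ at a distant node $v$, the simulator sits at the copy $v'$ of $v$ inside $\hat{t}_u$ and cannot get back to the main copy of $v$ — and your use of Lemma~\ref{lem:stacktestsplus} to detect when $c$ is liftable is exactly right. But your proposed resolution, a ``return shortcut'' from each $v'$ in $\hat{t}_u$ to the main copy of $v$, cannot work: $\cN$ outputs a \emph{tree}, and $v'$ and the main copy of $v$ lie in disjoint subtrees of $\p(t)$ at unbounded tree-distance, so any connection between them must pass through their least common ancestor and then descend an unbounded path — there is no way to encode a bounded-length navigable link between them in a tree, and a \abb{tt} cannot produce a DAG. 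This is a genuine gap, not a deferred technicality.

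The correct move is to give up on returning altogether. Use the nondeterministic \emph{multi-level} preprocessor of Lemma~\ref{lem:nul-decomp} (the ``tree of trees'') rather than the two-level $\p(t)$ of Lemma~\ref{lem:decomp}. When $\cM$ drops its first visible pebble $c$ at $u$, $\cM'$ descends into the second-level copy $\hat{t}_u$ and stores $c$ in its finite state; when $\cM$ lifts $c$ (verified by the stack test that $\cM'$'s own stack is empty), $\cM'$ simply erases $c$ from its state and \emph{continues simulating $\cM$ on $\hat{t}_u$}, never returning to the level above; when $\cM$ next drops a bottom pebble $d$ at a node corresponding to $v$, $\cM'$ descends one further level into a fresh copy $\hat{t}_v$. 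The nondeterminism of the preprocessor supplies trees of arbitrary finite depth, enough for any successful computation of $\cM$, and the whole construction stays polynomial for fixed $k$.
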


\begin{proof}
Let $\cM = (\Sigma, \Delta, Q, Q_0,C, C_\mathrm{v}, C_\mathrm{i}, R,k)$ 
be a \abb{v$^+_k$-ptt} with $C_\mathrm{i}=\nothing$.
The construction is similar to the one in the proof of Lemma~\ref{lem:decomp},
except that we use the nondeterministic ``multi-level'' preprocessor $\cN$ of the proof
of Lemma~\ref{lem:nul-decomp}, for which we assume that  $C_\mathrm{v}=[1,\gamma]$.

By Lemma~\ref{lem:stacktestsplus} we may assume that the simulating transducer $\cM'$
can perform stack tests. 
As in the proof of Lemma~\ref{lem:decomp},
$\cM'$ starts by simulating $\cM$ on the top level of 
the preprocessed version $t'$ of the input tree $t$. 
When $\cM$ drops the first pebble $c$ on node $u$,
$\cM'$ enters the second level copy $\hat{t}_u$ of $t$ corresponding to $c$, 
but it also stores $c$ in its finite state.
When $\cM$ wants to lift pebble $c$ and can actually do so because 
the pebble stack of $\cM'$ is empty, $\cM'$ removes $c$ from its finite state
and continues simulating $\cM$ on $\hat{t}_u$. Note that since $c$ can be lifted from a distance,
$\cM'$ cannot return to the top level without loosing its current position. 
When $\cM$ again drops a pebble $d$ on some second-level node that corresponds 
to a node $v$ of $t$,
$\cM'$ enters the third level copy $\hat{t}_v$ corresponding to~$d$, 
and stores $d$ in its finite state. Thus, whenever $\cM$ drops a bottom pebble,
$\cM'$~moves one level down in the ``tree of trees'' $t'$. 

It should be noted that we could as well have taken $\gamma=1$ for $\cN$
and let $\cM'$ enter the unique copy of $t$ when it drops a pebble $c$,
because $\cM'$ keeps~$c$ in its finite state. However, the present construction 
simplifies the proof of Theorem~\ref{thm:detdecompplus}. 

Although the above description should be clear, let us give the formal details.
As in the proof of Lemma~\ref{lem:nul-decomp},
the output alphabet $\Gamma$ of $\cN$ is the union of $\{\bot\}$, 
$\{\sigma'\mid \sigma\in\Sigma\}$, and 
$\{\sigma'_{i,j}\mid \sigma\in\Sigma, i\in[0,\rank_\Sigma(\sigma)], j\in[0,\m_\Sigma]\}$ 
where, for every $\sigma\in\Sigma$ of rank~$m$, $\sigma'$ has rank $m+\gamma$ and 
$\sigma'_{i,j}$ has rank $m+\gamma+1$.
As in the proof of Lemma~\ref{lem:decomp},
the \abb{v$^+_{k-1}$-ptt} $\cM'$ has input alphabet~$\Gamma$, 
set of states $Q\cup (Q\times C_{\mathrm{v}})$, 
and the same initial states and pebble colours as $\cM$. 
The rules of~$\cM'$ are defined as follows.
Let $\langle q,\sigma,j,b\rangle\to \zeta$ be a rule of $\cM$ 
with $\rank_\Sigma(\sigma)=m$. 

First, we consider the behaviour of $\cM'$ in state $q$, where we assume that $b=\nothing$.
Then $\cM'$ has the rules $\langle q,\sigma',j,\nothing\rangle\to \zeta_1$,
$\langle q,\sigma'_{0,j},j',\nothing\rangle\to \zeta_2$, and
$\langle q,\sigma'_{i,j},j',\nothing\rangle\to \zeta_{3,i}$ 
for every $i\in[1,m]$ and $j'\in[1,\m_\Gamma]$,
where $\zeta_1$ is obtained from $\zeta$ by changing 
$\tup{q',\drop_c}$ into $\tup{(q',c),\down_{m+c}}$ for every $q'\in Q$ and $c\in C_{\mathrm{v}}$,
$\zeta_2$ is obtained from $\zeta_1$ by changing $\up$ into $\down_{m+\gamma+1}$, and 
$\zeta_{3,i}$ is obtained from $\zeta_2$ by changing $\down_i$ into $\up$.
Thus, whenever the pebble stack of $\cM$ is empty, $\cM'$ simulates $\cM$ 
on a copy of the input tree $t$ in $t'$, 
until $\cM$ drops a pebble~$c\in C_{\mathrm{v}}$. Then $\cM'$ steps to the next level, 
and stores $c$ in its finite state. 

Second, we consider the behaviour of $\cM'$ in state $(q,c)$, where $c\in C_{\mathrm{v}}$.
Rules of $\cM'$ that have $\sigma'_{0,j}$ in their left-hand side are defined 
under the proviso that \mbox{$c\in b$}, and the other rules under the proviso that $c\notin b$.
If $\zeta=\tup{q',\lift_c}$, then $\cM'$ has the rules 
$\langle (q,c),\sigma',j,b,\epsilon\rangle\to \tup{q',\stay}$,
$\langle (q,c),\sigma'_{0,j},m+c,b\setminus\{c\},\epsilon\rangle\to \tup{q',\stay}$, and
$\langle (q,c),\sigma'_{i,j},j',b,\epsilon\rangle\to \tup{q',\stay}$ 
for every $i\in[1,m]$ and $j'\in[1,\m_\Gamma]$,
where $\epsilon$ is the stack test that checks emptiness of the stack of $\cM'$.
Thus, when $\cM$ lifts pebble~$c$ (at the position of $c$ or from a distance), 
$\cM'$~removes~$c$ from memory and knows that the pebble stack of $\cM$ is empty. 
Otherwise,  $\cM'$ has  the rules 
$\langle (q,c),\sigma',j,b\rangle\to \zeta_{c,1}$,
$\langle (q,c),\sigma'_{0,j},m+c,b\setminus\{c\}\rangle\to \zeta_{c,2}$, and
$\langle (q,c),\sigma'_{i,j},j',b\rangle\to \zeta_{c,3,i}$ 
for every $i\in[1,m]$ and $j'\in[1,\m_\Gamma]$,
where $\zeta_{c,1}$ is obtained from $\zeta$ by changing 
every occurrence of a state $q'$ 
into $(q',c)$,
$\zeta_{c,2}$ is obtained from $\zeta_{c,1}$ by changing $\up$ into $\down_{m+\gamma+1}$, and 
$\zeta_{c,3,i}$ is obtained from $\zeta_{c,2}$ by changing $\down_i$ into $\up$.
Thus, $\cM'$ simulates $\cM$ on a copy of the input tree in $t'$, 
assuming that $c$ is present on the node with label $\sigma'_{0,j}$ 
and absent on the other nodes, until $c$ is lifted by $\cM$. 
\end{proof}

The next result is an immediate consequence of Lemma~\ref{lem:decompplus}.
It was proved in~\cite[Theorem~6.5(5)]{FulMuzFI}, generalizing the same result
for weak pebbles in~\cite[Theorem~10]{EngMan03} 
(cf. Theorem~\ref{thm:detdecomp}).
It implies that Propositions~\ref{prop:invtypeinf}(2) and~\ref{prop:typecheck}(2) 
also hold for strong pebbles. Thus, for \abb{ptt}'s without invisible pebbles, 
the inverse type inference problem and the typechecking problem
are solvable for strong pebbles in the same time as for weak pebbles
(cf.~\cite[Theorem~6.7 and~6.9]{FulMuzFI}).
Note that it also implies that the domains of \abb{v$^+$-ptt}'s are regular
(cf.~Corollary~\ref{cor:domptt}), 
which was proved in~\cite[Corollary~6.8]{FulMuzFI} and~\cite[Theorem~4.7]{Muz}.  

\begin{theorem}\label{thm:decompplus}
For every $k\geq 0$, $\VsPTT{k}\subseteq \TT^{k+1}$.  
For fixed $k$, the construction takes polynomial time.
\end{theorem}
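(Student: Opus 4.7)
The plan is to prove this by a straightforward induction on $k$, using Lemma~\ref{lem:decompplus} as the inductive step. The base case $k=0$ is immediate: $\VsPTT{0}$ is by definition the same as $\VPTT{0} = \TT$, since with zero visible pebbles the distinction between strong and weak is vacuous (there are no pebbles to lift, from a distance or otherwise), and the last component of a \abb{ptt} being $0$ together with $C_\mathrm{v}=C_\mathrm{i}=\nothing$ forces all rules to be pebble-free.

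For the inductive step, suppose $k\ge 1$ and $\VsPTT{k-1}\subseteq \TT^k$. Applying Lemma~\ref{lem:decompplus} gives $\VsPTT{k}\subseteq \TT\circ \VsPTT{k-1}$, and composing with the inductive hypothesis on the right yields $\VsPTT{k}\subseteq \TT\circ \TT^k = \TT^{k+1}$, as required.

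For the complexity claim, note that for any fixed $k$ the inductive argument consists of applying the construction of Lemma~\ref{lem:decompplus} exactly $k$ times (once to extract a leading \abb{tt}, then once to the resulting \abb{v$^+_{k-1}$-ptt}, and so on down to a \abb{v$^+_0$-ptt}$=$\abb{tt}). Since Lemma~\ref{lem:decompplus} states that each individual extraction is polynomial time for fixed $k$, and the size of the intermediate \abb{v$^+_{i}$-ptt}'s grows polynomially at each step, the composition of a constant number of polynomial-time constructions remains polynomial time in the size of the input \abb{v$^+_k$-ptt}.

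There is essentially no obstacle here, as the proof is a one-line induction on top of the preceding lemma; the real work has already been done in Lemma~\ref{lem:decompplus}. The only minor point worth checking is the base case and the bookkeeping that the polynomial-time bound is preserved under a constant number of iterations — both routine. (Note also that this result implies Proposition~\ref{prop:invtypeinf}(2) and Proposition~\ref{prop:typecheck}(2) extend to strong pebbles, as remarked in the subsequent discussion.)
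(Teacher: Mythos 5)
Your proof is correct and matches the paper's intent exactly: the paper states Theorem~\ref{thm:decompplus} as an immediate consequence of Lemma~\ref{lem:decompplus}, obtained by iterating that lemma $k$ times down to the base case $\VsPTT{0}=\TT$, which is precisely your induction. The polynomial-time bookkeeping for fixed $k$ is also the intended (routine) argument.
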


To prove a similar result for deterministic \abb{ptt}'s with strong pebbles, 
we need the next small lemma. 

\begin{lemma}\label{lem:tdttmso}
For every $k\geq 1$, $(\tdTTmso)^k \subseteq \dTT\!_\downarrow\circ \dTT^k$.
\end{lemma}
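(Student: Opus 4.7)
The plan is to argue by induction on $k \ge 1$, using as base case the chain of inclusions
\[
\tdTTmso \;\subseteq\; \dTTmso \;\subseteq\; \rIdPTT \;\subseteq\; \dTT\!_\downarrow \circ \dTT
\]
provided by Lemmas~\ref{lem:in-ridptt} and~\ref{lem:ridptt-in}. This immediately gives the statement for $k=1$.

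For the inductive step, assuming $(\tdTTmso)^k \subseteq \dTT\!_\downarrow \circ \dTT^k$, I would write
\[
(\tdTTmso)^{k+1} \;=\; \tdTTmso \circ (\tdTTmso)^k \;\subseteq\; \tdTTmso \circ \dTT\!_\downarrow \circ \dTT^k,
\]
so the whole step boils down to collapsing the prefix $\tdTTmso \circ \dTT\!_\downarrow$. The key auxiliary claim I would prove is
\[
\tdTTmso \circ \dTT\!_\downarrow \;\subseteq\; \dTTmso.
\]
Once this is available, the base-case chain $\dTTmso \subseteq \dTT\!_\downarrow \circ \dTT$ yields $\tdTTmso \circ \dTT\!_\downarrow \subseteq \dTT\!_\downarrow \circ \dTT$, and substitution gives $(\tdTTmso)^{k+1} \subseteq \dTT\!_\downarrow \circ \dTT \circ \dTT^k = \dTT\!_\downarrow \circ \dTT^{k+1}$, as required.

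To establish the auxiliary claim I would lift the product construction from the proof of Theorem~\ref{thm:detdecomp} (for $\tdTT \circ \dTT\!_\downarrow \subseteq \dTT$) to the setting with MSO head tests. Given $\cM_1 \in \tdTTmso$ and $\cM_2 \in \dTT\!_\downarrow$, the product $\cM$ has states of the form $(p,i,q)$ and $(\rho,i,q)$ exactly as in that proof, and simulates $\cM_1$ step by step at the current input node until $\cM_1$ executes an output rule $\rho$, at which point $\cM$ switches to simulating one (stay-)step of $\cM_2$ in the corresponding output node. Because $\cM_2$ never moves up, there is no need to backtrack along $\cM_1$'s computation, so the identity of the currently simulated output node of $\cM_1$ can be stored entirely in the finite state of $\cM$. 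Each MSO head test $T$ appearing in a rule $\langle p,\sigma,j,T\rangle\to \zeta$ of $\cM_1$ fires at the current input position $x$, which is also the current position of $\cM$, so $T$ can be copied verbatim into the corresponding rule $\langle (p,i,q),\sigma,j,T\rangle\to \zeta'$ of $\cM$. Totality and determinism of $\cM_1$ guarantee that $\tau_\cM = \tau_{\cM_1}\circ \tau_{\cM_2}$ on \emph{every} input tree, so no extra domain test is needed, and the resulting $\cM$ is a $\dTTmso$.

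The main obstacle I anticipate is purely bookkeeping: checking that the product construction of Theorem~\ref{thm:detdecomp} transfers faithfully when MSO head tests are grafted onto $\cM_1$, in particular that the resulting rules of $\cM$ meet the formal definition of a $\dTTmso$ (each pair of rules with the same $\langle q,\sigma,j,\cdot\rangle$-prefix uses complementary MSO sites) and that totality of $\cM_1$ really does make the auxiliary claim yield an equality of transductions rather than merely an agreement on some subdomain. Everything else is a straightforward chaining of previously established inclusions.
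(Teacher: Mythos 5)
Your proof is correct, and the base case coincides with the paper's, but your inductive step takes a genuinely different route. The paper peels off the \emph{last} factor: it writes $\tau=\tau_0\circ\tau_1\circ\cdots\circ\tau_k\circ\tau'_0\circ\tau'_1$ (induction hypothesis on the first $k$ factors, base case on the last one) and then collapses the middle pair $\tau_k\circ\tau'_0\in\dTT\circ\dTT\!_\downarrow$ into a single $\dTT$ using the generalized statement established in the proof of Theorem~\ref{thm:detdecomp}; since that statement only guarantees agreement on the domain of $\tau_k$, the paper must additionally argue that totality of the overall composition forces every intermediate output tree into that domain. You instead peel off the \emph{first} factor and collapse the prefix via the auxiliary claim $\tdTTmso\circ\dTT\!_\downarrow\subseteq\dTTmso$, obtained by grafting the \abb{mso} head tests onto the same product construction; because your left-hand transducer is total, you get a genuine equality of transductions and sidestep the domain-relativization argument entirely. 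The trade-off is symmetric: the paper reuses a construction it has already stated verbatim but needs the domain bookkeeping, whereas you avoid that bookkeeping at the cost of re-verifying that the product construction tolerates \abb{mso} head tests on the first transducer --- which is routine, since those tests refer to the input tree at the current head position, exactly where the product transducer sits. Both arguments yield $\dTT\!_\downarrow\circ\dTT^{k+1}$, so either is acceptable.
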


\begin{proof}
We will show by induction on $k$ that for every $\tau\in (\tdTTmso)^k$ 
there exist $\tau_0\in \dTT\!_\downarrow$ and $\tau_1,\dots,\tau_k\in\dTT$ such that 
$\tau=\tau_0\circ\tau_1\circ\cdots\circ\tau_k$. Note that since $\tau$ is a total function, 
every output tree of $\tau_0\circ\tau_1\circ\cdots\circ\tau_{i-1}$ 
is in the domain of~$\tau_i$, for every $i\in[1,k]$. 
For $k=1$ the statement is immediate from the inclusion 
$\dTTmso\subseteq \dTT_\downarrow \circ \dTT$, which follows from
Lemmas~\ref{lem:in-ridptt} and~\ref{lem:ridptt-in}. 
Now consider $\tau\in (\tdTTmso)^{k+1}$. By induction and the case $k=1$,
$\tau=\tau_0\circ\tau_1\circ\cdots\circ\tau_k\circ \tau'_0\circ \tau'_1$
with $\tau_0,\tau'_0\in\dTT\!_\downarrow$ and $\tau_1,\dots,\tau_k,\tau'_1\in\dTT$.
Since every output tree of $\tau_0\circ\tau_1\circ\cdots\circ\tau_{k-1}$ is in the domain
of $\tau_k$, we can replace $\tau_k\circ \tau'_0$ by any transduction $\tau'$ 
such that $\tau'(t)=\tau'_0(\tau_k(t))$ for every $t$ in the domain of $\tau_k$.
Since $\tau_k\in\dTT$ and $\tau'_0\in\dTT\!_\downarrow$, we can take $\tau'\in\dTT$
by the proof of Theorem~\ref{thm:detdecomp}.
\end{proof}

Theorem~\ref{thm:decompplus} was also shown in~\cite[Theorem~10]{EngMan03} for weak pebbles 
in the deterministic case. Here we need one more deterministic \abb{tt}.

\begin{theorem}\label{thm:detdecompplus}
For every $k\geq 1$, 
$\VsdPTT{k}\subseteq \dTT\!_\downarrow\circ \dTT^{k+1}$.  
\end{theorem}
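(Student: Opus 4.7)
The plan is to establish, for every $k\geq 1$, the deterministic counterpart of Lemma~\ref{lem:decompplus}, namely
\[
\VsdPTT{k}\ \subseteq\ \tdTTmso\circ\VsdPTT{k-1},
\]
and then iterate it $k$ times. Since $\VsdPTT{0}=\dTT$, iteration yields $\VsdPTT{k}\subseteq (\tdTTmso)^k\circ\dTT$, and by Lemma~\ref{lem:tdttmso} this is contained in $\dTT\!_\downarrow\circ\dTT^k\circ\dTT=\dTT\!_\downarrow\circ\dTT^{k+1}$, which is the desired inclusion.

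To prove the basic inclusion I would adapt the proof of Lemma~\ref{lem:decompidptt}, replacing the role of invisible pebbles by strong visible ones. Given a deterministic \abb{v$^+_k$-ptt} $\cM$, let $\cA$ be the (nondeterministic) \abb{v$^+_k$-pta} obtained from $\cM$ by replacing every output rule by the nondeterministic choice of one of its subprocesses, and let $\chi(c)$ denote the state of $\cM$ just after dropping pebble $c$, as in the proof of Lemma~\ref{lem:decompidptt}. Call a stack $\pi$ \emph{reachable} on $t$ if some configuration of $\cA$ with stack $\pi$ is reachable from the initial configuration; because $\cM$ is deterministic, on each $t$ only finitely many stacks are reachable. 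For every $c,d\in C_\mathrm{v}$ let $T_c$ be the site of pairs $(t,u)$ such that $(u,c)$ is reachable on $t$, and let $T_{c,d}$ be the trip of triples $(t,u,v)$ such that $\cA$ has a computation from $\tup{\chi(c),u,(u,c)}$ to $\tup{\chi(d),v,(u,c)(v,d)}$ whose intermediate stacks all strictly extend $(u,c)$. The crucial point is that $T_{c,d}$ is still well-behaved for \emph{strong} pebbles, since a strong pebble can only be lifted when on top of the stack, so requiring intermediate stacks to strictly extend $(u,c)$ makes the bottom pebble $(u,c)$ inaccessible without being an artificial restriction; consequently, whenever $\pi(u,c)$ is reachable, $\pi(u,c)(v,d)$ is reachable iff $(t,u,v)\in T_{c,d}$. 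Both $T_c$ and $T_{c,d}$ are defined by \abb{v$^+_k$-pta}'s with stack tests (that initialise an appropriate starting stack and accept only when the stack reaches the prescribed form), so by Lemma~\ref{lem:stacktestsplus} and Theorem~\ref{thm:decompplus} (whose proof implies that domains of \abb{v$^+_k$-pta}'s are regular) both $T_c$ and $T_{c,d}$ are regular.

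With these sites and trips available as \abb{mso} head tests, I would define the deterministic preprocessor $\cN'$ as a deterministic \abb{tt} with \abb{mso} head tests, structured exactly like the one in the proof of Lemma~\ref{lem:decompidptt}: at the top level it generates, at each node $u$, only those second-level subtrees $\hat{t}_u$ (indexed by colour) for which $T_c$ holds; at a deeper level corresponding to a reachable stack $\pi(u,c)$ it maintains, using a subset construction on a \abb{ta} with \abb{mso} head tests computing $T_{c,d}$ (which exists by Proposition~\ref{prop:trips} adapted via the regularity established above), the information needed to decide at each visited node $v$ and for each $d$ whether $(u,c)(v,d)$ is reachable, and spawns the corresponding next-level subtree only when the answer is yes. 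Totality is obtained by preceding everything with an \abb{mso} head test for membership in $L(\cM)$, which is regular by Theorem~\ref{thm:decompplus}, outputting a dummy tree $\bot$ in the negative case. The simulating \abb{v$^+_{k-1}$-dptt} is then precisely the transducer $\cM'$ from the proof of Lemma~\ref{lem:decompplus}; because $\cN'$ now produces exactly the copies corresponding to reachable stacks, $\cM'$ never encounters a missing child on any input in $L(\cM)$, so $\tau_{\cN'}\circ\tau_{\cM'}=\tau_\cM$, and determinism is clearly preserved.

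The main obstacle is the regularity of the trip $T_{c,d}$ in the strong-pebble setting: unlike in Lemma~\ref{lem:decompidptt}, where weak pebbles made the restriction on intermediate stacks essentially automatic, here I have to rely on the fact that even a strong pebble can only be popped while on top, so that a computation starting from stack $(u,c)$ and remaining on top of $(u,c)$ is naturally captured by a \abb{pta} with a stack-emptiness/top-colour test and a fresh copy of $\cA$ that is forbidden to pop below its initial stack. Once that regularity is established, the remainder is a straightforward adaptation of the two proofs cited, with Lemma~\ref{lem:tdttmso} providing the final factorisation into $\dTT\!_\downarrow\circ\dTT^{k+1}$.
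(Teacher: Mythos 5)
Your overall skeleton coincides with the paper's: reduce the theorem to $\VsdPTT{k}\subseteq\tdTTmso\circ\VsdPTT{k-1}$, iterate, and finish with Lemma~\ref{lem:tdttmso}; and you rightly see that the basic inclusion is obtained by determinizing the preprocessor of Lemma~\ref{lem:decompplus} via reachability sites and trips in the style of Lemma~\ref{lem:decompidptt}. The gap is in your definition of the trips $T_{c,d}$. You import the nesting-based trip of Lemma~\ref{lem:decompidptt}, from $\tup{\chi(c),u,(u,c)}$ to $\tup{\chi(d),v,(u,c)(v,d)}$ with all intermediate stacks extending $(u,c)$. But the ``tree of trees'' of Lemma~\ref{lem:decompplus} is \emph{not} organized by stack nesting: the simulating transducer $\cM'$ descends one level exactly when $\cM$ drops a pebble on an \emph{empty} stack (a new bottom pebble), it handles every pebble dropped on top of that bottom pebble with its own $k-1$ pebbles, and --- precisely because a strong bottom pebble may be lifted from a distance --- it remains inside the current copy $\hat{t}_u$ after that pebble is lifted, rather than returning to the previous level. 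Hence successive levels of $t'$ correspond to successive bottom pebbles over time, and the trip the preprocessor must decide is: $\cA$ has a computation of length at least one from $\tup{\chi(c),u,(u,c)}$ to $\tup{\chi(d),v,(v,d)}$, with no constraint on intermediate stacks (which in fact must become empty in between, when $c$ is lifted). That is the definition used in the paper.

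With your trips the generated copies do not match what $\cM'$ needs. For instance, if $\cM$ drops $c$ at $u$, lifts $c$ from a distance while its head is at $w$, and then drops $d$ at $v$, then $\cM'$ is sitting inside $\hat{t}_u$ and must descend into a third-level copy $\hat{t}_v$ for colour $d$; your preprocessor never creates that copy, because the two-pebble stack $(u,c)(v,d)$ is not reachable in this computation. Conversely, the copies you do create for genuinely nested stacks are never entered by $\cM'$, since dropping a pebble on a nonempty stack causes no level change. (This mismatch cannot be repaired by switching back to a nesting-based simulator, as that is exactly what strong pebbles preclude.) Once $T_{c,d}$ is redefined as above, the rest of your argument --- regularity via \abb{pta}'s with stack tests and the regularity of domains, the subset construction along Proposition~\ref{prop:trips}, totality via an \abb{mso} head test for membership in $L(\cM)$, and termination of the preprocessor from the determinism of $\cM$ --- goes through essentially as you describe and as in the paper.
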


\begin{proof}
By Lemma~\ref{lem:tdttmso} it suffices to show that
$\VsdPTT{k} \subseteq \tdTTmso \,\circ \,\VsdPTT{k-1}$ for every \mbox{$k\ge 1$}.
The proof of this inclusion is obtained from the proof of Lemma~\ref{lem:decompplus}
by changing the preprocessor $\cN$ in a similar way as in 
the proof of Lemma~\ref{lem:decompidptt}.

For the given deterministic \abb{v$^+_k$-ptt} $\cM$ we assume that 
$C_\mathrm{i}=\nothing$ and $C=C_\mathrm{v}=[1,\gamma]$.
As in the proof of Lemma~\ref{lem:decompidptt}, 
we may assume that there is a mapping $\chi:C\to Q$ 
that specifies the state of $\cM$ 
after dropping a pebble (because we may also assume that $\cM$ keeps track 
in its finite state of the pebbles in its stack, in the order in which they were dropped,
cf. the proof of Lemma~\ref{lem:stacktestsplus}).

The new preprocessor $\cN'$ is constructed in the same way as in the proof of 
Lemma~\ref{lem:decompidptt}, with a different definition of the trips $T_{c,d}$.
For $c\in C$, the site $T_c$ is defined as in that proof, i.e., it consists of 
all pairs $(t,u)$ such that the configuration $\tup{\chi(c),u,(u,c)}$ is reachable
by the automaton $\cA$ associated with~$\cM$, which now is a 
nondeterministic \abb{v$^+_k$-pta}. The automaton $\cB$ recognizing $\tmark(T_c)$
is a \abb{v$^+_k$-pta} with stack tests (see Lemma~\ref{lem:stacktestsplus}). 
When it arrives at the marked node $u$
in state $\chi(c)$ and observes $c$, it may check that $c$ is the top pebble, lift it,
check that the stack is now empty, and accept. 
For $c,d\in C$, the trip $T_{c,d}$ now consists of all triples 
$(t,u,v)$ such that $\cA$ has a computation on $t$ 
starting in configuration $\tup{\chi(c),u,(u,c)}$
and ending in configuration $\tup{\chi(d),v,(v,d)}$,
with at least one computation step.
It should be clear that there is a \abb{v$^+_k$-pta} $\cB'$ with stack tests
that recognizes $\tmark(T_{c,d})$: it starts by dropping $c$ on marked node~$u$ 
in state $\chi(c)$, and then behaves similarly to $\cB$ (with respect to $v$ and $d$). 

For every input tree $t$ in the domain of $\cM$, the preprocessor $\cN'$ produces 
the appropriate output. In fact, if $\cN'$ would not produce output, 
then there would be an infinite sequence $(u_1,c_1),(u_2,c_2),\dots$
such that $(t,u_1)\in T_{c_1}$ and $(t,u_i,u_{i+1})\in T_{c_i,c_{i+1}}$ for every $i\geq 1$.
But that would imply the existence of an infinite computation of $\cM$ on $t$ that 
starts in the initial configuration, contradicting the determinism of $\cM$.
\end{proof}

Next, we decompose arbitrary \abb{v$^+$i-ptt}'s. To do that we need two \abb{tt}'s 
at each decomposition step rather than one. 

\begin{lemma}\label{lem:decompplusI}
For every $k\geq 1$, $\VsIPTT{k}\subseteq \TT \circ \TT \circ \VsIPTT{k-1}$.
For fixed $k$, the construction takes polynomial time.
\end{lemma}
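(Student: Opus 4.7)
The aim is to generalise Lemma~\ref{lem:decompplus} to account for the weak invisible pebbles on the stack of the given \abb{v$^+_k$i-ptt}~$\cM$. Recall that Lemma~\ref{lem:decompplus} applies the nondeterministic ``tree of trees'' preprocessor~$\cN$ of Lemma~\ref{lem:nul-decomp} (with $\gamma=\#(C_\mathrm{v})$ subtrees per node, one for each visible colour) as a single \abb{tt}, and constructs a \abb{v$^+_{k-1}$-ptt}~$\cM'$ that, whenever $\cM$ drops a visible pebble on an empty (visible) stack, descends one level into the corresponding subtree and records that pebble colour in its finite state; since the pebble is strong, $\cM'$ never has to ascend again when $\cM$ lifts it. The one remaining visible pebble colour is thus ``absorbed'' into the level of the tree of trees.

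My plan is to carry out the same kind of descent, but to use \emph{two} preprocessing \abb{tt}'s instead of one, so that the simulating transducer~$\cM'$ can keep being a \abb{v$^+_{k-1}$i-ptt}. The first preprocessor~$\cN_1$ is the nondeterministic multi-level preprocessor of Lemma~\ref{lem:nul-decomp} with $\gamma=\#(C_\mathrm{v})$ subtrees per node, exactly as in Lemma~\ref{lem:decompplus}; it provides the vertical direction into which~$\cM'$ will descend when absorbing one visible pebble of~$\cM$. The second preprocessor~$\cN_2$ attaches to every node of every copy in the output of~$\cN_1$ a further multi-level preprocessing of Lemma~\ref{lem:nul-decomp} type, this time with $\#(C_\mathrm{i})$ subtrees per node, one for each invisible colour. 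After $\cN_2$ the input tree of~$\cM'$ therefore carries, at each node of each visible-subtree copy, nondeterministically generated ``sidetrees'' into which $\cM'$ can walk down to simulate a drop of an invisible pebble of~$\cM$ and out of which it can walk up to simulate the corresponding lift (just as the single preprocessor of Lemma~\ref{lem:nul-decomp} does in the pebble-free case, where this bounce-back is legal because invisible pebbles are weak).

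The rules of~$\cM'$ would then be written down by combining the rules of the simulating transducer of Lemma~\ref{lem:decompplus} (with every occurrence of a move instruction inside a visible copy adjusted for the extra invisible-colour subtrees, just as in the proof of Lemma~\ref{lem:nul-decomp}) with the rules of the simulating transducer of Lemma~\ref{lem:nul-decomp} (adapted to live inside the currently entered visible copy). As in Lemma~\ref{lem:decompplus}, by Lemma~\ref{lem:stacktestsplus} we may assume that~$\cM'$ performs stack tests, which lets it recognise the moment its visible-pebble stack becomes empty and therefore a descent into a new visible subtree is called for. The remaining $k-1$ strong visible pebbles of~$\cM$ are simulated by~$\cM'$'s own strong visible pebbles, and any bookkeeping needed (such as marking the position where the current visible subtree was entered) is done with~$\cM'$'s unbounded supply of invisible pebbles. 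Correctness ($\tau_{\cN_1}\circ\tau_{\cN_2}\circ\tau_{\cM'}=\tau_\cM$) is then checked in the same style as in the proofs of Lemmas~\ref{lem:nul-decomp} and~\ref{lem:decompplus}, and polynomial-time constructibility for fixed~$k$ is immediate from the analogous bookkeeping there.

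\textbf{Main obstacle.} The hardest point is the coordination between the two preprocessings: while $\cM'$ is descended in a visible-pebble subtree (absorbing the bottom visible pebble of~$\cM$'s current stack segment), it must still be able to faithfully simulate drops and lifts of invisible pebbles of~$\cM$ that lie \emph{above} that bottom visible pebble, and eventually also drops of the next visible pebble on top of those. Getting the two-level subtree structure and the induced behaviour of~$\cM'$ correctly synchronised, in particular when a relevant invisible-pebble lift happens at a position different from the one where the visible subtree was entered, is the one delicate aspect of the construction; once it is laid out, the verification is routine.
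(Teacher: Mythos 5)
Your two-preprocessor architecture has the right general shape, but the order in which you apply the two multi-level preprocessors is the wrong way around, and this is exactly where the ``delicate aspect'' you defer becomes a genuine failure rather than a routine verification. Consider a run in which $\cM$ drops an invisible pebble $d$ at $v$, then a visible pebble $c$ at $u$, and later lifts $c$ \emph{from a distance}. In your setup the invisible sidetrees produced by $\cN_2$ hang off (copies of) the output of $\cN_1$, so to simulate the drop of $d$ your $\cM'$ descends into a $\cN_2$-copy $C$ whose root encodes the position $v$, and to simulate the drop of $c$ it then descends further into the visible subtree $\hat{t}_u$ contained in $C$. When $c$ is lifted from a distance, $\cM'$ cannot climb back out of $\hat{t}_u$ without losing the current head position, so it must keep walking inside $\hat{t}_u$ --- but then the encoding of $d$ (being at the root of $C$ exactly when the head is at $v$) is no longer accessible: the node of $\hat{t}_u$ corresponding to $v$ is not the root of $C$, so $d$ can neither be observed nor lifted, and subsequent invisible drops cannot be placed consistently either. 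The same objection rules out simulating invisible pebbles that lie \emph{above} a directly simulated visible pebble by $\cN_2$-descents (the actual visible pebbles of $\cM'$ would be stranded in the previous copy). So with your ordering the invariant ``tree position encodes the absorbed part of the stack'' cannot be maintained across a distant lift.

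The paper resolves this by reversing the order and factoring through an intermediate transducer instead of building one combined simulator. A first \abb{tt} $\cN$ (the invisible multi-level preprocessor of Lemma~\ref{lem:nul-decomp}, applied to $t$) together with a \abb{v$^+_k$i-ptt} $\cM'$ absorbs only the invisible pebbles lying \emph{below the first visible pebble} into tree-of-trees levels; from the moment the first visible pebble is dropped, $\cM'$ stays in the current copy and simulates that pebble and everything above it directly with its own pebbles, which is sound because the invisible prefix is unobservable and unliftable until the first visible pebble is gone. The resulting $\cM'$ has the property that every reachable pebble stack has a \emph{visible bottom pebble}, and the proof of Lemma~\ref{lem:decompplus} then applies to it verbatim: its visible preprocessor now copies $\cN(t)$, i.e.\ the whole invisible tree of trees, so invisible-level navigation can resume inside a visible subtree after the bottom visible pebble is lifted from a distance. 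You would need to adopt this ordering --- invisible preprocessing on the outside, visible on the inside --- and make the visible-bottom reduction explicit; as written, the step you label as delicate is the one that fails.
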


\begin{proof}
The proof of Lemma~\ref{lem:decompplus} is also valid for \abb{v$^+$i-ptt},
provided every reachable pebble stack of the given transducer  
has a visible bottom pebble (for the definition of reachable pebble stack 
see the proof of Lemma~\ref{lem:decompidptt}).
Thus, it suffices to construct for every \abb{v$^+_k$i-ptt} $\cM$ 
a \abb{tt} $\cN$ and a \abb{v$^+_k$i-ptt} $\cM'$ with that property, such that 
$\tau_\cN\circ\tau_{\cM'}=\tau_\cM$.

Let $\cM = (\Sigma, \Delta, Q, Q_0,C, C_\mathrm{v}, C_\mathrm{i}, R,k)$. 
The construction is similar to the one in the proof of Lemma~\ref{lem:nul-decomp}.
In particular, we assume that $C_\mathrm{i}=[1,\gamma]$ and 
we use the same nondeterministic ``multi-level'' preprocessor $\cN$ of that proof.
The simulating transducer $\cM'$ works in the same way as the one in the proof 
of Lemma~\ref{lem:nul-decomp} as long as the pebble stack of $\cM$ 
consists of invisible pebbles only. Thus, during that time the pebble stack of $\cM'$ is empty. 
As soon as $\cM$ drops a visible pebble $c$, $\cM'$ stays in the same copy of the input tree
and also drops~$c$. After that, $\cM'$ just simulates $\cM$ on that copy 
until $\cM$ lifts $c$. Then $\cM'$ also lifts~$c$ and returns to the first mode until $\cM$ 
again drops a visible pebble. Note that when $\cM$ drops $c$, 
all invisible pebbles on the input tree become unobservable until $c$ is lifted. 

Formally, the set of states of $\cM'$ is the union of $Q$ (used in the first mode) and
$Q\times C_\mathrm{v}$ (used in the second mode). The rules for the first mode are 
the same as in the proof of Lemma~\ref{lem:nul-decomp}, 
with the empty set of pebble colours added to each left-hand side. 
Now let $\tup{q,\sigma,j,b}\to \zeta$ be a rule of $\cM$ and $\rank_\Sigma(\sigma)=m$.
In what follows, $i$ ranges over $[1,m]$ and $j'$ over $[1,\m_\Gamma]$, as usual.
With the following rules $\cM'$ switches from the first to the second mode,
where we assume that $\zeta = \tup{q',\drop_c}$ with $c\in C_\mathrm{v}$: 
if $b=\{d\}$ with $d\in C_\mathrm{i}$, then it uses the rule
$\tup{q,\sigma_{0,j},m+d,\nothing}\to \tup{(q',c),\drop_c}$, and 
if $b=\nothing$, then it uses the rules
$\tup{q,\sigma',j,\nothing}\to \tup{(q',c),\drop_c}$ and 
$\tup{q,\sigma_{i,j},j',\nothing}\to \tup{(q',c),\drop_c}$.
The rules for the second mode are as follows, for every $c\in C_\mathrm{v}$. 
We first assume that $\zeta$ does not contain the instruction $\lift_c$.
Then $\cM'$ has the rules $\tup{(q,c),\sigma',j,b}\to \zeta_1$,
$\tup{(q,c),\sigma_{0,j},j',b}\to \zeta_2$, and 
$\tup{(q,c),\sigma_{i,j},j',b}\to \zeta_{3,i}$,
where $\zeta_1$ is obtained from $\zeta$ by changing every state $q'$ into $(q',c)$, 
$\zeta_2$ is obtained from $\zeta_1$ by changing $\up$ into $\down_{m+\gamma+1}$,
and $\zeta_{3,i}$ is obtained from $\zeta_2$ by changing $\down_i$ into $\up$.
Finally, if $\zeta=\tup{q',\lift_c}$, then 
$\cM'$ switches from the second to the first mode with the following rules:
$\tup{(q,c),\sigma',j,b}\to \zeta$,   
$\tup{(q,c),\sigma_{0,j},j',b}\to \zeta$, and
$\tup{(q,c),\sigma_{i,j},j',b}\to \zeta$. 
\end{proof}

The next result is immediate from Lemmas~\ref{lem:decompplusI} and~\ref{lem:nul-decomp}.
It implies, by Propositions~\ref{prop:invtypeinf}(1) and~\ref{prop:typecheck}(1), that
the inverse type inference problem and the typechecking problem
are solvable for \abb{ptt}'s with $k$ strong visible pebbles, 
in $(2k+2)$-fold and $(2k+3)$-fold exponential time, respectively.
It also implies that the domains of \abb{v$^+$i-ptt}'s are regular,
cf. Corollary~\ref{cor:domptt}. 

\begin{theorem}\label{thm:decompplusI}
For every $k\geq 0$, $\VsIPTT{k}\subseteq \TT^{2k+2}$.  
For fixed $k$, the construction takes polynomial time.
\end{theorem}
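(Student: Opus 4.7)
The plan is to proceed by induction on $k$, using Lemmas~\ref{lem:decompplusI} and~\ref{lem:nul-decomp} as the engine. The base case $k=0$ reduces to the observation that $\VsIPTT{0}=\IPTT$ (there are no visible pebbles to strengthen), and by Lemma~\ref{lem:nul-decomp} we have $\IPTT\subseteq \TT\circ\TT = \TT^{2}$, which matches the bound $\TT^{2k+2}$ at $k=0$.

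For the inductive step, assume that $\VsIPTT{k-1}\subseteq \TT^{2k}$ and consider an arbitrary \abb{v$^+_k$i-ptt} $\cM$. First I would apply Lemma~\ref{lem:decompplusI} to obtain a \abb{tt} $\cN_1$, another \abb{tt} $\cN_2$, and a \abb{v$^+_{k-1}$i-ptt} $\cM''$ such that $\tau_{\cN_1}\circ\tau_{\cN_2}\circ\tau_{\cM''}=\tau_\cM$. Then by the induction hypothesis $\tau_{\cM''}\in \TT^{2k}$, so composing yields $\tau_\cM\in \TT\circ\TT\circ\TT^{2k}=\TT^{2k+2}$, as required. Note that no subtle issue arises at the ``seam'' between strong and weak pebbles: once Lemma~\ref{lem:decompplusI} is applied exactly $k$ times, all remaining visible pebbles have been eliminated and we can invoke Lemma~\ref{lem:nul-decomp} to dispose of the invisible pebbles in two further \abb{tt}-steps, which is precisely the $k=0$ base case of the induction.

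For the complexity claim, I would track constructions separately. By Lemma~\ref{lem:decompplusI}, for fixed $k$ each decomposition step produces $\cN_1$, $\cN_2$, and $\cM''$ in polynomial time in the size of $\cM$; iterating this $k$ times (with $k$ fixed) composes polynomially many polynomial-time steps, hence remains polynomial. The final application of Lemma~\ref{lem:nul-decomp} is likewise polynomial. Since the number of decomposition steps is a constant depending only on $k$, and each intermediate transducer is only polynomially larger than its predecessor, the total construction time is polynomial in the size of $\cM$.

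There is essentially no main obstacle here: the theorem is an immediate corollary of the two cited lemmas, and the proof amounts to bookkeeping the exponents in $\TT^{2k+2}=\TT^{2}\circ\TT^{2k}$ and verifying that the polynomial-time claims compose correctly for fixed $k$. The only point requiring any care is to verify that Lemma~\ref{lem:decompplusI} can indeed be iterated: the intermediate object after one application is again a \abb{v$^+$i-ptt} (with one fewer strong visible pebble), so the hypothesis of the lemma remains satisfied at each stage, and the induction goes through without complications.
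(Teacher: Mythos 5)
Your proof is correct and follows essentially the same route as the paper, which derives the theorem immediately from Lemmas~\ref{lem:decompplusI} and~\ref{lem:nul-decomp} by iterating the former $k$ times down to $\VsIPTT{0}=\IPTT$ and then applying the latter; your packaging of this as an induction on $k$ is only a cosmetic difference. The exponent bookkeeping ($\TT^{2}\circ\TT^{2k}=\TT^{2k+2}$) and the polynomial-time argument for fixed $k$ are both as intended.
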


Applying the techniques in the proofs of Lemma~\ref{lem:decompidptt} 
and Theorem~\ref{thm:detdecompplus} to the proof of 
Lemma~\ref{lem:decompplusI}, and using Lemmas~\ref{lem:decompidptt} and~\ref{lem:tdttmso}, 
we obtain that every deterministic \abb{v$^+$i-ptt} 
can be decomposed into deterministic \abb{tt}'s, cf. Theorem~\ref{thm:detdecomp}. 
The formal proof is straightforward. 

\begin{theorem}\label{thm:detdecompplusI}
For every $k\geq 0$, $\VsIdPTT{k}\subseteq \dTT\!_\downarrow\circ \dTT^{2k+2}$.
\end{theorem}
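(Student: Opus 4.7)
The plan is to establish, by induction on $k$, the key inductive step
\[
\VsIdPTT{k}\;\subseteq\;\tdTTmso\circ\tdTTmso\circ\VsIdPTT{k-1} \quad(k\geq 1),
\]
and then to assemble the conclusion by unfolding the induction, plugging in Lemma~\ref{lem:decompidptt} at the base and flattening the resulting chain of $\tdTTmso$'s via Lemma~\ref{lem:tdttmso}.

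For the inductive step I would mimic the proof of Lemma~\ref{lem:decompplusI}, which factors a \abb{v$^+_k$i-ptt}~$\cM$ into a nondeterministic multi-level preprocessor~$\cN$ from Lemma~\ref{lem:nul-decomp} (exposing the ``invisible-pebble prefix'' as a path in a tree of trees) followed by a simulator $\cM'$ whose reachable pebble stacks all carry a visible bottom pebble, so that Lemma~\ref{lem:decompplus} can then peel one strong visible pebble off $\cM'$. To determinise both halves I would transplant the trip-based reachability analysis of the proofs of Lemma~\ref{lem:decompidptt} and Theorem~\ref{thm:detdecompplus}: for every relevant pair of pebble colours $c,d$, consider the regular trip consisting of those $(t,u,v)$ for which the nondeterministic \abb{v$^+_k$i-pta} $\cA$ associated with $\cM$ admits a computation going from the configuration right after dropping $c$ at $u$ to the configuration right after dropping $d$ at $v$ without disturbing the bottom pebble; then use these trips as \abb{mso} head tests, in the spirit of Proposition~\ref{prop:trips}, to let the preprocessor generate only those copies of the input tree that correspond to pebble stacks actually reached by $\cM$. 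Regularity of these trips rests on the regularity of domains of \abb{v$^+$i-ptt}'s (which follows from Theorem~\ref{thm:decompplusI}) together with the ability of $\cA$ to perform stack tests (Lemma~\ref{lem:stacktestsplus}); determinism of $\cM$ bounds the set of reachable stacks, so each preprocessor produces a finite output; and totality is enforced by prefacing the preprocessor with an \abb{mso} head test that the input lies in the (regular) domain of $\cM$, emitting a sentinel tree otherwise.

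Granting the inductive step, $k$-fold unfolding yields $\VsIdPTT{k}\subseteq(\tdTTmso)^{2k}\circ\IdPTT$; combining with $\IdPTT\subseteq\tdTTmso\circ\dTT$ from Lemma~\ref{lem:decompidptt} gives $\VsIdPTT{k}\subseteq(\tdTTmso)^{2k+1}\circ\dTT$; and since $(\tdTTmso)^{2k+1}$ represents a total function, Lemma~\ref{lem:tdttmso} yields $(\tdTTmso)^{2k+1}\subseteq\dTT\!_\downarrow\circ\dTT^{2k+1}$, whence
\[
\VsIdPTT{k}\;\subseteq\;\dTT\!_\downarrow\circ\dTT^{2k+1}\circ\dTT\;=\;\dTT\!_\downarrow\circ\dTT^{2k+2},
\]
as required. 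The case $k=0$ of the final bound coincides with Corollary~\ref{cor:idptt-tt3}.

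The main obstacle I expect is the synchronisation of the two determinisations inside the inductive step. One has to verify that the reachability trips for the \emph{invisible}-pebble preprocessor remain \abb{mso} definable even in the presence of strong visible pebbles in~$\cM$ (this is precisely where Theorem~\ref{thm:decompplusI} and Lemma~\ref{lem:stacktestsplus} are essential), and, after the first deterministic preprocessor has been applied, that the intermediate simulator still satisfies the hypothesis ``every reachable pebble stack has a visible bottom pebble'' needed for the second preprocessor---built in the style of Theorem~\ref{thm:detdecompplus}---to use its own strong-visible-pebble reachability trips to absorb that bottom pebble and lower the visible-pebble count by one.
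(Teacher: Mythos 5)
Your proposal is correct and follows essentially the same route as the paper, which likewise obtains the result by determinising the two-stage decomposition of Lemma~\ref{lem:decompplusI} with the reachability-trip techniques of Lemma~\ref{lem:decompidptt} and Theorem~\ref{thm:detdecompplus}, and then assembles the bound $\dTT\!_\downarrow\circ\dTT^{2k+2}$ via Lemmas~\ref{lem:decompidptt} and~\ref{lem:tdttmso}. Your bookkeeping of the exponents and your identification of the two synchronisation points to verify match the paper's (deliberately terse) argument.
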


We do not know whether these results are optimal, 
i.e., whether the exponent $2k+2$ can be lowered.

%% ============================================================
%% ============================================================
\section{Conclusion}\label{sec:conc}
%% ============================================================
%% ============================================================

We have shown in Theorem~\ref{thm:decomp} that
$\VIPTT{k} \subseteq \TT^{k+2}$, but we do not know whether this is optimal,
i.e., whether or not $\VIPTT{k} \subseteq \TT^{k+1}$. 
Since the results on typechecking in Section~\ref{sec:typechecking} 
are based on this decomposition, we also do not know whether the time bound 
for typechecking \abb{v$_k$i-ptt}'s, as stated in Theorem~\ref{thm:typecheck}, is optimal.
Using the results of~\cite{SamSeg}, it can be shown that the time bound 
for inverse type inference is optimal, cf. the discussion after~\cite[Corollary~1]{Eng09}. 

We have shown in Theorem~\ref{thm:matchall} that all \abb{mso} definable $n$-ary patterns can be 
matched by deterministic \abb{v$_{n-2}$i-ptt}'s, but we do not know whether this is optimal,
i.e., whether or not it can be done with less than $n-2$ pebbles. In particular, 
we do not know whether or not all \abb{mso} definable ternary patterns can be matched by \abb{i-ptt}'s
(or, by \abb{tl} programs), cf. Theorem~\ref{thm:lsi}. 
In Section~\ref{sec:pattern} we have suggested ways of reducing 
the number of visible pebbles in special cases. 
Given an \abb{mso} formula $\phi$, can one compute the minimal number of visible pebbles
that is needed to match the pattern $\phi$? 

The language \abb{tl} can be extended with visible pebbles, in an obvious way.
The resulting ``pebble \abb{tl} programs'' are closely related to the \emph{pebble macro tree transducers} 
that were introduced in~\cite{EngMan03}. 
What is the relationship between the $k$-pebble macro tree transducer
and the \abb{v$_k$i-ptt}? Is there an analogon of Theorem~\ref{thm:tl}? It is not clear 
whether the proof of Theorem~\ref{thm:tl} can be generalized to the addition of visible pebbles. 

We have shown in Theorem~\ref{thm:dethier} that
$\VIdPTT{k} \subsetneq \VIdPTT{k+1}$, i.e., that 
$k+1$ visible pebbles are more powerful than $k$, in the deterministic case. 
We do not know whether this holds for the nondeterministic transducers,
i.e., whether or not the inclusion $\VIPTT{k} \subseteq \VIPTT{k+1}$ is proper. 
We also do not know whether every functional \abb{v$_k$i-ptt} can be simulated 
by a deterministic one, where a \abb{ptt}~$\cM$ is functional 
if $\tau_\cM$ is a function. If so, then the inclusion would of course be proper. 

Is it decidable for a given deterministic \abb{v$_{k+1}$i-ptt} $\cM$ whether or not 
$\tau_\cM$ is in $\VIdPTT{k}$? If so, then one could compute the minimal number of visible pebbles
needed to realize the transformation $\tau_\cM$ by a \abb{ptt}. Obviously, 
that would answer the above question for the pattern $\phi$ in the affirmative. 

It is proved in~\cite{expressive} that the \abb{v$^+_k$-pta}
has the same expressive power as the \abb{v$_k$-pta}, i.e., 
that strong pebbles are not more powerful than weak pebbles.
We do not know whether or not the \abb{v$^+_k$-ptt} is more powerful than the \abb{v$_k$-ptt},
and neither whether or not the \abb{v$^+_k$i-ptt} is more powerful than the \abb{v$_k$i-ptt}.

%% ============================================================
%% ============================================================


\begin{thebibliography}{99}
%% ============================================================
%% ============================================================


\bibitem{AbiBunSuc00}
S. Abiteboul, P. Buneman, D. Suciu. 
\emph{Data on the Web}. 
Morgan Kaufmann, 2000. 

\bibitem{AhoUll}
A. V. Aho, J. D. Ullman.
Translations on a context-free grammar.
Information and Control
19, 439--475, 1971.

\bibitem{Bar}
M. Bartha.
An algebraic definition of attributed transformations.
Acta Cybernetica 5, 409--421, 1982.
Preliminary version in: 
Proc. FCT'81 (F. G\'ecseg, ed.), 
Lecture Notes in Computer Science 117, Springer-Verlag,
51--60, 1981.

\bibitem{bex}
G. J. Bex, S. Maneth, F. Neven. 
A formal model for an expressive fragment of XSLT.
Information Systems 27, 21--39, 2002.

\bibitem{bloem}
R. Bloem, J. Engelfriet. 
Monadic second order logic and node relations on graphs and
trees.
In: Structures in Logic and Computer Science
(J. Mycielski, G. Rozenberg, A. Salomaa, eds.), 
Lecture Notes in Computer Science 1261, Springer-Verlag,
144--161, 1997. 
%\doi{10.1007/3-540-63246-8\_9}
A corrected version is available at 
https://www.researchgate.net/publication/221350026

\bibitem{Boj}
M. Bojanczyk.
Tree-walking automata.
Proc. LATA'08 (C. Mart{\'{\i}}n{-}Vide, F. Otto, H. Fernau, eds.),
Lecture Notes in Computer Science 5196, Springer-Verlag,
1--2, 2008. 
Full version available at 
https://www.mimuw.edu.pl/$\sim$bojan/upload/conflataBojanczyk08.pdf

\bibitem{expressive}
M. Bojanczyk, M. Samuelides, 
T. Schwentick, L. Segoufin.
Expressive power of pebble automata.
Proc. ICALP'06 (M. Bugliesi, B. Preneel, V. Sassone, I. Wegener, eds.),
Lecture Notes in Computer Science 4051, Springer-Verlag,
157--168, 2006.
%\doi{10.1007/11786986\_15}

\bibitem{BruWoo00}
A. Br\"uggemann-Klein, D. Wood. 
Caterpillars, context, tree automata and tree pattern matching. 
Proc. DLT'99, World Scientific, 270--285, 1999.

\bibitem{thebook}
B. Courcelle, J. Engelfriet. 
\emph{Graph Structure and Monadic Second-Order Logic}.
Cambridge University Press, 2012.

\bibitem{Don70}
J.~Doner.
Tree acceptors and some of their applications.
Journal of Computer and System Sciences
4, 406--451, 1970.

\bibitem{Eng74}
J. Engelfriet.
\emph{Simple Program Schemes and Formal Languages}.
Lecture Notes in Computer Science 20, Springer-Verlag, 1974.

\bibitem{Eng86}
J. Engelfriet.
Context-free grammars with storage.
Technical Report 86-11, University of Leiden, 1986.
A slightly revised version is available at arXiv:1408.0683

\bibitem{Eng09}
J. Engelfriet.
The time complexity of typechecking tree-walking tree transducers.
Acta Informatica 46, 139--154, 2009.

\bibitem{EngBlo}
J. Engelfriet, R. Bloem.
A comparison of tree transductions defined by 
monadic second order logic and by attribute grammars.
Journal of Computer and System Sciences
61, 1--50, 2000.

\bibitem{jewels}
J. Engelfriet, H.J. Hoogeboom. 
Tree-walking pebble automata. 
In: Jewels are forever, 
contributions to Theoretical Computer Science
in honor of Arto Salomaa
(J. Karhum\"aki, H. Maurer, G. Paun, G. Rozenberg, eds.), 
Springer-Verlag, 72--83, 1999.

\bibitem{fo+tc}
J. Engelfriet, H.J. Hoogeboom. 
Automata with nested pebbles capture first-order logic 
with transitive closure. 
Logical Methods in Computer Science 
3 (2:3), 1--27, 2007.

\bibitem{trips}
J. Engelfriet, H.J. Hoogeboom, J.-P. van Best.
Trips on trees.
Acta Cybernetica 14,  51--64, 1999.

\bibitem{invisible}
J. Engelfriet, H.J. Hoogeboom, B. Samwel.
XML transformation by tree-walking transducers with invisible pebbles.
Proc. PODS'07 (L. Libkin, ed.), ACM Press, 63--72, 2007. 

\bibitem{EngMan99}
J. Engelfriet, S. Maneth.
Macro tree transducers, attribute grammars, and {MSO} definable tree translations.
Information and Computation 
154,  34--91, 1999.

\bibitem{EngMan03}
J. Engelfriet, S. Maneth.
A comparison of pebble tree transducers with macro tree transducers.
Acta Informatica 
39,  613--698, 2003.
%\doi{10.1007/s00236-003-0120-0} 

\bibitem{EngSch78}
J. Engelfriet, E. M. Schmidt.
IO and OI, II.
Journal of Computer and System Sciences
16, 67--99, 1978.

\bibitem{EngVog85}
J. Engelfriet, H. Vogler.
Macro tree transducers.
Journal of Computer and System Sciences
31,  71--146, 1985.

\bibitem{EngVog}
J. Engelfriet, H. Vogler.
Pushdown machines for the macro tree transducer.
Theoretical Computer Science
42, 251--368, 1986.
Erratum: Theoretical Computer Science
48, 339, 1986.

\bibitem{Fis}
M.J. Fischer.
Grammars with macro-like productions.
Ph.D.\ Thesis,
Harvard University, 1968.

\bibitem{Ful}
Z. F\"ul\"op.
On attributed tree transducers.
Acta Cybernetica
5, 261--279, 1981. 

\bibitem{FulVog}
Z. F\"ul\"op, H. Vogler. 
\emph{Syntax-Directed Semantics -- 
Formal Models Based on Tree Transducers}.
Springer-Verlag, 1998.

\bibitem{FulMuzJALC}
Z. F\"ul\"op, L. Muzamel.
Circularity, composition, and decomposition results 
for pebble macro tree transducers.
Journal of Automata, Languages and Combinatorics
13, 3--44, 2008.

\bibitem{FulMuzFI}
Z. F\"ul\"op, L. Muzamel.
Pebble macro tree transducers with strong pebble handling.
Fundamenta Informaticae
89, 1--51, 2008.

\bibitem{GJ}
M.R. Garey, D.S. Johnson.
\emph{Computers and Intractability -- 
A Guide to the Theory of NP-Completeness}.
W.H. Freeman, San Francisco, 1979.

\bibitem{GecSte}
F. G\'{e}cseg, M. Steinby. 
\emph{Tree Automata}. 
Akad\'{e}miai Kiad\'{o}, Budapest, 1984. 
A re-edition is available at arXiv:1509.06233 

\bibitem{GorMar05}
E. Goris, M. Marx. 
Looping caterpillars. 
Proc. LICS'05,
IEEE, 51--60, 2005.
%\doi{10.1109/LICS.2005.24}

\bibitem{GotKoc02}
G. Gottlob, C. Koch, R. Pichler. 
Efficient algorithms for processing XPath queries.
Proc. VLDB'02,  
Morgan Kaufmann, 95--106, 2002.

\bibitem{GotKocSch}
G. Gottlob, C. Koch, K.U. Schulz.
Conjunctive queries over trees.
Proc. PODS'04,
ACM Press, 189--200, 2004.

\bibitem{JanKorBus}
W. Janssen, A. Korlyukov, J. Van den Bussche.
On the tree-transformation power of XSLT.
Acta Informatica 
43, 371--393, 2007.
%\doi{10.1007/s00236-006-0026-8}

\bibitem{Knuth}
D.E. Knuth.
\emph{Fundamental Algorithms}, Addison-Wesley,
1968.

\bibitem{KuhVog}
A. K\"uhnemann, H. Vogler.
Synthesized and inherited functions --
A new computational model for syntax-directed semantics.
Acta Informatica 
31, 431--477, 1994. 

\bibitem{man}
S. Maneth.
The macro tree transducer hierarchy collapses for functions of linear size increase.
Proc. FSTTCS'03 (P. K. Pandya, J. Radhakrishnan, eds.), 
Lecture Notes in Computer Science 2914, Springer-Verlag,
326--337, 2003. 

\bibitem{ManBerPerSei}
S. Maneth, A. Berlea, T. Perst, H. Seidl. 
XML type checking with macro tree transducers. 
Proc. PODS'05, ACM Press, 283--294, 2005. 
Technical Report TUM-I0407 of the Technische Universit\"at M\"unchen (2004) is available at
https://www.researchgate.net/publication/221559877

\bibitem{ManNev00}
S. Maneth, F. Neven. 
Structured document transformation based on XSL.
Proc. DBPL'99,    
Lecture Notes in Computer Science 1949, Springer-Verlag,
80--98, 2000. 

\bibitem{Mar05}
M. Marx.
Conditional XPath. 
ACM Transactions on Database Systems 
30, 929--959, 2005.

\bibitem{Mar06}
M. Marx. 
Navigation in XML trees. 
In: The Logic in Computer Science Column,
Bulletin of the EATCS no.88,
126--140, February 2006.


\bibitem{MilSucVia03}
T. Milo, D. Suciu, V. Vianu.
    Typechecking for XML transformers.
Journal of Computer and System Sciences
 66,  66--97, 2003.
%\doi{10.1016/S0022-0000(02)00030-2} 
    
\bibitem{MolSch}
A. M{\o}ller, M. I. Schwartzbach. 
The design space of
type checkers for XML transformation languages. 
Proc. ICDT'05,
Lecture Notes in Computer Science 3363, Springer-Verlag,
17--36, 
2005. 

\bibitem{MusSamSeg}
A. Muscholl, M. Samuelides, L. Segoufin. 
Complementing deterministic tree-walking automata. 
Information Processing Letters
99, 33--39, 2006.

\bibitem{Muz}
L. Muzamel.
Pebble alternating tree-walking automata and
their recognizing power.
Acta Cybernetica
18, 427--450, 2008.

\bibitem{Nev}
F. Neven.
Automata, Logic, and XML.
Proc. CSL'02 (J. C. Bradfield, ed.),
Lecture Notes in Computer Science 2471, Springer-Verlag,
2--26, 2002. 

\bibitem{NevSch}
F. Neven, T. Schwentick.
Automata- and logic-based pattern languages
for tree-structured data.
In: Semantics in Databases 2001,
Lecture Notes in Computer Science 2582, Springer-Verlag,
160--178, 2003.

\bibitem{PerSei}
T. Perst, H. Seidl.
Macro forest transducers.
Information Processing Letters 
89, 141--149, 2004.
 
\bibitem{SamSeg}
M. Samuelides, L. Segoufin.
Complexity of pebble tree-walking automata.
Proc. FCT'07 (E. Csuhaj{-}Varj{\'{u}}, Z. {\'{E}}sik, eds.),
Lecture Notes in Computer Science 4639, Springer-Verlag, 458--469, 2007.
 
\bibitem{Sam}
B. Samwel. 
Pebble scope and the power of pebble tree transducers.
M.Sc.\ Thesis, LIACS,
Leiden University, 2006.

\bibitem{Sch}
T. Schwentick.
Automata for XML -- A survey.
Journal of Computer and System Sciences
73, 289--315, 2007.

\bibitem{Slu}
G. Slutzki. 
Alternating tree automata. 
Theoretical Computer Science
41, 305--318, 1985. 
%\doi{10.1016/0304-3975(85)90077-5}

\bibitem{Cat06} 
B. ten Cate. 
The expressivity of XPath with transitive closure.
Proc. PODS'06, 
ACM Press,
328--337, 2006.

\bibitem{CatSeg} 
B. ten Cate, L. Segoufin.
Transitive closure logic, nested tree walking automata, and XPath. 
Journal of the ACM
57, 3, Article 18 (41 pages), 2010.
%\doi{10.1145/1706591.1706598}

\bibitem{ThaWri}
J.W.~Thatcher, 
J.B.~Wright.
Generalized finite automata theory
with an application to a decision problem of second-order logic,
Mathematical Systems Theory 
2, 57--81, 1968.
%\doi{10.1007/BF01691346}
    
\end{thebibliography}
\end{document}